\title{Dynamic r-index: An Updatable Self-Index in LCP-bounded Time} 
\author{Takaaki Nishimoto}{RIKEN Center for Advanced Intelligence Project, Japan}{takaaki.nishimoto@riken.jp}{}{}
\author{Yasuo Tabei}{RIKEN Center for Advanced Intelligence Project, Japan}{yasuo.tabei@riken.jp}{}{}
\authorrunning{T. Nishimoto and Y. Tabei}
\keywords{lossless data compression, string index, dynamic data structure, Burrows–Wheeler transform, highly repetitive strings} 
\begin{document}

\maketitle

\newcommand{\floor}[1]{\left \lfloor #1 \right \rfloor}
\newcommand{\ceil}[1]{\left \lceil #1 \right \rceil}

\newcommand{\argmax}{\mathop{\rm arg~max}\limits}
\newcommand{\argmin}{\mathop{\rm arg~min}\limits}
\newcommand{\polylog}{\mathop{\rm polylog}\limits}


\newcommand{\modulo}{\mathsf{mod}}
\newcommand{\SA}{\mathsf{SA}}
\newcommand{\ISA}{\mathsf{ISA}}

\newcommand{\LCE}{\mathsf{LCE}}
\newcommand{\LCP}{\mathsf{LCP}}
\newcommand{\lcp}{\mathsf{lcp}}


\newcommand{\unproven}{\color{red}UNPROVEN\color{black}}
\newcommand{\DESCRIPTION}{\color{red}DESCRIPTION\color{black}}
\newcommand{\DUMMY}{\color{red}DUMMY DUMMY \color{black}}

\newcommand{\BWT}{\mathsf{BWT}}

\newcommand{\CS}{\mathsf{CS}}

\newcommand{\SUM}{\mathsf{sum}}
\newcommand{\SEARCH}{\mathsf{search}}
\newcommand{\UPDATE}{\mathsf{update}}
\newcommand{\MERGE}{\mathsf{merge}}
\newcommand{\DIVIDE}{\mathsf{divide}}
\newcommand{\LCS}{\mathsf{LCS}}
\newcommand{\PA}{\mathsf{PA}}
\newcommand{\IPA}{\mathsf{IPA}}
\newcommand{\RLE}{\mathsf{RLE}}
\newcommand{\DI}{\mathsf{DI}}
\newcommand{\first}{\mathsf{first}}
\newcommand{\prev}{\mathsf{prev}}
\newcommand{\pred}{\mathsf{pred}}

\newcommand{\predecessor}{\mathsf{pred}}

\newcommand{\MR}{\mathsf{MR}}
\newcommand{\avg}{\mathsf{avg}}
\newcommand{\ins}{\mathsf{ins}}
\newcommand{\del}{\mathsf{del}}
\newcommand{\LF}{\mathsf{LF}}
\newcommand{\occ}{\mathsf{occ}}

\newcommand{\CM}{\mathsf{CM}}

\newcommand{\tmp}{\mathsf{tmp}}

\newcommand{\compSA}{\mathsf{comp}\mathchar`-\mathsf{SA}}
\newcommand{\compISA}{\mathsf{comp}\mathchar`-\mathsf{ISA}}
\newcommand{\compT}{\mathsf{comp}\mathchar`-\mathsf{T}}

\newcommand{\rank}{\mathsf{rank}}
\newcommand{\lexCount}{\mathsf{lex}\mathchar`-\mathsf{count}}
\newcommand{\lexSearch}{\mathsf{lex}\mathchar`-\mathsf{search}}
\newcommand{\runAccess}{\mathsf{run}\mathchar`-\mathsf{access}}
\newcommand{\runIndex}{\mathsf{run}\mathchar`-\mathsf{index}}
\newcommand{\access}{\mathsf{access}}

\newcommand{\accessSA}{\mathsf{access}}
\newcommand{\orderSA}{\mathsf{order}}
\newcommand{\countSA}{\mathsf{count}}
\newcommand{\insertSA}{\mathsf{insert}}
\newcommand{\deleteSA}{\mathsf{delete}}
\newcommand{\incrementSA}{\mathsf{increment}}
\newcommand{\decrementSA}{\mathsf{decrement}}

\newcommand{\sort}{\mathsf{sort}}
\newcommand{\select}{\mathsf{select}}

\newcommand{\sumPS}{\mathsf{sum}}
\newcommand{\searchPS}{\mathsf{search}}
\newcommand{\insertPS}{\mathsf{insert}}
\newcommand{\deletePS}{\mathsf{delete}}
\newcommand{\mergePS}{\mathsf{merge}}
\newcommand{\dividePS}{\mathsf{divide}}
\newcommand{\replacePS}{\mathsf{replace}}

\newcommand{\insertL}{\mathsf{insert}}
\newcommand{\deleteL}{\mathsf{delete}}

\newcommand{\accessPI}{\mathsf{access}}

\newcommand{\paccess}{\mathsf{access}}
\newcommand{\pinvAccess}{\mathsf{inv}\mathchar`-\mathsf{access}}
\newcommand{\pindex}{\mathsf{index}}

\newcommand{\incrementInsert}{\mathsf{increment}\mathchar`-\mathsf{insert}}
\newcommand{\decrementDelete}{\mathsf{decrement}\mathchar`-\mathsf{delete}}

\newcommand{\pinsert}{\mathsf{insert}}
\newcommand{\pdelete}{\mathsf{delete}}

\newcommand{\insertRLE}{\mathsf{insert}}
\newcommand{\deleteRLE}{\mathsf{delete}}
\newcommand{\splitRLE}{\mathsf{split}}
\newcommand{\mergeRLE}{\mathsf{merge}}
\newcommand{\incrementRLE}{\mathsf{increment}}
\newcommand{\decrementRLE}{\mathsf{decrement}}

\newcommand{\DRI}{\textsf{DRI}}
\newcommand{\DFMI}{\textsf{DFMI}}
\newcommand{\RI}{\textsf{RI}}


\begin{abstract}
A self-index is a compressed data structure that supports locate queries—reporting all positions where a given pattern occurs in a string while maintaining the string in compressed form.  
While many self-indexes have been proposed, developing dynamically updatable ones supporting string insertions and deletions remains a challenge.
The r-index (Gagie et al., JACM'20) is a representative static self-index based on the run-length Burrows–Wheeler transform (RLBWT), designed for highly repetitive strings.
We present the dynamic r-index, a dynamic extension of the r-index that achieves updates in LCP-bounded time.
The dynamic r-index supports count queries in $\mathcal{O}(m \log r / \log \log r)$ time and locate queries in $\mathcal{O}(m \log r / \log \log r + \mathsf{occ} \log r)$ time, using $\mathcal{O}(r)$ words of space, where $m$ is the length of a query with $\mathsf{occ}$ occurrences and $r$ is the number of runs in the RLBWT.
Crucially, update operations are supported in $\mathcal{O}((m + L_{\mathsf{max}}) \log n)$ time for a substring of length $m$, where $L_{\mathsf{max}}$ is the maximum LCP value; the average running time is $\mathcal{O}((m + L_{\mathsf{avg}}) \log n)$, where $L_{\mathsf{avg}}$ is the average LCP value.  
This LCP-bounded complexity is particularly advantageous for highly repetitive strings where LCP values are typically small.
We experimentally demonstrate the practical efficiency of the dynamic r-index on various highly repetitive datasets.
\end{abstract}

\section{Introduction}\label{sec:introduction}
Efficient search in large string datasets is a critical challenge in many applications.
Particularly, datasets composed of highly repetitive strings—those with many repeated substrings—has grown significantly in recent years.
Typical examples of such datasets include web pages collected by crawlers~\cite{FerraginaM10}, version-controlled documents such as those on Wikipedia~\cite{wikipedia}, and, perhaps most notably, biological sequences including human genomes~\cite{Przeworski00}.

To address these search challenges, compressed data structures called self-indexes have been developed. A self-index allows locate queries—that is, reporting all positions where a given pattern occurs in the string. Among these, the FM-index~\cite{DBLP:journals/jacm/FerraginaM05,DBLP:journals/talg/FerraginaMMN07} is an efficient self-index based on the Burrows–Wheeler Transform (BWT)~\cite{burrows1994block}, a reversible permutation of the input string. 
The FM-index consistently delivers high performance across various types of strings, supporting locate queries in $\mathcal{O}(m + \occ \log_{\sigma} n)$ time using $\mathcal{O}(n (1 + \log \sigma/\log n))$ words of space~\cite{DBLP:journals/talg/BelazzouguiN14}, where $n$ is the length of text $T$, $m$ is the length of the query string, $\sigma$ is the alphabet size, and $\occ$ is the number of occurrences of the query in $T$.

However, for highly repetitive strings, standard FM-indexes suffer from limited space efficiency.  
To address this, various specialized self-indexes based on compressed formats have been proposed, including grammar-based~\cite{DBLP:conf/spire/ClaudeN12a,DBLP:journals/talg/ChristiansenEKN21,9961143,ViceVersa},  
LZ-based~\cite{DBLP:conf/latin/ChristiansenE18,DBLP:conf/latin/GagieGKNP14}, and Block-Tree-based~\cite{DBLP:journals/tcs/NavarroP19} approaches.  
Among them, the run-length Burrows–Wheeler transform (RLBWT)—a run-length encoded BWT—has attracted attention for its space efficiency and high search performance for highly repetitive strings~\cite{DBLP:journals/jcb/MakinenNSV10,10.1145/3375890}.  
Gagie et al.~\cite{10.1145/3375890} proposed the r-index, an RLBWT-based self-index whose space usage is linear in the number $r$ of runs in the RLBWT.  
It supports locate queries in $\mathcal{O}(r)$ words and  
$\mathcal{O}(m \log \log_{B}(\sigma + n/r) + \occ \log \log_{B}(n/r))$ time, for word size $B = \Theta(\log n)$.  
OptBWTR~\cite{DBLP:conf/icalp/NishimotoT21} improves upon the r-index, achieving optimal time and space for constant alphabets.  
A practical implementation is presented in~\cite{DBLP:conf/wea/Bertram0N24}.  
However, most self-indexes for highly repetitive strings are static, meaning they do not support updates such as string insertions or deletions.

Recently, there has been a growing demand for real-time string processing in low-resource environments, especially in dynamic settings where input evolves continuously~\cite{Mehmood2020RealTime}.
Such needs often arise in edge computing, where data must be processed locally — on end-user devices or nearby systems — due to constraints such as latency, bandwidth, or privacy.
To meet these challenges, it is essential to develop dynamic, update-friendly data structures that can efficiently handle modifications to input strings without requiring full reconstruction.

\begin{table}[htb] 
\caption{Summary of state-of-the-art dynamic indexes supporting count and locate queries.
    $n$ is the length of the input string $T$; 
    $m$ is the length of a given pattern $P$; 
    $\occ$ is the number of occurrences of $P$ in $T$; 
    $\occ_{c} \geq \occ$ is the number of candidate occurrences of $P$ 
    obtained by the ESP-index~\cite{DBLP:journals/jda/MaruyamaNKS13}.
    $m^{\prime}$ is the length of a given string inserted into $T$ 
    or a substring deleted from $T$.
    $\sigma = n^{\mathcal{O}(1)}$ is the alphabet size of $T$; 
    $r$ is the number of runs in the BWT of $T$; 
    $\epsilon > 0$ is an arbitrary constant;
    $L_{\max}$ is the maximum value in the LCP array of $T$;
    $L_{\avg}$ is the average of the values in the LCP array;
    $g$ is the size of a compressed grammar deriving $T$, 
    and $g = \mathcal{O}(z \log n \log^{*} n)$~\cite{DBLP:journals/dam/NishimotoIIBT20}, 
    where $z$ is the number of factors in the LZ77 of $T$~\cite{LZ76}; 
    $q \geq 1$ is a user-defined parameter; 
    $g^{\prime}$ is the total size of the $q$-truncated suffix tree of $T$ and 
    a compressed grammar deriving $T$, and $g^{\prime} = \mathcal{O}(z(q^{2} + \log n \log^{*} n))~\cite{DBLP:journals/iandc/NishimotoTT20}$;     
    $\DRI$ and $\DFMI$ represent the dynamic r-index and dynamic FM-index, respectively.}
\label{tab:dat}

\tiny

\begin{tabular}{p{15em}|c|c|c}
\multirow{2}{*}{Method}  & \multirow{2}{*}{Format} & Space          & \multirow{2}{*}{Locate time}  \\
       &        & (words)        &   \\
\hline
$\DFMI$~\cite{DBLP:journals/jda/SalsonLLM10,DBLP:journals/jda/LeonardMS12} & BWT & $\mathcal{O}(n \frac{\log \sigma}{\log n})$   & $\mathcal{O}((m+\occ) \log^{2+\epsilon} n)$  \\
\hline
\multirow{2}{*}{SE-index~\cite{DBLP:journals/dam/NishimotoIIBT20}}  & \multirow{2}{*}{Grammar} & \multirow{2}{*}{$\mathcal{O}(g)$}  & $\mathcal{O}(m (\log \log n)^{2} + \occ \log n $  \\
 &  &  & $+ \log m (\log n \log^{*} n)^{2})$   \\
\hline
\multirow{3}{*}{TST-index-d~\cite{DBLP:journals/iandc/NishimotoTT20}} & \multirow{3}{*}{Grammar} & \multirow{3}{*}{$\mathcal{O}(g^{\prime})$}   & $\mathcal{O}(m (\log\log \sigma)^{2} + \occ \log n)$ ($m \leq q$)  \\
\cline{4-4}
    &   &   & $\mathcal{O}(m (\log\log n)^{2}$   \\
    &   &   & $+ \occ_{c} \log m \log n \log^{*} n)$ ($m > q$)  \\

\hline
Gawrychowski et al.~\cite{DBLP:journals/corr/GawrychowskiKKL15} & Grammar & $\Omega(n)$ & $\mathcal{O}(m + \log^{2} n + \occ \log n)$  \\
\hline
\hline
$\DRI$ (this study) & RLBWT & $\mathcal{O}(r)$   & $\mathcal{O}((m+\occ) \log n)$  \\
\end{tabular}

\vspace{5mm}

\begin{tabular}{p{15em}|c|c|c}
Method & Format & Space (words)        & Count time \\
\hline
$\DFMI$~\cite{DBLP:journals/jda/SalsonLLM10,DBLP:journals/jda/LeonardMS12} & BWT & $\mathcal{O}(n \frac{\log \sigma}{\log n})$   & $\mathcal{O}(m \log n)$ \\
\hline
\hline
$\DRI$ (this study) & RLBWT & $\mathcal{O}(r)$   & $\mathcal{O}(m \log n)$ \\
\end{tabular}

\vspace{5mm}

\begin{tabular}{p{15em}|c|c|c}
Method & Format & Space (words)        & Update time \\
\hline
\multirow{3}{*}{$\DFMI$~\cite{DBLP:journals/jda/SalsonLLM10,DBLP:journals/jda/LeonardMS12}} & \multirow{3}{*}{BWT} & \multirow{3}{*}{$\mathcal{O}(n \frac{\log \sigma}{\log n})$}   & $\mathcal{O}((m^{\prime} + L_{\max}) \log n)$ \\
 \cline{4-4}
 &  &  & $\mathcal{O}((m^{\prime} + L_{\avg}) \log n)$ \\
&  &  & (average case) \\

\hline
\multirow{2}{*}{SE-index~\cite{DBLP:journals/dam/NishimotoIIBT20}}  & \multirow{2}{*}{Grammar} & \multirow{2}{*}{$\mathcal{O}(g)$}  & $\mathcal{O}(m^{\prime}(\log n \log^{*} n)^{2}$ \\
 &  &  & $+ \log n(\log n \log^{*} n)^{2})$ \\
\hline
\multirow{2}{*}{TST-index-d~\cite{DBLP:journals/iandc/NishimotoTT20}} & \multirow{2}{*}{Grammar} & $\mathcal{O}(g^{\prime})$  & $\mathcal{O}((\log \log n)^{2} (m^{\prime}q $ \\
    &   &   & $+ q^{2} + \log n \log^{*} n) )$ \\
\hline
Gawrychowski et al.~\cite{DBLP:journals/corr/GawrychowskiKKL15} & Grammar & $\Omega(n)$ & $\mathcal{O}(m^{\prime} \log n + \log^{2} n)$ \\
\hline
\hline
 \multirow{3}{*}{$\DRI$ (this study)} & \multirow{3}{*}{RLBWT} & \multirow{3}{*}{$\mathcal{O}(r)$}   & $\mathcal{O}((m^{\prime} + L_{\max}) \log n)$ \\
 \cline{4-4}
           &     &                & $\mathcal{O}((m^{\prime} + L_{\avg}) \log n)$ \\
           &     &                & (average case) \\           
\end{tabular}

\end{table}

Only a few dynamic self‑indexes that support update operations have been proposed so far. 
Table~\ref{tab:dat} summarizes the time and space complexities of representative dynamic string indexes. 
The dynamic FM‑index~\cite{DBLP:journals/jda/SalsonLLM10} extends the FM‑index to support string insertions and deletions in general strings. Grammar‑based approaches include the SE‑index~\cite{DBLP:journals/dam/NishimotoIIBT20}, the TST‑index‑d~\cite{DBLP:journals/iandc/NishimotoTT20},  
and a string index proposed by Gawrychowski et al.~\cite{DBLP:journals/corr/GawrychowskiKKL15}.
Despite these advances, no self‑index with the combination of practicality, scalability, and full dynamic functionality required by real‑world applications has yet been realized. Consequently, developing such a practical, dynamically updatable self‑index remains an important open challenge.

{\bf Contribution.} 
We present the dynamic r-index, a dynamic extension of the r-index.
The original r-index uses three sampled subsequences from the RLBWT and suffix array, along with static data structures supporting rank and lex-count queries, to efficiently answer locate queries.  
The main challenge in making it dynamic lies in updating these subsequences, which store partial information from the BWT and suffix array that must remain consistent under modifications.  
To address this, we design an efficient update algorithm using dynamic data structure techniques for partial sums~\cite{DBLP:journals/algorithmica/BilleCCGSVV18},  
rank/select~\cite{DBLP:conf/esa/MunroN15}, and permutation access~\cite{DBLP:journals/jda/SalsonLLM10}.  
We analyze the time and space complexity and show that they depend on the maximum or average values in the longest common prefix (LCP) array.  
The dynamic r-index supports count queries in $\mathcal{O}(m \log r / \log \log r)$ time and locate queries in $\mathcal{O}(m \log r / \log \log r + \occ \log r)$ time, using $\mathcal{O}(r)$ words of space.  
Update operations are supported in $\mathcal{O}((m + L_{\max}) \log n)$ time for processing a string of length $m$,  
where $L_{\max}$ is the maximum LCP value; the average running time is $\mathcal{O}((m + L_{\avg}) \log n)$, where $L_{\avg}$ is the average LCP value.  
We experimentally evaluated the dynamic r-index on highly repetitive strings and demonstrated its practicality.
A C++ reference implementation of the dynamic r-index will be released as open-source together with the camera-ready version.

%

\section{Preliminaries}\label{sec:preliminary}
\begin{figure}[t]
\begin{center}
	\includegraphics[width=0.3\textwidth]{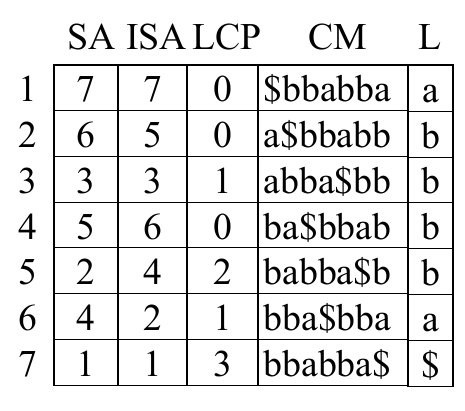}
\caption{Table illustrating the suffix array ($\SA$), inverse suffix array ($\ISA$), LCP array ($\LCP$), conceptual matrix ($\CM$), and BWT ($L$) of $T = \texttt{bbabba\$}$.}
 \label{fig:bwt}
\end{center}
\end{figure}

Let $\mathbb{N}_{0}$ be the set of non-negative integers.
An \emph{interval} $[b, e]$ for integers $b \leq e$ represents the set $\{b, b+1, \ldots, e \}$. 
For a sequence $S$ of values from $\mathbb{N}_{0}$ with length $d$, 
let $|S|$ be the length of $S$ (i.e., $|S| = d$). 
For an integer $i \in \{ 1, 2, \ldots, d \}$, 
let $S[i]$ denote the value at position $i$ in $S$. 
To simplify boundary conditions, 
we define $S[0] := S[|S|]$ and $S[|S| + 1] := S[0]$.  
For any $i, j \in \mathbb{N}_{0}$, if $1 \leq i \leq j \leq |S|$, then $S[i..j]$ denotes the subsequence $S[i], S[i+1], \ldots, S[j]$ of length $j - i + 1$;  
otherwise, $S[i..j]$ denotes the empty sequence of length $0$.

Let $T$ be a string of length $n$ over an alphabet $\Sigma = \{1, 2, \ldots, n^{\mathcal{O}(1)}\}$ of size $\sigma$.  
For $1 \leq i \leq j \leq n$, the substring $T[i..j]$ refers to the characters from position $i$ to $j$ in $T$.  
Let $P$ be a string of length $m$, called the \emph{pattern}.  
We assume that the special character $\$$, which is the smallest in $\Sigma$, appears only at the end of $T$ (i.e., $T[n] = \$$ and $T[i] \neq \$$ for $1 \leq i < n$).

For characters $c, c' \in \Sigma$, $c < c'$ means $c$ is smaller than $c'$.  
$T$ is lexicographically smaller than $P$ ($T \prec P$) if and only if one of the following holds:  
(i) there exists $i \in \{1, \ldots, n\}$ such that $T[1..(i-1)] = P[1..(i-1)]$ and $T[i] < P[i]$;  
(ii) $T$ is a proper prefix of $P$, i.e., $T = P[1..|T|]$ and $|T| < |P|$.

We use the base-2 logarithm throughout the paper.  
Our computation model is the standard unit-cost word RAM with a word size $B = \Theta(\log n)$ bits~\cite{DBLP:conf/stacs/Hagerup98}.  
Space complexity is measured in words; bit-level bounds can be obtained by multiplying by $B$.


\textbf{Suffix array, inverse suffix array, and LCP array.}
A suffix array~\cite{DBLP:journals/siamcomp/ManberM93} $\SA$ of $T$ is an integer sequence of length $n$ where  
$\SA[i]$ is the starting position of the $i$-th smallest suffix of $T$ in lexicographic order.  
Formally, $\SA$ is a permutation of $1, 2, \ldots, n$ satisfying $T[\SA[1]..n] \prec \cdots \prec T[\SA[n]..n]$.  
The elements of $\SA$ are called \emph{sa-values}.  
The inverse suffix array ($\ISA$) of $T$ is a sequence of length $n$ such that $\ISA[i]$ gives the position of suffix $T[i..n]$ in $\SA$;  
that is, $\ISA[\SA[i]] = i$ for all $i \in \{1, \ldots, n\}$.  
The \emph{LCP array} $\LCP$ is a sequence of length $n$ where $\LCP[i]$ is the length of the longest common prefix between $T[\SA[i]..n]$ and $T[\SA[i-1]..n]$.  
Formally, $\LCP[1] = 0$, and for $i \in \{2, \ldots, n\}$,  
$\LCP[i]$ is the largest $\ell \geq 0$ such that  
$T[\SA[i]..\SA[i] + \ell - 1] = T[\SA[i-1]..\SA[i-1] + \ell - 1]$.

\textbf{Conceptual matrix.}
For each $i \in \{1, \dots, n\}$, the $i$-th \emph{circular shift} of $T$ is $T^{[i]} = T[i..n]\,T[1..i-1]$. The \emph{conceptual matrix} $\CM$ is the $n \times n$ matrix of all such shifts sorted lexicographically, where the $i$-th row is $T^{[i]}$.
For any $i, j \in \{1, \dots, n\}$, the $i$-th circular shift precedes the $j$-th circular shift if and only if $i$ precedes $j$ in the suffix array of $T$, since the last character of $T$ is $\$$.



\textbf{BWT and RLBWT.}
BWT~\cite{burrows1994block} $L$ of $T$ is obtained by concatenating the last characters of all rows in the conceptual matrix $\CM$ of $T$ (i.e., $L = \CM[1][n], \CM[2][n], \ldots, \CM[n][n]$).  

A \emph{run} $T[i..j]$ is a maximal substring of repeated character $c$, i.e.,  
(i) $T[i..j]$ consists of $c$ only,  
(ii) $T[i-1] \neq c$ or $i = 1$, and  
(iii) $T[j+1] \neq c$ or $j = n$.  
The BWT $L$ consists of $r$ runs: $L[t_1..t_2{-}1], L[t_2..t_3{-}1], \dots, L[t_r..t_{r+1}{-}1]$ ($t_{1} = 1$ and $t_{r+1} = n+1$).  
The RLBWT, denoted $L_{\RLE}$, encodes each run as $(c_i, \ell_i)$,  
where $c_i$ is the character of the run and $\ell_i = t_{i+1} - t_i$ is its length. $L_{\RLE}$ uses $r(\log n + \log \sigma)$ bits.





Figure~\ref{fig:bwt} illustrates the suffix array, inverse suffix array, LCP array, conceptual matrix, and BWT of $T = \texttt{bbabba\$}$.  

\section{Review of r-index}\label{sec:review_r_index}
The r-index~\cite{10.1145/3375890} is a static self-index that supports both count and locate queries.  
A count query for $P$ on $T$ returns the number of its occurrences in $T$, while a locate query returns all positions where $P$ appears.

The r-index computes these queries using the RLBWT $L_{\RLE}$ of $T$ and two sampled suffix arrays, $\SA_s$ and $\SA_e$,  
obtained by sampling $\SA$ at the start and end positions of each run in the BWT $L$ of $T$.  
Given the $r$ runs $L[t_1..t_2{-}1], L[t_2..t_3{-}1], \ldots, L[t_r..t_{r+1}{-}1]$,  
the sampled arrays are defined as $\SA_s = \SA[t_1], \SA[t_2], \ldots, \SA[t_r]$ and  
$\SA_e = \SA[t_2{-}1], \SA[t_3{-}1], \ldots, \SA[t_{r+1}{-}1]$.

The r-index answers count and locate queries via backward search~\cite{10.1145/3375890,DBLP:journals/jacm/FerraginaM05},  
which computes the SA-interval $[sp, ep]$ such that $\SA[sp..ep]$ contains all occurrences of pattern $P$ in $T$.  
It supports count in $\mathcal{O}(m \log\log(\sigma + n/r))$ time and locate in $\mathcal{O}((m + \occ) \log\log(\sigma + n/r))$ time,  
using $\mathcal{O}(r)$ words, where $m = |P|$ and $\occ = ep - sp + 1$.  

\begin{figure}[t]
\begin{center}
	\includegraphics[width=0.4\textwidth]{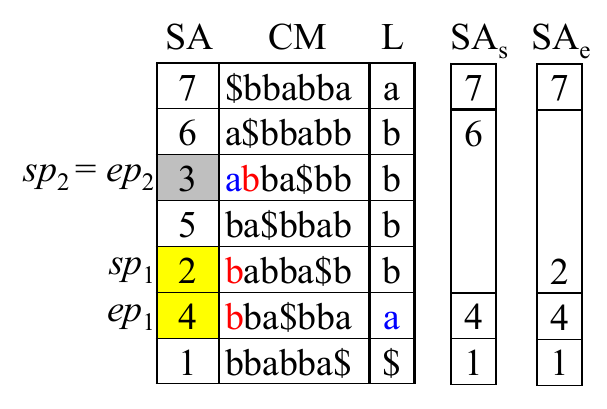}
\caption{(Left) Table illustrating the suffix array ($\SA$), conceptual matrix ($\CM$), 
and BWT ($L$) of the string $T$ used in Figure~\ref{fig:bwt}. 
The yellow rectangle represents the sa-interval $[sp_{1}, ep_{1}]$ of $\texttt{b}$. 
Similarly, the gray rectangle represents the sa-interval $[sp_{2}, ep_{2}]$ of $\texttt{ab}$. 
(Right) The sampled suffix arrays $\SA_{s}$ and $\SA_{e}$ for $\SA$.}
 \label{fig:bs}
\end{center}
\end{figure}

The details of the r-index is as follows. 
The algorithms for answering count and locate queries are described in Appendix~\ref{app:r_index_algo}.
The components of the r-index is described in Appendix~\ref{app:r_index_component}. 
The time complexity for count and locate queries is discussed in Section~\ref{app:r_index_time_space}. 

\subsection{Details of Count and Locate Queries.}\label{app:r_index_algo}
{\bf Count query.} 
The count query is computed by the backward search technique~\cite{10.1145/3375890,DBLP:journals/jacm/FerraginaM05}, which determines the interval $[sp, ep]$ (known as the sa-interval of $P$) such that $\SA[sp..ep]$ contains all occurrence positions of $P$ in $T$.
The backward search is computed using \emph{rank} and \emph{lex-count} queries on $L_{\RLE}$, which are defined as follows.
\begin{itemize}
    \item A rank query $\rank(L_{\RLE}, i, c)$ returns the number of occurrences of a character $c$ in 
    the prefix $L[1..i]$ of the BWT $L$ represented as $L_{\RLE}$, 
    i.e., $\rank(L_{\RLE}, i, c) = |\{ j \in \{ 1, 2, \ldots, i \} \mid L[j] = c \}|$.
    \item A lex-count query $\lexCount(L_{\RLE}, c)$ returns the number of characters smaller than a given character $c \in \Sigma$ in $L$ (i.e. $\lexCount(L_{\RLE}, c) = |\{ 1 \leq i \leq n \mid L[i] < c \}|$). 
\end{itemize}

The sa-interval of $P$ can be computed iteratively as follows:
(i) starting from the sa-interval $[sp_1,ep_1]$ of $P[m]$ as $sp_{1} = \lexCount(L_{\RLE}, P[m]) + 1$ and $ep_{1} = \lexCount(L_{\RLE}, P[m]) + \rank(L_{\RLE}, n, P[m])$; 
(ii) for $i \geq 2$, the sa-interval $[sp_i, ep_i]$ of $P[(m-i+1)..m]$ is computed as 
$sp_{i} = \lexCount(L_{\RLE}, P[m-i+1]) + \rank(L_{\RLE}, sp_{i-1} - 1, P[m-i+1]) + 1$ 
and 
$ep_{i} = \lexCount(L_{\RLE}, P[m-i+1]) + \rank(L_{\RLE}, ep_{i-1}, P[m-i+1])$, 
where $m$ is the length of $P$. 
Finally, $[sp_{m}, ep_{m}]$ is returned as the sa-interval of $P$. 
For more details on the backward search, refer to \cite{DBLP:journals/jacm/FerraginaM05}. 


\textbf{Locate query.}
The locate query is computed using the function $\phi^{-1}$ that returns $\phi^{-1}(\SA[i]) = \SA[i+1]$ for $i \in \{ 1, 2, \ldots, n \}$. 
The following lemma ensures that $\phi^{-1}$ function can be computed on $\SA_{s}$ and $\SA_{e}$ without $\SA$, as follows:

\begin{lemma}[\cite{10.1145/3375890}]\label{lem:phi_inv}
For an integer $i \in \{ 1, 2, \ldots, n \}$, 
let $j$ be the position in $\SA_{e}$ such that $\SA_{e}[j]$ is the largest value among ones less than  $(i+1)$ in $\SA_{e}$ (i.e., $j$ is the position of $u$ in $\SA_{e}$, where $u = \max \{ \SA_{e}[k] \mid 1 \leq k \leq r \text{ s.t. }  \SA_{e}[k] < i+1 \}$). 
For simplicity, we define $\SA_{s}[r+1] = \SA_{s}[1]$. 
Then, $\phi^{-1}(i) = (i - \SA_{e}[j]) + \SA_{s}[j+1]$. 
\end{lemma}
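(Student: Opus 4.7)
The plan is to reformulate the claim as an invariance statement about the SA-gap $\SA[k+1] - \SA[k]$ along a chain of SA positions obtained by iterating the LF mapping, and to show that this gap is transported unchanged from the run-boundary index $t_{j+1}-1$ to $\ISA[i]$. Since $\phi^{-1}(i) = \SA[\ISA[i]+1]$ and $\SA[\ISA[i]] = i$, proving $\SA[\ISA[i]+1] - i = q - p$ with $p = \SA_{e}[j]$ and $q = \SA_{s}[j+1]$ is equivalent to the stated identity $\phi^{-1}(i) = i - p + q$.

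The central tool is the classical LF-preservation property: if $L[k] = L[k+1]$, then monotonicity of $\LF$ within a single character class forces $\LF(k+1) = \LF(k)+1$, and consequently $\SA[\LF(k+1)] - \SA[\LF(k)] = \SA[k+1] - \SA[k]$. The hypothesis $L[k] = L[k+1]$ fails precisely when $k$ is the last position of some BWT run, which by the definition of $\SA_{e}$ is equivalent to $\SA[k] \in \SA_{e}$.

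Form the chain $k_s := \LF^{s}(\ISA[i])$ for $s = 0, 1, 2, \ldots$, so that $\SA[k_s] = i - s$. I will prove by induction that the chain of pairs $(k_s, k_s+1)$ remains SA-adjacent and carries a constant SA-gap for every $s \in \{0, 1, \ldots, i - p\}$. The inductive step from $s-1$ to $s$ goes through precisely when $\SA[k_{s-1}] = i - s + 1 \notin \SA_{e}$; by the choice of $j$, namely that $p$ is the largest element of $\SA_{e}$ not exceeding $i$, the set $\SA_{e}$ avoids $\{p+1, \ldots, i\}$, so the required non-membership holds for every $s \in \{1, \ldots, i - p\}$. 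At $s = i - p$, one obtains $\SA[k_{i-p}] = p = \SA[t_{j+1}-1]$, hence $k_{i-p} = t_{j+1}-1$; SA-adjacency then forces $k_{i-p}+1 = t_{j+1}$ with $\SA[t_{j+1}] = q$. Reading the preserved SA-gap at both endpoints of the chain yields $\SA[\ISA[i]+1] - i = q - p$, and the claim follows.

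The main obstacle is isolating the correct chain-break condition — namely, that the chain breaks at step $s$ exactly when $i - s + 1 \in \SA_{e}$ — and aligning it with the definition of $j$ as the $\SA_{e}$-predecessor of $i+1$, so that the chain is guaranteed to extend all the way to the run boundary and no further. A minor subtlety is the degenerate case $j = r$, corresponding to $i = \SA[n]$: here the chain has length zero, and the stated wraparound convention $\SA_{s}[r+1] := \SA_{s}[1]$ plays the same role as the implicit circular convention $\SA[n+1] := \SA[1]$ that must be adopted when defining $\phi^{-1}$ at the last suffix-array index.
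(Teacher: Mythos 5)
Your argument is correct, and since the paper states Lemma~\ref{lem:phi_inv} without its own proof (citing the r-index paper), the right comparison is with the cited source: your LF-transport argument --- iterating $\LF$ from $\ISA[i]$, using that $L[k]=L[k+1]$ forces $\LF(k+1)=\LF(k)+1$ so the gap $\SA[k+1]-\SA[k]$ is preserved until the first value of $\{p+1,\ldots,i\}$ lying in $\SA_{e}$, which by the choice of $j$ never occurs before reaching $\SA_{e}[j]=\SA[t_{j+1}-1]$ --- is essentially the standard proof of that lemma, and the same technique this paper itself adapts in the proof of Lemma~\ref{lem:lambda_prop}. Your treatment of the degenerate case ($j=r$, equivalently $i=\SA[n]$, where the wraparound $\SA_{s}[r+1]=\SA_{s}[1]$ plays the role of the circular convention for $\phi^{-1}$) is likewise consistent with the intended reading of the lemma.
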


The answer to the locate query for $P$ is $\SA[sp]$, $\SA[sp+1]$,...,$\SA[ep]$ 
for the sa-interval $[sp,ep]$ of $P$.
These positions are computed as follows: 
(i) the sa-interval $[sp, ep]$ and the first position $\SA[sp]$ are computed using the result of the backward search 
(refer to \cite{DBLP:journals/tcs/BannaiGI20} for the detailed algorithm computing $\SA[sp]$); 
(ii) for $i = 1, 2, \ldots, (ep - sp)$, 
$\SA[sp + i]$ is iteratively computed by applying $\phi^{-1}$ to $\SA[sp+i-1]$. 
Therefore, count and locate queries can be supported using $L_{\RLE}$, $\SA_{s}$, and $\SA_{e}$. 

Figure~\ref{fig:bs} illustrates the following: 
the sa-interval $[sp_{1}, ep_{1}] = [5, 6]$ of character $\texttt{b}$; 
the sa-interval $[sp_{2}, ep_{2}] = [3, 3]$ of string $\texttt{ab}$; 
the two sampled suffix arrays $\SA_{s} = 7, 6, 4, 1$ and $\SA_{e} = 7, 2, 4, 1$ created by sampling from the suffix array shown in Figure~\ref{fig:bwt}. 
Using the backward search, 
the two positions $sp_{2}$ and $ep_{2}$ can be computed as 
$sp_{2} = \lexCount(L_{\RLE}, \texttt{a}) + \rank(L_{\RLE}, sp_{1} - 1, \texttt{a}) + 1$ 
and 
$ep_{2} = \lexCount(L_{\RLE}, \texttt{a}) + \rank(L_{\RLE}, ep_{1}, \texttt{a})$, respectively.
$\phi^{-1}(5)$ can be computed as follows: 
Let $j$ be the position in $\SA_{e}$ such that $\SA_{e}[j]$ is the largest value in $\SA_{e}$ less than $6$. 
Here, $j = 3$, $\SA_{e}[j] = 4$, and $\SA_{s}[j+1] = 1$. 
$\phi^{-1}(5) = 2$ is obtained from $\phi^{-1}(5) = (5 - \SA_{e}[j]) + \SA_{s}[j+1]$. 

\subsection{Three Components of r-index}\label{app:r_index_component}
For simplicity, 
we slightly modify the data structures used in the r-index 
under the assumption that each character in $\Sigma$ occurs in $T$. 
This index consists of three components of data structures:  
$D_{\RLE}(L_{\RLE})$, $D_{\DI}(\SA_{s})$, and $D_{\DI}(\SA_{e})$. 

\textbf{$D_{\RLE}(L_{\RLE})$.}
Four sequences $S_1$, $S_2$, $S_3$ and $S_4$ are created from $L_{\RLE} = (c_{1}, \ell_{1}), (c_{2}, \ell_{2})$, $\ldots$, $(c_{r}, \ell_{r})$ as follows: 
\begin{align*}
S_{1} &= c_{1}, c_{2}, \ldots, c_{r}; \\
S_{2} &= t_{1}, t_{2}, \ldots, t_{r}; \\
S_{3} &= \rank(L_{\RLE}, t_{1} + \ell_{1} - 1, c_{1}), \rank(L_{\RLE}, t_{2} + \ell_{2} - 1, c_{2}), \ldots, \rank(L_{\RLE}, t_{r} + \ell_{r} - 1, c_{r}); \\
S_{4} &= \lexCount(L_{\RLE}, 1), \lexCount(L_{\RLE}, 2), \ldots, \lexCount(L_{\RLE}, \sigma). 
\end{align*}
Here, each $t_{i}$ is the starting position of the $i$-th run $(c_{i}, \ell_{i})$ in $L$. 
These sequences are represented as four arrays. 
In addition, we build two component data structures that organize $D_{\RLE}(L_{\RLE})$. 
The first component is Belazzougui and Navarro's static data structure~\cite{DBLP:journals/talg/BelazzouguiN15} 
that supports access, rank, and select queries on the string $S_{1}$. 
Here, these three queries are defined as follows: 
\begin{itemize}
    \item 
    Similar to $\rank(L_{\RLE}, i, c)$, 
    a rank query $\rank(S_{1}, i, c)$ returns the number of occurrences of a character $c$ in $S_{1}[1..i]$. 
    \item A select query $\select(S_{1}, i, c)$ returns the position of the $i$-th occurrence of $c$ in $S_{1}$~(i.e., it returns the smallest integer $j \geq 1$ such that $\sum_{k=1}^{j} I\{S_{1}[k] = c \}$ is equal to $i$, where $I\{\cdot\}$ returns $1$ if and only if the condition inside $\{ \cdot \}$ is true).
    If such integer $i$ does not exist, then this query returns $-1$. 
    \item An access query $\mathsf{access}(S_{1}, i)$ returns the $i$-th character of $S_{1}$.     
\end{itemize}

The second component is Belazzougui and Navarro's static data structure~\cite{DBLP:journals/talg/BelazzouguiN15} 
that supports \emph{predecessor} query on $S_{2}$. 
A predecessor query $\predecessor(S_{2}, i)$ on the sorted sequence $S_{2}$ returns 
the largest integer $j \geq 1$ satisfying $S_{2}[j] < i$. 
These four arrays and two static data structures require $\mathcal{O}(r + \sigma)$ words. 
Here, $\sigma \leq r$ follows from the assumption.

\textbf{$D_{\DI}(\SA_{s})$.}
$D_{\DI}(\SA_{s})$ is an array that stores the values in $\SA_{s}$. 

\textbf{$D_{\DI}(\SA_{e})$.}
$D_{\DI}(\SA_{e})$ has three sequences $\SA_{e}$, $\Pi$, and $S'$. 
Here, $\Pi$ is the sorted sequence of $r$ integers $1, 2, \ldots, r$ satisfying $\SA_{e}[\Pi[1]] < \SA_{e}[\Pi[2]] < \cdots < \SA_{e}[\Pi[r]]$; 
$S' = \SA_{e}[\Pi[1]], \SA_{e}[\Pi[2]], \ldots, \SA_{e}[\Pi[r]]$. 
These three sequences are represented as three arrays, 
and 
we build the Belazzougui and Navarro's static data structure on $S'$ to support predecessor query. 
These three arrays and the static data structure require $\mathcal{O}(r)$ words. 
Therefore, the r-index can be stored in $\mathcal{O}(r)$ words. 

\subsection{Time Complexity for Count and Locate Queries.}\label{app:r_index_time_space}
As already explained, 
the r-index computes the sa-interval $[sp, ep]$ of a given pattern $P$ using rank and lex-count queries on $L_{\RLE}$. 
These queries are supported using $D_{\RLE}(L_{\RLE})$ as follows.

\textbf{Supporting $\rank(L_{\RLE}, i, c)$.}
The following equation follows from the fact that $L_{\RLE}$ is the run-length encoding of $L$. 
\begin{align}    
  \rank(L_{\RLE}, i, c) &= 
  \begin{cases}
    S_{3}[x] - (S_{2}[x+1] - 1 - i) & \text{if $c_{x} = c$} \\
    S_{3}[x'] & \text{otherwise} 
  \end{cases}
\end{align}
where $x = \predecessor(S_{2}, i+1)$, $c_{x} = \mathsf{access}(S_{1}, x)$, $k = \rank(S_{1}, x, c)$, 
and $x' = \select(S_{1}, k, c)$. 
Therefore, the rank query on $L_{\RLE}$ can be supported using access, rank, select, and predecessor queries on $S_{1}$ and $S_{2}$. 

\textbf{Supporting $\lexCount(L_{\RLE}, c)$.}
The answer to this query is stored as the $c$-th integer in $S_{4}$. 
Therefore, we can support any lex-count query in $\mathcal{O}(1)$ time. 

Next, the r-index answers locate query using $\SA[sp]$ and $\phi^{-1}(i)$ after computing the sa-interval $[sp, ep]$ of $P$. 
$\SA[sp]$ and $\phi^{-1}(i)$ are computed as follows.

\textbf{Computing $\SA[sp]$.}
Consider the $m$ sa-intervals $[sp_{1}, ep_{2}]$, $[sp_{1}, ep_{2}]$, $\ldots$, $[sp_{m}, ep_{m}]$ computed by the backward search, where $m$ is the length of $P$. 
The following lemma is called \emph{toehold lemma}, which states the relationship among 
$\SA[sp_{i+1}]$, $\SA[sp_{i}]$ and $\SA_{s}$ for each integer $i \in \{ 1, 2, \ldots, m-1 \}$. 
\begin{lemma}[Section 2.3 in \cite{DBLP:journals/tcs/BannaiGI20}]
    For each integer $i \in \{ 1, 2, \ldots, m-1 \}$, 
    let $(c_{v}, \ell_{v})$ be the run that contains the $sp_{i}$-th character in $L$. 
    If $P[m - i] = c_{v}$, 
    then $\SA[sp_{i+1}] = \SA[sp_{i}] - 1$; 
    otherwise, $\SA[sp_{i}] = \SA_{s}[v'] - 1$, 
    where $v' \in \{ 1, 2, \ldots, r \}$ is the smallest integer 
    satisfying $v' \geq v+1$ and $c_{v'} = P[m - i]$.
\end{lemma}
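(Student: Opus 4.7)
The plan is to derive both cases from a single underlying fact: the backward-search update $sp_{i+1} = \LF(j^{*})$, where $j^{*}$ is the smallest index in $[sp_{i}, ep_{i}]$ with $L[j^{*}] = P[m-i]$, together with the LF-identity $\SA[\LF(j)] = \SA[j] - 1$ whenever $\SA[j] > 1$. Given these two ingredients, the lemma reduces to identifying $j^{*}$ explicitly in each case and then translating $\SA[j^{*}]$ into one of the two available pieces of data, namely the inherited value $\SA[sp_{i}]$ or a sampled entry $\SA_{s}[v']$.

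In the matching case $P[m-i] = c_{v}$, the position $sp_{i}$ itself already satisfies $L[sp_{i}] = c_{v} = P[m-i]$, so $j^{*} = sp_{i}$ and the LF-identity immediately yields $\SA[sp_{i+1}] = \SA[sp_{i}] - 1$. In the non-matching case $P[m-i] \neq c_{v}$, all positions within the $v$-th run, namely $sp_{i}, sp_{i}+1, \ldots, t_{v+1}-1$, carry the character $c_{v}$ and so contribute no valid index. Hence $j^{*}$ must start a later run. By the maximality of runs, the first subsequent position whose character is $P[m-i]$ is exactly the starting position of the first later run whose run-character matches, which by definition is $t_{v'}$ with $v' = \min \{k \geq v+1 \mid c_{k} = P[m-i]\}$. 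Applying $\SA_{s}[v'] = \SA[t_{v'}]$ and the LF-identity then gives $\SA[sp_{i+1}] = \SA_{s}[v'] - 1$.

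Two subtleties need attention. First, one must verify that $j^{*}$ actually lies in $[sp_{i}, ep_{i}]$, so that $sp_{i+1}$ is well-defined; this uses the standing assumption of the backward-search context, which guarantees $[sp_{i+1}, ep_{i+1}] \neq \emptyset$ and in particular the existence of an occurrence of $P[m-i]$ inside $L[sp_{i}..ep_{i}]$, forcing $t_{v'} \leq ep_{i}$ in the second case. Second, the LF-identity itself must be justified, which is a standard consequence of the way the conceptual matrix $\CM$ orders circular shifts: prepending the character $L[j] = T[\SA[j]-1]$ to the suffix starting at $\SA[j]$ produces the suffix starting at $\SA[j]-1$, and sorting by these prefixed suffixes gives precisely the LF-mapping.

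The main obstacle I expect is not algebraic but structural: carefully arguing why $j^{*}$ sits at the start of run $v'$ rather than somewhere inside it. This hinges on the observation that $t_{v'}-1$ belongs to a run with character $c_{v'-1} \neq P[m-i]$ (by maximality of runs) and that every position strictly between $t_{v+1}$ and $t_{v'}$ lies in an intermediate run whose character also differs from $P[m-i]$ (by minimality of $v'$). Once this is stated clearly, the rest is a direct substitution, and the boundary case $\SA[j^{*}] = 1$ does not arise because $P[m-i] \neq \$$ rules out $j^{*}$ being the position where $L$ equals the sentinel.
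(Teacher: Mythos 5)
Your proposal is correct. Note that the paper itself gives no proof of this lemma---it is imported verbatim as Section~2.3 of the cited work of Bannai et al.---so there is no in-paper argument to compare against; what you wrote is the standard toehold-lemma proof from the r-index literature, and it is sound. The two ingredients you isolate are exactly the right ones: $sp_{i+1} = \LF(j^{*})$ for the first occurrence $j^{*}$ of $P[m-i]$ in $L[sp_{i}..ep_{i}]$, which follows in one line from the paper's backward-search formula because $\rank(L_{\RLE}, j^{*}, P[m-i]) = \rank(L_{\RLE}, sp_{i}-1, P[m-i]) + 1$, and the identity $\SA[\LF(j)] = \SA[j]-1$ for $\SA[j] \neq 1$. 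Your run-boundary argument that $j^{*} = t_{v'}$ in the mismatch case (positions $sp_{i},\ldots,t_{v+1}-1$ carry $c_{v}\neq P[m-i]$, and runs $v+1,\ldots,v'-1$ are excluded by minimality of $v'$), together with the observation that $\SA[j^{*}]=1$ is impossible since $P[m-i]\neq\$$, closes the only delicate points, and the membership $t_{v'}\leq ep_{i}$ is indeed guaranteed by the standing assumption that the backward search does not fail at this step. One remark: in the mismatch case you prove $\SA[sp_{i+1}] = \SA_{s}[v']-1$, whereas the statement as printed reads $\SA[sp_{i}] = \SA_{s}[v']-1$; the printed form is evidently a typo, since it is the $sp_{i+1}$ version that Corollary~\ref{cor:toe} relies on (repeatedly applying the matching case afterwards yields $\SA[sp] = \SA_{s}[v'] - (m-i')$), so you have proved the intended claim.
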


The following corollary follows from the toehold lemma. 
\begin{corollary}\label{cor:toe}
Let $i' \in \{ 1, 2, \ldots, m-1 \}$ be the largest integer satisfying 
$P[m - i'] \neq c_{v}$, 
where $(c_{v}, \ell_{v})$ is the run that contains the $sp_{i'}$-th character in $L$.
If such integer $i'$ exists, 
then $\SA[sp] = \SA_{s}[v'] - (m - i')$, 
where $v' \in \{ 1, 2, \ldots, r \}$ is the smallest integer 
satisfying $v' \geq v+1$ and $P[m - i'] = c_{v'}$. 
Otherwise, 
$\SA[sp] = \SA_{s}[u] - m$, 
where $u \in \{ 1, 2, \ldots, r \}$ is the smallest integer 
satisfying $c_{u} = P[m]$. 
\end{corollary}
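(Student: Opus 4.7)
Since $sp = sp_m$ by definition of the backward search, the plan is to iterate the toehold lemma backward through the $m-1$ extension steps to compute $\SA[sp_m]$, isolating the unique ``jump'' step prescribed by $i'$ (if any) and treating every step after it as a chain of decrements. In Case~1 the jump occurs at $i = i'$ itself and produces $\SA_{s}[v'] - 1$; in Case~2 no jump occurs at all and the base value $\SA[sp_1]$ has to be computed separately via the LF mapping.

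For each $i \in \{1, \ldots, m-1\}$, let $(c_{v_i}, \ell_{v_i})$ be the run of $L$ that contains position $sp_i$. The toehold lemma then gives $\SA[sp_{i+1}] = \SA[sp_i] - 1$ when $P[m-i] = c_{v_i}$, and $\SA[sp_{i+1}] = \SA_{s}[v'_i] - 1$ otherwise, where $v'_i$ is the smallest index $\geq v_i + 1$ with $c_{v'_i} = P[m-i]$. In Case~1, the maximality of $i'$ makes the ``otherwise'' branch fire exactly at $i = i'$ and the ``if'' branch at every $i \in \{i'+1, \ldots, m-1\}$, so telescoping the $m - i' - 1$ trailing decrements starting from $\SA[sp_{i'+1}] = \SA_{s}[v'] - 1$ yields $\SA[sp] = \SA_{s}[v'] - (m - i')$, as required.

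In Case~2 the ``if'' branch fires at every step, so the same telescoping gives $\SA[sp_m] = \SA[sp_1] - (m-1)$, and it only remains to identify $\SA[sp_1]$. Since $u$ is the smallest index with $c_u = P[m]$, every earlier run carries a different character, so $t_u$ is the first $L$-position with character $P[m]$ and $\rank(L_{\RLE}, t_u, P[m]) = 1$. The LF-mapping formula then gives $\LF(t_u) = \lexCount(L_{\RLE}, P[m]) + 1 = sp_1$, and the identity $\SA[\LF(j)] = \SA[j] - 1$ (valid here since $\SA[t_u] = 1$ would force $L[t_u] = \$$, which is excluded because the pattern does not contain $\$$) yields $\SA[sp_1] = \SA_{s}[u] - 1$, whence $\SA[sp] = \SA_{s}[u] - m$.

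The main obstacle is precisely this base case: the toehold lemma only relates consecutive $sp_i$'s and says nothing about $\SA[sp_1]$. My fix is the LF-mapping identity together with the observation that the first $L$-occurrence of $P[m]$ coincides with the start $t_u$ of the $u$-th run. Once this base-case equality is in hand, the rest of the argument is a clean telescoping of toehold steps, so I expect the bulk of the write-up to concern the LF-based base-case reasoning rather than the chaining used in Case~1.
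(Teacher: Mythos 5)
Your proof is correct and follows the route the paper intends: the paper gives no explicit argument (it states the corollary "follows from the toehold lemma"), and your write-up is exactly that derivation — reading the lemma's "otherwise" branch as $\SA[sp_{i'+1}] = \SA_{s}[v'] - 1$ (the paper's "$\SA[sp_{i}]$" there is a typo) and telescoping the trailing decrements. Your only addition is making explicit the base case the paper leaves implicit, namely $\SA[sp_{1}] = \SA_{s}[u] - 1$ via $\LF(t_{u}) = \lexCount(L_{\RLE}, P[m]) + 1 = sp_{1}$, which is a correct and welcome piece of bookkeeping.
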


Corollary~\ref{cor:toe} shows that $\SA[sp]$ can be computed using the largest integer $i'$ 
satisfying $P[m - i'] \neq c_{v}$, 
where $(c_{v}, \ell_{v})$ is the run that contains the $sp_{i'}$-th character in $L$.
Such integer $i'$ can be found using $\mathcal{O}(1)$ queries on $S_{1}$ and $S_{2}$. 
After computing the integer $i'$, 
$v'$ can be computed by $\select(S_{1}, 1 + d, P[m - i'])$, 
where $d = \rank(S_{1}, v, P[m - i'])$. 
Therefore, $\SA[sp]$ can be computed using $D_{\RLE}(L_{\RLE})$ and $D_{\DI}(\SA_{s})$. 

\textbf{Supporting $\phi^{-1}(i)$.}
Lemma~\ref{lem:phi_inv} shows that 
$\phi^{-1}(i)$ can be computed using $\SA_{e}[j]$ and $\SA_{s}[j+1]$.
$j$ is computed by $\Pi[\predecessor(S', i+1)]$. 
Therefore, $\phi^{-1}(i)$ can be computed using $D_{\DI}(\SA_{s})$ and $D_{\DI}(\SA_{e})$. 

Finally, the r-index can support count query in $\mathcal{O}(m \log\log(\sigma + n/r))$ time 
and locate query in $\mathcal{O}((m + \occ) \log\log(\sigma + n/r))$ time. 
See \cite{10.1145/3375890,DBLP:journals/tcs/BannaiGI20} for more details on the r-index.

\section{Dynamic r-index: Design and Implementation}\label{sec:dynamic_r_index_unified}
We present the \emph{dynamic r-index}, which extends the static r-index to support efficient updates while maintaining fast query performance. This section describes the complete design, from the underlying data structures to the update algorithms and their time complexities.

\subsection{Design Overview and Challenges}\label{subsec:design_overview}
The static r-index uses three key components: $L_{\RLE}$, $\SA_s$, and $\SA_e$. 
To make this index dynamic, we face two fundamental challenges:

\textbf{RLBWT updates.}
When a character is inserted or deleted, the BWT changes, which may split existing runs, merge adjacent runs, extend or shrink runs, or create new runs. Each of these operations must be performed efficiently while maintaining the compressed representation $L_{\RLE}$.

\textbf{Sampled SA consistency.}
$\SA_s$ and $\SA_e$ are intrinsically linked to the run structure of the BWT. When runs change, new runs require new sampled positions, and deleted runs require removing their sampled positions. 
Similar to the RLBWT update, 
each of these operations must be performed efficiently while maintaining the sampled SAs.

\textbf{Computational time.}
For achieving an LCP bounded update time, 
each update time of RLBWT and sampled SAs is bounded by $\mathcal{O}(\log r)$. In addition, the number of iterations for updating RLBWT and sampled SA should be bounded by the average length of LCPs.

\textbf{Our approach.}
We address these challenges by designing (i) an efficient update algorithm of $L_{\RLE}$, $\SA_s$, and $\SA_e$ and (ii) 
three coordinated dynamic data structures: $\mathscr{D}_{\RLE}(L_{\RLE})$ for managing 
the RLBWT, and $\mathscr{D}_{\DI}(\SA_s)$, $\mathscr{D}_{\DI}(\SA_e)$ for maintaining the sampled SAs. 
Each structure supports efficient operations in $\mathcal{O}(\log r)$ time, and the key insight is that updates to the input string $T$ typically affect only a small number of runs if the average LCP value is small.

\subsection{Dynamic Data Structures}\label{subsec:dynamic_r_index}
We present the \emph{dynamic r-index}, composed of three dynamic structures:  
$\mathscr{D}_{\RLE}(L_{\RLE})$, $\mathscr{D}_{\DI}(\SA_{s})$, and $\mathscr{D}_{\DI}(\SA_{e})$.  
These data structures require $\mathcal{O}(r)$ words of space for the $r$ runs in $L_{\RLE}$ 
and support queries on $L_{\RLE}$, $\SA_{s}$, and $\SA_{e}$ in $\mathcal{O}(\log r)$ time. 
Therefore, 
the dynamic r-index uses $\mathcal{O}(r)$ words of space, matching the static r-index and making it practical for highly repetitive strings where $r \ll n$.

\subsubsection{Details of \texorpdfstring{$\mathscr{D}_{\RLE}(L_{\RLE})$}{DRLE}}\label{app:RLE_dynamic_data_structure}
\begin{table}[htb]
\caption{Summary of the six queries supported by $\mathscr{D}_{\RLE}(L_{\RLE})$. Here, $L$ is the BWT of length $n$ represented as $L_{\RLE}$ of $r$ runs; $i \in \{ 1, 2, \ldots, n \}$ is an integer; 
$j \in \{ 1, 2, \ldots, r \}$ is an integer.}
\label{tab:queries_on_L}

\begin{tabular}{|c|l|c|}
\hline
Query & Description & Time complexity \\
\hline
$\rank(L_{\RLE}, i, c)$        & return the number of occurrences of $c$ in $L[1..i]$       & \multirow{9}{*}{$\mathcal{O}(\log r / \log\log r)$} \\
\multirow{2}{*}{$\select(L_{\RLE}, i, c)$}  & return the position of the $i$-th occurrence of $c$ in $L$           &  \\
 & If no such occurrence exists, it returns $-1$.           &  \\

$\lexCount(L_{\RLE}, c)$     & return the number of characters smaller than $c \in \Sigma$ in $L$         &  \\
\multirow{2}{*}{$\lexSearch(L_{\RLE}, i)$}    & return a character $c \in \Sigma$ that satisfies    &  \\
                            & $\lexCount(L, c) < i \leq \lexCount(L, c) + \rank(L, |L|, c)$   &  \\

$\runAccess(L_{\RLE}, j)$         & return the $j$-th run and its starting position in $L$   &  \\
\multirow{2}{*}{$\runIndex(L_{\RLE}, i)$}            & return the index $x$ of the run $(c_{x}, \ell_{x})$ that       &  \\
                    & contains the $i$-th character in $L$      &  \\
\hline
\end{tabular}
\end{table}

\begin{table}[htb]
\caption{Summary of the four update operations supported by $\mathscr{D}_{\RLE}(L_{\RLE})$. 
Here, $L_{\RLE} = (c_{1}, \ell_{1}), (c_{2}, \ell_{2}), \ldots, (c_{r}, \ell_{r})$; 
$c \in \Sigma$ is a character;
$i \in \{ 1, 2, \ldots, r \}$ is an integer; 
$t \in \{ 1, 2, \ldots, \ell_{i} - 1 \}$ is an integer.}
\label{tab:operations_on_L}
\begin{tabular}{|c|l|c|}
\hline
Operation & Description & Time complexity \\
\hline
\multirow{2}{*}{$\insertRLE(L_{\RLE}, c, i)$}         & insert $c$ into $L_{\RLE}$ as a new $i$-th run $(c, 1)$ if $c \neq c_i$;      & \multirow{7}{*}{$\mathcal{O}(\log r / \log\log r)$} \\
         & otherwise, increment the length of the $i$-th run by $1$      &  \\

\multirow{2}{*}{$\deleteRLE(L_{\RLE}, i)$}      & delete the $i$-th run $(c_i, \ell_i)$ if $\ell_i = 1$;           &  \\
      & otherwise, decrement $\ell_i$ by $1$           &  \\

$\splitRLE(L_{\RLE}, i, t)$     & split the $i$-th run $(c_{i}, \ell_{i})$ into two runs $(c_{i}, t)$ and $(c_{i}, \ell_{i} - t)$         &  \\
\multirow{2}{*}{$\mergeRLE(L_{\RLE}, i)$}    & merge two consecutive runs $(c_{i}, \ell_{i})$ and $(c_{i+1}, \ell_{i+1})$ into      &  \\
    & a new run $(c_{i}, \ell_{i} + \ell_{i+1})$ if $c_{i} = c_{i+1}$     &  \\

\hline
\end{tabular}
\end{table}

\begin{table}[htb]
\caption{Summary of the two update operations supported by the dynamic data structure proposed by Munro and Nekrich~\cite{DBLP:conf/esa/MunroN15}, which is built on $S_1$ of $r$ characters. 
Here, $c \in \Sigma$ is a character; $i \in \{ 1, 2, \ldots, r \}$ is an integer.}
\label{tab:operations_on_S}
\begin{tabular}{|c|l|c|}
\hline
Operation & Description & Time complexity \\
\hline
$\insertL(S_{1}, i, c)$         & insert $c$ into $S_{1}$ at position $i$;      & \multirow{2}{*}{$\mathcal{O}(\log r / \log\log r)$} \\
$\deleteL(S_{1}, i)$         & delete the character at position $i$ from $S_{1}$      &  \\
\hline
\end{tabular}
\end{table}

\begin{table}[htb]
\caption{Summary of the two queries supported by the dynamic data structure proposed by Bille et al.~\cite{DBLP:journals/algorithmica/BilleCCGSVV18}, which is built on $S$. Here, $i$ is a position in $S$; $t \geq 0$ is an integer.}
\label{tab:queries_PS}

\begin{tabular}{|c|l|c|}
\hline
Query & Description & Time complexity \\
\hline
$\sumPS(S, i)$        & return the sum of the first $i$ elements in $S$       & \multirow{3}{*}{$\mathcal{O}(\log |S| / \log (B/\delta))$} \\
\multirow{2}{*}{$\searchPS(S, t)$}  & return the smallest integer $1 \leq i \leq |S|$ such that $\sum_{j = 1}^{i} S[j] \geq t$          &  \\
  &     If no such $i$ exists, it returns $|S| + 1$.      &  \\

\hline
\end{tabular}
\end{table}

\begin{table}[htb]
\caption{Summary of the four update operations supported by the dynamic data structure proposed by Bille et al.~\cite{DBLP:journals/algorithmica/BilleCCGSVV18}, which is built on $S$. Here, $\delta \geq 1$ is a user-defined parameter; $i$ is a position in $S$; $\Delta \in \{ 0, 1, \ldots, 2^{\delta}-1 \}$ is an integer smaller than $2^{\delta}$; $s \in \{ 0, 1, \ldots, S[i] \}$ is an integer smaller than $(S[i]+1)$; $B$ is machine word size.}
\label{tab:operations_on_PS}
\begin{tabular}{|c|l|c|}
\hline
Operation & Description & Time complexity \\
\hline
$\insertPS(S, i, \Delta)$         & insert $\Delta$ into $S$ at position $i$      & \multirow{4}{*}{$\mathcal{O}(\log |S| / \log (B/\delta))$} \\
$\deletePS(S, i)$         & delete the element $S[i]$ at position $i$ from $S$ if $S[i] \leq 2^{\delta}-1$      &  \\
$\mergePS(S, i)$         & merge $S[i]$ and $S[i+1]$ into $(S[i] + S[i+1])$      &  \\
$\dividePS(S, i, s)$         & divide $S[i]$ into $s$ and $(S[i]-s)$ in $S$     &  \\
\hline
\end{tabular}
\end{table}

The dynamic structure $\mathscr{D}_{\RLE}(L_{\RLE})$ supports six queries on $L_{\RLE}$:  
\emph{rank}, \emph{select}, \emph{lex-count}, \emph{lex-search}, \emph{run-access}, and \emph{run-index},  
and four update operations: \emph{insertion}, \emph{deletion}, \emph{split}, and \emph{merge}. 
These queries and update operations are summarized in Table~\ref{tab:queries_on_L} and Table~\ref{tab:operations_on_L}. 

We construct four sequences from $L_{\RLE}$ and build a dynamic data structure for each sequence  
to support the six queries and four update operations in $\mathscr{D}_{\RLE}(L_{\RLE})$ as described below.
Let $\Pi$ be the sorted sequence of $r$ integers $1, 2, \ldots, r$ such that any pair of two integers $\Pi[i]$ and $\Pi[j]$ ($i < j$) 
satisfies either of the following two conditions:
(a) $c_{\Pi[i]} < c_{\Pi[j]}$, or
(b) $c_{\Pi[i]} = c_{\Pi[j]}$ and $\Pi[i] < \Pi[j]$. 
Sequences $S_1$, $S_2$, $S_3$ and $S_4$ are defined as follows: 
\begin{align*}
S_{1} &= c_{1}, c_{2}, \ldots, c_{r}; \\
S_{2} &= \ell_{1}, \ell_{2}, \ldots, \ell_{r}; \\
S_{3} &= \ell_{\Pi[1]}, \ell_{\Pi[2]}, \ldots, \ell_{\Pi[r]}; \\
S_{4} &= (c_{\Pi[1]} - 0), (c_{\Pi[2]} - c_{\Pi[1]}), \ldots, (c_{\Pi[r]} - c_{\Pi[r-1]}), ((\max \Sigma) - c_{\Pi[r]}). 
\end{align*}

To realize $\mathscr{D}_{\RLE}(L_{\RLE})$, we construct two dynamic data structures on the sequences $S_1$, $S_2$, $S_3$, and $S_4$.
As the first dynamic data structure, we use the one proposed by Munro and Nekrich~\cite{DBLP:conf/esa/MunroN15}, which is built on $S_1$.  
It supports access, rank and select queries on $S_1$, as well as \emph{insertion} and \emph{deletion} operations, as described in Table~\ref{tab:operations_on_S}.
This dynamic data structure supports all the queries in $\mathcal{O}(\log r / \log \log r)$ time while using $r \log \sigma + o(r \log \sigma)$ bits of space. 

As the second dynamic data structure, we use the one proposed by Bille et al.~\cite{DBLP:journals/algorithmica/BilleCCGSVV18},  
which is applied to the sequences $S_2$, $S_3$, and $S_4$.  
It supports \emph{sum} and \emph{search} queries, as well as four update operations \emph{insertion}, \emph{deletion}, \emph{split}, and \emph{merge} on an integer sequence $S$, as described in Table~\ref{tab:queries_PS} and Table~\ref{tab:operations_on_PS}.

The complexity of all these queries is $\mathcal{O}(\log |S| / \log (B/\delta))$ time and 
$\mathcal{O}(|S|)$ words of space, where $B$ is machine word size and $\delta \geq 1$ is a user-defined parameter. 

We construct the Bille et al.'s data structure by setting $\delta=1$ for each of $S_2$, $S_3$ and $S_4$. This configuration supports the queries and the four update operations in $\mathcal{O}(\log r / \log B)$ time, while using $\mathcal{O}(r)$ words of space. 
Since $r \leq n$ and $B = \Theta(\log n)$, 
the time complexity $\mathcal{O}(\log r / \log B)$ can be bounded by $\mathcal{O}(\log r / \log \log r)$. 

The lemmas below provide time complexities for the queries and updates supported by $\mathscr{D}_{\RLE}(L_{\RLE})$.
\begin{lemma}\label{lem:dyn_rle_update}
$\mathscr{D}_{\RLE}(L_{\RLE})$ supports 
(i) rank, select, lex-count, lex-search, run-access, and run-index queries, 
and (ii) insertion, deletion, split, and merge operations in $\mathcal{O}(\log r / \log \log r)$ time.
\end{lemma}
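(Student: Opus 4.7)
The plan is to implement each of the six queries and four updates on $L_{\RLE}$ using only a constant number of operations on the auxiliary sequences $S_1, S_2, S_3, S_4$. Since the dynamic structure of Munro and Nekrich~\cite{DBLP:conf/esa/MunroN15} on $S_1$, and the dynamic partial-sum structure of Bille et al.~\cite{DBLP:journals/algorithmica/BilleCCGSVV18} with $\delta = 1$ on $S_2, S_3, S_4$, each support their primitives in $\mathcal{O}(\log r / \log\log r)$ time, the stated bound follows immediately. Throughout the argument I would rely on two structural identities: $\searchPS(S_2, i)$ returns the index of the run that contains position $i$ of $L$, and $\sumPS(S_4, j) = c_{\Pi[j]}$, so that $\searchPS(S_4, c)$ locates the first $\Pi$-position whose character is at least $c$.

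\textbf{Queries.} The easy cases come first: $\runIndex(L_{\RLE}, i) = \searchPS(S_2, i)$, and $\runAccess(L_{\RLE}, j)$ is recovered from $\access(S_1, j)$ combined with $\sumPS(S_2, j-1)$ and $\sumPS(S_2, j)$. For $\lexCount(L_{\RLE}, c)$ I set $p = \searchPS(S_4, c)$ and return $\sumPS(S_3, p-1)$, since the first $p-1$ positions of $\Pi$ are exactly the runs with characters strictly less than $c$. Dually, $\lexSearch(L_{\RLE}, i) = \sumPS(S_4, \searchPS(S_3, i))$. For $\rank(L_{\RLE}, i, c)$ I first locate the containing run $x = \searchPS(S_2, i)$, let $k = \rank(S_1, x-1, c)$ count prior $c$-runs, and obtain their total length as $\sumPS(S_3, p_c + k - 1) - \sumPS(S_3, p_c - 1)$, where $p_c = \searchPS(S_4, c)$; a partial contribution from run $x$ is added when $\access(S_1, x) = c$. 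The $\select$ query is symmetric: I invoke $\searchPS(S_3, \sumPS(S_3, p_c - 1) + i)$ to identify the $\Pi$-position of the target $c$-run, use $\select(S_1, q, c)$ with $q = j - p_c + 1$ to recover its original index $x$, and then add the residual in-run offset computed from $\sumPS(S_3, \cdot)$ to the start-of-run position $\sumPS(S_2, x-1) + 1$.

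\textbf{Updates.} Each update of $L_{\RLE}$ is decomposed into a handful of changes across the four sequences. The most delicate case is $\insertRLE(L_{\RLE}, c, i)$ with $c \neq c_i$, which introduces a brand-new run: I perform $\insertL(S_1, i, c)$, $\insertPS(S_2, i, 1)$, and an $\insertPS(S_3, p_c + k, 1)$ at the $\Pi$-position determined by $p_c = \searchPS(S_4, c)$ and $k = \rank(S_1, i-1, c)$, followed by an $\insertPS(S_4, \cdot, 0)$ between two neighboring $\Pi$-entries that share character $c$ (which they do under the assumption that every character of $\Sigma$ already occurs in $T$, keeping all entries of $S_4$ in $\{0,1\}$ and hence compatible with $\delta = 1$). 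The length-modifying branches of $\insertRLE$/$\deleteRLE$ are simulated with the restricted partial-sum primitives: incrementing an entry of $S_2$ (and the corresponding entry of $S_3$) by one is realized by $\insertPS(\cdot, \cdot, 1)$ followed by $\mergePS$, while decrementing by one uses $\dividePS$ followed by $\deletePS$. Finally, $\splitRLE$ and $\mergeRLE$ translate directly into $\dividePS$/$\mergePS$ on $S_2$ and $S_3$, plus an $\insertL$/$\deleteL$ of a duplicate character in $S_1$ and a corresponding $0$-insertion or deletion in $S_4$.

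\textbf{Main obstacle.} The principal difficulty is not any single query or update in isolation but the bookkeeping that keeps $\Pi$, encoded implicitly through $S_3$ and $S_4$, consistent with the run structure recorded by $S_1$ and $S_2$. For every update one must determine precisely where in $S_3$ and $S_4$ to insert, delete, merge, or split entries, and verify that the neighbouring $\Pi$-entries share the affected character so that the modifications to $S_4$ stay within $\{0,1\}$. Once this case analysis is dispatched and each $L_{\RLE}$-operation is shown to invoke $\mathcal{O}(1)$ primitives on $S_1, S_2, S_3, S_4$, the $\mathcal{O}(\log r / \log\log r)$ bound follows directly from the two cited dynamic data structures.
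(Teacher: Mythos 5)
Your proposal is correct and follows essentially the same route as the paper: each query and update on $L_{\RLE}$ is reduced to $\mathcal{O}(1)$ primitives on $S_1,S_2,S_3,S_4$, with the paper delegating the six query reductions to cited prior work while you spell them out explicitly (and you also make explicit the simulation of increment/decrement on $S_2,S_3$ that the paper glosses over). The only substantive difference is the new-run update of $S_4$: you insert a $0$ adjacent to an existing run with the same character, which is valid only under the standing assumption that every character of $\Sigma$ occurs in $T$, whereas the paper uses $\dividePS(S_4,\cdot,\cdot)$, which also handles a character that does not currently occur as a run character.
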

\begin{proof}
The proof of Lemma~\ref{lem:dyn_rle_update} is as follows. 

\textbf{Supporting rank, select, lex-count, lex-search, run-access, and run-index queries.}
If one can perform (i) access, rank, and select queries on $S_{1}$, and (ii) sum and search queries on each of $S_{2}$, $S_{3}$, and $S_{4}$, each in $\mathcal{O}(\beta)$, then the six queries on the RLBWT $L_{\RLE}$ --- namely, $\rank(L_{\RLE}, i, c)$, $\select(L_{\RLE}, i, c)$, $\lexCount(L_{\RLE}, c)$, $\lexSearch(L_{\RLE}, j)$, $\runAccess(L_{\RLE}, i)$, and $\runIndex(L_{\RLE}, j)$ --- can be answered in $\mathcal{O}(\beta)$ time \cite{DBLP:journals/jda/OhnoSTIS18,DBLP:journals/algorithmica/PolicritiP18}.
The two dynamic data structures of $\mathscr{D}_{\RLE}(L_{\RLE})$ support (i) access, rank, and select queries on $S_1$, and (ii) sum and search queries on each of $S_2$, $S_3$, and $S_4$, each in time $\beta = \mathcal{O}(\log r/\log \log r)$. 
Thus, $\mathscr{D}_{\RLE}(L_{\RLE})$ can answer all six queries in $\mathcal{O}(\log r/\log \log r)$ time.

\textbf{Supporting $\insertRLE(L_{\RLE}, c, i)$.}
If either $i = n+1$ or $c_{i} \neq c$, 
then a new run $(c, 1)$ inserted into $L_{\RLE}$ as the $i$-th run. 
In this case, 
we update the four sequences $S_{1}$, $S_{2}$, $S_{3}$, $S_{4}$ as follows: 
\begin{itemize}
    \item Insert $c$ into $S_{1}$ as the $i$-th character.
    \item Insert $1$ into $S_{2}$ as the $i$-th integer.
    \item Insert $1$ into $S_{3}$ at an appropriate position $j$. 
    $j$ is equal to largest integer $j' \geq 2$ satisfying 
    satisfies either (a) $c_{\Pi[j'-1]} < c$ or (b) $c_{\Pi[j'-1]} = c$ and $\Pi[j'-1]] < i$. If such $j'$ does not exist, then $j'$ is defined as 1. 
    We can show that 
    $j' = 1 + \lexCount(S_{1}, c) + \rank(S_{1}, i-1, c)$ using the lex-count and rank queries on $S_{1}$. 
    Here, the lex-count query $\lexCount(S_{1}, c)$ can be answered by computing $(\searchPS(S_{4}, c) - 1)$, 
    the rank query is supported by the Munro and Nekrich's dynamic data structure built on $S_{1}$. 
    \item     
    Divide the $j$-th integer $(c_{\Pi[j]} - c_{\Pi[j-1]})$ into two integers 
    $(c - c_{\Pi[j-1]})$ and $(c_{\Pi[j]} - c)$ by $\dividePS(S_{4}, u, (c - c_{\Pi[j-1]}))$.    
\end{itemize}
Otherwise (i.e., $1 \leq i \leq n$ and $c_{i} = c$), 
the length $\ell_{i}$ of the $i$-th run is incremented by 1. 
In this case, 
we update the two sequences $S_{2}$ and $S_{3}$ as follows: 
(i) Increment the $i$-th integer in $S_{2}$ by 1; 
(ii) Increment the $i'$-th integer in $S_{3}$ by 1, where $i'$ is the position of $i$ in $\Pi$. 

\textbf{Supporting $\deleteRLE(L_{\RLE}, i)$ .}
If the length of the $i$-th run $(c_{i}, \ell_{i})$ is 1, 
then this run is deleted from $L_{\RLE}$. 
In this case, we update the four sequences $S_{1}$, $S_{2}$, $S_{3}$, $S_{4}$ as follows: 
\begin{itemize}
    \item Delete the $i$-th character from $S_{1}$.
    \item Delete the $i$-th integer from $S_{2}$.
    \item Delete the $i'$-th integer from $S_{3}$, 
    where $i'$ is the position of $i$ in $\Pi$. 
    Similar to the integer $j$, 
    $i'$ can be obtained by computing $\lexCount(S_{1}, S_{1}[i]) + \rank(S_{1}, i, S_{1}[i])$.     
    \item Merge the $i'$-th and $(i'+1)$-th integers in $S_{4}$ by $\mergePS(S_{4}, i')$. 
\end{itemize}
Otherwise (i.e., $\ell_{i} \geq 2$), 
the length $\ell_{i}$ of the $i$-th run is decremented by 1. 
In this case, 
we update the two sequences $S_{2}$ and $S_{3}$ as follows: 
(i) Decrement the $i$-th integer in $S_{2}$ by 1; 
(ii) Decrement the $i'$-th integer in $S_{3}$ by 1, where $i'$ is the position of $i$ in $\Pi$. 


\textbf{Supporting $\splitRLE(L_{\RLE}, i, t)$.}
We update the four sequences $S_{1}$, $S_{2}$, $S_{3}$, $S_{4}$ as follows: 
\begin{itemize}
    \item Insert $c$ into $S_{1}$ as the $(i+1)$-th character.
    \item Divide the $i$-th integer in $S_{2}$ into $t$ and $(\ell_{i} - t)$ by $\dividePS(S_{2}, i, t)$.
    \item Divide the $i'$-th integer in $S_{2}$ into $t$ and $(\ell_{i} - t)$ by $\dividePS(S_{2}, i, t)$, 
    where $i'$ is the position of $i$ in $\Pi$. 
    \item Insert $0$ into $S_{4}$ as the $(i'+1)$-th integer. 
\end{itemize}

\textbf{Supporting $\mergeRLE(L_{\RLE}, i)$.}
We update the four sequences $S_{1}$, $S_{2}$, $S_{3}$, $S_{4}$ as follows: 
\begin{itemize}
    \item Delete the $(i+1)$-th integer from $S_{1}$.
    \item Merge the $i$-th and $(i+1)$-th integers in $S_{2}$ by $\mergePS(S_{2}, i)$.
    \item Merge the $i'$-th and $(i'+1)$-th integers in $S_{3}$ by $\mergePS(S_{3}, i')$, where $i'$ is the position of $i$ in $\Pi$. 
    \item Delete the $(i'+1)$-th integer from $S_{4}$.
\end{itemize}
These six operations are performed using $\mathcal{O}(1)$ queries and operations on $S_{1}$, $S_{2}$, $S_{3}$, and $S_{4}$. 
Therefore, they take $\mathcal{O}(\log r / \log \log r)$ time in total. 
\end{proof}




\subsubsection{Details of \texorpdfstring{$\mathscr{D}_{\DI}(\SA_{s})$}{DDI} and \texorpdfstring{$\mathscr{D}_{\DI}(\SA_{e})$}{DDI}}\label{app:dynamic_DDI}
\begin{table}[htb]
\caption{Summary of the three queries supported by $\mathscr{D}_{\DI}(\SA_{a})$. Here, $\SA_{a}$ consists of $r$ distinct integers; $i$ is a position in $\SA_{a}$; $t \geq 0$ is an integer.}
\label{tab:queries_on_SA}

\begin{tabular}{|c|l|c|}
\hline
Query & Description & Time complexity \\
\hline
$\accessSA(\SA_{a}, i)$        & return $\SA_{a}[i]$       & \multirow{3}{*}{$\mathcal{O}(\log r)$} \\
$\orderSA(\SA_{a}, i)$  & return the position of the $i$-th smallest integer in $\SA_{a}$          &  \\
$\countSA(\SA_{a}, t)$  & return the number of integers in $\SA_{a}$ that are smaller than $t$          &  \\
\hline
\end{tabular}
\end{table}

\begin{table}[htb]
\caption{Summary of the four update operations supported by $\mathscr{D}_{\DI}(\SA_{a})$. Here, $\SA_{a}$ consists of $r$ distinct integers; $i$ is a position in $\SA_{a}$; $t \geq 0$ is an integer; $k \geq 0$ is an integer. 
The Decrement operation $\decrementSA(\SA_{a}, t, k)$ is performed if $\Pi_{a}$ is not changed by this operation. 
}
\label{tab:operations_on_SA}
\begin{tabular}{|c|l|c|}
\hline
Operation & Description & Time complexity \\
\hline
\multirow{2}{*}{$\insertSA(\SA_{a}, i, t)$}  & insert $t$ into $\SA_{a}$ as the $i$-th value if & \multirow{7}{*}{$\mathcal{O}(\log r)$} \\
  & $t \not \in \{ \SA_{a}[1], \SA_{a}[2], \ldots, \SA_{a}[r] \}$ &  \\

$\deleteSA(\SA_{a}, i)$         & delete the $i$-th integer from $\SA_{a}$      &  \\
\multirow{2}{*}{$\incrementSA(\SA_{a}, t, k)$}   & increment each integer $\SA_{a}[i]$ in $\SA_{a}$ by $k$      &  \\
   &  only if $\SA_{a}[i]$ is larger than $t$      &  \\

\multirow{2}{*}{$\decrementSA(\SA_{a}, t, k)$}          & decrement each integer $\SA_{a}[i]$ in $\SA_{a}$ by $k$     &  \\
     & only if $\SA_{a}[i]$ is larger than $t$.     &  \\
\hline
\end{tabular}
\end{table}

\begin{table}[htb]
\caption{Summary of the two queries supported by the dynamic data structure for permutations introduced by Salson et al.~\cite{DBLP:journals/jda/SalsonLLM10}, which is built on $\Pi_a$ of $r$ integers. 
Here, $i$ is a position in $\Pi_{a}$; $\pi \in \{ 1, 2, \ldots, r \}$ is an integer.}
\label{tab:queries_on_permutation}

\begin{tabular}{|c|l|c|}
\hline
Query & Description & Time complexity \\
\hline
$\paccess(\Pi_{a}, i)$        & return $\Pi_{a}[i]$       & \multirow{2}{*}{$\mathcal{O}(\log r)$} \\
$\pinvAccess(\Pi_a, \pi)$  & return the position of $\pi$ in $\Pi_{a}$          &  \\
\hline
\end{tabular}
\end{table}

\begin{table}[htb]
\caption{Summary of the two update operations supported by the dynamic data structure for permutations introduced by Salson et al.~\cite{DBLP:journals/jda/SalsonLLM10}, which is built on $\Pi_a$ of $r$ integers. 
Here, $i$ is a position in $\Pi_{a}$; $\pi \in \{ 1, 2, \ldots, r \}$ is an integer.}
\label{tab:operations_on_permutation}
\begin{tabular}{|c|l|c|}
\hline
Operation & Description & Time complexity \\
\hline
\multirow{2}{*}{$\incrementInsert(\Pi_{a}, i, \pi)$}  & increment each integer $\Pi_{a}[j]$ in $\Pi_{a}$ by 1 only if $\Pi_{a}[j] \geq \pi$;  & \multirow{4}{*}{$\mathcal{O}(\log r)$} \\
  & insert $\pi$ into $\Pi$ at positoin $i$  &  \\
\multirow{2}{*}{$\decrementDelete(\Pi_a, i)$}  & decrement each integer $\Pi_{a}[j]$ in $\Pi_{a}$ by 1 only if $\Pi_{a}[j] \geq \Pi_{a}[i]$;      &  \\
   &  remove the element at position $i$ from $\Pi_{a}$     &  \\
\hline
\end{tabular}
\end{table}

The dynamic data structures $\mathscr{D}_{\DI}(\SA_s)$ and $\mathscr{D}_{\DI}(\SA_e)$ support the following three queries,  
where $\SA_{a}$ denotes either $\SA_s$ or $\SA_e$:  
(i) access $\accessSA(\SA_{a}, i)$;  
(ii) order $\orderSA(\SA_{a}, i)$;  
(iii) count $\countSA(\SA_{a}, i)$.  
Descriptions of these queries are provided in Table~\ref{tab:queries_on_SA}.
In addition, $\mathscr{D}_{\DI}(\SA_{a})$ supports four update operations on $\SA_{a}$:  
(i) insertion $\insertSA(\SA_{a}, i, t)$;  
(ii) deletion $\deleteSA(\SA_{a}, i)$;  
(iii) increment $\incrementSA(\SA_{a}, t, k)$;  
(iv) decrement $\decrementSA(\SA_{a}, t, k)$.  
Details of these four operations are provided in Table~\ref{tab:operations_on_SA}.

$\mathscr{D}_{\DI}(\SA_{a})$ consists of two components, each of which is a dynamic data structure.
The first component is the dynamic data structure for permutations introduced by Salson et al.~\cite{DBLP:journals/jda/SalsonLLM10}, which supports \emph{access} and \emph{inverse access} queries as well as \emph{insertion} and \emph{deletion} operations on a permutation. Here, $\Pi_a$ denotes the permutation of the integers $\{1,2,\ldots,r\}$ that sorts $\SA_{a}$ in increasing order; that is,
$\SA_{a}[\Pi_a[1]] < \SA_{a}[\Pi_a[2]] < \cdots < \SA_{a}[\Pi_a[r]]$.
\emph{Access} and \emph{inverse access} queries on $\Pi_a$ are described in Table~\ref{tab:queries_on_permutation}.
Two update operations of increment-insertion and decrement-deletion on $\Pi$ are also described in Table~\ref{tab:operations_on_permutation}.

Salson et al.'s dynamic data structure for $\Pi_a$ requires $\mathcal{O}(r)$ words of space and supports access and inverse access queries, as well as two update operations, in $\mathcal{O}(\log r)$ time (Section 4 in \cite{DBLP:journals/jda/SalsonLLM10}).

The second component is the Bille et al.'s dynamic data structure (introduced in Appendix~\ref{app:RLE_dynamic_data_structure})  built on a sequence $S'$ of ($r+1$) integers $\SA_{a}[\Pi_{a}[1]]$, $(\SA_{a}[\Pi_{a}[2]] - \SA_{a}[\Pi_{a}[1]])$, $\ldots$, $(\SA_{a}[\Pi_{a}[r]] - \SA_{a}[\Pi_{a}[r-1]])$, $(n - \SA_{a}[\Pi_{a}[r]])$, with a user-defined parameter $\delta = 1$. 

The dynamic data structure $\mathscr{D}_{\DI}(\SA_{a})$ requires a total of $\mathcal{O}(r)$ words of space and supports access, order, and count queries on $\SA_{a}$ in $\mathcal{O}(\log r)$ time. 

The lemmas below provide time complexities for the queries and updates supported by $\mathscr{D}_{\DI}(\SA_s)$ and $\mathscr{D}_{\DI}(\SA_e)$.

\begin{lemma}\label{lem:DynDI_Lemma}
$\mathscr{D}_{\DI}(\SA_s)$ and $\mathscr{D}_{\DI}(\SA_e)$ support    
(i) \emph{access}, \emph{order}, and \emph{count} queries, and  
(ii) \emph{insertion}, \emph{deletion}, \emph{increment}, and \emph{decrement} operations in $\mathcal{O}(\log r)$ time.
\end{lemma}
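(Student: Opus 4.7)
I would prove the lemma by giving, for each of the seven operations, a short implementation that uses only $\mathcal{O}(1)$ primitives of the two underlying components --- Salson et al.'s dynamic permutation structure on $\Pi_{a}$ and Bille et al.'s dynamic partial-sum structure on the gap sequence $S'$ --- and then invoking the $\mathcal{O}(\log r)$ time bounds from Tables~\ref{tab:queries_on_permutation}--\ref{tab:operations_on_permutation} and Tables~\ref{tab:queries_PS}--\ref{tab:operations_on_PS}. The algebraic fact that ties the two components together is $\SA_{a}[\Pi_{a}[k]] = \sumPS(S',k)$ for every $k\in\{1,\ldots,r\}$, which follows directly from the definition of $S'$ as the sequence of gaps of the sorted sa-values.

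For the three queries I would implement $\accessSA(\SA_{a},i)$ as $\sumPS(S',\pinvAccess(\Pi_{a},i))$, $\orderSA(\SA_{a},i)$ as $\paccess(\Pi_{a},i)$, and $\countSA(\SA_{a},t)$ as $\searchPS(S',t)-1$, using two primitive calls at most. For $\insertSA(\SA_{a},i,t)$, the sorted rank of the new value is $k=\searchPS(S',t)$; the permutation is updated by $\incrementInsert(\Pi_{a},k,i)$, and the unique gap that must be split is updated by $\dividePS(S',k,t-\sumPS(S',k-1))$. For $\deleteSA(\SA_{a},i)$ I would set $k=\pinvAccess(\Pi_{a},i)$ and apply $\decrementDelete(\Pi_{a},k)$ together with $\mergePS(S',k)$ to fuse the two gaps flanking the removed value. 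All of these involve $\mathcal{O}(1)$ primitive calls, each of which costs $\mathcal{O}(\log r)$.

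The delicate part, and the main obstacle, is $\incrementSA(\SA_{a},t,k)$ and $\decrementSA(\SA_{a},t,k)$, because they modify many sa-values simultaneously. The crucial observation is that both operations preserve $\Pi_{a}$ --- this is the explicit precondition for decrement and is immediate for increment, since values greater than $t$ all shift by the same amount and therefore keep their relative order --- and that in $S'$ they change exactly one entry: the gap at the rank $p=\searchPS(S',t+1)$ at which the sorted sequence crosses the threshold~$t$. For the increment I plan to add $k$ to $S'[p]$ by $\insertPS(S',p,k)$ followed by $\mergePS(S',p)$, and for the decrement to subtract $k$ from $S'[p]$ by $\dividePS(S',p,k)$ followed by $\deletePS(S',p)$; the decrement is well defined because the precondition on $\Pi_{a}$ guarantees $S'[p]\geq k$. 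The validity of $\insertPS$ and $\deletePS$ on a value as large as $k$ requires the parameter $\delta$ of the Bille et al.\ structure built on $S'$ to be chosen large enough to accommodate the shift magnitudes that arise during text updates; in that regime the primitive cost $\mathcal{O}(\log r/\log(B/\delta))$ still simplifies to the claimed $\mathcal{O}(\log r)$. Since each of the seven operations uses $\mathcal{O}(1)$ primitives of cost $\mathcal{O}(\log r)$, the lemma follows.
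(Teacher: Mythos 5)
Your proof follows essentially the same route as the paper's: each of the seven operations is reduced to $\mathcal{O}(1)$ primitives on Salson et al.'s permutation structure over $\Pi_{a}$ and Bille et al.'s partial-sum structure over the gap sequence $S'$, each costing $\mathcal{O}(\log r)$, and your concrete reductions (index arithmetic via $\searchPS$ and $\pinvAccess$, split/merge of the single affected gap for increment and decrement) coincide with the paper's up to trivial reformulations. Your additional remark that $\delta$ must be chosen large enough for $\insertPS$/$\deletePS$ to handle the shift magnitude $k$ while keeping the $\mathcal{O}(\log r)$ bound is a sensible refinement of a detail the paper's own proof glosses over by fixing $\delta = 1$.
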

\begin{proof}
The proof of Lemma~\ref{lem:DynDI_Lemma} is as follows. 

\textbf{Supporting $\accessSA(\SA_{s}, i)$.}
$\SA_{s}[i]$ is computed by $\sumPS(S', i')$, 
where $i'$ is the position of $i$ in $\Pi_{s}$. 
Here, $i'$ is computed by $\pinvAccess(\Pi_{s}, i)$.

\textbf{Supporting $\orderSA(\SA_{s}, i)$.}
The $i$-th smallest integer in $\SA_{s}$ is computed by $\sumPS(S', i)$. 

\textbf{Supporting $\countSA(\SA_{s}, i)$.}
Let $x \geq 0$ be the largest integer satisfying $\sumPS(S', x) < i$, 
where let $\sumPS(S', 0) = 0$ for simplicity. 
Then, $\countSA(\SA_{s}, i)$ returns $x$. 
This integer $x$ is computed as $\searchPS(S', i+1) - 1$.

\textbf{Supporting $\insertSA(\SA_{s}, i, t)$.}
Let $u$ be the number of integers in $\SA_{s}$ that are smaller than $t$ (i.e., $u = \countSA(\SA_{s}, t)$). 
Then, $\Pi_{s}$ and $S'$ are updated as follows: 
(i)  Divide the $(u+1)$-th integer in $S'$ into $(t - \SA_{s}[\Pi_{s}[u]])$ and $(\SA_{s}[\Pi_{s}[u+1]] - t)$ by $\dividePS(S', u+1, t - \SA_{s}[\Pi_{s}[u]])$; 
(ii) Perform $\incrementInsert(\Pi_{s}, u+1, i)$. 

\textbf{Supporting $\deleteSA(\SA_{s}, i)$.}
Let $u'$ be the number of integers in $\SA_{s}$ that are smaller than $\SA_{s}[i]$ (i.e., 
$u' = \countSA(\SA_{s}, \accessSA(\SA_{s}, i))$). 
Then, $\Pi_{s}$ and $S'$ are updated as follows: 
(i) Merge the $(u'+1)$-th and $(u'+2)$ integers in $S'$ by $\mergePS(S', u'+1)$; 
(ii) Perform $\decrementDelete(\Pi_s, u+1)$. 

\textbf{Supporting $\incrementSA(\SA_{s}, t, k)$.}
Let $v$ be the number of integers in $\SA_{s}$ that are smaller than $t+1$ (i.e., $v = \countSA(\SA_{s}, t+1)$). 
Then, the $(v+1)$-th integer in $S'$ is incremented by $k$. 
This update is performed as follows: 
(i) Insert $k$ into $S'$ as the $(v+2)$-th integer; 
(ii) Merge the $(v+1)$-th and $(v+2)$-th integers in $S'$ by $\mergePS(S', v+1)$

\textbf{Supporting $\decrementSA(\SA_{s}, t, k)$.}
In contrast, 
the $(v+1)$-th integer in $S'$ is decremented by $k$, 
where $S'[v+1] > k$.
This update is performed as follows: 
(i) Divide the $(v+1)$-th integer in $S'$ into $k$ and $(S'[v+1] - 1)$ by $\dividePS(S', v+1, k)$; 
(ii) Delete the $(v+1)$-th integer from $S'$. 

These three queries and four operations are performed using $\mathcal{O}(1)$ queries and operations on $\Pi_{s}$ and $S'$. 
Therefore, they take $\mathcal{O}(\log r)$ time in total. 
Similarly, the four operations on $\SA_{e}$ can be supported in the same time using the same approach. 
\end{proof}
\subsection{Character Insertion Algorithm}\label{subsec:char_insertion_algorithm}
\textbf{Problem setup.} We describe how to update all three data structures when character $c$ is inserted at position $i$ in string $T$, producing $T' = T[1..i-1]cT[i..n]$. Our algorithm simulates the BWT and suffix array update method of Salson et al. \cite{DBLP:journals/tcs/SalsonLLM09,DBLP:journals/jda/SalsonLLM10}, but operates directly on the compressed and sampled representations.

\textbf{Key insight.} The insertion of a single character affects the entire conceptual matrix structure, which is intrinsically linked to the RLBWT and the sampled SAs. Instead of reconstructing everything from scratch, we simulate the gradual transformation through a series of updates that maintain the compressed and sampled representations at each step.

\textbf{Correctness guarantee.} Throughout the update process, we maintain critical invariants: (1) $L_{\RLE}$ correctly represents the run-length encoding of the current BWT, (2) $\SA_s$ and $\SA_e$ contain suffix array values at run start and end positions respectively, and (3) all three structures remain synchronized. Correctness follows from the proven correctness of the underlying Salson et al. algorithm and our faithful preservation of these invariants at each step.

\textbf{Algorithmic framework.}
The algorithm simulates the transformation from conceptual matrix $\CM$ of $T$ to $\CM'$ of $T'$ through $(n+1)$ iterations, from $j = n+1$ down to $j = 1$. We maintain the following sequences.  
(A) Conceptual matrix: $\CM_{n+1} = \CM$, $\CM_n$, $\ldots$, $\CM_0 = \CM'$ (not stored explicitly), 
where $\CM'$ is the conceptual matrix of $T'$; 
(B) RLBWT: $L_{\RLE, n+1} = L_{\RLE}$, $L_{\RLE, n}$, $\ldots$, $L_{\RLE, 0} = L'_{\RLE}$, 
where $L'_{\RLE}$ is the RLBWT representing the BWT $L'$ of $T'$; 
(C) Sampled SAs: $\SA_{s, n+1} = \SA_s$, $\ldots$, $\SA_{s, 0} = \SA'_s$ and $\SA_{e, n+1} = \SA_e$, $\ldots$, $\SA_{e, 0} = \SA'_e$, 
where $\SA'_{s}$ and $\SA'_{e}$ are the sampled SAs built from the suffix array $\SA'$ of $T'$. 

Each iteration $j$ transforms the state from step $j$ to step $j-1$ using a three-step process that maintains consistency across all data structures. 

\emph{Purpose:} 
At iteration $j$, insert the $j$-th circular shift $T^{\prime[j]}$ of $T'$ into $\CM_{j}$ in lexicographic order, 
and remove a circular shift of $T$ corresponding to $T^{\prime[j]}$ from $\CM_{j}$, resulting in $\CM_{j-1}$. 
This insertion and deletion establish the correspondence between the original string $T$ and the modified string $T'$. 
Based on these insertions and deletions of circular shifts, RLBWT and sampled SAs are appropriately updated. 

\emph{Intuition:} When we insert character $c$ at position $i$, each circular shift $T^{[k]}$ of the original string $T$ maps to 
the $(k+1)$-th circular shift $T^{\prime[k+1]}$ of $T'$ if $k \geq i$; otherwise it maps to $T^{\prime[k]}$ of $T'$. 
If there is a circular shift $T^{[k]}$ of $T$ mapped to $T^{\prime[j]}$ (i.e., $j \neq i$), 
then $T^{[k]}$ is removed from $\CM_{j}$ at iteration $j$. 
This removal is necessary because the conceptual matrix must maintain exactly $n+1$ rows at each step: when we insert the new circular shift $T^{\prime[j]}$ of $T'$, we must simultaneously remove the corresponding circular shift $T^{[k]}$ of $T$ to preserve the matrix size and lexicographic ordering. 
The details of the three steps are as follows. 

\textbf{Step (I): Identification of circular shifts to be removed.}
In this step, we identify the circular shift of $T$ to be removed from $\CM_{j}$. 

\emph{Algorithm:} 
This step is performed only when $j \neq i$ (i.e., when some circular shift of $T$ is mapped to $T^{\prime[j]}$). 
The row index $x$ of the circular shift to be removed from $\CM_{j}$ is computed as follows. 

\begin{enumerate}
\item \textbf{Initial iteration} ($j = n + 1$): $x = 1$ (always remove the lexicographically smallest circular shift); 
\item \textbf{Boundary iteration} ($j = i - 1$): $x = \ISA[i-1] + \epsilon$, where $\epsilon = 1$ if $T[i-1] \geq c$ and otherwise $\epsilon = 0$. 
Here, $\ISA[i-1]$ and $T[i-1]$ are computed by two auxiliary operations $\compISA_{1}(L_{\RLE}, \SA_s, i)$ and $\compT(L_{\RLE}, \SA_s, i)$, respectively, which are described in detail later. 
\item \textbf{The other iterations}: $x = \lexCount(L_{\RLE, j+1}, L_{j+1}[x_{+1}])$ $+ \rank(L_{\RLE, j+1}, x_{+1}$, $L_{j+1}[x_{+1}])$, 
where $x_{+1}$ denotes the row index identified in the previous iteration.
\end{enumerate}
Here, $L_{\RLE, j}$ is temporarily restored to $L_{\RLE, j+1}$ in order to perform this step. This is because the formula for computing $x$ uses $L_{\RLE, j+1}$, whereas $L_{\RLE, j+1}$ was already updated to $L_{\RLE, j}$ in the previous iteration. 
This restoration can be carried out in $\mathcal{O}(1)$ operations on $L_{\RLE, j}$ based on the result of the previous iteration.

\textbf{Step (II): Deletion from data structures.} 
Remove the identified circular shift from $\CM_j$, and update all three data structures accordingly to maintain consistency.

\emph{Intuition:} 
Removing the identified circular shift from $\CM_{j}$ indicates that the $x$-th character is removed from the current BWT $L_{j}$, 
which is the BWT corresponding to $\CM_{j}$ (i.e., $L_{j}$ consists of the last characters in the rows of $\CM_{j}$). 
This removal of a character from the BWT affects runs in its RLBWT (merge or shorten runs), 
and the changed runs affect values in the sampled SAs (delete or change values) 
since these arrays are intrinsically linked to runs. 

\emph{Algorithm:} 
This step is performed only when $j \neq i$ (i.e., when a circular shift was identified in Step I). 

\textbf{RLBWT update:}
Let $(c_v, \ell_v)$ be the run containing the $x$-th character of the BWT 
in $L_{\RLE, j} = (c_1, \ell_1), (c_2, \ell_2), \ldots, (c_r, \ell_r)$. 
Then, $L_{\RLE, j}$ is updated by the following three operations:

\begin{enumerate}
\item \textbf{Run-length decrement}: Decrement the length of the $v$-th run (i.e., $(c_{v}, \ell_{v}) \rightarrow (c_{v}, \ell_{v}-1)$) if its length is at least two (i.e., $\ell_{v} \geq 2$).
\item \textbf{Run removal}: Remove the $v$-th run from RLBWT if it is a single character (i.e., $\ell_{v} = 1$).
\item \textbf{Run merging}: Merge the $(v-1)$-th and $(v+1)$-th runs into a new run $(c_{v-1}, \ell_{v-1} + \ell_{v+1})$ if the two runs are the same characters (i.e., $c_{v-1} = c_{v+1}$), 
and the $v$-th run is removed from RLBWT.
\end{enumerate}

\textbf{Sampled SA updates:}
$\SA_{s, j}$ is updated consistently with the RLBWT changes. 
This update is executed by the following two operations. 

\begin{enumerate}
    \item \textbf{Run-start removal}: Remove the $v$-th sampled position $\SA_{s}[v]$ from $\SA_{s}$ 
    if the $v$-th run is removed. 
    In addition, the $(v+1)$-th sampled position is also removed 
    if the $(v-1)$-th and $(v+1)$-th runs are merged.
    \item \textbf{Boundary update}: 
    Change the $v$-th sampled position to $\SA_{j}[x+1]$ if  
    the first character of the $v$-th run is removed (i.e., $x = t_{v}$ and $\ell_{v} \geq 2$ for the starting position $t_{v}$ of the $v$-th run in BWT). 
    Here, $\SA_{j}$ is the suffix array corresponding to $\CM_{j}$ 
    (i.e., For each $s \in \{ 1, 2, \ldots, |L_{j}| \}$, $\SA_{j}[s] := k$ if the $s$-th row of $\CM_{j}$ is a circular shift $T^{[k]}$ of $T$; otherwise, the $s$-th row is a circular shift $T^{\prime[k']}$ of $T'$, and $\SA_{j}[s] := k'$), 
    and $\SA_j[x+1]$ is computed by auxiliary operation $\compSA_{X, 1}(L_{\RLE, j+1}, \SA_{s, j+1}, \SA_{e, j+1}, x)$, which is described in detail later. 
\end{enumerate}

Similarly, $\SA_{e, j}$ is updated by the following two operations. 
\begin{enumerate}
    \item \textbf{Run-end removal}: Remove the $v$-th sampled position $\SA_{e}[v]$ from $\SA_{e}$ 
    if the $v$-th run is removed. 
    In addition, the $(v-1)$-th sampled position is also removed 
    if the $(v-1)$-th and $(v+1)$-th runs are merged.
    \item \textbf{Boundary update}: 
    Change the $v$-th sampled position to $\SA_{j}[x-1]$ if  
    the last character of the $v$-th run is removed (i.e., $x = t_{v} + \ell_{v} - 1$ and $\ell_{v} \geq 2$. 
    Here, $\SA_j[x-1]$ is computed by auxiliary operation $\compSA_{X, 2}(L_{\RLE, j+1}, \SA_{s, j+1}, \SA_{e, j+1}, x)$. 
\end{enumerate}

Computing required suffix array values like $\SA_{j}[x-1]$ and $\SA_{j}[x+1]$ is non-trivial since we only store sampled positions, not the full suffix array.
We employ a dynamic LF function that maps positions between consecutive suffix arrays during the transformation process. This function enables us to compute any required $\SA_j$ value because: (i) each $\SA_j$ can be derived from the original $\SA$ through a sequence of LF transformations, and (ii) the original $\SA$ can be reconstructed from our sampled structures $\SA_s$ and $\SA_e$ using standard techniques~\cite{10.1145/3375890}.
The computation of values like $\SA_{j}[x-1]$ and $\SA_{j}[x+1]$ reduces to $\mathcal{O}(\log n)$ time using queries supported by our three dynamic structures. See Section~\ref{subsubsec:dynamic_LF} for the mathematical details of the dynamic LF function.

\textbf{Step (III): Insertion into data structures.}
Insert the new circular shift $T'^{[j]}$ into $\CM_j$ in lexicographic order and update all data structures accordingly. 

\emph{Intuition:} 
The insertion of $T'^{[j]}$ into $\CM_{j}$ at an appropriate row index $y$ indicates that 
its last character $T'^{[j]}[n+1]$ is inserted into the BWT at position $y$. 
Similar to Step (II), this insertion into BWT affects RLBWT and sampled SAs. 

\emph{Algorithm:} 
Before updating data structures, 
we compute the last character $T'^{[j]}[n+1]$ of $T'^{[j]}$ and the row index $y$. 

\textbf{Computing $T'^{[j]}[n+1]$:}
The character to be inserted depends on the iteration type:
\begin{enumerate}
    \item \textbf{New character insertion} ($j = i+1$): $T^{\prime[j]}[n+1] = c$ (the inserted character)
    \item \textbf{Boundary character} ($j = i$): $T^{\prime[j]}[n+1] = T[i-1]$ (character at insertion boundary)  
    \item \textbf{Existing characters} ($j \in \{ n+1, n, \ldots, i+2, i-1, i-2, \ldots, 1  \}$): $T^{\prime[j]}[n+1] = c_{v}$ (character from the run identified in Step~(II))
\end{enumerate}

Let $y_{+1}$ be the row index for insertion computed in the previous iteration. 
Then, the row index $y$ is computed as follows. 
\begin{enumerate}
\item \textbf{Early iterations} ($j \in \{n+1, \ldots, i+1\}$): $y = x$ (insert at the same position where we removed); 
\item \textbf{Special iteration} ($j = i$): $y = \lexCount(L_{\RLE, j}, L_j[y_{+1}]) + \rank(L_{\RLE, j}, y_{+1}, L_j[y_{+1}]) + \epsilon$. 
Here, $\epsilon = 1$ if (A) $T[i-1] < L_j[y_{+1}]$ or (B) $T[i-1] = L_j[y_{+1}]$ and $\ISA[i] \leq y_{+1}$. Otherwise $\epsilon = 0$. Similar to $\ISA[i-1]$, 
$\ISA[i]$ is precomputed by an auxiliary operation $\compISA_{2}(L_{\RLE}, \SA_s, i)$, which is described in detail later. 
\item \textbf{Later iterations} ($j \in \{ i-1, i-2, \ldots, 1 \}$): 
$y = \lexCount(L_{\RLE, j}, L_j[y_{+1}]) + \rank(L_{\RLE, j}$, $y_{+1}, L_j[y_{+1}])$. 
\end{enumerate}
Here, similar to Step~(I),
$L_{\RLE, j}$ is temporarily restored to the state of the RLBWT before Step~(II) in order to use the formula for computing $y$,
since this formula requires $L_{\RLE, j}$ and $L_j$ to denote the RLBWT and BWT before they are updated in Step~(II). 

\begin{figure}[t]
\begin{center}
	\includegraphics[width=1.0\textwidth]{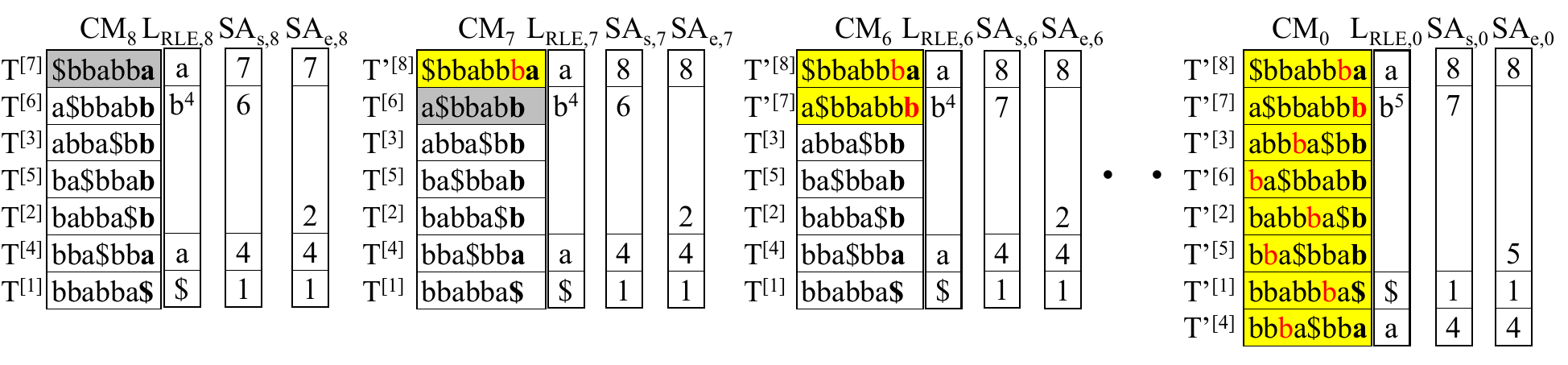}
\caption{
Illustration of the update process from $\CM$, $L_{\RLE}$, $\SA_{s}$, and $\SA_{e}$ 
to $\CM'$, $L'_{\RLE}$, $\SA'_{s}$, and $\SA'_{e}$, respectively. 
Here, $T = \texttt{bbabba\$}$ and $T' = \texttt{bbabbba\$}$, 
where $T'$ is obtained by inserting the character $b$ into $T$ at position 6. In each $\CM_j$, the circular shift to be removed is highlighted in gray, while the newly inserted circular shift is highlighted in yellow. The BWT corresponding to $\CM_{j}$ is shown with bold characters. See Appendix~\ref{sec:appendix} for the full version of this figure.}
\label{fig:update_CM}
\end{center}
\end{figure}

\textbf{RLBWT updates:}
Let $(c_{w}, \ell_{w})$ be the run containing position $y$ in $L_{\RLE, j}$ (or the last run if $y > |L_{j}|$). 
$L_{\RLE, j}$ is updated by the following three operations: 
\begin{enumerate}
    \item \textbf{Run-length increment}: 
    Increment the length of the $w$-th run if $T'^{[j]}[n+1]$ is inserted as a character of the run 
    (i.e., $T'^{[j]}[n+1] = c_{w}$). 
    Similarly, 
    if $T'^{[j]}[n+1]$ is inserted as the last character of the $(w-1)$-th run (i.e., $t_{w} = y$ and $T'^{[j]}[n+1] = c_{w-1}$), 
    then its length is incremented. 
    \item \textbf{Run split}: Split the $w$-th run into two runs $(c_w, y - t_w)$ and $(c_w, t_w + \ell_w - y)$ if $T'^{[j]}[n+1]$ is inserted into the $w$-th run (i.e., $t_{w} < y < t_{w} + \ell_{w} - 1$) and $T'^{[j]}[n+1] \neq c_{w}$. 
    \item \textbf{Run insertion}: 
    Insert a new run $(T'^{[j]}[n+1], 1)$ into $L_{\RLE, j}$ at an appropriate position if 
    $T'^{[j]}[n+1]$ is inserted after a run $(c_{u}, \ell_{u})$ and 
    $T'^{[j]}[n+1] \neq c_{u}$. 
    Here, $(c_{u}, \ell_{u})$ is (A) the $(w-1)$-th run, 
    (B) the $w$-th run, or (C) the new run $(c_w, y - t_w)$ created by the split operation. 
\end{enumerate}

\textbf{Sampled SA updates:}
$\SA_{s, j}$ is updated consistently with the RLBWT changes. 
These updates are performed by the following operations: 

\begin{enumerate}
    \item \textbf{Run-start insertion}: Insert $j$ into $\SA_{s, j}$ as a new sampled position if $T'^{[j]}[n+1]$ is inserted as a new run. 
    \item \textbf{Boundary update}: The sampled position of the $w$-th run is changed to $j$ (i.e., $\SA_{s, j}[w] \rightarrow j$) if $T'^{[j]}[n+1]$ is inserted as the first character of the $w$-th run.
    \item \textbf{Split handling}: 
    If the $w$-th run is split into two new runs, 
    then insert $\SA_{j-1}[y+1]$ as the sampled position of the second new run. 
    Here, $\SA_{j-1}[y+1]$ is computed by auxiliary operation $\compSA_{Y, 1}(L_{\RLE, j}, \SA_{s, j}, \SA_{e, j}, y)$, which is described in detail later. 
\end{enumerate}

Similarly, $\SA_{e, j}$ is updated by the following operations: 
\begin{enumerate}
    \item \textbf{Run-end insertion}: Insert $j$ into $\SA_{e, j}$ as a new sampled position 
    if $T'^{[j]}[n+1]$ is inserted as a new run. 
    \item \textbf{Boundary update}: The sampled position of the $w$-th run is changed to $j$ (i.e., $\SA_{e, j}[w] \rightarrow j$) if $T'^{[j]}[n+1]$ is inserted as the last character of the $w$-th run.
    \item \textbf{Split handling}: 
    If the $w$-th run is split into two new runs, 
    then insert $\SA_{j-1}[y-1]$ as the sampled position of the first new run. 
    Here, $\SA_{j-1}[y-1]$ is computed by auxiliary operation $\compSA_{Y, 2}(L_{\RLE, j}, \SA_{s, j}, \SA_{e, j}, y)$, which is described in detail later. 
\end{enumerate}

\textbf{Iteration completion.}
Through these three steps, each iteration transforms:
$(\CM_{j}, L_{\RLE, j}$, $\SA_{s, j}, \SA_{e, j}) \rightarrow (\CM_{j-1}, L_{\RLE, j-1}, \SA_{s, j-1}, \SA_{e, j-1})$. 
The algorithm uses only the queries and update operations supported by $\mathscr{D}_{\RLE}(L_{\RLE})$, $\mathscr{D}_{\DI}(\SA_s)$, and $\mathscr{D}_{\DI}(\SA_e)$. 

Figure~\ref{fig:update_CM} illustrates an example transformation.  
In this example, 
$L_{7} = L_{6} = \texttt{abbbba\$}$, 
$\SA_{7} = 8, 6, 3, 5, 2, 4, 1$, and $\SA_{6} = 8, 7, 3, 5, 2, 4, 1$. 
We obtain $x = 2$, $y = 2$, $x_{+1} = 1$, and $y_{+1} = 1$ for iteration $j = 7$. 
$x$ and $y$ are computed in Step~(I) and Step~(III), respectively, as follows: 
$x = \lexCount(L_{\RLE, 8}, L_{8}[x_{+1}]) + \rank(L_{\RLE, 8}, x_{+1}, L_{8}[x_{+1}])$ 
and 
$y = x$. 
Here, $\lexCount(L_{\RLE, 8}, L_{8}[x_{+1}]) = 1$ and $\rank(L_{\RLE, 8}, x_{+1}, L_{8}[x_{+1}]) = 1$. 
$L_{\RLE, 7}$ is unchanged, i.e., $L_{\RLE, 7} = L_{\RLE, 6}$. 
This is because $L_{7}[x] = b$, and this character is changed to $b$, resulting in $L_{6}$. 
Similarly, $\SA_{7}[x] = 6$ is changed to $7$, resulting in $\SA_{6}$. 
The $x$-th character of $L_{7}$ is the first character of the second run $(b, 4)$ in $L_{\RLE, 7}$, 
and hence, $\SA_{s, 7}[2] = 6$ is changed to $7$, resulting in $\SA_{s, 6}$. 
On the other hand, this character is not the last character of the second run. 
Therefore, $\SA_{e, 7}$ is unchanged, i.e., $\SA_{e, 7} = \SA_{e, 6}$. 

\textbf{Critical optimization: Iteration skipping.}
\emph{Key insight:} Most of the $(n+1)$ iterations do not actually change the data structures, so they can be skipped or replaced with $\mathcal{O}(1)$ operations supported by $\mathscr{D}_{\DI}(\SA_s)$ and $\mathscr{D}_{\DI}(\SA_e)$. 
The correctness of this iteration skipping is guaranteed by the following lemma. 
\begin{lemma}\label{lem:state_stability}
    Let $K \geq 1$ be 
    the smallest integer such that $x = y$ in some iteration $j \leq i - K$. If no such $K$ exists, define $K := i$. 
    Here, $x$ and $y$ are the row indexes of $\CM_{j}$ used in Step (II) and Step (III), respectively.     
    The following three statements hold: 
    \begin{itemize}
        \item[(i)] $L_{n+1} = L_{n} = \cdots = L_{i+1}$;
        \item[(ii)] For each $j \in \{ n+1, n, \ldots, i \}$, 
        $\SA_{j}[t] = \SA[t] + 1$ if $\SA[t] \geq j$; otherwise, $\SA_{j}[t] = \SA[t]$, 
        where $t$ is a position of $\SA_{j}$; 
        \item[(iii)] $L_{i-K-1} = L_{i-K-2} = \cdots = L_{0}$ and $\SA_{i-K-1} = \SA_{i-K-2} = \cdots = \SA_{0}$. 
    \end{itemize}
\end{lemma}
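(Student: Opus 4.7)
I will prove the three statements by reverse induction on $j$, exploiting the observation that each iteration's net effect on $L_j$ and $\SA_j$ hinges on whether $x = y$ and whether the inserted character (respectively SA value) equals the removed one.

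\textbf{Statement (i).} I show by reverse induction from $j = n+1$ down to $j = i+2$ that iteration $j$ leaves the BWT unchanged. For such $j$, the algorithm takes the \emph{Early iterations} branch (so $y = x$) and the \emph{Existing characters} branch (so the inserted character is $c_v = L_j[x]$). Hence Step (II) removes $L_j[x]$ and Step (III) inserts the same character at the same position $y = x$, giving $L_{j-1} = L_j$. Chaining these equalities yields $L_{n+1} = L_n = \cdots = L_{i+1}$.

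\textbf{Statement (ii).} The base case $\SA_{n+1} = \SA$ satisfies the claim trivially since every $\SA[t] \leq n < n+1$. For the inductive step from $\SA_{j+1}$ to $\SA_j$ with $i \leq j \leq n$, the crucial sub-claim is that $x = \ISA[j]$ at iteration $j+1$. I prove this sub-claim by an inner induction on the iteration index: the initial value $x_{n+1} = 1$ matches $\ISA[n]$ since $T[n..n] = \$$ is the lexicographically smallest suffix, and each subsequent recurrence $x \leftarrow \lexCount(L, L[x_{+1}]) + \rank(L, x_{+1}, L[x_{+1}])$ is precisely the $\LF$ function on $L$ (using $L_{j+1} = L$ from (i)), which maps $\ISA[k]$ to $\ISA[k-1]$. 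Given $x = \ISA[j]$: Step (II) removes $\SA_{j+1}[\ISA[j]] = j$ by induction (since $\SA[\ISA[j]] = j < j+1$), and Step (III) writes $j+1$ at position $y = x$, consistent with the updated threshold $\SA[\ISA[j]] = j \geq j$. Entries at $t \neq \ISA[j]$ are unchanged and automatically satisfy the claim because the threshold condition flips only at $t = \ISA[j]$.

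\textbf{Statement (iii).} The plan is to establish by reverse induction on $j \in \{i-K, i-K-1, \ldots, 1\}$ the joint invariant: (a) $x = y$; (b) the character inserted equals $L_j[x]$; and (c) the inserted SA value $j$ equals $\SA_j[x]$. Property (b) is immediate in the \emph{Later iterations} branch, where the inserted character is the character $c_v$ of the run containing position $x$, which is $L_j[x]$. Property (a) propagates because both $x$ and $y$ are computed by the same $\LF$-recurrence from $x_{+1}$ and $y_{+1}$ respectively; the base case $x = y$ at $j = i-K$ is the definition of $K$. Together, (a) and (b) give $L_{j-1} = L_j$, while (a) and (c) give $\SA_{j-1} = \SA_j$, so iterating yields the chains $L_{i-K-1} = \cdots = L_0$ and $\SA_{i-K-1} = \cdots = \SA_0$.

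\textbf{Main obstacle.} The delicate step will be property (c), namely $\SA_j[x] = j$ at every iteration in the stable regime. My plan is to carry an auxiliary invariant tracking the SA value at the current position: once iteration $j$ writes $j$ at position $x = y$, the $\LF$-image $x'$ computed at iteration $j-1$ points to a position whose SA value equals $j-1$, by the standard identity $\SA[\LF(t)] = \SA[t] - 1$ applied to the now-unchanging BWT. Propagating this identity along the $\LF$-chain throughout $j \leq i-K$ yields (c) and completes the proof.
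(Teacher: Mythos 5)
Your proposal is correct and follows essentially the same route as the paper's own proof: both rest on the two observations that in the relevant iterations the insertion position coincides with the removal position ($x=y$) and the inserted character equals the removed one, so $L_j$ and $\SA_j$ are unchanged iteration by iteration; your version merely spells out the induction (identifying $x$ with $\ISA$ values via the LF recurrence, and propagating $x=y$ in the stable regime) that the paper states as observations or delegates to Salson et al. Note that your anticipated obstacle, $\SA_{j}[x]=j$ in the stable regime, is immediate from the definition of $\SA_{j}$, since for $j \leq i-1$ the removed row is the circular shift $T^{[j]}$ and the inserted row is $T^{\prime[j]}$, both of which carry the value $j$, so no LF-based tracking is needed.
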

\begin{proof}
    Lemma~\ref{lem:state_stability}(i) and Lemma~\ref{lem:state_stability}(ii) follow from 
    the following two observations: 
    (A) $x = y$ always hold for all $j \in \{ n+1, n, \ldots, i+1 \}$, 
    i.e., $T^{[j-1]}$ in $\CM_{j}$ is replaced with $T^{\prime[j]}$ by the update procedure, 
    and (B) $T^{[j-1]}$ and $T^{\prime[j]}$ have the same last character (i.e., $T^{[j-1]}[n] = T^{\prime[j]}[n+1]$) 
    for all $j \in \{ n+1, n, \ldots, i+2 \}$. 

    Similarly, Lemma~\ref{lem:state_stability}(iii) follows from 
    the following two observations: 
    (a) $x = y$ always hold for all $j \in \{ i-K-1, i-K-2, \ldots, 1 \}$, 
    i.e., $T^{[j]}$ in $\CM_{j}$ is replaced with $T^{\prime[j]}$ by the update procedure 
    (See Section 3.2.2 in \cite{DBLP:journals/tcs/SalsonLLM09} for more details), 
    and (b) $T^{[j]}$ and $T^{\prime[j]}$ have the same last character. 
\end{proof}

\emph{Early iteration skipping:} 
The first $(n-i)$ iterations ($j \in \{n+1, n, \ldots, i+2\}$) can be replaced with $\mathcal{O}(1)$ operations. 
This is because (i) the RLBWT is unchanged, i.e., $L_{\RLE} = L_{\RLE, i+1}$; 
(ii) $\SA_{s} \rightarrow \SA_{s, i+1}$ via $\incrementSA(\SA_{s}, i, 1)$ (increment positions $\geq i+1$);
(iii) $\SA_{e} \rightarrow \SA_{e, i+1}$ via $\incrementSA(\SA_{e}, i, 1)$ (increment positions $\geq i+1$). 

\emph{Late iteration skipping:} The last $(i-K-1)$ iterations ($j \in \{i-K-1, i-K-2, \ldots, 1 \}$) can be skipped.  
This is because, once $x = y$ occurs, this equation always holds until the algorithm reaches a stable state where $L_{\RLE, i-K-1} = L'_{\RLE}$, $\SA_{s, i-K-1} = \SA'_{s}$, and $\SA_{e, i-K-1} = \SA'_{e}$, meaning no further changes are needed. 

Finally, the total iterations reduce from $(n+1)$ to $(2 + K)$, specifically $j \in \{ i+1, i, \ldots, i-K \}$. 
If $K$ is sufficiently small, this leads to dramatic performance improvements. 

For the example in Figure~\ref{fig:update_CM},  
we obtain $n = 7$, $i = 6$, and $K = 3$. 
Here, the full version of Figure~\ref{fig:update_CM} is found in Appendix~\ref{sec:appendix}. 
The first iteration is replaced with two operations $\incrementSA(\SA_{s}, 6, 1)$ 
and $\incrementSA(\SA_{s}, 7, 1)$. 
The last two iterations (i.e., $j \in \{ 2, 1 \}$) are skipped. 
Therefore, the total iterations reduce from $8$ to $5$, specifically $j \in \{ 7, 6, \ldots, 3 \}$.

\subsubsection{Details of Dynamic LF Function}\label{subsubsec:dynamic_LF}
\begin{figure}[t]
\begin{center}
	\includegraphics[width=0.5\textwidth]{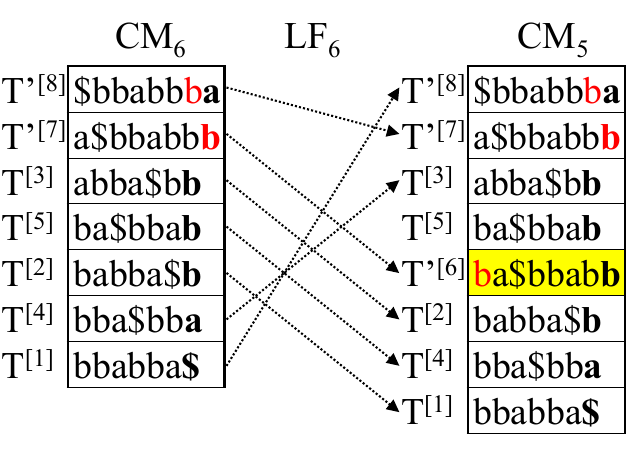}
\caption{
Illustration of the dynamic LF function for the example shown in Figure~\ref{fig:update_CM}. 
Here, $L_{6} = \texttt{abbbba\$}$, $L_{5} = \texttt{abbbbba\$}$, 
$\SA_{6} = [8, 7, 3, 5, 2, 4, 1]$, and $\SA_{5} = [8, 7, 3, 5, 6, 2, 4, 1]$. 
The arrow starting at each $t$-th row of $\CM_{5}$ represents $\LF_{5}(t) = t'$, 
where $t'$ is the index of the row pointed to by the arrow.}
\label{fig:dynamic_LF}
\end{center}
\end{figure}

\emph{LF} is a function that maps $t \in \{1, \ldots, n\}$ to $\ISA[\SA[t] - 1]$.  
That is, $\SA[\LF(t)] = \SA[t] - 1$ if $\SA[t] \ne 1$; otherwise, $\SA[\LF(t)] = n$.
The \emph{LF} function satisfies the LF formula~\cite{DBLP:conf/focs/FerraginaM00}:  
$\LF(t) = \lexCount(L_{\RLE}, L[t]) + \rank(L_{\RLE}, t, L[t])$.

We define a modified version of the \emph{LF} function at each iteration $j$, denoted $\LF_j$.  
The function $\LF_j$ maps a position $t$ in $\SA_j$ to a position $t'$ such that  
$\SA_{j-1}[t'] = \SA_j[t] - 1$ if $\SA_j[t] \ne 1$, and $\SA_{j-1}[t'] = n + 1$ otherwise. 
Figure~\ref{fig:dynamic_LF} depicts $\LF_{5}$ for the example shown in Figure~\ref{fig:update_CM}. 

The following lemma guarantees that the modified version $\LF_j$ satisfies the \emph{LF formula}.
\begin{lemma}[Dynamic LF formula]\label{lem:dynamic_LF_formula}
Let $j \in \{n+1, n, \ldots, 1\}$ and $t$ be a position in $\SA_j$.  
Define $\kappa(t) = 1$ if either  
(i) $T[i-1] < L_j[t]$, or  
(ii) $T[i-1] = L_j[t]$ and $\ISA[i] \leq t$;  
otherwise, $\kappa(t) = 0$.  
Then, the following equation holds: 
\begin{equation}
  \LF_{j}(t) =   \begin{cases}
    \lexCount(L_{\RLE, j}, L_{j}[t]) + \rank(L_{\RLE, j}, t, L_{j}[t]) & \text{if $j \neq i$} \\
    \lexCount(L_{\RLE, j}, L_{j}[t]) + \rank(L_{\RLE, j}, t, L_{j}[t]) + \kappa(t) & 
    \text{otherwise.}
  \end{cases}
\end{equation}
\end{lemma}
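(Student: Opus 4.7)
The plan is to adapt the classical derivation of the LF formula to the dynamic sequence $\{\CM_j\}$. By definition, $\LF_j(t)$ is the row of $\CM_{j-1}$ whose starting position is $\SA_j[t]-1$ (cyclically); call it the \emph{target row}. Its first character equals the last character of row $t$ of $\CM_j$, namely $L_j[t]$, because shifting the start of a rotation back by one puts the previous last character at the front. Hence $\LF_j(t)-1$ equals the number of rows of $\CM_{j-1}$ that lexicographically precede the target, which I split into (a) rows with first character strictly less than $L_j[t]$, plus (b) rows with first character equal to $L_j[t]$ whose continuation precedes that of the target.

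For $j \neq i$, the standard LF bijection carries over: rows of $\CM_{j-1}$ starting with $L_j[t]$ correspond bijectively to positions in $L_j$ equal to $L_j[t]$ (each such position indexes the row of $\CM_j$ whose cyclic predecessor yields the matching row in $\CM_{j-1}$), and the bijection preserves lexicographic order because the sort keys coincide (both are the rotation after the shared first character). Hence count (b) equals $\rank(L_{\RLE,j}, t, L_j[t])$. For count (a), I rely on the invariant implicit in Steps (II) and (III) of the update procedure when $j \neq i$: one rotation is removed and one is inserted, and their last characters are the same, so the multiset of last characters (equivalently, of first characters) is preserved from $\CM_j$ to $\CM_{j-1}$. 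Therefore count (a) equals the number of positions in $L_j$ with value less than $L_j[t]$, i.e., $\lexCount(L_{\RLE,j}, L_j[t])$, yielding the first branch.

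For the boundary case $j = i$, the update is insertion-only: $T'^{[i]} = c\,T[i..n]\,T[1..i-1]$ is added to $\CM_i$, contributing its last character $T[i-1]$ to the multiset underlying $L_{i-1}$ but not to $L_i$. Consequently, the plain formula undercounts $\LF_i(t)$ by exactly the number of times this extra character tilts the comparison between the target row and some row of $\CM_{i-1}$ that $L_i$ does not account for. A case analysis then yields the two disjoint conditions in $\kappa(t)$: case (i), $T[i-1] < L_i[t]$, adds one to count (a), since the extra $T[i-1]$ witnesses a row of $\CM_{i-1}$ whose first character lies below $L_i[t]$ and is not seen in $L_i$; case (ii), $T[i-1] = L_i[t]$, leaves count (a) unchanged but may add to count (b), and the tie-breaking between the inserted row and the target is decided by comparing $\ISA[i]$ (the sort position of the rotation $T[i..n]\,T[1..i-1]$ in the original $\CM$, whose tail matches the tail of the inserted row) against $t$. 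The main obstacle is precisely this boundary analysis: verifying that conditions (i) and (ii) exhaustively characterize the off-by-one, closely mirroring the $\epsilon$ correction used in Step (III) at $j = i$ when computing the insertion index $y$. The remaining ingredients — the standard LF bijection for count (b) and the preservation-of-multisets observation for count (a) when $j \neq i$ — are routine given the algorithm's structure.
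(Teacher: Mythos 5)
Your plan (a direct rotation-counting argument: split $\LF_j(t)-1$ into rows of $\CM_{j-1}$ with smaller first character plus equal-first-character rows preceding the target) is a legitimately different route from the paper, which instead proceeds by induction on $j$ and tracks how one character moves from position $x$ to position $y$ (Lemmas~\ref{lem:dynamic_LF_for_boundary_iteration_case_B}--\ref{lem:lexrank_0_relationship_C}, the auxiliary string $L_{\tmp}$). However, as written your argument has a concrete error in the $j \neq i$ branch. You justify count (a) by claiming that for every $j \neq i$ the removed and inserted rotations have the same last character, so the character multiset is preserved from $\CM_j$ to $\CM_{j-1}$. This fails exactly at $j = i+1$: there the removed rotation is $T^{[i]}$ (last character $T[i-1]$) while the inserted one is $T'^{[i+1]} = T[i..n]T[1..i-1]c$ (last character $c$), so $L_i$ differs from $L_{i+1}$ as a multiset whenever $c \neq T[i-1]$ (this is precisely why the paper's Lemma~\ref{lem:state_stability} restricts observation (B) to $j \geq i+2$). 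The parenthetical ``equivalently, of first characters'' is also not valid in general: at $\CM_i$ the first-character multiset is the full character multiset of $T$, while the last-character multiset is $L_i$, in which $L[\ISA[i]] = T[i-1]$ has been overwritten by $c$. The identity you actually need for count (a) --- that the multiset of first characters of $\CM_{j-1}$ equals the multiset of characters of $L_j$ for all $j \neq i$ --- happens to be true, but it must be derived from an explicit accounting of which rotations of $T$ and of $T'$ are present in each intermediate matrix, not from the (false) step-by-step preservation claim. The same accounting is needed to substantiate ``the standard LF bijection carries over'' for count (b) at iterations $j < i$ where $x \neq y$; you assert order preservation of the shift correspondence without verifying that it is a bijection onto the rows of $\CM_{j-1}$ with the given first character, which is where the paper spends most of its effort.

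The boundary case $j = i$ --- which is the entire content of $\kappa(t)$ and hence of the lemma's second branch --- is left unproven by your own admission, and your sketch of it misidentifies the source of the correction. The off-by-one does not come from the inserted row $T'^{[i]}$ (whose first character is $c$); it comes from the row $T^{[i-1]}$ of $\CM_{i-1}$, whose shift-preimage $T^{[i]}$ was already removed from the matrix at iteration $i+1$ (equivalently, $L_i$ is missing the character $L[\ISA[i]] = T[i-1]$ that was overwritten by $c$), so $T^{[i-1]}$, with first character $T[i-1]$, is the one row not accounted for by $\lexCount$ and $\rank$ on $L_{\RLE,i}$. Showing that its contribution is $+1$ to count (a) when $T[i-1] < L_i[t]$, and that in the tie case $T[i-1] = L_i[t]$ the comparison with the target row is decided exactly by $\ISA[i] \leq t$ (using that the continuation of $T^{[i-1]}$ after its first character is the rotation $T^{[i]}$, whose rank in the sorted order is $\ISA[i]$), is a genuine argument --- it is what the paper's $\epsilon_A$, $\epsilon_B$ analysis in Lemma~\ref{lem:dynamic_LF_for_boundary_iteration_case_B} accomplishes --- and it is missing from your proposal. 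Until both the multiset identity for $j \neq i$ and this boundary analysis are carried out, the proposal does not establish the lemma.
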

We give a proof sketch of Lemma~\ref{lem:dynamic_LF_formula}. 
We prove Lemma~\ref{lem:dynamic_LF_formula} by modifying the LF formula. 
The LF formula guarantees that LF function can be computed using 
rank and lex-count on the RLBWT representing the BWT $L$. 
For each iteration $j \in \{n+1, n, \ldots, 1\}$, 
$L_{j}$ is updated to $L_{j-1}$ by deleting a character from $L_{j}$ and inserting a new one. 
Based on these edit operations, 
we appropriately modify the LF formula, 
which results in the dynamic LF formula (i.e., Lemma~\ref{lem:dynamic_LF_formula}).

The detailed proof of Lemma~\ref{lem:dynamic_LF_formula} is as follows. 
We assume that $\SA_{j}[t] \neq 1$ for simplicity. 
Let $t' = \LF_{j}(t)$ and 
$W(t) = \lexCount(L_{\RLE, j}, L_{j}[t]) + \rank(L_{\RLE, j}, t, L_{j}[t])$. 
We prove Lemma~\ref{lem:dynamic_LF_formula} by induction on $j$ 
and consider the following four cases: 
\begin{enumerate}
\item \textbf{Early iterations}: $j \in \{n+1, \ldots, i+1\}$; 
\item \textbf{Boundary iteration}: $j = i$;
\item \textbf{Special iteration}: $j = i-1$;
\item \textbf{Later iterations}: $j \leq i-2$.
\end{enumerate}

\paragraph{Proof of Lemma~\ref{lem:dynamic_LF_formula} for Early Iterations ($j \in \{n+1, \ldots, i+1\}$)}
We prove $\LF_{j}(t) = W(t)$ (i.e., Lemma~\ref{lem:dynamic_LF_formula}). 
$L_{j} = L$, $\SA_{j}[t] = \SA[t] + \epsilon_{1}$, and $\SA_{j-1}[t'] = \SA[t'] + \epsilon_{2}$ follows from Lemma~\ref{lem:state_stability}. 
Here, 
$\epsilon_{1} = 1$ if $\SA[t] \geq j+1$; 
otherwise, $\epsilon_{1} = 0$. 
Similarly, 
$\epsilon_{2} = 1$ if $\SA[t'] \geq j$; 
otherwise, $\epsilon_{2} = 0$. 


$\SA[t'] = \SA[t] - 1$ always holds. 
This is because 
$\SA_{j-1}[t'] = \SA[t']$, $\SA_{j}[t] = \SA[t]$, and $\SA[t'] = \SA[t] - 1$ 
if $\SA[t'] < j$.  
Otherwise (i.e., $\SA[t'] \geq j$), 
$\SA_{j-1}[t'] = \SA[t'] + 1$, $\SA_{j}[t] = \SA[t] + 1$, and $\SA[t'] = \SA[t] - 1$. 

$\LF_{j}(t) = \LF(t)$ holds 
because $\LF(t) = t'$ follows from $\SA[t'] = \SA[t] - 1$. 
$\LF(t) = W(t)$ follows from LF formula. 
Therefore, we obtain $\LF_{j}(t) = W(t)$. 

\paragraph{Proof of Lemma~\ref{lem:dynamic_LF_formula} for Boundary Iteration ($j = i$)}
Let $x_{+1}$ and $y_{+1}$ be the row indexes in $\CM_{j+1}$ identified in Step~(I) and Step~(III) of the update procedure, respectively. 
Similarly, 
let $x$ and $y$ be the row indexes in $\CM_{j}$ identified in Step~(I) and Step~(III) of the update procedure, respectively. 
Here, $x$ does not exist in this iteration $j$. 
$x_{+1} = y_{+1}$ and $y = W(y_{+1}) + \epsilon$ follow from 
the formula for computing $y_{+1}$ used in Step~(III). 
Here, $\epsilon = 1$ if $T^{\prime [j]}[n+1] \leq L_{j}[y_{+1}]$, and $\epsilon = 0$ otherwise. 

We prove $\LF_{j}(t) = W(t) + \kappa(t)$ (i.e., Lemma~\ref{lem:dynamic_LF_formula}). 
We consider two cases: (A) $t = y_{+1}$ and (B) $t \neq y_{+1}$. 

\textbf{Case (A): $t = y_{+1}$.}
We prove $\epsilon = \kappa(y_{+1})$.
$y_{+1} = \ISA[i]$ 
because (i) $y_{+1} = x_{+1}$, 
and $x_{+1}$ is the row index of the circular shift $T^{[i]}$ in $\CM_{j+1}$; 
(ii) the $x_{+1}$-th row in the conceptual matrix of $T$ is not changed 
during the update from $\CM_{n+1}$ to $\CM_{i+1}$.
$T^{\prime [j]}[n+1] = T[i-1]$ follows from $j = i$. 
If $T^{\prime [j]}[n+1] \leq L_{j}[y_{+1}]$ (i.e., $T[i-1] \leq L_{j}[y_{+1}]$), 
then $\epsilon = 1$. 
$\kappa(y_{+1}) = 1$ follows from $T[i-1] \leq L_{j}[y_{+1}]$ and $\ISA[i] = y_{+1}$. 
Otherwise (i.e., $T^{\prime [j]}[n+1] > L_{j}[y_{+1}]$), 
$\epsilon = 0$ and $\kappa(y_{+1}) = 0$. 
Therefore, we obtain $\epsilon = \kappa(y_{+1})$. 

We prove $\LF_{j}(t) = W(t) + \kappa(t)$. 
In this case, $\LF_{j}(t) = y$ holds. 
Therefore, 
$\LF_{j}(t) = W(t) + \kappa(t)$ 
follows from $t = y_{+1}$, $y = W(y_{+1}) + \epsilon$, and $\epsilon = \kappa(y_{+1})$. 

\textbf{Case (B): $t \neq y_{+1}$.}
In this case, $\SA_{j}$ contains $(\SA_{j}[t] - 1)$ at a position $u$. 
The following lemma states the relationship between $u$ and $L_{\RLE, j}$. 
\begin{lemma}\label{lem:dynamic_LF_for_boundary_iteration_case_B}
The following equation holds.
\begin{equation*}
 u = (\lexCount(L_{\RLE, j}, L_{j}[t]) - \epsilon_{A}) + (\rank(L_{\RLE, j}, t, L_{j}[t]) - \epsilon_{B}).
\end{equation*}
Here, $\epsilon_{A}$ and $\epsilon_{B}$ are defined as follows:
\begin{align*}
    \epsilon_{A} &= 
  \begin{cases}
    1 & \text{if $L[\ISA[i]] \geq L_{j}[t]$ and $c < L_{j}[t]$} \\
    -1 & \text{if $L[\ISA[i]] < L_{j}[t]$ and $c \geq L_{j}[t]$} \\
    0 & \text{otherwise.} 
  \end{cases} \\
    \epsilon_{B} &= 
  \begin{cases}
    1 & \text{if $\ISA[i] \leq t$, $L[\ISA[i]] \neq L_{j}[t]$, and $c = L_{j}[t]$} \\
    -1 & \text{if $\ISA[i] \leq t$, $L[\ISA[i]] = L_{j}[t]$, and $c \neq L_{j}[t]$} \\
    0 & \text{otherwise.}
  \end{cases}
\end{align*}
\end{lemma}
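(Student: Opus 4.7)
The plan is to reduce the claim to the classical LF formula applied to the original BWT $L$ of $T$, and then account for the single position at which $L_j=L_i$ differs from $L$. Throughout, $t\neq y_{+1}$ is used crucially, both to rule out the value $\SA_j[t]=i+1$ (the only $\SA_j$-value that is missing a predecessor, since $i$ does not lie in $\SA_j$) and to guarantee $L_j[t]=L[t]$ (because $L_j$ and $L$ can only disagree at $\ISA[i]=y_{+1}$).

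First, I would use Lemma~\ref{lem:state_stability}(ii) for $j=i$ to describe $\SA_j$ as a relabeling of $\SA$: each $\SA$-value $\geq i$ is incremented by $1$, while values $<i$ are unchanged. A short case split on whether $\SA_j[t]\leq i-1$ or $\SA_j[t]\geq i+2$ (the remaining values $i$ and $i+1$ being excluded by the hypothesis $t\neq y_{+1}$ and the absence of $i$ from $\SA_j$) shows that the unique $u$ with $\SA_j[u]=\SA_j[t]-1$ coincides with $\LF(t)$ on the original suffix array, because the $\pm 1$ shift applied by the relabeling cancels symmetrically on both sides. Hence $u=\lexCount(L_{\RLE},L[t])+\rank(L_{\RLE},t,L[t])$ by the classical LF formula.

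Next, I would translate the right-hand side from $L$ to $L_j$. Because the only discrepancy between $L$ and $L_j$ is at position $\ISA[i]$, where $L[\ISA[i]]=T[i-1]$ was replaced by $c$ during iteration $i+1$ (see the discussion after Lemma~\ref{lem:state_stability}(i)), the $\lexCount$ query is affected only through the comparison of $T[i-1]$ and $c$ with the target character $L_j[t]$. A four-way enumeration (whether each of $T[i-1],c$ is $<L_j[t]$ or $\geq L_j[t]$) shows that the count of characters below $L_j[t]$ changes by $+1$ exactly when $T[i-1]\geq L_j[t]$ and $c<L_j[t]$, by $-1$ exactly in the symmetric situation, and by $0$ otherwise; rearranging so that $\lexCount(L_{\RLE},L[t])=\lexCount(L_{\RLE,j},L_j[t])-\epsilon_A$ recovers precisely the sign table defining $\epsilon_A$. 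An analogous enumeration for the $\rank$ query—which is affected only when the changed index $\ISA[i]$ lies in the prefix $[1..t]$, and only when exactly one of $T[i-1],c$ equals $L_j[t]$—yields $\rank(L_{\RLE},t,L[t])=\rank(L_{\RLE,j},t,L_j[t])-\epsilon_B$. Substituting both equalities into $u=\LF(t)$ gives the stated identity.

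The main obstacle is bookkeeping: ensuring that the $\pm 1$ corrections from the $\SA$-relabeling step really do cancel (so that $u=\LF(t)$ without any residual offset) and that the two four-way sign tables for $\epsilon_A$ and $\epsilon_B$ match the lemma's definitions on the nose. Both reduce to routine but error-prone case checks, which are cleanly controlled once the hypothesis $t\neq y_{+1}$ has eliminated the only degenerate value $\SA_j[t]=i+1$.
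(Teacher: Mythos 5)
Your proposal is correct and follows essentially the same route as the paper's proof: both establish $u=\LF(t)$ via the classical LF formula together with the relabeling of $\SA$ given by Lemma~\ref{lem:state_stability}, and then convert $\lexCount$ and $\rank$ from $L$ to $L_{j}$ by analyzing the single changed position $\ISA[i]$ (where $T[i-1]$ became $c$), which yields exactly the correction terms $\epsilon_{A}$ and $\epsilon_{B}$. Your explicit use of $t\neq y_{+1}$ to guarantee $L_{j}[t]=L[t]$ and to rule out $\SA_{j}[t]=i+1$ simply spells out details the paper leaves implicit.
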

\begin{proof}
We can prove $\LF(t) = u$ using the same approach used in the proof for early iterations. 
Using LF formula, 
we obtain $u = \lexCount(L_{\RLE}, L[t]) + \rank(L_{\RLE}, t, L[t])$. 
Lemma~\ref{lem:state_stability} indicates that 
$L_{j}$ can be obtained by changing the $x_{+1}$-th character to $c$ in $L$. 
Since $x_{+1} = \ISA[i]$, 
the following relationship holds between two lex-count queries $\lexCount(L_{\RLE, j}, L_{j}[t])$ and $\lexCount(L_{\RLE}, L_{j}[t])$:
\begin{equation*}
 \lexCount(L_{\RLE, j}, L_{j}[t]) = \lexCount(L_{\RLE}, L_{j}[t]) + \epsilon_{A}.
\end{equation*}


Similarly, 
the following relationship holds between two rank queries $\rank(L_{\RLE, j}, t, L_{j}[t])$ and $\rank(L_{\RLE}, t, L_{j}[t])$:
\begin{equation*}
 \rank(L_{\RLE, j}, t, L_{j}[t]) = \rank(L_{\RLE}, t, L_{j}[t]) + \epsilon_{B}.
\end{equation*}

We modify the LF formula using these relationships. 
Then, we obtain Lemma~\ref{lem:dynamic_LF_for_boundary_iteration_case_B}. 
\end{proof}

We prove $\LF_{j}(t) = W(t) + \kappa(t)$ using Lemma~\ref{lem:dynamic_LF_for_boundary_iteration_case_B}. 
If $u < y$, 
then $\SA_{j-1}$ contains the value $(\SA_{j}[t] - 1)$ at the same position $u$. 
In this case, $\LF_{j}(t) = u$, 
$\epsilon_{A} + \epsilon_{B} = 0$, and $\kappa(t) = 0$ hold. 
Otherwise (i.e., $u \geq y$), 
$\SA_{j-1}$ contains the value $(\SA_{j}[j] - 1)$ at position $(u+1)$. 
In this case, $\LF_{j}(t) = u+1$, 
$\epsilon_{A} + \epsilon_{B} = 1$, and $\kappa(t) = 1$ hold. 
Therefore, $\LF_{j}(t) = W(t) + \kappa(t)$ follows from Lemma~\ref{lem:dynamic_LF_for_boundary_iteration_case_B}.

\paragraph{Proof of Lemma~\ref{lem:dynamic_LF_formula} for Special Iteration ($j = i-1$)}
To prove $\LF_{j}(t) = W(t)$, 
we introduce a string $L_{\tmp}$ defined as $L_{\tmp} = L_{j+1}[1..\ISA[i]-1] T[i-1] L_{j+1}[\ISA[i]..n]$. 
Similarly, let $\SA_{\tmp} = \SA_{j+1}[1..\ISA[i]-1] (i) \SA_{j+1}[\ISA[i]..n]$. 
Let $W_{\tmp}(t) = \lexCount(L_{\tmp}, L_{\tmp}[t]) + \rank(L_{\tmp}, t, L_{\tmp}[t])$. 
The following lemma states properties of $W_{\tmp}(t)$. 

\begin{lemma}\label{lem:optional_lemma}
    Let $g$ be a position in $\SA_{j+1}$. 
    Let $\epsilon = 1$ if $g \geq \ISA[i]$; 
    otherwise $\epsilon = 0$. 
    Then, $\LF_{j+1}(g) = W_{\tmp}(g+\epsilon)$ holds.
\end{lemma}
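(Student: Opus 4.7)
The plan is to expand $W_{\tmp}(g+\epsilon)$ by exploiting the fact that $L_{\tmp}$ is obtained from $L_{j+1}$ by inserting the single character $T[i-1]$ at position $\ISA[i]$. This insertion shifts $\lexCount$ and $\rank$ in a controlled way: $\lexCount(L_{\tmp}, c) = \lexCount(L_{j+1}, c) + I\{T[i-1] < c\}$, and since $L_{\tmp}[g+\epsilon] = L_{j+1}[g]$, the extra character is counted by $\rank$ precisely when $g+\epsilon > \ISA[i]$, i.e., when $g \geq \ISA[i]$. Invoking Lemma~\ref{lem:state_stability}(i) to rewrite $L_{j+1} = L_i = L$, the standard LF formula collapses the main terms to $\LF(g)$, leaving
\[ W_{\tmp}(g+\epsilon) = \LF(g) + I\{T[i-1] < L[g]\} + I\{g \geq \ISA[i]\}\, I\{T[i-1] = L[g]\}. \]

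Next, I would compute $\LF_{j+1}(g)$ directly from $\SA_{j} = \SA_{i-1}$, which is obtained from $\SA_{j+1} = \SA_i$ by inserting the value $i$ at position $y$. Plugging the boundary-iteration formulas together with $L[\ISA[i]] = T[i-1]$ gives $y = \ISA[i-1] + 1$. For $g = \ISA[i]$ one verifies directly that $\LF_{j+1}(g) = y$, matching the above expansion via $\LF(\ISA[i]) = \ISA[i-1]$ and the two indicators both evaluating to $1$. For $g \neq \ISA[i]$, Lemma~\ref{lem:state_stability}(ii) combined with the shift induced by the insertion at $y$ yields $\LF_{j+1}(g) = \LF(g) + I\{\LF(g) \geq y\}$, so the claim reduces to showing that this shift indicator agrees with the correction produced by $W_{\tmp}$.

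The main obstacle is the subcase $L[g] = T[i-1]$ with $g \neq \ISA[i]$, where first-character comparisons alone cannot decide whether $\LF(g) > \ISA[i-1]$. The key observation is that $\LF$ restricts to an order-preserving bijection on the block of rows whose first character equals $T[i-1]$, and within this block the order of $T^{[\SA[g]-1]}$ and $T^{[i-1]}$ is inherited from that of $T^{[\SA[g]]}$ and $T^{[i]}$, which have ranks $g$ and $\ISA[i]$ respectively. Since $\LF(\ISA[i]) = \ISA[i-1]$, this gives $\LF(g) > \ISA[i-1]$ iff $g > \ISA[i]$, matching exactly the indicator $I\{g \geq \ISA[i]\}$ for $g \neq \ISA[i]$. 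The residual subcases $L[g] \neq T[i-1]$ are settled by a direct first-character comparison, finishing the verification.
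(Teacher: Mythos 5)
There is a genuine gap, and it sits at the very first step. You invoke Lemma~\ref{lem:state_stability}(i) to rewrite $L_{j+1} = L_i = L$, but that lemma only gives $L_{n+1} = \cdots = L_{i+1}$; it says nothing about $L_i$. In fact $L_i \neq L$ in general: at iteration $i+1$ the row of $T^{[i]}$ is replaced by $T^{\prime[i+1]}$ at the same position, so $L_i$ is $L$ with the character at position $\ISA[i]$ changed from $T[i-1]$ to the inserted character $c$ (this is exactly what the paper uses inside the proof of Lemma~\ref{lem:dynamic_LF_for_boundary_iteration_case_B}). Consequently, when you expand $W_{\tmp}$ relative to the \emph{original} $L$, the extra character of $L_{\tmp}$ is $c$ (inserted just after position $\ISA[i]$), not $T[i-1]$; the correct corrections compare $L[g]$ with $c$, e.g.\ for $g < \ISA[i]$ one gets $W_{\tmp}(g) = \LF(g) + 1$ iff $c < L[g]$, whereas your formula adds $1$ iff $T[i-1] < L[g]$. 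These disagree whenever $c$ and $T[i-1]$ fall on opposite sides of $L[g]$. The same conflation breaks your formula for the insertion position: unwinding the Step~(III) formula at $j = i$ gives $y = \lexCount(L, c) + \rank(L, \ISA[i], c) + 1$, which equals $\ISA[i-1]+1$ only when $c = T[i-1]$; so the subsequent matching of the shift indicator $[\LF(g) \geq y]$ against your correction terms, including the order-preservation argument inside the block of BWT positions carrying $T[i-1]$, is aimed at the wrong character and the wrong threshold.

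The route you chose (compute both $W_{\tmp}(g+\epsilon)$ and $\LF_{j+1}(g)$ directly in terms of $L$, $\SA$, and the static $\LF$) could be repaired by systematically replacing $T[i-1]$ by $c$ in the corrections, or by folding in the $\epsilon_A,\epsilon_B$ adjustments of Lemma~\ref{lem:dynamic_LF_for_boundary_iteration_case_B}, but note that the paper avoids all of this: it simply invokes the inductive hypothesis $\LF_{j+1}(g) = W_{+1}(g) + \kappa(g)$ (the boundary-iteration case of Lemma~\ref{lem:dynamic_LF_formula}, already established at that point of the induction) and then verifies $W_{+1}(g) + \kappa(g) = W_{\tmp}(g+\epsilon)$ by a four-case comparison of $\lexCount$ and $\rank$ on $L_{\tmp}$ versus $L_{j+1}$, where the inserted character genuinely is $T[i-1]$. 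Your proposal discards that available hypothesis and re-derives everything from the static $\LF$, which is precisely where the misidentification of $L_{j+1}$ with $L$ slips in; as written, the argument only goes through in the special case $c = T[i-1]$.
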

\begin{proof}
    $\LF_{j+1}(g) = W_{+1}(g) + \kappa(g)$ follows from the inductive assumption. 
    We prove $W_{+1}(g) + \kappa(g) = W_{\tmp}(g + \epsilon)$.
    There exist four cases: 
    (i) $T[i-1] < L_{j+1}[g]$; 
    (ii) $T[i-1] = L_{j+1}[g]$ and $\ISA[i] \leq g$;
    (iii) $T[i-1] = L_{j+1}[g]$ and $\ISA[i] > g$;
    (iv) $T[i-1] > L_{j+1}[g]$.     

For case (i), 
$\kappa(g) = 1$, $\lexCount(L_{\tmp}, L_{\tmp}[g+\epsilon]) = \lexCount(L_{\RLE, j+1}, L_{j+1}[g]) + 1$, 
and $\rank(L_{\tmp}, g, L_{\tmp}[g+\epsilon]) = \rank(L_{j+1}, g, L_{j+1}[g])$. 
For case (ii), 
$\kappa(g) = 1$, $\lexCount(L_{\tmp}, L_{\tmp}[g+\epsilon]) = \lexCount(L_{\RLE, j+1}, L_{j+1}[g])$, 
and $\rank(L_{\tmp}, g, L_{\tmp}[g+\epsilon]) = \rank(L_{j+1}, g, L_{j+1}[g]) + 1$. 
For case (iii), 
$\kappa(g) = 0$, $\lexCount(L_{\tmp}, L_{\tmp}[g+\epsilon]) = \lexCount(L_{\RLE, j+1}, L_{j+1}[g])$, 
and $\rank(L_{\tmp}, g, L_{\tmp}[g+\epsilon]) = \rank(L_{j+1}, g, L_{j+1}[g])$. 
For case (iv), 
$\kappa(g) = 0$, $\lexCount(L_{\tmp}, L_{\tmp}[g+\epsilon]) = \lexCount(L_{\RLE, j+1}, L_{j+1}[g])$, 
and $\rank(L_{\tmp}, g, L_{\tmp}[g+\epsilon]) = \rank(L_{\RLE, j+1}, g, L_{j+1}[g])$. 
Therefore, $W_{+1}(g) + \kappa(g) = W_{\tmp}(g + \epsilon)$ always holds. 
Finally, Lemma~\ref{lem:optional_lemma} follows from 
$\LF_{j+1}(g) = W_{+1}(g) + \kappa(g)$ and $W_{+1}(g) + \kappa(g) = W_{\tmp}(g + \epsilon)$. 
\end{proof}

Let $\alpha$ be the position of $\SA_{j}[t]$ in $\SA_{\tmp}$.     
The following lemma states the relationship between $W(t)$ and $W_{\tmp}(\alpha)$. 

\begin{lemma}\label{lem:lexrank_0_relationship_C}
    The following equation holds. 
    \begin{equation}\label{eq:W_eq1}
        W(t) =  
    \begin{cases}
        W_{\tmp}(\alpha) - 1 & \text{if $x < W_{\tmp}(\alpha) \leq y$} \\
        W_{\tmp}(\alpha) + 1 & \text{if $y \leq W_{\tmp}(\alpha) < x$} \\
        W_{\tmp}(\alpha) & \text{otherwise.}
    \end{cases}
    \end{equation}
\end{lemma}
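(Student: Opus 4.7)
The approach is to exploit the structural relationship between $L_j = L_{i-1}$ and $L_{\tmp}$: both sequences are obtained from $L_{j+1} = L_i$ by inserting a single occurrence of $T[i-1]$, differing only in the insertion position. By definition $L_{\tmp}$ inserts at position $\ISA[i]$, whereas $L_{i-1}$ was produced by iteration $j+1 = i$ of the update procedure, whose Step~(III) inserted $T[i-1]$ at the row index $y_{+1}$. Analogously, $\SA_j$ and $\SA_{\tmp}$ are both $\SA_i$ with the value $i$ inserted, at positions $y_{+1}$ and $\ISA[i]$ respectively.

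First I would make the correspondence $t \leftrightarrow \alpha$ explicit from the definition of $\alpha$ as the position of $\SA_j[t]$ in $\SA_{\tmp}$. Since $\SA_j$ and $\SA_{\tmp}$ differ only in the position of one inserted value, $\alpha$ is obtained from $t$ by an at-most-one shift, and one can verify the character identity $L_j[t] = L_{\tmp}[\alpha]$ in every case (both equal $T[i-1]$ when $\SA_j[t] = i$, and both equal the same character of $L_i$ otherwise). Next I would compare the two summands defining $W(t)$ and $W_{\tmp}(\alpha)$: the $\lexCount$ terms agree because $L_j$ and $L_{\tmp}$ contain the same multiset of characters, while the $\rank$ terms agree whenever $L_j[t] \neq T[i-1]$ and may differ by exactly $\pm 1$ when $L_j[t] = T[i-1]$, the sign and occurrence of the discrepancy being governed by how the prefixes $L_j[1..t]$ and $L_{\tmp}[1..\alpha]$ sit relative to $y_{+1}$ and $\ISA[i]$. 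Finally I would identify the row indices $x = \ISA[i-1] + \epsilon$ (from Step~(I)'s boundary formula) and $y$ (from Step~(III)'s later-iterations formula) used in iteration $j = i-1$, and show that the three cases of the lemma align exactly with the three possible locations of $W_{\tmp}(\alpha)$ relative to the interval determined by $x$ and $y$. The key observation is that $x$ and $y$ encode precisely the positions at which the extra $T[i-1]$ resides after the LF-style computation on $L_j$ and $L_{\tmp}$ respectively, so the interval between them is exactly the range of output values where the rank discrepancy forces an adjustment.

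The main obstacle is the simultaneous bookkeeping of two distinct off-by-one shifts: (i) the position correspondence $t \leftrightarrow \alpha$, which itself splits into subcases according to where $t$ lies with respect to $\min(y_{+1}, \ISA[i])$ and $\max(y_{+1}, \ISA[i])$; and (ii) the rank adjustment that triggers only when the queried prefix contains exactly one of the two inserted copies of $T[i-1]$. Arranging these subcases so that the strict-versus-non-strict inequalities in $x < W_{\tmp}(\alpha) \leq y$ and $y \leq W_{\tmp}(\alpha) < x$ land in the right places — in particular, carefully handling the sub-case $L_j[t] = T[i-1]$, in which position $t$ contributes to its own rank — is the most delicate part of the argument.
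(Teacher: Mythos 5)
Your structural starting point is the same as the paper's: the paper's proof also begins from the observation that $L_j$ is obtained from $L_{\tmp}$ by moving the single character at position $\ISA[i]$ to position $y_{+1}$ (equivalently, both are $L_{j+1}$ with one copy of $T[i-1]$ inserted), and from this it derives, by exactly the rank/lex-count bookkeeping you describe, a three-case formula for $W(t)$ in terms of $W_{\tmp}(\alpha)$. However, in that intermediate formula the case boundaries are not $x$ and $y$ but $W_{\tmp}(\ISA[i])$ and $W(y_{+1})$, and the substantive content of the lemma is the identification of these two quantities with the row indexes $x$ and $y$ of iteration $j=i-1$. For $y$ this is immediate from the Step~(III) formula, as you note. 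For $x$ it is not: this is what your ``key observation'' asserts without argument, and it is where the paper spends the second half of its proof. The paper establishes $x = W_{\tmp}(\ISA[i])$ by combining Lemma~\ref{lem:optional_lemma} (which ties $W_{\tmp}$ to $\LF_{j+1}$ and hence rests on the inductive hypothesis for the boundary iteration) with a pigeonhole argument: $W_{\tmp}$ is a bijection on $\{1,\ldots,n+1\}$, and for every $z \neq \ISA[i]$ the image $W_{\tmp}(z)$ is a position of $\SA_j$ holding a value $\SA_{j+1}[g]-1 \neq i-1$, while $\SA_j[x]=i-1$; hence $x$ is the unique value left over and must equal $W_{\tmp}(\ISA[i])$.

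Your sketch instead proposes to reach $x$ through the explicit Step~(I) boundary formula $x=\ISA[i-1]+\epsilon$, but you never indicate how $\ISA[i-1]+\epsilon$ is to be matched with $W_{\tmp}(\ISA[i])$. That reconciliation is a genuinely delicate computation: the lex-count of $L_{\tmp}$ exceeds that of $L$ precisely when $c < T[i-1]$, whereas the boundary formula's $\epsilon$ is triggered by $T[i-1]\geq c$, so the tie case $c = T[i-1]$ and the tie-breaking convention for where $T^{\prime[i]}$ is inserted relative to $T^{[i-1]}$ must be handled explicitly — and none of this is in your plan. So while the first half of your argument (the $t\leftrightarrow\alpha$ correspondence, $L_j[t]=L_{\tmp}[\alpha]$, equal lex-counts, rank discrepancy in $\{-1,0,+1\}$) coincides with the paper's computation, the claim that the discrepancy interval is bounded exactly by $x$ and $y$ is the heart of the lemma and is currently asserted rather than proved; you need either the paper's bijectivity argument via Lemma~\ref{lem:optional_lemma} or a worked-out direct verification that $\ISA[i-1]+\epsilon = W_{\tmp}(\ISA[i])$, including the equal-character case.
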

\begin{proof}
    From the definitions of $L_{j}$ and $L_{\tmp}$, 
    $L_{j}$ can be obtained by moving the character at position $\ISA[i]$ in $L_{\tmp}$ 
    to position $y_{+1}$. 
    Using this fact and the definitions of lex-count and rank queries, 
    we can prove the following equation: 
    \begin{equation*}
        W(t) =  
    \begin{cases}
        W_{\tmp}(\alpha) - 1 & \text{if $W_{\tmp}(\ISA[i]) < W_{\tmp}(\alpha) \leq W(y_{+1})$} \\
        W_{\tmp}(\alpha) + 1 & \text{if $W(y_{+1}) \leq W_{\tmp}(\alpha) < W_{\tmp}(\ISA[i])$} \\
        W_{\tmp}(\alpha) & \text{otherwise.}
    \end{cases}
    \end{equation*}
    This equation indicates that Lemma~\ref{lem:lexrank_0_relationship_C} holds 
    if $x = W_{\tmp}(\ISA[i])$ and $y = W(y_{+1})$. 
    Here, 
    $y = W(y_{+1})$ follows from the formula for computing $y$ stated in Step~(III). 

    We prove $x = W_{\tmp}(\ISA[i])$. 
    Lemma~\ref{lem:optional_lemma} shows that  
    $W_{\tmp}$ maps an integer $z$ in $\{ 1, 2, \ldots, n+1 \}$ 
    into $\{ 1, 2, \ldots, n+1 \} \setminus \{ x  \}$ except for the case $z = \ISA[i]$.
    This is because $x$ is the row index of $T^{[i-1]}$ in $\CM_{j}$ (i.e., $\SA_{j}[x] = i-1$), 
    and $\SA_{j+1}$ does not contain $i$. 
    On the other hand, 
    $W_{\tmp}$ is a bijective function that maps $\{ 1, 2, \ldots, n+1 \}$ to $\{ 1, 2, \ldots, n+1 \}$. 
    Therefore, $x = W_{\tmp}(\ISA[i])$ holds. 
\end{proof}

We prove $\LF_{j}(t) = W(t)$ (i.e., Lemma~\ref{lem:dynamic_LF_formula}) using Lemma~\ref{lem:optional_lemma} and Lemma~\ref{lem:lexrank_0_relationship_C}. 
If $\SA_{j}[t] = i$, 
then $t = y_{+1}$ and $\LF_{j}(y_{+1}) = y$ follow from the update algorithm for $L$ and $\SA$, 
$y = W(y_{+1})$ follows from the formula for computing $y$. 
Therefore, 
\begin{equation*}
    \begin{split}
        \LF_{j}(t) &= \LF_{j}(y_{+1}) \\
         &= y \\
          &= W(y_{+1}) \\
          &= W(t).
    \end{split}
\end{equation*}

Otherwise (i.e., $\SA_{j}[t] \neq i$), 
$\SA_{j}$ contains $(\SA_{j}[t] - 1)$ at a position $u$. 
In this case, $u \neq x$ because 
$x$ is the row index of $T^{[i-1]}$ in $\CM_{j}$ (i.e., $\SA_{j}[x] = i -1$). 
$\SA_{j}$ is updated to $\SA_{j-1}$ by moving the $x$-th integer in $\SA_{j}$ to the $y$-th integer. 
Therefore, $\LF_{j}(t)$ can be computed as follows: 
\begin{equation}\label{eq:u_eq1}
        \LF_{j}(t) =  
    \begin{cases}
        u - 1 & \text{if $x < u \leq y$} \\
        u + 1 & \text{if $y \leq u < x$} \\
        y & \text{if $u = x$} \\
        u & \text{otherwise.}
    \end{cases}
\end{equation}

Lemma~\ref{lem:optional_lemma} shows that $u = W_{\tmp}(\alpha)$. 
Since $u \neq x$, 
at least one of the following three cases occurs: 
(i) $x < W_{\tmp}(\alpha) \leq y$, (ii) $y \leq W_{\tmp}(\alpha) < x$, 
(iii) either $W_{\tmp}(\alpha) < \min \{ x, y \}$ or $W_{\tmp}(\alpha) > \max \{ x, y \}$ holds. 

For case (i), 
$W(t) = W_{\tmp}(\alpha) - 1$ follows from Lemma~\ref{lem:lexrank_0_relationship_C}. 
Therefore, 
\begin{equation*}
    \begin{split}
        \LF_{j}(t) &= u-1  \\
        &= W_{\tmp}(\alpha) - 1 \\
        &= W(t).
    \end{split}
\end{equation*}

For case (ii), 
$W(t) = W_{\tmp}(\alpha) + 1$ follows from Lemma~\ref{lem:lexrank_0_relationship_C}. 
Therefore, 
\begin{equation*}
    \begin{split}
        \LF_{j}(t) &= u+1  \\
        &= W_{\tmp}(\alpha) + 1 \\
        &= W(t).
    \end{split}
\end{equation*}

For case (iii), 
$W(t) = W_{\tmp}(\alpha)$ follows from Lemma~\ref{lem:lexrank_0_relationship_C}. 
Therefore, 
\begin{equation*}
    \begin{split}
        \LF_{j}(t) &= u  \\
        &= W_{\tmp}(\alpha) \\
        &= W(t).
    \end{split}
\end{equation*}

\paragraph{Proof of Lemma~\ref{lem:dynamic_LF_formula} for Later Iterations ($j \in \{ i-2, \ldots, 1 \}$)}
For these later iterations, 
$\SA_{j}$ is updated to $\SA_{j-1}$ by moving the $x$-th integer in $\SA_{j}$ to the $y$-th integer. 
Similarly, 
$L_{j}$ is updated to $L_{j-1}$ by moving the $x$-th character in $L_{j}$ to the $y$-th character. 
Using these facts and the inductive assumption, 
we obtain the following two equations: 
\begin{equation}\label{eq:W_eq2}
        W(t) =  
    \begin{cases}
        W_{+1}(\alpha) - 1 & \text{if $x < W_{+1}(\alpha) \leq y$} \\
        W_{+1}(\alpha) + 1 & \text{if $y \leq W_{+1}(\alpha) < x$} \\
        W_{+1}(\alpha) & \text{otherwise.}
    \end{cases}
\end{equation}
\begin{equation}\label{eq:u_eq2}
        \LF_{j}(t) =  
    \begin{cases}
        u - 1 & \text{if $x < u \leq y$} \\
        u + 1 & \text{if $y \leq u < x$} \\
        y & \text{if $u = x$} \\
        u & \text{otherwise.}
    \end{cases}
\end{equation}
Here, 
$\alpha$ is the position of $\SA_{j}[t]$ in $\SA_{j+1}$; 
$u$ is the position of $(\SA_{j}[t] - 1)$ in $\SA_{j}$; 
$W_{+1}(\alpha) = \lexCount(L_{\RLE, j+1}, L_{j+1}[\alpha]) + \rank(L_{\RLE, j+1}, \alpha, L_{j+1}[\alpha])$. 

Equation~\ref{eq:W_eq2} and Equation~\ref{eq:u_eq2} correspond to 
Equation~\ref{eq:W_eq1} and Equation~\ref{eq:u_eq1}, respectively. 
Thus, we can prove $\LF_{j}(t) = W(t)$ using the same approach as in the proof of Lemma~\ref{lem:dynamic_LF_formula} for the special iteration. 
Finally, we obtain Lemma~\ref{lem:dynamic_LF_formula}. 

\subsubsection{Supporting \texorpdfstring{$\compSA_{X, 1}$}{computeSAX}, \texorpdfstring{$\compSA_{X, 2}$}{computeSAX}, \texorpdfstring{$\compSA_{Y, 1}$}{computeSAY}, and \texorpdfstring{$\compSA_{Y, 2}$}{computeSAY}}\label{subsec:support_SAXY}
\begin{table}[htbp]
\centering
\caption{The four operations used in the update algorithm for $\SA_{s}$ and $\SA_{e}$.}
\label{tab:XY_operations}
\begin{tabular}{l|l|l}
\hline
Operation & Description & Time complexity \\
\hline
$\compSA_{X, 1}(L_{\RLE, j}, \SA_{s, j}, \SA_{e, j}, x_{-1})$ & return $\SA_{j-1}[x_{-1}+1]$ & $\mathcal{O}(\log n)$ \\
$\compSA_{X, 2}(L_{\RLE, j}, \SA_{s, j}, \SA_{e, j}, x_{-1})$ & return $\SA_{j-1}[x_{-1}-1]$ & $\mathcal{O}(\log n)$ \\
$\compSA_{Y, 1}(L_{\RLE, j}, \SA_{s, j}, \SA_{e, j}, y)$ & return $\SA_{j-1}[y+1]$ & $\mathcal{O}(\log n)$ \\
$\compSA_{Y, 2}(L_{\RLE, j}, \SA_{s, j}, \SA_{e, j}, y)$ & return $\SA_{j-1}[y-1]$ & $\mathcal{O}(\log n)$ \\
\hline
\end{tabular}
\end{table}

This section describes the four operations $\compSA_{X, 1}$, $\compSA_{X, 2}$, $\compSA_{Y, 1}$, and $\compSA_{Y, 2}$ used in the update algorithms for $\SA_s$ and $\SA_e$. 
These four operations are summarized in Table~\ref{tab:XY_operations} and computed using the dynamic LF function introduced in Section~\ref{subsubsec:dynamic_LF}.

\paragraph{Supporting $\compSA_{X, 1}(L_{\RLE, j}, \SA_{s, j}, \SA_{e, j}, x_{-1})$}
The operation $\compSA_{X, 1}(L_{\RLE, j}, \SA_{s, j}, \SA_{e, j}, x_{-1})$ computes $\SA_{j-1}[x_{-1}+1]$  
from $L_{\RLE, j}$, $\SA_{s, j}$, $\SA_{e, j}$, and the row index $x_{-1}$ of a circular shift in $\CM_{j-1}$,  
identified in Step~(I) of the update algorithm. 
Here, $j \in \{i+2, i, i-1, \ldots, i-K+1 \}$ holds 
since most of the $(n+1)$ iterations are skipped. 
Let $y$ be the insertion position in $\CM_j$ determined in Step~(III). 
The value $\SA_{j-1}[x_{-1}+1]$ is computed based on the following cases:  
(i) $j = i+2$;  
(ii) $j = i$ and $y = x_{-1} + 1$;  
(iii) $j = i$ and $y \ne x_{-1} + 1$;  
(iv) $j < i$. 

\textbf{Case (i): $j = i+2$.}
The following equation follows from Lemma~\ref{lem:state_stability}: 
\begin{align*}
    \SA_{j-1}[x_{-1}+1] &= 
  \begin{cases}
    \SA[\ISA[i]+1] + 1 & \text{if $\SA[\ISA[i]+1] \geq i+1$} \\
    \SA[\ISA[i]+1] & \text{otherwise.}
  \end{cases}
\end{align*}
This equation implies that $\SA_{j-1}[x_{-1}+1]$ can be computed using $\SA[\ISA[i+1]+1]$. 
$\SA[\ISA[i]+1]$ is computed by $\phi^{-1}(i)$, which is introduced in Section~\ref{sec:review_r_index}. 
This function can be precomputed in $\mathcal{O}(\log r)$ time using 
$\mathscr{D}_{\RLE}(L_{\RLE})$, $\mathscr{D}_{\DI}(\SA_{s})$, and $\mathscr{D}_{\DI}(\SA_{e})$. 

\textbf{Case (ii): $j = i$ and $y = x_{-1} + 1$.}
$i$ is inserted at position $x_{-1} + 1$ in $\SA_j$, so $\SA_{j-1}[x_{-1} + 1] = i$. 

\textbf{Case (iii): $j = i$ and $y \neq x_{-1} + 1$.}
Similar to case (i), 
$\SA_{j-1}[x_{-1}+1]$ is computed as follows: 
\begin{align*}
    \SA_{j-1}[x_{-1}+1] &= 
  \begin{cases}
    \SA[\ISA[i-1]+1] + 1 & \text{if $\SA[\ISA[i-1]+1] \geq i$} \\
    \SA[\ISA[i-1]+1] & \text{otherwise.}
  \end{cases}
\end{align*}
Here, $\SA[\ISA[i-1]+1]$ is precomputed by $\phi^{-1}(i-1)$. 

\textbf{Case (iv): $j < i$.}
We compute $\SA_{j-1}[x_{-1} + 1]$ using the dynamic \emph{LF} function introduced in Section~\ref{subsubsec:dynamic_LF}. 
The value $\SA_{j-1}[x_{-1}+1]$ is computed via $\LF_j$ by finding $h$ such that $\LF_j(h) = x_{-1}+1$ if $x_{-1} \neq |\SA_{j-1}|$, and $\LF_j(h) = 1$ otherwise. 
Such $h$ always exists since $\LF_j$ is bijective (except when $j = i$).
We then compute $\SA_{j-1}[x_{-1}+1]$ using the computed $h$ and $\LF_{j}$ as follows:
$\SA_{j-1}[x_{-1}+1] = \SA_{j}[h]-1$ if $\SA_{j}[h] \neq 1$; otherwise, $\SA_{j-1}[x_{-1}+1] = |\SA_{j-1}|$. 

$\SA_{j}[h]$ is computed from $\SA_{s, j}$ and the value $\SA_{j}[x+1]$ computed in Step (II) of the update algorithm, 
where $x$ is the row index in $\CM_{j}$ identified in Step (I). 
The following lemma shows that $\SA_{j}[h]$ is a value in $\SA_{s,j}$ or equals $\SA_{j}[x+1]$.

\begin{lemma}\label{lem:sah_lemma}
For any $j \in \{n+1, n, \ldots, i+2, i-1, \ldots, 1\}$,  
let $x$ be the row index in $\CM_j$ identified in Step~(I), and let $(c_v, \ell_v)$ be the run containing the $x$-th character in $L_j$,  
where $L_{\RLE, j} = (c_1, \ell_1), (c_2, \ell_2), \ldots, (c_r, \ell_r)$.  
Let $c_{\mathsf{succ}}$ be the smallest character greater than $c_v$ in $\{c_d \mid 1 \leq d \leq r\}$;  
if none exists, set $c_{\mathsf{succ}} = \$$.  
Then, the following holds:

\begin{equation}\label{eq:support_SAXY}
  \SA_j[h] = 
  \begin{cases}
    \SA_j[x+1] & \text{if } x \ne t_v + \ell_v - 1, \\
    \SA_{s,j}[\min \mathcal{U}] & \text{if } x = t_v + \ell_v - 1 \text{ and } \mathcal{U} \ne \emptyset, \\
    \SA_{s,j}[\min \mathcal{U}'] & \text{otherwise.}
  \end{cases}
\end{equation}
Here, $t_v$ is the starting position of the $v$-th run in $L_j$;  
$\mathcal{U} = \{d \mid v + 1 \leq d \leq r, c_d = c_v\}$;  
$\mathcal{U}' = \{d \mid 1 \leq d \leq r, c_d = c_{\mathsf{succ}}\}$.
\end{lemma}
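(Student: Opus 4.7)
The plan is to reduce the lemma to a characterization of $h$ via the dynamic LF function and then dispatch on the three run-structural cases. The first step is to establish the identity $x_{-1} = \LF_j(x)$. Inspecting Step~(I) of the update algorithm, for every $j$ in the stated range, iteration $j-1$ falls under the ``other iterations'' clause (the ``initial iteration'' clause would require $j = n+2$ and the ``boundary iteration'' clause would require $j = i$, both excluded from the range), so
\[
x_{-1} = \lexCount(L_{\RLE,j}, L_j[x]) + \rank(L_{\RLE,j}, x, L_j[x]),
\]
which equals $\LF_j(x)$ by Lemma~\ref{lem:dynamic_LF_formula}, since $j \ne i$ forces the correction $\kappa$ to vanish. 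Consequently, the defining equation for $h$ becomes $\LF_j(h) = \LF_j(x)+1$ (taken cyclically when $\LF_j(x) = |\SA_{j-1}|$, which yields the stated $\LF_j(h) = 1$ edge case).

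The second step is to translate this equation into a structural description of $h$. Because $\LF_j$ maps the occurrences of each character $c$ in $L_j$, in position order, bijectively onto the contiguous lex-range $[\lexCount(L_{\RLE,j}, c)+1,\; \lexCount(L_{\RLE,j}, c) + \rank(L_{\RLE,j}, |L_j|, c)]$, the $\LF_j$-successor of $x$ is either (i) the next occurrence of $c_v$ in $L_j$ strictly after position $x$, whenever such an occurrence exists, or (ii) the first occurrence in $L_j$ of the next character $c_{\mathsf{succ}}$ appearing in the BWT, otherwise.

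The three cases of the lemma then follow routinely. If $x \ne t_v + \ell_v - 1$, the next $c_v$ lies at position $x+1$ inside the same run, so $h = x+1$ and $\SA_j[h] = \SA_j[x+1]$. If $x = t_v + \ell_v - 1$ and $\mathcal{U} \ne \emptyset$, the next $c_v$ is the first character of the run with index $\min \mathcal{U}$, hence a run-start; since $\SA_{s,j}$ stores exactly the sa-values at run-starts, $\SA_j[h] = \SA_{s,j}[\min \mathcal{U}]$. If $\mathcal{U} = \emptyset$, a direct count shows $\LF_j(x)+1 = \lexCount(L_{\RLE,j}, c_{\mathsf{succ}})+1$, so $h$ is the first occurrence of $c_{\mathsf{succ}}$ in $L_j$, which is again a run-start, namely $t_{\min \mathcal{U}'}$, giving $\SA_j[h] = \SA_{s,j}[\min \mathcal{U}']$.

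The main obstacle is the careful book-keeping in the middle case: one must invoke the maximality condition in the definition of a BWT run to ensure that no occurrence of $c_v$ lies strictly between position $t_v + \ell_v - 1$ and $t_{\min \mathcal{U}}$ (two consecutive runs cannot share a character), so that the ``next'' $c_v$ really coincides with the run-start indexed by $\min \mathcal{U}$. The cyclic edge case $\LF_j(h) = 1$ is then absorbed by the third case via the convention $c_{\mathsf{succ}} = \$$: the target position $1$ in $\CM_{j-1}$ corresponds to the unique $\$$-row, which is the singleton $\$$-run indexed by $\min \mathcal{U}'$.
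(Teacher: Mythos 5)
Your proof is correct and takes essentially the same route as the paper's: both reduce the claim to $x_{-1} = \lexCount(L_{\RLE,j}, L_j[x]) + \rank(L_{\RLE,j}, x, L_j[x]) = \LF_j(x)$ (valid because $j \neq i$ excludes the boundary clause and the $\kappa$ correction), use the rank/lex-count structure of the LF map to identify the position $h$ with $\LF_j(h) = x_{-1}+1$ as either $x+1$, the start of run $\min \mathcal{U}$, or the start of run $\min \mathcal{U}'$, and then read the corresponding sa-value from $\SA_{s,j}$ since run starts are exactly the sampled positions. Your explicit handling of the wrap-around case $x_{-1} = |\SA_{j-1}|$ via the $c_{\mathsf{succ}} = \$$ convention is in fact slightly more detailed than the paper, which only notes that this case follows ``in a similar way.''
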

\begin{proof}
For simplicity, 
let $W(t) = \lexCount(L_{\RLE, j}, L_{j}[t]) + \rank(L_{\RLE, j}, t, L_{j}[t])$ 
for a position $t$ in $\SA_{j}$. 
Let $x_{-1}$ be the row index of a circular shift in $\CM_{j-1}$ identified in Step~(I) of the update algorithm. 
$x_{-1} = W(x)$ follows from the formula for computing $x_{-1}$ stated in Step~(I). 
If $x_{-1} \neq |\SA_{j-1}|$, 
then we can show the following equation holds using LF formula, rank, and lex-count queries on $L_{j}$: 
\begin{equation}\label{eq:support_SAXY:1}
  W(x)+1 = 
  \begin{cases}
    W(x+1) & \text{if $x \neq t_{v} + \ell_{v} - 1$} \\
    W(t_{\alpha}) & \text{if $x = t_{v} + \ell_{v} - 1$ and $\mathcal{U} \neq \emptyset$} \\
    W(t_{\alpha'}) & \text{otherwise.}
  \end{cases}
\end{equation}
Here, let $\alpha = \min \mathcal{U}$ and $\alpha' = \min \mathcal{U}'$; 
$t_{\alpha}$ and $t_{\alpha'}$ are the starting positions of the $\alpha$-th and $\alpha'$-th runs in $L_{j}$, respectively. 
Using Equation~\ref{eq:support_SAXY:1}, 
Equation~\ref{eq:support_SAXY} is proved according to the following three cases: 
(i) If $x \neq t_{v} + \ell_{v} - 1$, 
then LF formula ensures that $h$ is a position in $\SA_{j}$ satisfying $W(h) = W(x+1)$ (i.e., $h = x+1$), 
and hence, $\SA_{j}[h] = \SA_{j}[x+1]$.
(ii) If $x = t_{v} + \ell_{v} - 1$ and $\mathcal{U} \neq \emptyset$, 
then LF formula ensures that $h$ is a position in $\SA_{j}$ satisfying $W(h) = W(t_{\alpha})$ (i.e., $h = t_{\alpha}$). 
$\SA_{j}[t_{\alpha}]$ is contained in $\SA_{s, j}$ as the $\alpha$-th value, 
and hence, $\SA_{j}[h] = \SA_{s, j}[\alpha]$ (i.e., $\SA_{j}[h] = \SA_{s, j}[\min \mathcal{U}]$).
(iii) Otherwise,  
LF formula ensures that $h$ is a position in $\SA_{j}$ satisfying $W(h) = W(t_{\alpha'})$ (i.e., $h = t_{\alpha'}$). 
$\SA_{j}[t_{\alpha'}]$ is contained in $\SA_{s, j}$ as the $\alpha'$-th value, 
and hence, $\SA_{j}[h] = \SA_{s, j}[\alpha']$ (i.e., $\SA_{j}[h] = \SA_{s, j}[\min \mathcal{U}']$).

Otherwise (i.e., $x_{-1} = |\SA_{j-1}|$), 
we can prove Equation~\ref{eq:support_SAXY} in a similar way.
\end{proof}

Lemma~\ref{lem:sah_lemma} shows that $\SA_{j+1}[h]$ is computed from 
$v$, $(c_{v}, \ell_{v})$, $t_{v}$, $\SA_{j}[x+1]$, 
$\SA_{s,j}[\min \mathcal{U}]$, and $\SA_{s,j}[\min \mathcal{U}']$. 
These values are computed as follows: 
\begin{itemize}
    \item \textbf{Computing $v$.} $v$ is computed by $\runIndex(L_{\RLE, j}, x)$
    \item \textbf{Computing $(c_{v}, \ell_{v})$ and $t_{v}$.} 
    The run $(c_{v}, \ell_{v})$ and its starting position $t_{v}$ are computed by $\runAccess(L_{\RLE, j}, v)$
    \item \textbf{Computing $\SA_{j}[x+1]$.} This value is already computed in the previous iteration. 
    \item \textbf{Computing $\SA_{s,j}[\min \mathcal{U}]$.} 
    $\min \mathcal{U} = \select(S_{1}, b + 1, c_{v})$ holds, where $S_{1} = c_{1}, c_{2}, \ldots, c_{r}$ is the sequence introduced in Section~\ref{subsec:dynamic_r_index}, and $b$ is the number of $c_{v}$ in $S_{1}[1..v]$ (i.e., $b = \rank(S_{1}, v, c_{v})$). 
    We compute $\min \mathcal{U}$ by $\select(S_{1}, b + 1, c_{v})$. 
    After that, $\SA_{s, j}[\min \mathcal{U}]$ is computed by $\accessSA(\SA_{s, j}, \min \mathcal{U})$. 
    \item \textbf{Computing $\SA_{s,j}[\min \mathcal{U}']$.} 
    Similarly, $\SA_{s, j}[\min \mathcal{U}']$ is computed in the following steps: 
    (i) compute $c_{\mathsf{succ}}$ by $\lexSearch(L_{\RLE, j}, \lexCount(L_{\RLE, j}, c_{v}+1))$; 
    (ii) compute $\min \mathcal{U}'$ by $\select(S_{1}, 1, c_{\mathsf{succ}})$. 
\end{itemize}

Finally, $\SA_{j-1}[x_{-1}+1]$ can be computed in $\mathcal{O}(\log r')$ time using 
$\mathscr{D}_{\RLE}(L_{\RLE, j})$, $\mathscr{D}_{\DI}(\SA_{s, j})$, and $\mathscr{D}_{\DI}(\SA_{e, j})$, 
where $r'$ is the number of runs in $L_{\RLE, j}$. 
$r' \leq n+1$ always holds, and hence, 
the time complexity of $\compSA_{X, 1}$ is $\mathcal{O}(\log n)$. 

\paragraph{Supporting $\compSA_{X, 2}(L_{\RLE, j}, \SA_{s, j}, \SA_{e, j}, x_{-1})$}
The operation $\compSA_{X, 2}(L_{\RLE, j}, \SA_{s, j}, \SA_{e, j}, x_{-1})$ computes the value $\SA_{j-1}[x_{-1}-1]$ from given $L_{\RLE, j}$, $\SA_{s, j}$, $\SA_{e, j}$, and row index $x_{-1}$ for each $j \in \{i+2, i, i-1, \ldots, i-K+1 \}$.
Similar to $\compSA_{X, 1}$, 
the value $\SA_{j-1}[x_{-1}-1]$ is computed based on the following cases:  
(i) $j = i+2$;  
(ii) $j = i$ and $y = x_{-1} - 1$;  
(iii) $j = i$ and $y \ne x_{-1} - 1$;  
(iv) $j < i$. 

\textbf{Case (i): $j = i+2$.}
$\SA_{j-1}[x_{-1}-1]$ is computed as follows: 
\begin{align*}
    \SA_{j-1}[x_{-1}+1] &= 
  \begin{cases}
    \SA[\ISA[i]-1] + 1 & \text{if $\SA[\ISA[i]-1] \geq i+1$} \\
    \SA[\ISA[i]-1] & \text{otherwise.}
  \end{cases}
\end{align*}
Here, $\SA[\ISA[i]-1]$ is computed via a function $\phi(i)$. 
This function returns $\phi(\SA[t]) = \SA[t-1]$ for all $t \in \{ 1, 2, \ldots, n \}$. 
Similar to $\phi^{-1}$, 
this function can be precomputed in $\mathcal{O}(\log r)$ time using 
$\mathscr{D}_{\RLE}(L_{\RLE})$, $\mathscr{D}_{\DI}(\SA_{s})$, $\mathscr{D}_{\DI}(\SA_{e})$, 
and the following lemma. 
\begin{lemma}[\cite{10.1145/3375890}]\label{lem:phi}
For an integer $j \in \{ 1, 2, \ldots, n \}$, 
let $k$ be the position in $\SA_{s}$ such that $\SA_{s}[k]$ is the largest value in $\SA_{s}$ less than $(j+1)$ (i.e., $k$ is the position of $u$ in $\SA_{s}$, where $u = \max \{ \SA_{s}[u'] \mid 1 \leq u' \leq r \text{ s.t. }  \SA_{s}[u'] < j+1 \}$). 
For simplicity, we define $\SA_{e}[0] = \SA_{e}[r]$. 
Then, $\phi(j) = (j - \SA_{s}[k]) + \SA_{e}[k-1]$. 
\end{lemma}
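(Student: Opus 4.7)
The plan is to prove Lemma~\ref{lem:phi} by mirroring the shift-invariance argument behind Lemma~\ref{lem:phi_inv}: on any text interval that contains no sampled run-start, $\phi$ acts as a pure $+1$ shift. Let $j_0 = \SA_{s}[k]$, the largest value in $\SA_{s}$ not exceeding $j$. Because $\SA_{s}$ stores the SA values at the starting positions $t_1, \ldots, t_r$ of the BWT runs, $j_0$ corresponds to a run start, i.e., $\ISA[j_0] = t_k$. The base case $d = 0$ is then immediate: $\phi(j_0) = \SA[\ISA[j_0] - 1] = \SA[t_k - 1] = \SA_{e}[k-1]$, where the cyclic convention $\SA_{e}[0] = \SA_{e}[r]$ handles the corner case $k = 1$.

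The inductive step is to establish $\phi(j_0 + d) = \phi(j_0 + d - 1) + 1$ for every $d \in \{1, 2, \ldots, j - j_0\}$. By the maximality of $\SA_{s}[k]$ among values of $\SA_{s}$ that are at most $j$, no sampled run-start lies in the interval $(j_0, j]$, so $j_0 + d$ is not an element of $\SA_{s}$ and hence $\ISA[j_0 + d]$ is not the starting position of any BWT run. Writing $t = \ISA[j_0 + d]$, this means $L[t-1] = L[t]$, so the LF formula gives $\LF(t-1) = \LF(t) - 1$. Combining this with $\LF(t) = \ISA[j_0 + d - 1]$ and the identity $\SA[\LF(s)] = \SA[s] - 1$ yields $\phi(j_0 + d - 1) = \SA[\ISA[j_0 + d - 1] - 1] = \SA[\LF(t) - 1] = \SA[\LF(t-1)] = \SA[t - 1] - 1 = \phi(j_0 + d) - 1$, as required.

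Iterating this identity $(j - j_0)$ times from the base case gives $\phi(j) = \SA_{e}[k-1] + (j - \SA_{s}[k])$, which is exactly the statement of the lemma. The main obstacle lies in justifying the equality $L[\ISA[j_0 + d] - 1] = L[\ISA[j_0 + d]]$ at every intermediate position: a single run boundary at one of these positions would break the LF-shift property and thus the induction. This is precisely why the choice of $k$ as the position realising the largest $\SA_{s}[k] \leq j$ is essential, and it is also where the cyclic convention for $\SA_{e}[0]$ must be invoked in order to cover the wrap-around case $k = 1$.
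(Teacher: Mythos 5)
Your proof is correct. The paper does not actually prove Lemma~\ref{lem:phi} (it is quoted from the r-index paper), and your argument---anchoring the base case at the sampled run-start $\SA_{s}[k]$ (with the cyclic convention covering $k=1$) and then using the LF-shift invariance $L[t-1]=L[t]\Rightarrow\LF(t-1)=\LF(t)-1$ at the run-internal positions $\ISA[j_0+d]$---is exactly the standard argument from that source, and the same device this paper uses in its proof of Lemma~\ref{lem:lambda_prop}. The only detail you leave implicit is that $\SA[t-1]\neq 1$ when you invoke $\SA[\LF(t-1)]=\SA[t-1]-1$; this is immediate, since otherwise $L[t-1]=\$$ would equal $L[t]$, contradicting that $\$$ occurs exactly once in $L$.
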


\textbf{Case (ii): $j = i$ and $y = x_{-1} - 1$.} 
$i$ is inserted at position $x_{-1} -1$ in $\SA_j$, so $\SA_{j-1}[x_{-1} - 1] = i$. 

\textbf{Case (iii): $j = i$ and $y \neq x_{-1} - 1$.} 
Similar to case (i), 
$\SA_{j-1}[x_{-1}-1]$ is computed as follows: 
\begin{align*}
    \SA_{j-1}[x_{-1}-1] &= 
  \begin{cases}
    \SA[\ISA[i-1]-1] + 1 & \text{if $\SA[\ISA[i-1]-1] \geq i$} \\
    \SA[\ISA[i-1]-1] & \text{otherwise.}
  \end{cases}
\end{align*}
Here, $\SA[\ISA[i-1]-1]$ is precomputed by $\phi(i-1)$. 

\textbf{Case (iv): $j < i$.} 
The value $\SA_{j-1}[x_{-1}-1]$ is computed using $\LF_j$ as follows. 
We compute a position $h'$ in $\SA_{j}$ such that 
$\LF_{j}(h') = x_{-1}-1$ if $x_{-1} \neq 1$; 
otherwise, $\LF_{j}(h') = |\SA_{j-1}|$. 
We then compute $\SA_{j-1}[x_{-1}-1]$ using the computed $h$ and $\LF_{j}$ as follows:
$\SA_{j-1}[x_{-1}-1] = \SA_{j}[h']-1$ if $\SA_{j}[h'] \neq 1$; otherwise, $\SA_{j-1}[x_{-1}-1] = n+1$. 

$\SA_{j}[h']$ is computed from $\SA_{e, j}$ and the value $\SA_{j}[x-1]$ computed in Step~(II) of the update algorithm. 
The following lemma ensures that $\SA_{j}[h']$ is either a value of $\SA_{e, j}$ 
or equal to $\SA_{j}[x-1]$. 

\begin{lemma}\label{lem:sah_prime_lemma}
For an integer $j \in \{ n+1, n, \ldots, i+2, i-1, \ldots, 1 \}$, 
let $(c_{v}, \ell_{v})$ be the run in $L_{\RLE, j}$ that contains the $x$-th character in $L_{j}$, 
where $L_{\RLE, j} = (c_{1}, \ell_{1}), (c_{2}, \ell_{2}), \ldots, (c_{r}, \ell_{r})$. 
Let $c_{\mathsf{pred}}$ be the largest character larger than $c_{v}$ 
in set $\{ c_{d} \mid 1 \leq d \leq r  \}$. 
If such character does not exist, then $c_{\mathsf{pred}}$ is defined as the largest character in set $\{ c_{d} \mid 1 \leq d \leq r  \}$. 
Then, the following equation holds: 

\begin{equation*}
  \SA_{j}[h'] = 
  \begin{cases}
    \SA_{j}[x-1] & \text{if $x \neq t_{v}$} \\
    \SA_{e, j}[\max \mathcal{V}] & 
    \text{if $x = t_{v}$ and $\mathcal{V} \neq \emptyset$} \\
    \SA_{e, j}[\max \mathcal{V}'] & \text{otherwise.}
  \end{cases}
\end{equation*}
Here, $t_{v}$ is the starting position of the $v$-th run in $L_{j}$; 
$\mathcal{V} = \{ 1 \leq d \leq v-1  \mid c_{d} = c_{v} \}$; 
$\mathcal{V}' = \{ 1 \leq d \leq r \mid c_{d} = c_{\mathsf{pred}} \}$.
\end{lemma}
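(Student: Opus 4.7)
The plan is to mirror the proof of Lemma~\ref{lem:sah_lemma} almost verbatim, swapping the roles of ``successor'' and ``predecessor'': where that lemma identified the preimage of $x_{-1}+1$ under $\LF_j$ using the run-start samples in $\SA_{s,j}$, this one identifies the preimage of $x_{-1}-1$ (wrapping to $|\SA_{j-1}|$ when $x_{-1}=1$) using the run-end samples in $\SA_{e,j}$. Setting $W(t) = \lexCount(L_{\RLE,j}, L_j[t]) + \rank(L_{\RLE,j}, t, L_j[t])$, the formula for $x_{-1}$ in Step~(I) gives $x_{-1} = W(x)$. Because the hypothesis excludes $j = i$, Lemma~\ref{lem:dynamic_LF_formula} reduces to the ordinary LF formula there, so $W$ is a bijection on $\{1, \ldots, |\SA_j|\}$ and the sought position $h'$ is the unique index with $W(h') = W(x) - 1$ (cyclically, when $W(x) = 1$). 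The core step is therefore to recognize this target $W$-value as $W$ evaluated at one of the three candidate positions named in the lemma.

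\textbf{Case analysis.} First I would derive the backward analogue of Equation~\ref{eq:support_SAXY:1}:
\begin{equation*}
W(x) - 1 \;=\;
\begin{cases}
W(x-1) & \text{if } x \ne t_v, \\
W(t_\beta + \ell_\beta - 1) & \text{if } x = t_v \text{ and } \mathcal{V} \ne \emptyset,\ \beta = \max \mathcal{V}, \\
W(t_{\beta'} + \ell_{\beta'} - 1) & \text{otherwise, with } \beta' = \max \mathcal{V}'.
\end{cases}
\end{equation*}
Within a single run, consecutive positions carry consecutive $W$-values, which yields the first case. When $x = t_v$, the value $W(x) - 1$ falls outside the $c_v$-block of the sorted BWT and must correspond to the last position of the previous $c_v$-block, namely the end of the rightmost earlier run with character $c_v$; this gives the second case. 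If no earlier run uses $c_v$, we cross into the block of the predecessor character $c_{\mathsf{pred}}$ and again take its last position, producing the third case. Applying the inverse of $W$ and using the identity $\SA_{e,j}[\alpha] = \SA_j[t_\alpha + \ell_\alpha - 1]$ built into the definition of the end-sample array, each right-hand side translates directly into the corresponding entry of the lemma.

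\textbf{Main obstacle.} The principal subtlety, as in Lemma~\ref{lem:sah_lemma}, is the wrap-around case $x_{-1} = 1$: here $\LF_j(h')$ is defined to be $|\SA_{j-1}|$ rather than $0$, and this convention must be reconciled with the fallback clause in the definition of $c_{\mathsf{pred}}$ when no character smaller than $c_v$ appears in $L_j$. The resolution is careful bookkeeping of $\lexCount$ at the alphabet extremes: $W$ effectively wraps the sorted-BWT indexing, so the preimage of the ``virtual'' slot $|\SA_j|$ is precisely the end of the last run whose character is the largest one present, which is exactly what the third case produces. A final check confirms that the argument never invokes the $\kappa(t)$ correction of Lemma~\ref{lem:dynamic_LF_formula}, since $j = i$ is excluded by hypothesis.
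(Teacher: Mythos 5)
Your proposal is correct and matches the paper's intent: the paper proves this lemma only by remarking that the technique of Lemma~\ref{lem:sah_lemma} applies, and your mirrored case analysis (within a run $W(x)-1=W(x-1)$; at a run start, the end of the previous run with character $c_{v}$, read off as $\SA_{e,j}[\max \mathcal{V}]$; otherwise the end of the last run of the predecessor character, read off as $\SA_{e,j}[\max \mathcal{V}']$, with the wrap-around when $x_{-1}=1$ handled by the fallback definition) is exactly that adaptation. The only caveat is that you read $c_{\mathsf{pred}}$ as the largest character \emph{smaller} than $c_{v}$ rather than the statement's literal ``larger,'' which is indeed the reading under which the claimed formula holds.
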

\begin{proof}
    Lemma~\ref{lem:sah_prime_lemma} can be proved using a similar technique to that of Lemma~\ref{lem:sah_lemma}. 
\end{proof}

Similar to $\compSA_{X, 1}$, 
$\SA_{j}[h']$ can be computed in $\mathcal{O}(\log n)$ time using Lemma~\ref{lem:sah_prime_lemma}. 
Therefore, the time complexity of $\compSA_{X, 2}$ is $\mathcal{O}(\log n)$. 

\paragraph{Supporting $\compSA_{Y, 1}(L_{\RLE, j}, \SA_{s, j}, \SA_{e, j}, y)$}
The operation $\compSA_{Y, 1}(L_{\RLE, j}, \SA_{s, j}, \SA_{e, j}, y)$ computes the value $\SA_{j-1}[y+1]$ from given $L_{\RLE, j}$, $\SA_{s, j}$, $\SA_{e, j}$, and the appropriate row index $y$ in $\CM_{j}$ for insertion in Step~(III) of the update algorithm, where $j \in \{ i+1, i, \ldots, i-K \}$. 
$\SA_{j-1}[y+1]$ is computed according to the four cases: 
(i) $j = i+1$; (ii) $j = i$ and $y = \ISA[i-1]$; (iii) $j = i$ and $y \neq \ISA[i-1]$; 
(iv) $j \leq i-1$. 

\textbf{Case (i): $j = i+1$.}
$\SA_{j-1}[y+1] = \SA_{j}[x+1]$ because 
$y = x$ follows from the formula for computing $y$ in Step (III). 
This value $\SA_{j}[x+1]$ has been computed in Step~(II) of the update algorithm. 

\textbf{Case (ii): $j = i$ and $y = \ISA[i-1]$.}
$i$ is inserted at position $\ISA[i-1]$ in $\SA_j$, so $\SA_{j-1}[y+1] = \SA_{j}[y]$ 
(i.e., $\SA_{j-1}[y+1] = \SA_{i}[\ISA[i-1]]$). 
Using Lemma~\ref{lem:state_stability}, 
$\SA_{i}[\ISA[i-1]]$ is computed as follows: 
\begin{align*}
    \SA_{i}[\ISA[i-1]] &= 
  \begin{cases}
    \SA[\ISA[i-1]] + 1 & \text{if $\SA[\ISA[i-1]] \geq i$} \\
    \SA[\ISA[i-1]] & \text{otherwise.}
  \end{cases}
\end{align*}
Here, $\SA[\ISA[i-1]] = i-1$ if $i \neq 1$; 
otherwise $\SA[\ISA[i-1]] = n$. 

\textbf{Case (iii): $j = i$ and $y \neq \ISA[i-1]$.}
Similar to the case (iv) for $\compSA_{X, 1}$, 
the value $\SA_{j-1}[y+1]$ is computed using $\LF_j$ as follows. 
We compute a position $g$ in $\SA_{j}$ such that 
$\LF_{j}(g) = y+1$ if $y \neq |\SA_{j-1}|$; 
otherwise, $\LF_{j}(g) = 1$. 
We then compute $\SA_{j-1}[y+1]$ using the computed $g$ and $\LF_{j}$ as follows:
$\SA_{j-1}[y+1] = \SA_{j}[g]-1$ if $\SA_{j}[g] \neq 1$; otherwise, $\SA_{j-1}[y+1] = n+1$. 

$\SA_{j}[g]$ is computed from $\SA_{s, j}$ and the value $\SA_{j}[y_{+1}+1]$ computed in Step~(III) of the previous iteration, 
where $y_{+1}$ is the appropriate row index in $\CM_{j+1}$ for insertion. 
The following lemma ensures that $\SA_{j}[g]$ is either a value of $\SA_{s, j}$ or equal to $\SA_{j}[y_{+1}+1]$. 

\begin{lemma}\label{lem:sag1_lemma}
For an integer $j \in \{ n, n-1, \ldots, 1 \}$, 
let $(c_{v}, \ell_{v})$ be the run in $L_{\RLE, j}$ that contains the $y_{+1}$-th character in $L_{j}$, 
where $L_{\RLE, j} = (c_{1}, \ell_{1}), (c_{2}, \ell_{2}), \ldots, (c_{r}, \ell_{r})$. 
Let $c_{\mathsf{succ}}$ be the largest character larger than $c_{v}$ 
in set $\{ c_{d} \mid 1 \leq d \leq r  \}$. 
If such character does not exist, then $c_{\mathsf{succ}}$ is defined as $\$$. 
Unless $j = i$ and $y \neq \ISA[i-1]$, 
the following equation holds.  

\begin{equation*}
  \SA_{j}[g] = 
  \begin{cases}
    \SA_{j}[y_{+1}+1] & \text{if $y_{+1} \neq t_{v} + \ell_{v} + 1$} \\
    \SA_{s, j}[\min \mathcal{U}] & 
    \text{if $y_{+1} = t_{v} + \ell_{v} + 1$ and $\mathcal{U} \neq \emptyset$} \\
    \SA_{s, j}[\min \mathcal{U}'] & \text{otherwise.}
  \end{cases}
\end{equation*}
Here, $t_{v}$ is the starting position of the $v$-th run in $L_{j}$; 
$\mathcal{U} = \{ v+1 \leq d \leq r  \mid c_{d} = c_{v} \}$; 
$\mathcal{U}' = \{ 1 \leq d \leq r \mid c_{d} = c_{\mathsf{succ}} \}$.
\end{lemma}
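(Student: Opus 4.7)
The plan is to mirror the proof of Lemma~\ref{lem:sah_lemma}, substituting the auxiliary position $y_{+1}$ from the preceding iteration for the role played by $x$ there, and invoking the dynamic LF formula of Lemma~\ref{lem:dynamic_LF_formula}. Introduce the shorthand $W(t) = \lexCount(L_{\RLE,j}, L_j[t]) + \rank(L_{\RLE,j}, t, L_j[t])$. Under the hypothesis of the lemma (all iterations except the excluded case $j = i$ with $y \neq \ISA[i-1]$), Lemma~\ref{lem:dynamic_LF_formula} collapses to $\LF_j(g) = W(g)$, so the unique $g$ with $\LF_j(g) = y+1$ is characterized by $W(g) = y+1$; the corner case $y = |\SA_{j-1}|$ with $\LF_j(g) = 1$ is handled symmetrically.

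Next, I would use the formula for $y$ from Step~(III) of the update procedure to rewrite the target as $y+1 = W(y_{+1}) + 1$, and then track where $W$ attains this next value as one scans the BWT from position $y_{+1}$ rightward. When $y_{+1}$ lies strictly inside its run, the position $y_{+1}+1$ carries the same character $c_v$, and a direct manipulation of rank and lex-count queries yields $W(y_{+1}+1) = W(y_{+1})+1$; this forces $g = y_{+1}+1$, so $\SA_j[g] = \SA_j[y_{+1}+1]$, the value precomputed in Step~(III) of the previous iteration.

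When $y_{+1}$ sits at the run boundary of $(c_v, \ell_v)$, the next occurrence of $c_v$ in $L_j$ appears at the start $t_{\min \mathcal{U}}$ of the $(\min \mathcal{U})$-th run, provided $\mathcal{U} \neq \emptyset$. A short rank/select calculation shows $W(t_{\min \mathcal{U}}) = W(y_{+1})+1 = y+1$, and since $t_{\min \mathcal{U}}$ is a run-start position, $\SA_j[g]$ is stored as $\SA_{s,j}[\min \mathcal{U}]$. If $\mathcal{U} = \emptyset$, no further occurrence of $c_v$ exists to the right, so $\LF_j$ must jump into the first run carrying the lexicographic successor $c_{\mathsf{succ}}$, whence $g = t_{\min \mathcal{U}'}$ and $\SA_j[g] = \SA_{s,j}[\min \mathcal{U}']$ by the analogous identity.

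The main obstacle is the bookkeeping: verifying the identity $y = W(y_{+1})$ in every admissible iteration (using Lemma~\ref{lem:dynamic_LF_formula} across the four iteration regimes of the update algorithm), confirming exhaustiveness of the three cases on $y_{+1}$ relative to its run, and handling the wrap-around when $y+1$ exceeds $|\SA_{j-1}|$. These reduce to routine case analyses of rank, select, and lex-count queries on $L_{\RLE,j}$, entirely parallel to those already carried out in the proof of Lemma~\ref{lem:sah_lemma}.
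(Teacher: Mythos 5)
Your proposal is correct and takes essentially the same route as the paper, whose entire proof is an appeal to the technique of Lemma~\ref{lem:sah_lemma}: characterize $g$ via the dynamic LF formula as the position with $W(g)=y+1$, use the Step~(III) identity $y = W(y_{+1})$, and perform the run-boundary case analysis (inside a run, next run with the same character via $\min\mathcal{U}$, else first run of the successor character via $\min\mathcal{U}'$) so that $\SA_j[g]$ is either $\SA_j[y_{+1}+1]$ or a sampled value in $\SA_{s,j}$. Your sketch in fact supplies more detail than the paper's one-sentence proof, and the bookkeeping you defer (the $\kappa$/$\epsilon$ cancellation at the boundary iteration $j=i$ and the wrap-around case) is precisely what the paper also leaves implicit.
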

\begin{proof}
    Lemma~\ref{lem:sag1_lemma} can be proved using a similar technique to that of Lemma~\ref{lem:sah_lemma}. 
\end{proof}

Similar to $\compSA_{X, 1}$, 
$\SA_{j}[g]$ can be computed in $\mathcal{O}(\log n)$ time using Lemma~\ref{lem:sag1_lemma}. 

\textbf{Case (iv): $j \leq i - 1$.}
Similar to case (iii), 
we compute $\SA_{j-1}[y+1]$ using Lemma~\ref{lem:sag1_lemma}. 
Finally, the time complexity of $\compSA_{Y, 1}$ is $\mathcal{O}(\log n)$. 

\paragraph{Supporting $\compSA_{Y, 2}(L_{\RLE, j}, \SA_{s, j}, \SA_{e, j}, y)$}
The operation $\compSA_{Y, 2}(L_{\RLE, j}, \SA_{s, j}, \SA_{e, j}, y)$ computes the value $\SA_{j-1}[y-1]$ from given $L_{\RLE, j}$, $\SA_{s, j}$, $\SA_{e, j}$, and the row index $y$ in $\CM_{j}$ for insertion in Step~(III) of the update algorithm. 
Similar to $\compSA_{Y, 1}$, 
$\SA_{j-1}[y-1]$ is computed according to the four cases: 
(i) $j = i+1$; (ii) $j = i$ and $y = \ISA[i-1]$; (iii) $j = i$ and $y \neq \ISA[i-1]$; 
(iv) $j \leq i-1$. 

\textbf{Case (i): $j = i+1$.}
Similar to case (i) for $\compSA_{Y, 1}$, 
$\SA_{j-1}[y-1] = \SA_{j}[x-1]$ holds, 
where $\SA_{j}[x-1]$ has been computed in Step~(II) of the update algorithm. 

\textbf{Case (ii): $j = i$ and $y = \ISA[i-1] + 1$.}
$i$ is inserted at position $\ISA[i-1] + 1$ in $\SA_j$, so $\SA_{j-1}[y-1] = \SA_{j}[\ISA[i-1]]$ 
(i.e., $\SA_{j-1}[y-1] = \SA_{i}[\ISA[i-1]]$). 
Using Lemma~\ref{lem:state_stability}, 
$\SA_{i}[\ISA[i-1]]$ is computed as follows: 
\begin{align*}
    \SA_{i}[\ISA[i-1]] &= 
  \begin{cases}
    i-1 & \text{if $i \neq 1$} \\
    n & \text{otherwise.}
  \end{cases}
\end{align*}

\textbf{Case (iii): $j = i$ and $y = \ISA[i-1] + 1$.}
Similar to the case (iv) for $\compSA_{X, 1}$, 
the value $\SA_{j-1}[y-1]$ is computed using $\LF_j$ as follows. 
We compute a position $g'$ in $\SA_{j}$ such that 
$\LF_{j}(g') = y-1$ if $y \neq 1$; 
otherwise, $\LF_{j}(g') = |\SA_{j-1}|$. 
We then compute $\SA_{j-1}[y-1]$ using the computed $g'$ and $\LF_{j}$ as follows:
$\SA_{j-1}[y-1] = \SA_{j}[g']-1$ if $\SA_{j}[g'] \neq 1$; otherwise, $\SA_{j-1}[y+1] = n+1$. 

$\SA_{j}[g']$ is computed from $\SA_{e, j}$ and the value $\SA_{j}[y_{+1}+1]$ computed in Step~(III) of the previous iteration, 
where $y_{+1}$ is the appropriate row index in $\CM_{j+1}$ for insertion. 
The following lemma ensures that $\SA_{j}[g']$ is either a value of $\SA_{e, j}$ or equal to $\SA_{j}[y_{+1}-1]$. 

\begin{lemma}\label{lem:sag2_lemma}
For an integer $j \in \{ n, n-1, \ldots, 1 \}$, 
let $(c_{v}, \ell_{v})$ be the run in $L_{\RLE, j}$ that contains the $y_{+1}$-th character in $L_{j}$, 
where $L_{\RLE, j} = (c_{1}, \ell_{1}), (c_{2}, \ell_{2}), \ldots, (c_{r}, \ell_{r})$. 
Let $c_{\mathsf{pred}}$ be the largest character larger than $c_{v}$ 
in set $\{ c_{d} \mid 1 \leq d \leq r  \}$. 
If such character does not exist, then $c_{\mathsf{pred}}$ is defined as the largest character in set $\{ c_{d} \mid 1 \leq d \leq r  \}$. 
Unless $j = i$ and $y \neq \ISA[i-1] + 1$, 
the following equation holds.  
\begin{equation*}
  \SA_{j}[g'] = 
  \begin{cases}
    \SA_{j}[y_{+1}-1] & \text{if $y_{+1} \neq t_{v}$} \\
    \SA_{e, j}[\max \mathcal{V}] & 
    \text{if $y_{+1} = t_{v}$ and $\mathcal{V} \neq \emptyset$} \\
    \SA_{e, j}[\max \mathcal{V}'] & \text{otherwise.}
  \end{cases}
\end{equation*}
Here, $t_{v}$ is the starting position of the $v$-th run in $L_{j}$; 
$\mathcal{V} = \{ 1 \leq d \leq v-1  \mid c_{d} = c_{v} \}$; 
$\mathcal{V}' = \{ 1 \leq d \leq r \mid c_{d} = c_{\mathsf{pred}} \}$.
\end{lemma}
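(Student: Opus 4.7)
The plan is to follow exactly the strategy used for Lemma~\ref{lem:sah_lemma} (and its backward twins Lemma~\ref{lem:sah_prime_lemma} and Lemma~\ref{lem:sag1_lemma}), adapted to the predecessor direction: we swap the roles of $\SA_{s,j}$/run-start positions with $\SA_{e,j}$/run-end positions, and replace forward increments $W(t)+1$ with backward decrements $W(t)-1$. Setting $W(t) = \lexCount(L_{\RLE, j}, L_{j}[t]) + \rank(L_{\RLE, j}, t, L_{j}[t])$, the dynamic LF formula (Lemma~\ref{lem:dynamic_LF_formula}) together with the Step~(III) identity $y = W(y_{+1})$ yields $W(g') = y - 1 = W(y_{+1}) - 1$ in every non-excluded iteration, so $g'$ is the unique position of $L_j$ whose $W$-value equals $W(y_{+1}) - 1$. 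The excluded case $j = i$ with $y \neq \ISA[i-1] + 1$ is precisely where the $\kappa$-correction in Lemma~\ref{lem:dynamic_LF_formula} interferes with the clean equation $y = W(y_{+1})$, so it is naturally outside the scope of the lemma.

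The main step is to establish the backward counterpart of Equation~\ref{eq:support_SAXY:1}:
\begin{equation*}
  W(y_{+1}) - 1 =
  \begin{cases}
    W(y_{+1} - 1) & \text{if } y_{+1} \neq t_{v}, \\
    W(t_{\beta} + \ell_{\beta} - 1) & \text{if } y_{+1} = t_{v} \text{ and } \mathcal{V} \neq \emptyset, \\
    W(t_{\beta'} + \ell_{\beta'} - 1) & \text{otherwise,}
  \end{cases}
\end{equation*}
where $\beta = \max \mathcal{V}$, $\beta' = \max \mathcal{V}'$, and $t_{\beta} + \ell_{\beta} - 1$ denotes the last position of the $\beta$-th run in $L_j$. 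Once this identity is in hand, injectivity of LF forces $g' \in \{ y_{+1} - 1,\; t_{\beta} + \ell_{\beta} - 1,\; t_{\beta'} + \ell_{\beta'} - 1 \}$; since $\SA_{j}$ values at run-end positions are stored as the corresponding entries of $\SA_{e, j}$ by definition, the three cases of the lemma follow.

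The hardest piece will be the middle case, where $y_{+1} = t_{v}$. Moving from position $t_{v}$ back to position $t_{\beta} + \ell_{\beta} - 1$ skips over all runs with index strictly between $\beta$ and $v$, and one must check that the resulting changes in $\rank(L_{\RLE, j}, \cdot, c_{v})$ and $\lexCount(L_{\RLE, j}, c_{v})$ combine to a net decrement of exactly $1$. The crucial observation is that, by maximality of $\beta$, no occurrence of $c_{v}$ appears in $L_j[t_\beta + \ell_\beta .. t_v - 1]$, so $\rank(L_{\RLE, j}, t_\beta + \ell_\beta - 1, c_v) = \rank(L_{\RLE, j}, t_v, c_v) - 1$, while $\lexCount(L_{\RLE, j}, c_v)$ is unchanged. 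For the ``otherwise'' branch ($\mathcal{V} = \emptyset$), one must additionally invoke the wrap-around analogous to the $\SA_{e}[0] = \SA_{e}[r]$ convention in Lemma~\ref{lem:phi}, switching the character argument to $c_{\pred}$ and accounting for the corresponding lex-count jump; this is the subtle bookkeeping step that requires the most care.
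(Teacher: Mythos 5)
Your proposal is correct and follows essentially the same route as the paper, whose entire proof of this lemma is the one-line remark that it can be shown "using a similar technique to that of Lemma~\ref{lem:sah_lemma}"; you carry out exactly that adaptation (run-ends/$\SA_{e,j}$ in place of run-starts/$\SA_{s,j}$, the decrement identity $W(y_{+1})-1$ in place of $W(x)+1$, and bijectivity of $\LF_j$ to pin down $g'$), and your key rank/lex-count bookkeeping for the $y_{+1}=t_v$ and $\mathcal{V}=\emptyset$ branches matches the structure of Equation~\ref{eq:support_SAXY:1}. The remaining loose ends you flag (the wrap-around when $y=1$ and the $j=i$ boundary handled by the lemma's exclusion clause) are no less detailed than what the paper itself provides.
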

\begin{proof}
    Lemma~\ref{lem:sag2_lemma} can be proved using a similar technique to that of Lemma~\ref{lem:sah_lemma}. 
\end{proof}
Similar to $\compSA_{X, 1}$, 
$\SA_{j}[g']$ can be computed in $\mathcal{O}(\log n)$ time using Lemma~\ref{lem:sag2_lemma}. 

\textbf{Case (iv): $j \leq i - 1$.}
Similar to case (iii), 
we compute $\SA_{j-1}[y-1]$ using Lemma~\ref{lem:sag2_lemma}. 
Finally, the time complexity of $\compSA_{Y, 2}$ is $\mathcal{O}(\log n)$. 

\subsubsection{Supporting \texorpdfstring{$\compISA_{1}$}{computeISA}, 
\texorpdfstring{$\compISA_{2}$}{computeISA}, and \texorpdfstring{$\compT$}{computeT}}\label{subsec:support_ISA}
\begin{table}[htbp]
\centering
\caption{The three operations used in the update algorithm for $L_{\RLE}$, $\SA_{s}$, and $\SA_{e}$. Here, $i$ is the position of the character $c$ inserted into $T$; $\lambda$ is the position in $\SA_{s}$ such that $\SA_{s}[\lambda]$ is the largest value in $\SA_{s}$ less than $(i+1)$.}
\label{tab:five_operations}
\begin{tabular}{l|l|l}
\hline
Operation & Description & Time complexity \\
\hline
$\compISA_{1}(L_{\RLE}, \SA_{s}, i)$ & return $\ISA[i-1]$ & $\mathcal{O}((1 + i - \SA_s[\lambda]) \log r)$  \\
$\compISA_{2}(L_{\RLE}, \SA_{s}, i)$ & return $\ISA[i]$ & $\mathcal{O}((1 + i - \SA_s[\lambda]) \log r)$  \\
$\compT(L_{\RLE}, \SA_{s}, i)$ & return $T[i-1]$ & $\mathcal{O}((1 + i - \SA_s[\lambda]) \log r)$ \\
\hline
\end{tabular}
\end{table}
This section describes the three operations $\compISA_{1}$, $\compISA_{2}$, and $\compT$, 
where $\compISA_{1}$ and $\compT$ are used in Step~(I); 
$\compISA_{2}$ is used in Step~(III). 
These three operations are summarized in Table~\ref{tab:five_operations}.

\paragraph{Supporting $\compISA_{1}(L_{\RLE}, \SA_{s}, i)$}
The operation $\compISA_{1}(L_{\RLE}, \SA_s, i)$ computes $\ISA[i - 1]$  
using $L_{\RLE}$, $\SA_s$, and the insertion position $i$ of the character $c$ into $T$. It relies on the LF function and its inverse $\LF^{-1}$, where  
$\LF^{-1}(\LF(j)) = j$ for all $j \in \{1, 2, \ldots, n\}$. 
Let $\lambda$ be the position in $\SA_{s}$ such that $\SA_{s}[\lambda]$ is the largest value in $\SA_{s}$ less than $(i+1)$. 
Then, $\ISA[i - 1]$ can be computed using $\LF$ and $\LF^{-1}$ based on the following two observations derived from the definitions of $\LF$ and $\LF^{-1}$:
(i) $\ISA[i]$ can be obtained by applying $\LF^{-1}$ to $t_{\lambda}$ recursively $(i - \SA_{s}[\lambda])$ times, 
where $t_{\lambda}$ is the starting position of the $\lambda$-th run in $L_{\RLE}$; 
(ii) $\ISA[i-1] = \LF(\ISA[i])$. 

Thus, $\ISA[i-1]$ can be computed in the following four steps: 
\begin{enumerate}
    \item compute the position $\lambda$ in $\SA_{s}$ by $\orderSA(\SA_{s}, \countSA(\SA_{s}, i+1))$;
    \item compute the starting position $t_{\lambda}$ of the $\lambda$-th run by $\runAccess(L_{\RLE}, \lambda)$; 
    \item compute $\ISA[i]$ by applying $\LF^{-1}$ to $t_{\lambda}$ recursively $(i - \SA_{s}[\lambda])$ times;
    \item compute $\ISA[i-1]$ by applying $\LF$ to $\ISA[i]$. 
\end{enumerate}

The first and second steps take $\mathcal{O}(\log r)$ time. 
The third step uses $\LF^{-1}$ $(i - \SA_{s}[\lambda])$ times. 
$\LF^{-1}$ can be computed using select, lex-search, and lex-count queries on $L_{\RLE}$, based on the LF formula 
(see Section~\ref{subsubsec:dynamic_LF}) as follows: $\LF^{-1}(j) = \select(L_{\RLE}, j', c')$, 
where $c' = \lexSearch(L_{\RLE}, j)$ and $j' = j - \lexCount(L_{\RLE}, c')$. 
$\mathscr{D}_{\RLE}(L_{\RLE})$ supports these queries in $\mathcal{O}(\log r / \log \log r)$, 
and hence, we can support $\LF^{-1}$ in the same time. 

The fourth step uses $\LF$, which can be computed in $\mathcal{O}(\log r / \log \log r)$ time using LF formula. 
Therefore, $\compISA_{1}$ takes $\mathcal{O}((1 + i - \SA_s[\lambda]) \log r)$ time in total. 

\paragraph{Supporting $\compISA_{2}(L_{\RLE}, \SA_{s}, i)$}
This operation computes $\ISA[i]$ from $L_{\RLE}$, $\SA_s$, and $i$. 
We use the algorithm of $\compISA_{1}$ for computing $\ISA[i]$ 
because this algorithm computes $\ISA[i]$ in the third step. 
Therefore, $\compISA_{2}$ takes $\mathcal{O}((1 + i - \SA_s[\lambda]) \log r)$ time.

\paragraph{Supporting $\compT(L_{\RLE}, \SA_s, i)$}
This operation computes $T[i-1]$ from $L_{\RLE}$, $\SA_s$, and $i$. 
By the definitions of RLBWT and ISA, 
$T[i-1] = c_{u}$ holds, where $(c_{u}, \ell_{u})$ is the run in $L_{\RLE}$ that contains the character at position $\ISA[i]$. 
Thus, $T[i-1]$ can be computed in the following three steps: 
\begin{enumerate}
    \item compute $\ISA[i]$ by $\compISA_{2}(L_{\RLE}, \SA_{s}, i)$;
    \item compute the index $u$ of the $u$-th run by $\runIndex(L_{\RLE}, \ISA[i])$;
    \item compute $c_{u}$ by $\runAccess(L_{\RLE}, u)$;
    \item return $c_{u}$ as $T[i-1]$. 
\end{enumerate}

The bottleneck of this algorithm is the first step, which takes $\mathcal{O}((1 + i - \SA_s[\lambda]) \log r)$ time. 
Therefore, $\compT$ runs in the same time. 

\subsubsection{Overall time complexity.}
$L_{\RLE}$, $\SA_{s}$, and $\SA_{e}$ are updated by executing $\mathcal{O}(K)$ iterations. 
Each iteration $j \in \{ i+1, i, \ldots, i-K \}$ can be executed in $\mathcal{O}(1)$ queries and update operations supported by 
$\mathscr{D}_{\RLE}(L_{\RLE, j})$, $\mathscr{D}_{\DI}(\SA_{s, j})$, and $\mathscr{D}_{\DI}(\SA_{e, j})$. 
Thus, the iteration $j$ can be executed in $\mathcal{O}(\log r')$, 
where $r'$ is the number of runs in $L_{\RLE, j}$, and $r' \leq n+1$ holds. 
Here, this time complexity includes the four auxiliary operations 
$\compSA_{X, 1}$, $\compSA_{X, 2}$, $\compSA_{Y, 1}$, and $\compSA_{Y, 2}$, 
which are described in Section~\ref{subsec:support_SAXY}. 

In addition, we need to compute $\ISA[i-1]$, $\ISA[i]$, and $T[i-1]$ before executing the update procedure. 
These values are computed in $\mathcal{O}((1 + i - \SA_s[\lambda]) \log r)$ by the three auxiliary operations $\compISA_{1}$, $\compISA_{2}$, and $\compT$, which are described in Section~\ref{subsec:support_ISA}. 
Here, $\lambda$ is introduced in Section~\ref{subsec:support_ISA}. 
Therefore, this update algorithm requires $\mathcal{O}((K + i - \SA_{s}[\lambda])\log n)$ time in total. 

\subsection{Performance Analysis}\label{subsec:performance_analysis}
This section describes the performance of the dynamic r-index. 
The following theorem summarizes the results of this section.

\begin{theorem}\label{theo:query_time_summary}
    The following two properties hold for the dynamic r-index: 
    (i) It can support count and locate queries in $\mathcal{O}(m \log r / \log \log r)$ time and $\mathcal{O}(m (\log r / \log \log r) + \occ \log r)$ time, respectively, 
    where $m$ is the length of a given pattern with $\occ$ occurrences. 
    (ii) It can be updated in $\mathcal{O}((1 + L_{\max}) \log n)$ time for character insertion.
    The average time complexity is $\mathcal{O}((1 + L_{\avg}) \log n)$. 
\end{theorem}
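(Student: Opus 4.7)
The plan is to decompose the theorem into its two assertions and prove each by assembling the machinery of Section~4.2 (query complexities of $\mathscr{D}_{\RLE}(L_{\RLE})$, $\mathscr{D}_{\DI}(\SA_{s})$, $\mathscr{D}_{\DI}(\SA_{e})$) and the running time bound established at the end of Section~4.3.3.

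For part~(i), I would first observe that the backward-search algorithm of Section~3.1 performs $m$ rounds, each using $\mathcal{O}(1)$ rank and lex-count queries on $L_{\RLE}$. By Lemma~\ref{lem:dyn_rle_update}, every such query is served in $\mathcal{O}(\log r / \log\log r)$ time on $\mathscr{D}_{\RLE}(L_{\RLE})$, which immediately gives the count bound. For locate, I would add the cost of the toehold computation of $\SA[sp]$ (Corollary~\ref{cor:toe}), which uses $\mathcal{O}(1)$ queries on $\mathscr{D}_{\RLE}(L_{\RLE})$ plus one $\accessSA$ on $\mathscr{D}_{\DI}(\SA_{s})$, and the $(\occ - 1)$ successive applications of $\phi^{-1}$. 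Lemma~\ref{lem:phi_inv} reduces each $\phi^{-1}$ evaluation to $\mathcal{O}(1)$ count/order/access queries on $\mathscr{D}_{\DI}(\SA_{s})$ and $\mathscr{D}_{\DI}(\SA_{e})$; by Lemma~\ref{lem:DynDI_Lemma} each such call costs $\mathcal{O}(\log r)$, so the total locate time is $\mathcal{O}(m \log r / \log\log r + \occ \log r)$.

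For part~(ii), I take as the starting point the bound derived at the end of Section~4.3.3, namely that a single character insertion costs $\mathcal{O}((K + i - \SA_{s}[\lambda])\log n)$, where $K$ is the stabilization index from Lemma~\ref{lem:state_stability} and $\lambda$ is the index identified in Section~4.3.5. The remaining task is therefore to show $K + (i - \SA_{s}[\lambda]) = \mathcal{O}(1 + L_{\max})$ in the worst case and $\mathcal{O}(1 + L_{\avg})$ on average. For $K$, I would argue that the stabilization condition $x = y$ asserts that the circular shift of $T$ being removed from $\CM_{j}$ and the circular shift of $T'$ being inserted occupy the same lexicographic row; each iteration extends by one character the context that distinguishes these shifts from their neighbors via the LF step, and stabilization is forced once the accumulated context exceeds every LCP between $T'^{[i - K]}$ and its lexicographic neighbors in $\CM$, giving $K \leq L_{\max} + \mathcal{O}(1)$. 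For $(i - \SA_{s}[\lambda])$, I would note that this quantity equals the number of $\LF^{-1}$ steps traversed by $\compISA_{1}$ from the run start $t_{\lambda}$ to $\ISA[i]$, walking through BWT positions internal to a run. Since positions internal to a run correspond to SA-consecutive suffixes whose preceding characters coincide, a standard relation between runs of the BWT and the LCP array bounds the length of such a walk by the maximum LCP value in the relevant SA interval, hence by $L_{\max}$.

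The main obstacle will be making the $K$ and $(i - \SA_{s}[\lambda])$ bounds rigorous: this requires carefully matching the row-index traversals in Step~(I) and Step~(III) of the update procedure to lexicographic comparisons between specific suffixes of $T$ and $T'$, and translating those comparisons into LCP quantities. Once the worst-case bound is in place, the average-case bound follows by a standard amortization: summing the per-position LCP quantity over all insertion positions $i \in \{1, \ldots, n\}$ and dividing by $n$ yields $L_{\avg}$ by definition of the average LCP, so the expected cost of an insertion at a uniformly random position is $\mathcal{O}((1 + L_{\avg}) \log n)$.
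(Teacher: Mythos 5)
Your part~(i) is correct and follows the paper's own route exactly: backward search with $\mathcal{O}(m)$ rank/lex-count queries served by Lemma~\ref{lem:dyn_rle_update}, the toehold computation of $\SA[sp]$ via Corollary~\ref{cor:toe}, and $\mathcal{O}(\occ)$ applications of $\phi^{-1}$ via Lemmas~\ref{lem:phi_inv} and~\ref{lem:DynDI_Lemma}. For part~(ii) you also start from the right place, namely the $\mathcal{O}((K + i - \SA_{s}[\lambda])\log n)$ bound, but the two reductions you sketch are precisely where the paper's real work lies, and as stated they do not go through. The context accumulated over the last $K$ iterations is $T[j..i-1]$, a string growing to the \emph{left} of the insertion point; the paper therefore bounds both $K$ and $i - \SA_{s}[\lambda]$ by LCP values of the \emph{reversed} string $T^{R}$ (Lemma~\ref{lem:bound_K_LCP_R}, via Lemmas~\ref{lem:lambda_prop} and~\ref{lem:lcp_prop}) and only afterwards transfers these to $L_{\max}$ and $L_{\avg}$ using the non-trivial fact that the values of $\LCP^{R}$ are a permutation of those of $\LCP$ (Ohlebusch et al.), a step your argument never invokes. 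Moreover, your bound on $K$ via ``LCP between $T^{\prime[i-K]}$ and its lexicographic neighbors'' ignores that the row separating $x$ from $y$ at the critical iteration may be a circular shift of $T'$ rather than of $T$; in that case the distinguishing context is a repeated context of $T'$, and relating it back to a repeated context of $T$ (hence to $T$'s LCP array) is exactly what the paper's Lemma~\ref{lem:bound_K_char_insertion:sub2} provides, at the cost of a factor $2$.

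The average-case step is the most serious gap. Writing ``summing the per-position LCP quantity over all insertion positions $i$ and dividing by $n$ yields $L_{\avg}$ by definition'' presupposes that each insertion position is charged to its own LCP entry, with each entry charged $\mathcal{O}(1)$ times. Your bound for $i - \SA_{s}[\lambda]$ (via the run containing the relevant BWT positions) charges every position $i$ sharing the same $\lambda$ to the same LCP entry, and a sum of the form $\sum_{\text{runs}} (\text{gap size}) \cdot (\text{local LCP bound})$ is not obviously $\mathcal{O}(n(1+L_{\avg}))$. The paper's reversed-string formulation is what fixes this: the bound for insertion at position $i$ is expressed in terms of $\max\{\LCP^{R}[p_i], \LCP^{R}[p_i+1]\}$ where $p_i$ is the position of $n-i+2$ in $\SA^{R}$, and $i \mapsto p_i$ is a bijection, so the sum over all $i$ is at most $2\sum \LCP^{R} = 2\sum \LCP$, giving Lemma~\ref{lem:average_bound} and the $\mathcal{O}((1+L_{\avg})\log n)$ average bound. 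To complete your proof you would need either to reproduce this reversed-string charging scheme (together with the $\LCP$/$\LCP^{R}$ permutation lemma and the $T'$-versus-$T$ case analysis), or to supply a genuinely different per-position-to-LCP-entry bijection, which your current sketch does not contain.
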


\subsubsection{Worst-case Time Complexity for Count and Locate Queries}
The dynamic r-index can execute the original algorithms of the static r-index for count and locate queries. 
Thus, count and locate queries take $\mathcal{O}(m \log r / \log \log r)$ time and $\mathcal{O}(m (\log r / \log \log r) + \occ \log r)$ time, respectively, for a pattern $P$ of length $m$ with $\occ$ occurrences. 
The detailed analysis of these queries is as follows. 

\paragraph{Supporting Count Query}
The count query is supported using the backward search, which uses $\mathcal{O}(m)$ rank and lex-count queries on $L_{\RLE}$. 
The dynamic r-index can support rank and lex-count queries on $L_{\RLE}$ in $\mathcal{O}(\log r / \log \log r)$. 
Therefore, the count query can be supported in $\mathcal{O}(m \log r / \log \log r)$ time. 

\paragraph{Supporting Locate Query}
The locate query is supported using $\SA[sp]$ and $\phi^{-1}$ after executing the backward search, 
where $[sp, ep]$ is the sa-interval of $P$. 
$\SA[sp]$ and $\phi^{-1}(i)$ are computed as follows.

\textbf{Computing $\SA[sp]$.}
Consider the $m$ sa-intervals $[sp_{1}, ep_{2}]$, $[sp_{1}, ep_{2}]$, $\ldots$, $[sp_{m}, ep_{m}]$ computed by the backward search, where $m$ is the length of $P$. 
Similar to the r-index, 
$\SA[sp]$ is computed in the following three phases: 

In the first phase, 
we compute the integer $i'$ in Corollary~\ref{cor:toe} by 
verifying whether 
$P[m - i] = c_{v}$ or not for each integer $i \in \{ 1, 2, \ldots, m-1 \}$, 
where $(c_{v}, \ell_{v})$ is the run that contains the $sp_{i}$-th character in $L$. 
Here, $v$ is computed by $\runIndex(L, sp_{i})$. 
Therefore, this phase takes $\mathcal{O}(m \log r / \log \log r)$ time.

The second phase is executed if the integer $i'$ does not exist. 
In this case, 
we compute $\SA[sp]$ using $\SA_{s}[u]$, 
where $u \in \{ 1, 2, \ldots, r \}$ is the smallest integer 
satisfying $c_{u} = P[m]$. 
Here, Corollary~\ref{cor:toe} ensures that $\SA[sp] = \SA_{s}[u] - m$. 
$u$ is computed by $\select(S_{1}, 1, P[m])$, 
where $S_{1}$ is the sequence introduced in Section~\ref{subsec:dynamic_r_index}. 
We can compute $\SA_{s}[u]$ in $\mathcal{O}(\log r)$ time using $\mathscr{D}_{\DI}(\SA_{s})$. 
Therefore, this phase takes $\mathcal{O}(\log r)$ time. 

The third phase is executed if the integer $i'$ exists.
In this case, 
we compute $\SA[sp]$ using $\SA_{s}[v']$, 
where $v' \in \{ 1, 2, \ldots, r \}$ is the smallest integer 
satisfying $v' \geq v+1$ and $P[m - i] = c_{v'}$. 
Here, Corollary~\ref{cor:toe} ensures that $\SA[sp] = \SA_{s}[v'] - (m - i')$. 
Similar to $u$, 
$v'$ can be computed by $\select(S_{1}, 1 + d, P[m - i])$, 
where $d = \rank(S_{1}, v, P[m - i])$. 
We can compute $\SA_{s}[v']$ in $\mathcal{O}(\log r)$ time using $\mathscr{D}_{\DI}(\SA_{s})$. 
Therefore, this phase takes $\mathcal{O}(\log r)$ time. 
Finally, $\SA[sp]$ can be computed in $\mathcal{O}(\log r + m \log r / \log \log r)$ time.

\textbf{Supporting $\phi^{-1}(i)$.}
Lemma~\ref{lem:phi_inv} shows that 
$\phi^{-1}(i)$ can be computed using $\SA_{e}[j]$ and $\SA_{s}[j+1]$.
$j$ can be computed by $\orderSA(\SA_{e}, \countSA(\SA_{e}, i+1))$. 
$\SA_{e}[j]$ and $\SA_{s}[j+1]$ are computed by two access queries on $\SA_{e}$ and $\SA_{s}$. 
These queries can be computed in $\mathcal{O}(\log r)$ time. 
Therefore, we can support $\phi^{-1}(i)$ in $\mathcal{O}(\log r)$ time. 

The locate query is answered using the function $\phi^{-1}$ $\mathcal{O}(m)$ times after computing $\SA[sp]$. 
Therefore, this query can be supported in $\mathcal{O}(m (\log r / \log \log r) + \occ \log r)$ time, 
where $\occ$ is the number of occurrences of $P$ in $T$.

\subsubsection{Worst-case Time Complexity for Character Insertion}
When inserting a character $c$ into $T$ at position $i$, 
the insertion takes $\mathcal{O}((K + i - \SA_s[\lambda]) \log n)$ time, 
where $K$ iterations are needed for the main update algorithm, 
and $(i - \SA_s[\lambda])$ steps are required for auxiliary operations. 
Both terms are bounded by values in the LCP array, leading to 
$\mathcal{O}((1 + L_{\max}) \log n)$ worst-case time, 
where $L_{\max}$ denotes the maximum value in $\LCP$. 
This is because (i) Salson et al. \cite{DBLP:journals/jda/LeonardMS12} showed that $K$ is at most a constant factor of a value in the $\LCP$ array; (ii) there exists a position $u$ in $T$ such that 
substring $T[\SA_{s}[\lambda]..i-1]$ is 
a common prefix between two different suffixes $T[u..n]$ and $T[\SA_{s}[\lambda]..n]$, 
which indicates that $(i - \SA_s[\lambda])$ is also at most a constant factor of a value in the $\LCP$ array. 

We provide a formal proof of this worst-case time complexity using 
the reversed string $T^{R}$ of $T$, 
where $T^{R}$ is defined as $T[n], T[n-1], \ldots, T[1]$. 
Let $\SA^{R}$ and $\LCP^{R}$ be the suffix and LCP arrays of $T^{R}$, respectively. 
The following lemma ensures that
$(K + i - \SA_{s}[\lambda])$ is at most a constant factor of a value in $\LCP^{R}$.
\begin{lemma}\label{lem:bound_K_LCP_R}
    There exists a constant $\alpha > 0$ satisfying the following two conditions: 
    \begin{enumerate}[(i)]
        \item $i - \SA_{s}[\lambda] \leq \alpha (1 + \max \{ \LCP^{R}[p], \LCP^{R}[p+1] \})$;
        \item $K - 1 \leq \alpha(1 + \max \{ \LCP^{R}[p], \LCP^{R}[p+1] \})$.        
    \end{enumerate}
Here, $p$ is the position of $(n-i+2)$ in $\SA^{R}$. 
\end{lemma}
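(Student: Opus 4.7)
The plan is to reduce both parts of the lemma to statements about the length of a common prefix between the specific suffix $T^{R}[(n-i+2)..n]$---which is exactly the reverse of the prefix $T[1..i-1]$ and hence sits at position $p$ in $\SA^{R}$---and some other suffix of $T^{R}$. The reason this suffices is the standard monotonicity of the LCP array: if the suffix at $\SA^{R}[p]$ shares a prefix of length $\ell$ with any other suffix of $T^{R}$, then at least one of $\LCP^{R}[p]$ or $\LCP^{R}[p+1]$ must be $\geq \ell$, depending on whether that other suffix lies before or after position $p$ in the sorted order.

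For part (i), I would start from the fact that no element of $\SA_{s}$ lies in the interval $(\SA_{s}[\lambda], i]$. Since $\SA_{s}$ samples $\SA$ at every BWT run start, this means that for every text position $j$ in this interval the BWT position $\ISA[j]$ is strictly interior to some run, so $L[\ISA[j]] = L[\ISA[j]-1]$. Interpreting this via the LF correspondence, for each such $j$ the character $T[j-1]$ equals the character preceding the lexicographically previous suffix, and chaining these equalities across the whole range produces a position $u \neq \SA_{s}[\lambda]$ with $T[u..u + (i - \SA_{s}[\lambda]) - 1] = T[\SA_{s}[\lambda]..i-1]$. Reversing both occurrences turns this into a common prefix of length $i - \SA_{s}[\lambda]$ between two distinct suffixes of $T^{R}$, one of which is the suffix at position $p$. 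By the monotonicity observation above, $\max\{\LCP^{R}[p], \LCP^{R}[p+1]\} \geq i - \SA_{s}[\lambda] - O(1)$, giving the bound for a sufficiently large constant $\alpha$.

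For part (ii), I would invoke the analysis of Salson et al.~\cite{DBLP:journals/jda/LeonardMS12}: they show that in their BWT update algorithm---which is faithfully simulated by Steps~(I)--(III) above---the number of non-trivial iterations is within a constant factor of the length of the longest common prefix between a circular shift of $T'$ passing through the insertion point and a lexicographically adjacent shift. Through the rotations-to-suffixes correspondence used by the r-index and the same reversal identification as above, this common prefix matches a prefix shared by the suffix $T^{R}[(n-i+2)..n]$ with a neighbor in $\SA^{R}$, and is therefore captured, up to a constant, by $\max\{\LCP^{R}[p], \LCP^{R}[p+1]\}$. Taking $\alpha$ to be the maximum of the constants from parts (i) and (ii) then yields the lemma. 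The main obstacle is the chaining argument in part~(i): one must verify that the co-equalities $L[\ISA[j]] = L[\ISA[j]-1]$ truly compose into a single length-$(i - \SA_{s}[\lambda])$ match at a genuinely distinct position $u$ rather than degenerating into a self-match or fragmenting across several parallel positions. This is ensured because each step in the chain advances the corresponding matching position by exactly one via LF, and the distinctness of $u$ from $\SA_{s}[\lambda]$ comes from the fact that consecutive BWT rows correspond to distinct suffixes.
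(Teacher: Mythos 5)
Your part (i) is essentially the paper's own argument: the absence of sampled positions in $(\SA_{s}[\lambda], i]$ forces $L[\ISA[j]] = L[\ISA[j]-1]$ for those positions, the LF-chaining produces a second occurrence of $T[\SA_{s}[\lambda]..i-1]$ starting at some $u$, and reversing turns this into a common prefix between $T^{R}[(n-i+2)..n]$ and another suffix of $T^{R}$, which the LCP-neighbor bound (the paper's Lemma~\ref{lem:lcp_prop}) converts into the claimed inequality. That part is sound.

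Part (ii), however, has a genuine gap. The paper does not simply cite Salson et al.; the bound on $K$ is obtained by taking the iteration $j = i-K+1$ where $x \neq y$ still holds, observing that some rotation $X$ lies lexicographically strictly between $T^{[j]}$ and $T^{\prime[j]}$, and noting that $T[j..i-1]$ (of length $K-1$) is a common prefix of all three. The crucial case split is on whether $X$ is a rotation of $T$ or of the \emph{new} string $T'$. In the latter case the common prefix you obtain is between the suffix $T^{\prime R}[n-i+3..n+1]$ and another suffix of $T^{\prime R}$, so it is controlled by $\LCP^{\prime R}$, not by the array $\LCP^{R}$ of the original string that appears in the lemma statement. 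Your proposal asserts that the common prefix ``matches a prefix shared by the suffix $T^{R}[(n-i+2)..n]$ with a neighbor in $\SA^{R}$,'' but that is exactly the step that fails when the adjacent rotation belongs to $T'$: the inserted character can extend repeats, so one must prove that the relevant repeat in $T'$ is at most a constant factor longer than the corresponding repeat in $T$. The paper does this with Lemma~\ref{lem:bound_K_char_insertion:sub2}, showing $|T'[u'..i-1]| \leq 2|T[u..i-1]| + 2$ via a four-case analysis (whether the second occurrence avoids, ends at, or straddles the insertion point, using a periodicity argument in the straddling cases). Without an argument of this kind, the constant $\alpha$ in $K - 1 \leq \alpha(1 + \max\{\LCP^{R}[p], \LCP^{R}[p+1]\})$ is unsubstantiated, and appealing to Salson et al.\ does not close the hole, since their bound is naturally stated in terms of LCP information of the updated text rather than of $T^{R}$ at the specific position $p$ required here.
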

\begin{proof}
    See Sections~\ref{paragraph:bound_K_LCP_R1} and \ref{paragraph:bound_K_LCP_R2}. 
\end{proof}

Lemma~\ref{lem:bound_K_LCP_R} indicates that 
the time complexity for insertion is $\mathcal{O}(\max \{ \LCP^{R}[p], \LCP^{R}[p+1] \} \log n)$. 
Ohlebusch et al. proved that the values in $\LCP^{R}$ are a permutation of the values in $\LCP$ (Lemma 6.1 in \cite{DBLP:journals/jda/OhlebuschBA14}). 
Therefore, there exists a position $p'$ in $\LCP$ such that 
the time complexity for insertion is $\mathcal{O}(\LCP[p'] \log n)$ time. 
$\LCP[p'] \leq L_{\max}$, 
and hence 
the time complexity for insertion can be bounded by $\mathcal{O}(L_{\max} \log n)$, 
but it is $\mathcal{O}(n \log n)$ in the worst case because $0 \leq L_{\max} \leq n$ holds. 

\paragraph{Proof of Lemma~\ref{lem:bound_K_LCP_R}(i)}\label{paragraph:bound_K_LCP_R1}
\begin{figure}[t]
\begin{center}
	\includegraphics[width=0.3\textwidth]{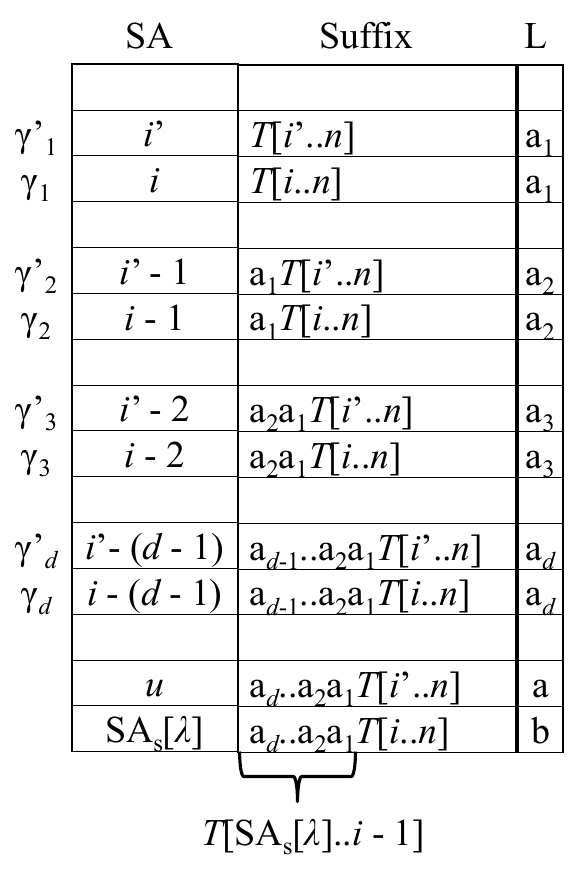}
\caption{Illustration of the relationship between the two suffixes $T[u..n]$ and $T[\SA_{s}[\lambda]..n]$ stated 
in Lemma~\ref{lem:lambda_prop}.}
 \label{fig:lambda_prop}
\end{center}
\end{figure}

$1 \leq \SA_{s}[\lambda] \leq i$ holds because $\SA_{s}$ always contains $1$. 
Therefore, $0 \leq i - \SA_{s}[\lambda] \leq 2$ holds if $i \leq 2$. 
Otherwise, 
we use the following two lemmas to prove Lemma~\ref{lem:bound_K_LCP_R}(i). 

\begin{lemma}\label{lem:lambda_prop}
For $i \neq 1$, 
there exists a position $u$ in $T$ such that 
substring $T[\SA_{s}[\lambda]..i-1]$ is 
a common prefix between two different suffixes $T[u..n]$ and $T[\SA_{s}[\lambda]..n]$.    
\end{lemma}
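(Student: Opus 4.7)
The plan is to exploit the fact that, by the choice of $\lambda$, none of the positions $\ISA[\SA_{s}[\lambda]+1], \ldots, \ISA[i]$ is a run start in the BWT, and to propagate this information through the $\LF$ map to obtain a lower bound on a single entry of the LCP array. When $\SA_{s}[\lambda] = i$ the substring $T[\SA_{s}[\lambda]..i-1]$ is empty, so any $u \neq \SA_{s}[\lambda]$ works (available since $i \geq 2$ implies $n \geq 2$). Henceforth I assume $\ell := i - \SA_{s}[\lambda] \geq 1$, and set $p_{k} = \ISA[\SA_{s}[\lambda] + k]$ for $k = 0, 1, \ldots, \ell$, so that $p_{0} = t_{\lambda}$.

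First I would observe that by the maximality of $\lambda$, none of $\SA_{s}[\lambda]+1, \ldots, i$ lies in $\SA_{s}$; hence for each $k \in \{1, \ldots, \ell\}$ the position $p_{k}$ is not a run start, so $p_{k} \geq 2$ and $L[p_{k}] = L[p_{k} - 1]$. Because $\LF$ preserves the relative order of positions carrying equal characters in $L$, this gives $\LF(p_{k} - 1) = \LF(p_{k}) - 1 = p_{k-1} - 1$. Consequently the two suffixes occupying the adjacent SA positions $p_{k} - 1$ and $p_{k}$, which share a common prefix of length $\LCP[p_{k}]$, are carried by $\LF$ to the suffixes at positions $p_{k-1} - 1$ and $p_{k-1}$, each extended by the common character $L[p_{k}]$ prepended. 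This yields the key inequality
\begin{equation*}
    \LCP[p_{k-1}] \;\geq\; 1 + \LCP[p_{k}].
\end{equation*}

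Iterating this inequality from $k = \ell$ down to $k = 1$ yields $\LCP[t_{\lambda}] = \LCP[p_{0}] \geq \ell$. Since $\ell \geq 1$ forces $\SA_{s}[\lambda] < n$, and $\SA[1] = n$ is the only SA value equal to $n$, we have $t_{\lambda} \geq 2$, so $\LCP[t_{\lambda}]$ is well-defined and witnesses that the suffixes $T[\SA[t_{\lambda} - 1]..n]$ and $T[\SA_{s}[\lambda]..n]$ share the prefix $T[\SA_{s}[\lambda]..i-1]$. Taking $u := \SA[t_{\lambda} - 1]$ then gives the required position, distinct from $\SA_{s}[\lambda]$ because $\SA$ is a permutation. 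The main technical obstacle is the LCP increment $\LCP[p_{k-1}] \geq 1 + \LCP[p_{k}]$: one must verify that $\SA[p_{k} - 1] \neq 1$ (using that $T[\SA_{s}[\lambda] + k - 1] \neq \$$ for $k \leq \ell$, so $L[p_{k} - 1]$ is genuinely a non-sentinel character of $T$) and then invoke the order-preservation of $\LF$ on runs of equal characters to pin down $\LF(p_{k} - 1) = p_{k-1} - 1$ exactly.
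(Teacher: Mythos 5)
Your proof is correct and takes essentially the same route as the paper's: both exploit that none of the positions $\ISA[\SA_{s}[\lambda]+1], \ldots, \ISA[i]$ is a run start, so each such BWT character equals its predecessor and the LF map keeps the adjacent pair adjacent along the backward walk. The paper reads the common substring off the BWT characters directly (yielding the witness $i'-d$), while you telescope the adjacency into $\LCP[p_{k-1}] \geq 1 + \LCP[p_{k}]$ and take $u = \SA[t_{\lambda}-1]$, which is in fact the same witness, just packaged through the LCP array.
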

\begin{proof}
Lemma~\ref{lem:lambda_prop} can be proved by modifying the proof of Lemma 3.5 in \cite{10.1145/3375890}. 
For simplicity, let $d = |T[\SA_{s}[\lambda]..i-1]|$ (i.e., $d = i - \SA_{s}[\lambda]$). 
For each integer $k \in \{ 1, 2, \ldots, d \}$, 
let $\gamma_{k}$ be the position of $(i-k+1)$ in $\SA$. 
For simplicity, 
let $i' = \SA[\gamma_{1}-1]$. 
Similar to the position $\gamma_{k}$, 
let $\gamma'_{k}$ be the position of $(i'-k+1)$ in $\SA$. 
Here, $\gamma'_{1} = \gamma_{1} - 1$ holds. 
By the definition of $\lambda$,
each of the $d$ consecutive integers $(i-d-1)$, $(i-d)$, $\ldots$, $i$ is not 
the starting position of a run in $L$, resulting in 
$L[\gamma_{1}] = L[\gamma_{1}-1]$, 
$L[\gamma_{2}] = L[\gamma_{2}-1]$, $\ldots$,
$L[\gamma_{d}] = L[\gamma_{d}-1]$. 
In this case, 
LF formula ensures that $\LF(\gamma_{k}) = 1 + \LF(\gamma_{k}-1)$ hold, 
and hence, we obtain $\gamma'_{k} = \gamma_{k} - 1$. 
By the definition of BWT, 
$T[i-d..i-1] = L[\gamma_{d}], L[\gamma_{d-1}], \ldots, L[\gamma_{1}]$ 
and 
$T[i-d..i'-1] = L[\gamma'_{d}], L[\gamma'_{d-1}], \ldots, L[\gamma'_{1}]$ hold. 
The two substrings $T[i-d..i-1]$ and $T[i-d..i'-1]$ are the same string (i.e., $T[i-d..i-1] = T[i-d..i'-1]$). 
This is because 
(i) 
$T[i-d..i-1] = L[\gamma_{d}], L[\gamma_{d-1}], \ldots, L[\gamma_{1}]$ 
and 
$T[i-d..i'-1] = L[\gamma'_{d}], L[\gamma'_{d-1}], \ldots, L[\gamma'_{1}]$ follows from the definition of BWT, 
(ii) $L[\gamma'_{d}], L[\gamma'_{d-1}], \ldots, L[\gamma'_{1}] = L[\gamma_{d}-1], L[\gamma_{d-1}-1], \ldots, L[\gamma_{1}-1]$, 
and (iii) $L[\gamma_{1}] = L[\gamma_{1}-1]$, $L[\gamma_{2}] = L[\gamma_{2}-1]$, $\ldots$, $L[\gamma_{d}] = L[\gamma_{d}-1]$. 
$i-d = \SA_{s}[\lambda]$. 
Therefore, we obtain Lemma~\ref{lem:lambda_prop}. 
Figure~\ref{fig:lambda_prop} illustrates the relationship between 
the two suffixes $T[u..n]$ and $T[\SA_{s}[\lambda]..n]$ explained in this proof. 
\end{proof}

\begin{lemma}\label{lem:lcp_prop}
Let $S$ be the longest common prefix between a suffix $T[u..n]$ and any other suffix of $T$, 
and let $p$ be the position of $u$ in the suffix array of $T$.
Then, $|S| \leq \max \{ \LCP[p], \LCP[p+1] \}$. 
\end{lemma}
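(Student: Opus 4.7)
The plan is to reduce the claim to the standard ``LCP-as-range-minimum'' property of the LCP array. Recall that for any two positions $p_{1} < p_{2}$ in the suffix array, the length of the longest common prefix between $T[\SA[p_{1}]..n]$ and $T[\SA[p_{2}]..n]$ equals $\min\{\LCP[p_{1}+1], \LCP[p_{1}+2], \ldots, \LCP[p_{2}]\}$. This is the classical consequence of the fact that suffixes sharing a long common prefix are contiguous in lexicographic order.

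First, I would fix the other suffix whose common prefix with $T[u..n]$ realizes $S$, i.e. some suffix $T[v..n]$ with $v \neq u$, and let $p' \neq p$ be its position in the suffix array (so $v = \SA[p']$ and $u = \SA[p]$). By the range-minimum property above,
\begin{equation*}
|S| \;=\; \lcp(T[u..n], T[v..n]) \;=\; \min\{\LCP[q] \mid \min\{p,p'\} < q \leq \max\{p,p'\}\}.
\end{equation*}

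Next I would split on the sign of $p'-p$. If $p' > p$, then the range on the right-hand side contains $q = p+1$, so $|S| \leq \LCP[p+1]$. If $p' < p$, then the range contains $q = p$, so $|S| \leq \LCP[p]$. In either case $|S| \leq \max\{\LCP[p], \LCP[p+1]\}$, giving the lemma.

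There is no real obstacle here beyond quoting the range-minimum characterization of LCP; the only caveat is boundary handling when $p = 1$ or $p = n$. For $p = 1$ the case $p' < p$ is vacuous, and for $p = n$ the case $p' > p$ is vacuous (and by convention $\LCP[n+1]$ can be taken to be $0$, or simply not referenced), so the stated bound remains valid.
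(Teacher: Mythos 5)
Your proof is correct and matches the intent of the paper, which simply asserts that the lemma ``follows from the definition of the suffix array and LCP array'': your range-minimum argument (split on whether the realizing suffix lies before or after position $p$ in lexicographic order) is the standard way to make that one-line claim precise, and your boundary remarks for $p=1$ and $p=n$ are harmless since the bound never actually needs $\LCP[n+1]$. No gap to report.
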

\begin{proof}
    Lemma~\ref{lem:lcp_prop} follows from the definition of the suffix array and LCP array.
\end{proof}

\begin{figure}[t]
\begin{center}
	\includegraphics[width=0.9\textwidth]{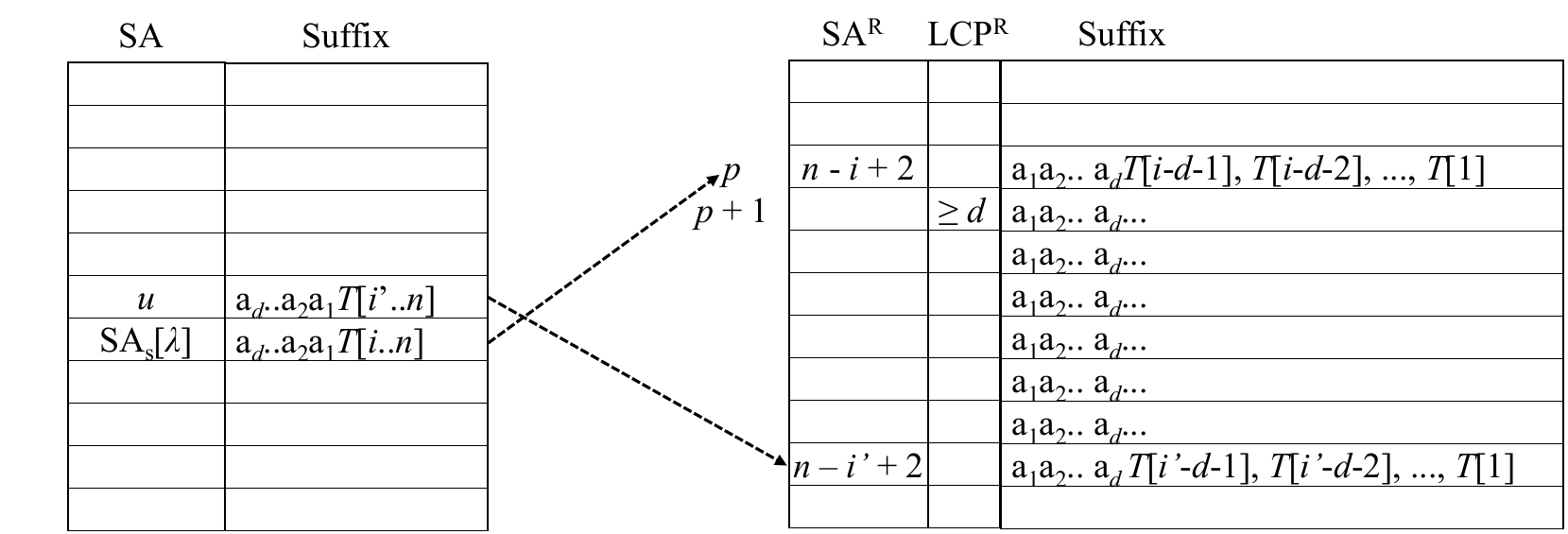}
\caption{Illustration of the relationship between the two suffixes $T^{R}[(n-i+2)..n]$ and $T^{R}[(n-i'+2)..n]$ on $\SA^{R}$. Here, $T[u..n]$ and $T[\SA_{s}[\lambda]..n]$ are the suffixes stated in Lemma~\ref{lem:lambda_prop}; 
$i'$ and $d$ are the integers introduced in the proof of Lemma~\ref{lem:lambda_prop}.}
 \label{fig:bound_K_LCP_R1}
\end{center}
\end{figure}

We prove Lemma~\ref{lem:bound_K_LCP_R}(i) for $i \geq 3$. 
By Lemma~\ref{lem:lambda_prop}, 
there exists a suffix $T[u..n]$ of $T$ such that 
$T[u..i'-1] = T[\SA_{s}[\lambda]..i-1]$ holds, 
where $i' = u + d$ and $d = i - \SA_{s}[\lambda]$. 
On the other hand, 
$T^{R}$ has two suffixes $T^{R}[(n-i+2)..n]$ and $T^{R}[(n-i'+2)..n]$. 
Figure~\ref{fig:bound_K_LCP_R1} illustrates these suffixes on $\SA^{R}$. 
Let $S^{R}$ be the string obtained by reversing $T[\SA_{s}[\lambda]..i-1]$. 
Then, $S^{R}$ is a common prefix of the two suffixes $T^{R}[(n-i+2)..n]$ and $T^{R}[(n-i'+2)..n]$, 
and hence $|S^{R}| \leq \max \{ \LCP^{R}[p], \LCP^{R}[p+1] \}$ holds by Lemma~\ref{lem:lcp_prop}. 
$|S^{R}| = d$, i.e., $|S^{R}| = i - \SA_{s}[\lambda]$. 
Therefore, Lemma~\ref{lem:bound_K_LCP_R}(i) follows from 
$i - \SA_{s}[\lambda] = |S^{R}|$ and $|S^{R}| \leq \max \{ \LCP^{R}[p], \LCP^{R}[p+1] \}$.

\paragraph{Proof of Lemma~\ref{lem:bound_K_LCP_R}(ii)}\label{paragraph:bound_K_LCP_R2}
\begin{figure}[t]
\begin{center}
	\includegraphics[width=0.9\textwidth]{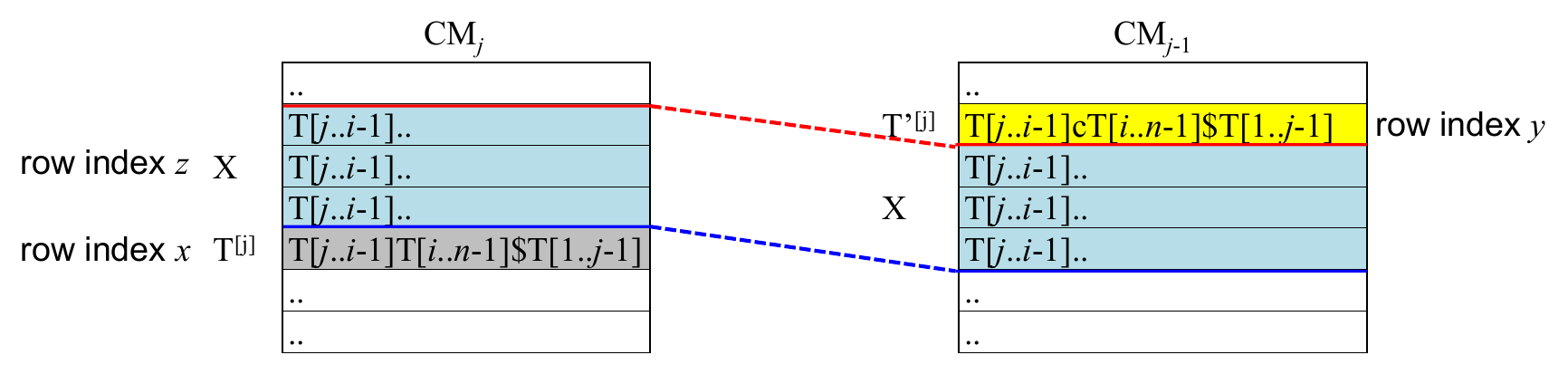}
\caption{Illustration of the relationship among the three circular shifts $T^{[j]}$, $T^{\prime[j]}$, and $X$ used in Lemma~\ref{lem:bound_K_LCP_R}(ii).}
 \label{fig:bound_K_LCP_R2}
\end{center}
\end{figure}

If $i \leq 2$ or $K = 1$ holds, 
then $K - 1 \leq 2$ holds because $K \leq i$ follows from the definition of $K$. 
Otherwise (i.e., $i \geq 3$ and $K \geq 2$), 
let $x$ and $y$ be the row indexes identified in $\CM_{j}$ in Step~(I) and Step~(III) of the update algorithm, respectively, where $j = i-K+1$. 
Since $j \leq i-1$, 
the $x$-th row in $\CM_{j}$ is the circular shift $T^{[j]}$ of $T$, and it is removed from $\CM_{j}$ in Step~(II). After that, the circular shift $T^{\prime [j]}$ of $T'$ is inserted into $\CM_{j}$ as the $y$-th row in Step~(III).  
$x \neq y$ follows from the definition of $K$, 
and hence, 
there exists a row index $z$ of $\CM_{j}$ satisfying either $x < z \leq y$ and $y \leq z < x$.  
Let $X$ be the circular shift at the $z$-th row, 
which is a circular shift of $T$ or $T'$. 
Then, either $T^{[j]} \prec X \prec T^{\prime [j]}$ or 
$T^{\prime [j]} \prec X \prec T^{[j]}$ holds 
since the circular shifts in $\CM_{j}$ are sorted in lexicographic order. 
The substring $T[j..i-1]$ is a common prefix between the two circular shifts $T^{[j]}$ and $T^{\prime [j]}$, 
and hence, this substring is a common prefix among the three circular shifts 
$T^{[j]}$, $X$, and $T^{\prime [j]}$. 
Figure~\ref{fig:bound_K_LCP_R2} illustrates these three circular shifts in $\CM_{j}$ and $\CM_{j-1}$.

Let $S^{R}$ be the string obtained by reversing $T[j..i-1]$. 
Then, the length of $S^{R}$ is $(K-1)$ because $|T[j..i-1]| = K-1$.
If $X$ is a circular shift of $T$, 
then $S^{R}$ must be a common prefix between the $(n-i+2)$-th suffix $T^{R}[n-i+2..n]$ and any other suffix of $T^{R}$, 
and hence  
$|S^{R}| \leq \max \{ \LCP^{R}[p], \LCP^{R}[p+1] \}$ follows from Lemma~\ref{lem:lcp_prop}. 
Therefore, $K - 1 \leq \max \{ \LCP^{R}[p], \LCP^{R}[p+1] \}$ follows from 
$K - 1 \leq |S^{R}|$ and $|S^{R}| \leq \max \{ \LCP^{R}[p], \LCP^{R}[p+1] \}$. 

Otherwise (i.e., $X$ is a circular shift of $T'$), 
the substring $T[j..i-1]$ is a common suffix between the prefix $T'[1..i-1]$ and any other prefix of $T'$. 
Let $u \geq 1$ be the smallest integer such that 
$T[u..i-1]$ is a common suffix between the prefix $T[1..i-1]$ and any other prefix of $T$. 
Then, $T[j..i-1]$ is a suffix of $T[u..i-1]$, resulting in $|T[j..i-1]| \leq |T[u..i-1]|$. 
Similarly, 
let $u' \geq 1$ be the smallest integer such that 
$T'[u'..i-1]$ is a common suffix between the prefix $T'[1..i-1]$ and any other prefix of $T'$. 
The following lemma states the relationship between the two substrings $T[u..i-1]$ and $T'[u'..i-1]$. 

\begin{figure}[t]
\begin{center}
	\includegraphics[width=0.9\textwidth]{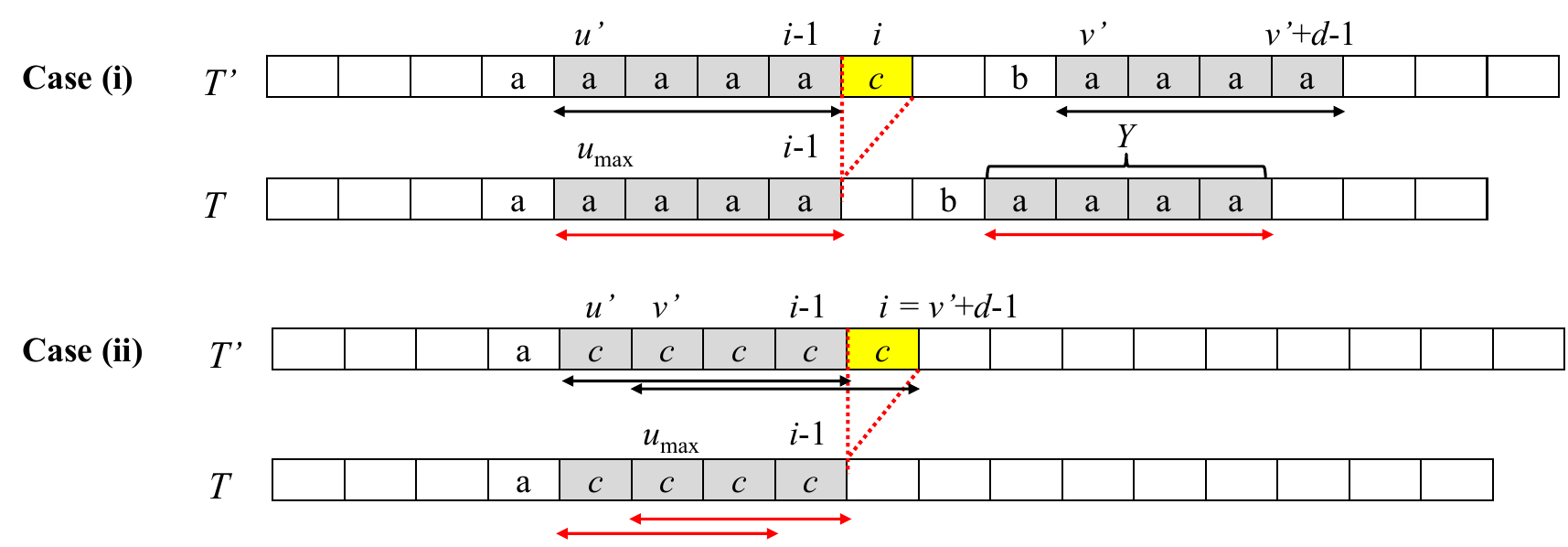}
    \includegraphics[width=0.9\textwidth]{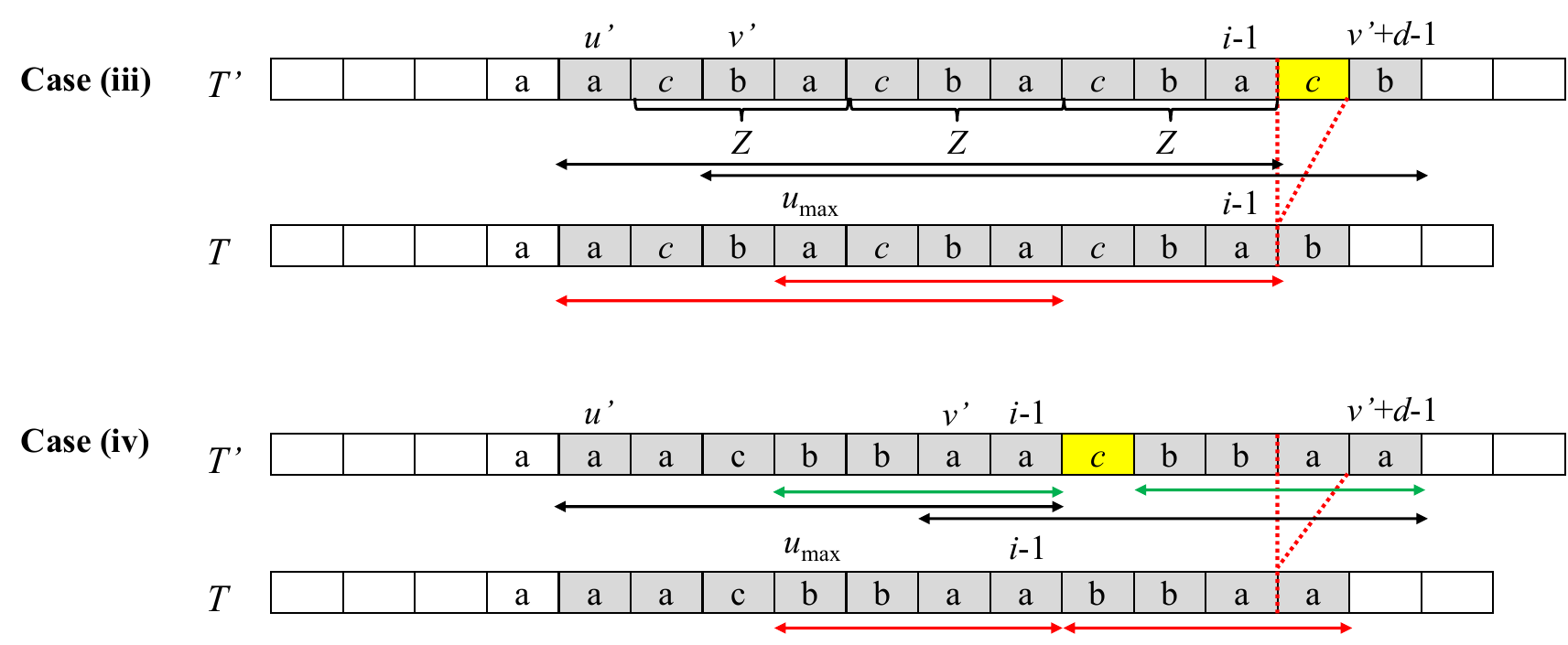}
\caption{Illustration of the relationship between the two integers $u$ and $u'$ stated in the proof of Lemma~\ref{lem:bound_K_char_insertion:sub2}. Here, $u_{\max}$ is the upper bound of $u$ obtained from the relationship the two substrings $T'[u'..i-1]$ and $T'[v'..v' + d -1]$. The black, red, and green arrows represent the three strings 
$T'[u'..i-1]$, $T[u_{\max}..i-1]$, and $T[i+1..v+d-1]$, respectively.}
 \label{fig:bound_K_char_insertion}
\end{center}
\end{figure}

\begin{lemma}\label{lem:bound_K_char_insertion:sub2}
    $|T'[u'..i-1]| \leq 2|T[u..i-1]| + 2$.
\end{lemma}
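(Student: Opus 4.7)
The plan is a case analysis on the location of the ``other'' occurrence of $T'[u'..i-1]$ inside $T'$ relative to the insertion position $i$. Write $d = i - u$ and $d' = i - u'$; by the definition of $u'$ there is some $v' \neq u'$ with $T'[v'..v'+d'-1] = T'[u'..i-1]$ and $v'+d'-1 \neq i-1$. Since $T'$ coincides with $T$ on positions $1..i-1$, equals $c$ at position $i$, and satisfies $T'[j] = T[j-1]$ for $j \geq i+1$, in every case the asserted equality can be rewritten as an equality among substrings of $T$ alone, after which the minimality of $u$ supplies the required bound on $d'$.

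I split into three cases: (a) $v'+d'-1 < i$, (b) $v' < i \leq v'+d'-1$, and (c) $v' \geq i$. Cases (a) and (c) are essentially immediate. In (a) the equality $T[v'..v'+d'-1] = T[u'..i-1]$ exhibits a common suffix of $T[1..i-1]$ with the strictly shorter prefix $T[1..v'+d'-1]$, so by minimality of $u$ we get $d' \leq d$. In (c) with $v' > i$ the same kind of rewrite gives $T[v'-1..v'+d'-2] = T[u'..i-1]$, a common suffix with a strictly longer prefix of $T$, again yielding $d' \leq d$; and when $v' = i$ the inserted character $c$ absorbs only the first symbol of the pattern, and the remaining $d'-1$ symbols match $T[u'+1..i-1]$ with $T[i..i+d'-2]$, giving $d' \leq d+1$.

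The real work is case (b). Set $\ell = i - v' \in [1, d'-1]$ and $m = d' - \ell - 1 \geq 0$. Matching $T'[u'..i-1]$ against $T'[v'..v'+d'-1]$ character by character and replacing $T'$ by $T$ on the two unaffected sides of position $i$ yields, using $u'+d' = i$ to identify $u'+\ell+1 = i-m$, the two equalities
\begin{equation*}
    T[u'..u'+\ell-1] = T[i-\ell..i-1] \quad \text{and} \quad T[i-m..i-1] = T[i..i+m-1].
\end{equation*}
A length check from $\ell \leq d' - 1$ gives $u' \leq i - \ell - 1 < v'$, so the first equality produces a nontrivial length-$\ell$ common suffix of $T[1..i-1]$ and $T[1..u'+\ell-1]$; by minimality of $u$, $\ell \leq d$. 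When $m \geq 1$, the second equality says the length-$m$ suffix of $T[1..i-1]$ equals the length-$m$ prefix of $T[i..n]$, i.e.\ it is a common suffix of $T[1..i-1]$ and $T[1..i+m-1]$, so $m \leq d$ as well. Combining, $d' = \ell + 1 + m \leq 2d + 1 \leq 2d + 2$.

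The principal obstacle is the straddling case (b): one must see that the inserted symbol $c$ splits the pattern into an independent pre-$i$ piece and a post-$i$ piece, each contributing its own bound of $d$. The subtle observation is that the post-$i$ equality, although it superficially relates $T[1..i-1]$ to a substring of $T$ starting at $i$, is pinned by $u'+d'=i$ to end exactly at position $i-1$, so it becomes a genuine ``border at $i$'' and hence a common-suffix witness between two distinct prefixes of $T$ — at which point the minimality of $u$ closes the argument.
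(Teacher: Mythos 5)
Your proof is correct, and its treatment of the main case is genuinely different from the paper's. Both arguments share the same skeleton: take the other occurrence $T'[v'..v'+d'-1]$ of $T'[u'..i-1]$ guaranteed by the definition of $u'$, case on its position relative to the insertion point $i$, rewrite the resulting equalities as equalities inside $T$ (using $T'[1..i-1]=T[1..i-1]$ and $T'[j]=T[j-1]$ for $j\ge i+1$), and invoke the minimality of $u$, i.e.\ the maximality of $|T[u..i-1]|$ among common suffixes of $T[1..i-1]$ with another prefix of $T$; your cases (a) and (c) essentially coincide with the paper's non-straddling case. The divergence is in the straddling case: the paper splits it into three subcases --- the occurrence ends exactly at $i$ (handled by noting $T'[u'..i-1]$ is then a run of the inserted character $c$), the offset $v'-u'$ is at most $d'/2$ (handled by a periodicity argument with a period $Z$), and the offset exceeds $d'/2$ (handled via the fragment to the right of $i$) --- which yields $d'\le 2|T[u..i-1]|+2$. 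You instead cut the straddling occurrence at position $i$ into a pre-$i$ block of length $\ell$ and a post-$i$ block of length $m$, and observe that each block, transported into $T$, is itself a common suffix of $T[1..i-1]$ with a different prefix of $T$ (the first because its other occurrence ends at $u'+\ell-1\le i-2$, the second because its other occurrence ends at $i+m-1\ge i$), so $\ell\le |T[u..i-1]|$, $m\le |T[u..i-1]|$, and $d'=\ell+1+m\le 2|T[u..i-1]|+1$. This is uniform (the paper's run-of-$c$ subcase is just $m=0$), dispenses with the periodicity reasoning, and even gives a marginally sharper constant ($+1$ instead of $+2$); what the paper's finer case split buys is some extra structural information about $T[u'..i-1]$ (its run or periodic structure), which the lemma itself does not need.
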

\begin{proof}

    Let $d$ be the length of $T'[u'..i-1]$ for simplicity (i.e., $d = i-u'$).
    Since $T'[u'..i-1]$ is a common suffix between two prefixes of $T'$, 
    there exists a substring $T'[v'..v' + d-1]$ satisfying $T'[v'..v' + d-1] = T'[u'..i-1]$ and $v' \neq u'$. 
    We consider four cases: 
    (i) $T'[v'..v' + d-1]$ does not contain the $i$-th character of $T'$ (i.e., $i \not \in [v', v' + d-1]$); 
    (ii) $T'[v'..v' + d-1]$ contains the $i$-th character of $T'$ as the last character (i.e., $i = v' + d-1$); 
    (iii) $T'[v'..v' + d-2]$ contains the $i$-th character of $T'$, 
    and $|[u', v'-1]| \leq \frac{d}{2}$;
    (iv) $T'[v'..v' + d-2]$ contains the $i$-th character of $T'$, 
    and $|[u', v'-1]| > \frac{d}{2}$.

    Figure~\ref{fig:bound_K_char_insertion} illustrates the relationship between $u$ and $u'$. 
    For case (i), 
    $T$ contains $T'[v'..v' + d-1]$ as a substring $Y$. 
    Since $T[1..i-1] = T'[1..i-1]$, 
    $Y$ is also a common suffix between the prefix $T[1..i-1]$ and any other prefix of $T$. 
    Therefore, we obtain $|Y| \leq |T[u..i-1]|$ (i.e., $|T'[u'..i-1]| \leq |T[u..i-1]|$).

    For case (ii), 
    the two substrings $T'[u'..i-1]$ and $T'[v'..i]$ are the same string. 
    This fact means that $T'[u'..i-1]$ is a repetition of the character $c$, 
    where $T'[i] = c$. 
    Similarly, $T[u'..i-1]$ is also a repetition of the character $c$ 
    because $T[u'..i-1] = T'[u'..i-1]$. 
    This fact means that 
    $T[t..i-1]$ is a common suffix between the two prefixes $T[1..i-1]$ and $T[1..i-2]$ of $T$ 
    for any $t \in [u'+1, i-1]$, resulting $u \leq u'+1$. 
    Therefore, $|T'[u'..i-1]| \leq |T[u..i-1]| + 1$.
    
    For case (iii), 
    there exists a string $Z$ of length at most $\frac{d}{2}$ 
    such that $T[u'..i-1]$ is a repetition of $Z$, 
    i.e., $T[u'..i-1] = Z'ZZ \cdots Z$ holds, 
    where $Z'$ is a suffix of $Z$ or an empty string. 
    In this case, $T[u'..i-1 - |Z|] = T[u' + |Z|..i-1]$ holds. 
    This fact indicates that $|T'[u'..i-1]| \leq |T[u..i-1]| + |Z|$. 
    $|T[u..i-1]| \geq \frac{d}{2}$ 
    because $1 \leq |Z| \leq \frac{d}{2}$ and $|T'[u'..i-1]| = d$.
    Therefore, $|T'[u'..i-1]| \leq 2|T[u..i-1]|$.

    For case (iv), 
    let $d' = |T'[i+1..v' + d-1]|$. 
    Then, 
    (A) the two substrings $T'[i - d' ..i-1]$ and $T'[i+1..v' + d-1]$ are the same string, 
    and (B) $d' = |[u', v'-1]| - 1$. 
    In this case, $T'[i - d' ..i-1]$ and $T'[i+1..v' + d-1]$ 
    are the same string, 
    and hence $|T[u..i-1]| \geq d'$. 
    $d < 2 d' + 2$ 
    follows from $d' = |[u', v'-1]| - 1$ 
    and $|[u', v'-1]| > \frac{d}{2}$. 

    Therefore, 
\begin{equation*}
    \begin{split}
        |T'[u'..i-1]| &= d \\
        &< 2 d' + 2 \\        
        &\leq 2 |T[u..i-1]| + 2.
    \end{split}
\end{equation*}

\end{proof}
We consider the string $T^{\prime R}$ obtained by reversing $T'$. 
Let $\SA^{\prime R}$ and $\LCP^{\prime R}$ be the suffix and LCP arrays of $T^{\prime R}$, respectively. 
In this case, 
$S^{R}$ must be a common prefix between the $(n-i+3)$-th suffix $T^{\prime R}[n-i+3..n+1]$ and any other suffix of $T^{\prime R}$, 
and hence, $|S^{R}| \leq \max \{ \LCP^{\prime R}[p'], \LCP^{\prime R}[p'+1] \}$ follows from Lemma~\ref{lem:lcp_prop}, 
where $p'$ is the position of $(n-i+3)$ in $\SA^{\prime R}$. 
$\max \{ \LCP^{\prime R}[p'], \LCP^{\prime R}[p'+1] \} = |T'[u'..i-1]|$ 
follows from the definitions of $u'$ and $p'$. 
Similarly, 
$\max \{ \LCP^{R}[p], \LCP^{R}[p+1] \} = |T[u..i-1]|$ 
follows from the definitions of $u$ and $p$. 
We obtain $|T'[u'..i-1]| \leq 2|T[u..i-1]|+2$ by Lemma~\ref{lem:bound_K_char_insertion:sub2}. 
Therefore, 
\begin{equation*}
    \begin{split}
        K - 1 &\leq |S^{R}| \\
        &\leq \max \{ \LCP^{\prime R}[p'], \LCP^{\prime R}[p'+1] \} \\
        &= |T'[u'..i-1]| \\
        &\leq 2 + 2 |T[u..i-1]| \\
        &= 2 + 2 \max \{ \LCP^{R}[p], \LCP^{R}[p+1] \}.        
    \end{split}
\end{equation*}

\subsubsection{Average-case Time Complexity for Character Insertion}
We present an upper bound on the average-case time complexity for insertion using the average $L_{\avg}$ of the values in the LCP array of $T$. 
We generalize the two values $K$ and $\lambda$, which were introduced in Section~\ref{subsec:char_insertion_algorithm},  
to functions $K(i, c)$ and $\lambda(i, c)$, where $c \in \Sigma$ is inserted into $T$ at position $i \in \{ 1, 2, \ldots, n \}$. 
Then, the average-case time complexity is $\mathcal{O}((\sum_{i = 1}^{n} (K(i, c) + i - \SA_{s}[\lambda(i, c)])) \log n / n)$. 
The following lemma ensures that the average-case time complexity can be bounded using $L_{\avg}$. 

\begin{lemma}\label{lem:average_bound}
    $(\sum_{i = 1}^{n} (K(i, c) + i - \SA_{s}[\lambda(i, c)])) / n = \mathcal{O}(1 + L_{\avg})$ 
    for a character $c \in \Sigma$. 
\end{lemma}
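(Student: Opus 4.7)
The plan is to bound each summand $K(i,c) + i - \SA_{s}[\lambda(i,c)]$ via Lemma~\ref{lem:bound_K_LCP_R}, re-index the resulting sum in terms of LCP positions, and then invoke the fact that $\LCP^{R}$ and $\LCP$ carry the same multiset of values.

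First I would apply Lemma~\ref{lem:bound_K_LCP_R} termwise: there exists a constant $\alpha > 0$ such that for every $i$ with $i \ge 2$,
\[
K(i,c) + i - \SA_{s}[\lambda(i,c)] \;\le\; 1 + 2\alpha\bigl(1 + \max\{\LCP^{R}[p(i)],\, \LCP^{R}[p(i)+1]\}\bigr),
\]
where $p(i)$ is the position of $(n-i+2)$ in $\SA^{R}$ (with the convention $\LCP^{R}[n+1] := 0$). The boundary case $i = 1$ contributes at most a constant and can be absorbed into an additive $\mathcal{O}(1)$ term.

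Next I would sum over $i \in \{1, 2, \ldots, n\}$. The key observation is that the map $i \mapsto p(i)$ is injective on $\{2,\ldots,n\}$: as $i$ ranges over this set, $n-i+2$ ranges over $\{2,\ldots,n\}$, and since $\SA^{R}$ is a permutation of $\{1,\ldots,n\}$, the positions $p(i)$ are pairwise distinct. Hence each position of $\LCP^{R}$ appears at most once among the values $\{p(i)\}$ and at most once among the shifted values $\{p(i)+1\}$, so
\[
\sum_{i=1}^{n} \max\{\LCP^{R}[p(i)],\, \LCP^{R}[p(i)+1]\} \;\le\; 2 \sum_{p=1}^{n} \LCP^{R}[p].
\]

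Finally, I would invoke Ohlebusch et al.'s Lemma 6.1 (already cited in the excerpt), which states that the values of $\LCP^{R}$ form a permutation of the values of $\LCP$. Therefore $\sum_{p=1}^{n} \LCP^{R}[p] = \sum_{p=1}^{n} \LCP[p] = n \cdot L_{\avg}$, and combining the inequalities yields
\[
\sum_{i=1}^{n}\bigl(K(i,c) + i - \SA_{s}[\lambda(i,c)]\bigr) \;=\; \mathcal{O}\bigl(n(1 + L_{\avg})\bigr).
\]
Dividing by $n$ gives Lemma~\ref{lem:average_bound}.

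The main subtle point will be the injectivity argument and the careful bookkeeping of the shifted index $p(i)+1$, so that each $\LCP^{R}$ entry contributes only $\mathcal{O}(1)$ times to the aggregate; once this is made precise, the rest is a direct application of Lemma~\ref{lem:bound_K_LCP_R} and the permutation property of $\LCP^{R}$ versus $\LCP$.
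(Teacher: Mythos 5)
Your proposal is correct and follows essentially the same route as the paper's proof: bound each term via Lemma~\ref{lem:bound_K_LCP_R}, use that the positions $p(i)$ form (part of) a permutation so each $\LCP^{R}$ entry is counted $\mathcal{O}(1)$ times, and finish with Ohlebusch et al.'s permutation property of $\LCP^{R}$ versus $\LCP$. Your handling of the shifted index $p(i)+1$ and the $i=1$ boundary is just a slightly more explicit version of the paper's "$p_1,\ldots,p_n$ is a permutation" step.
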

\begin{proof}
Let $p_{i}$ is the position of $(n - i + 1)$ in the suffix array of $T^{R}$ for each integer $i \in \{ 1, 2, \ldots, n \}$. 
    Then, Lemma~\ref{lem:bound_K_LCP_R} shows that $K(i, c) + i - \SA_{s}[\lambda(i, c)] \leq \alpha (1 + \max \{ \LCP^{R}[p_{i-1}]$, $\LCP^{R}[p_{i-1}+1] \})$.  
    Here, $\alpha > 0$ is a constant. 
    We obtain $\sum_{i = 1}^{n} (K(i, c) + i - \SA_{s}[\lambda(i, c)]) \leq \alpha(n + 2 \sum_{i=1}^{n} \LCP^{R}[i])$ because 
    $p_{1}, p_{2}, \ldots, p_{n}$ is a permutation of $n$ integers $1, 2, \ldots, n$. 
    Since the values in $\LCP^{R}$ are a permutation of the values in $\LCP$ (Lemma 6.1 in \cite{DBLP:journals/jda/OhlebuschBA14}),     
    the sum of values in $\LCP$ is equal to the sum of values in $\LCP^{R}$. 
    Therefore, 
\begin{equation*}
    \begin{split}
        (\sum_{i = 1}^{n} (K(i, c) + i - \SA_{s}[\lambda(i, c)])) / n &\leq \alpha (n + 2 \sum_{i=1}^{n} \LCP^{R}[i])/n \\
        &= \alpha (n + 2 \sum_{i=1}^{n} \LCP[i])/n \\
        &= \alpha + 2\alpha L_{\avg} \\
        &= \mathcal{O}(L_{\avg})
    \end{split}
\end{equation*}

\end{proof}

Finally, the average-case time complexity for insertion is $\mathcal{O}((1 + L_{\avg}) \log n)$, 
and we obtain Theorem~\ref{theo:query_time_summary}(ii).
This average running time can be enormously small if $L_{\avg} \ll n$. 



\section{String Insertion Algorithm}\label{sec:insertion}
\textbf{Problem setup.} We describe how to update the dynamic r-index when a string $P$ of length $m$ 
is inserted at position $i$ in string $T$, producing $T' = T[1..i-1]PT[i..n]$. 
This string insertion can be achieved by extending the character insertion algorithm described in Section~\ref{subsec:char_insertion_algorithm}, with only minor modifications. 

\textbf{Correctness guarantee.} 
Similar to the character insertion, 
we maintain critical invariants: (1) $L_{\RLE}$ correctly represents the run-length encoding of the current BWT, (2) $\SA_s$ and $\SA_e$ contain suffix array values at run start and end positions respectively, and (3) all three structures remain synchronized. 
Correctness follows from the fact that 
Salson et al. algorithm can be extended to string insertion \cite{DBLP:journals/tcs/SalsonLLM09,DBLP:journals/jda/SalsonLLM10}. 

The following theorem summarizes the results of this section. 
\begin{theorem}\label{theo:string_insertion_theorem}
    The dynamic r-index can be updated in $\mathcal{O}((m + L_{\max}) \log (n+m))$ time for an insertion of a string $P$ of length $m > 0$ into $T$. 
    The average time complexity is $\mathcal{O}( (m + L_{\avg}) \log (n+m))$.
\end{theorem}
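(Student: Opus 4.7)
The plan is to reduce Theorem~\ref{theo:string_insertion_theorem} to a modest generalization of the character insertion algorithm in Section~\ref{subsec:char_insertion_algorithm}. Since Salson et al.~\cite{DBLP:journals/tcs/SalsonLLM09,DBLP:journals/jda/SalsonLLM10} already showed that their BWT/SA update algorithm handles a string insertion in one pass, my task is to carry out the same reorganization in the compressed and sampled setting. I would define the intermediate sequences $\CM_{n+m+1}, \CM_{n+m}, \ldots, \CM_{0}$, $L_{\RLE, n+m+1}, \ldots, L_{\RLE, 0}$, $\SA_{s, n+m+1}, \ldots, \SA_{s, 0}$, and $\SA_{e, n+m+1}, \ldots, \SA_{e, 0}$, where iteration $j$ now replaces the circular shift of $T$ currently occupying some row with the circular shift $T^{\prime[j]}$ of $T'$, exactly as in Section~\ref{subsec:char_insertion_algorithm}.

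Next I would generalize the three-step update to three regimes of $j$. For \emph{early} iterations $j \in \{n+m+1, \ldots, i+m+1\}$, $T^{\prime[j]}[n+m+1] = T^{[j-m]}[n]$ is unchanged, so these iterations collapse into two $\incrementSA$ operations on $\SA_{s}$ and $\SA_{e}$, shifting positions $\geq i$ by $m$. For the \emph{boundary} block $j \in \{i+m, i+m-1, \ldots, i\}$, the inserted character $T^{\prime[j]}[n+m+1]$ is $P[j-i+1]$ if $i < j \leq i+m$, and $T[i-1]$ if $j=i$; Step~(I) of iteration $j$ reuses $x_{+1}$ from the previous iteration via the formulas for $x$ and $y$ stated in Section~\ref{subsec:char_insertion_algorithm}, adapted so that the bijective dynamic LF function of Lemma~\ref{lem:dynamic_LF_formula} is applied once per inserted character. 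For \emph{later} iterations $j \leq i-1$, the algorithm is identical to character insertion, including the stopping criterion at the smallest $K \geq 1$ with $x=y$. The auxiliary operations $\compISA_{1}, \compISA_{2}, \compT$ are executed \emph{once} at the beginning to obtain $\ISA[i-1]$, $\ISA[i]$, and $T[i-1]$; the internal pre-computations for $\compSA_{X, 1}, \compSA_{X, 2}, \compSA_{Y, 1}, \compSA_{Y, 2}$ (namely $\phi^{-1}(i-1)$, $\phi^{-1}(i)$, $\phi(i-1)$, $\phi(i)$) are likewise computed once.

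For the complexity analysis I would show that the total number of non-skipped iterations is $\mathcal{O}(m + K)$: the early regime costs $\mathcal{O}(\log(n+m))$, the boundary block contributes $m+\mathcal{O}(1)$ iterations each at $\mathcal{O}(\log(n+m))$ time by Lemmas~\ref{lem:dyn_rle_update} and~\ref{lem:DynDI_Lemma} (and $\mathcal{O}(\log(n+m))$ per $\compSA_{\cdot}$ call), and the later regime contributes $K$ iterations at the same per-iteration cost. The upfront auxiliary cost is $\mathcal{O}((1+i-\SA_{s}[\lambda])\log(n+m))$ via Section~\ref{subsec:support_ISA}. Combining these gives $\mathcal{O}((m + K + i - \SA_{s}[\lambda])\log(n+m))$ total time. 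The worst-case bound then follows by invoking Lemma~\ref{lem:bound_K_LCP_R}, which bounds both $K-1$ and $i - \SA_{s}[\lambda]$ by a constant times a value in $\LCP^{R}$, hence by $L_{\max}$. For the average case I would sum the cost $K(i, P) + i - \SA_{s}[\lambda(i, P)]$ over insertion positions and use the same reasoning as in Lemma~\ref{lem:average_bound}, exploiting that the values of $\LCP^{R}$ are a permutation of those of $\LCP$ (Lemma 6.1 in \cite{DBLP:journals/jda/OhlebuschBA14}).

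The main obstacle is verifying that $K$, the stabilization index for string insertion, still satisfies the bound of Lemma~\ref{lem:bound_K_LCP_R}(ii). The proof in Section~\ref{paragraph:bound_K_LCP_R2} relied on exhibiting an intermediate circular shift $X$ with $T[j..i-1]$ as a common prefix with both $T^{[j]}$ and $T^{\prime[j]}$; in the string setting the analogous common prefix is $T[j..i-1]P$ of length $i-j+m$, so the corresponding suffix of $T^{\prime R}$ shares a prefix of length $i-j-1$ with another suffix of $T^{\prime R}$ whose length bound must again be transferred back to $T^{R}$ using an analogue of Lemma~\ref{lem:bound_K_char_insertion:sub2}. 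The careful case analysis of whether the witness substring lies entirely inside, entirely outside, or straddles the inserted block $P$ is the delicate step; once it is settled, the inequality $K - 1 = \mathcal{O}(m + L_{\max})$ follows, and combining with the bound on $i-\SA_{s}[\lambda]$ completes the proof of Theorem~\ref{theo:string_insertion_theorem}.
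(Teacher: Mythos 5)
Your overall route is the same as the paper's: simulate the Salson et al.\ transformation over the $(n+m)$ rotations, collapse the early and late iterations via the string-insertion analogue of Lemma~\ref{lem:state_stability} (the paper's Lemma~\ref{lem:state_stability_for_string_insertion}), charge $\mathcal{O}(\log (n+m))$ per surviving iteration to get a total of $\mathcal{O}((m + K_{\ins} + i - \SA_{s}[\lambda])\log (n+m))$, and then bound $K_{\ins}$ and $i - \SA_{s}[\lambda]$ by reversed-LCP values, with the average case following from the permutation property of $\LCP^{R}$. Your closing paragraph also correctly identifies that Lemma~\ref{lem:bound_K_LCP_R}(ii) cannot be invoked verbatim and must be replaced by an $m$-augmented analogue (the paper's Lemma~\ref{lem:bound_K_string_insertion}), proved by the same intermediate-rotation argument together with an analogue of Lemma~\ref{lem:bound_K_char_insertion:sub2}; this matches what the paper does.

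The one place where your plan, as written, would fail is the middle block $j \in \{ i+m-1, \ldots, i \}$ (and the removal step at $j = i-1$). You compute $\ISA[i-1]$ and $\ISA[i]$ once up front and then say the character-insertion formulas are ``adapted'' so that Lemma~\ref{lem:dynamic_LF_formula} is applied once per inserted character. But during these $m$ iterations new rows are inserted into the matrix, so the fixed thresholds $\ISA[i]$ (in the correction term $\kappa$) and $\ISA[i-1]$ (used to locate the row removed at $j = i-1$) are no longer the correct row indices: the comparison index must be the current row index $q$ of $T^{\prime[i+m]}$ in $\CM_{j}$, and the removal index at $j=i-1$ must be the current row index of $T^{[i-1]}$, both of which drift by up to $m$ as insertions land before them. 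The paper handles this by maintaining these indices incrementally ($\compISA_{\ins, 1}$, $\compISA_{\ins, 2}$) and by restating the dynamic LF formula with $q$ in place of $\ISA[i]$ (Lemma~\ref{lem:dynamic_LF_formula_for_string_insertion}); the same adjustment propagates into $\compSA_{X, 1}$, $\compSA_{X, 2}$, $\compSA_{Y, 1}$, $\compSA_{Y, 2}$. The bookkeeping is $\mathcal{O}(1)$ per iteration, so your complexity accounting is unaffected, but without stating it the insertion rows $y$ in the middle block and the removed row at $j=i-1$ are computed incorrectly. (Minor slips: the last character of $T^{\prime[j]}$ for $i < j \leq i+m$ is $P[j-i]$, not $P[j-i+1]$, and the simulation has $n+m$ iterations, not $n+m+1$.)
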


\textbf{Algorithmic framework.}
Similar to the character insertion algorithm, 
we simulate the transformation from conceptual matrix $\CM$ of $T$ to $\CM'$ of $T'$ through $(n+m)$ iterations, from $j = n+m$ down to $j = 1$. We maintain the following sequences.  
(A) Conceptual matrix: $\CM_{n+m} = \CM$, $\CM_{n+m-1}$, $\ldots$, $\CM_0 = \CM'$ (not stored explicitly); 
(B) RLBWT: $L_{\RLE, n+m} = L_{\RLE}$, $L_{\RLE, n+m-1}$, $\ldots$, $L_{\RLE, 0} = L'_{\RLE}$; 
(C) Sampled SAs: $\SA_{s, n+m} = \SA_s$, $\ldots$, $\SA_{s, 0} = \SA'_s$ and $\SA_{e, n+m} = \SA_e$, $\ldots$, $\SA_{e, 0} = \SA'_e$. 

Each iteration $j$ transforms the state from step $j$ to step $j-1$ using a three-step process that maintains consistency across all data structures. 
At iteration $j$, insert the $j$-th circular shift $T^{\prime[j]}$ of $T'$ into $\CM_{j}$ in lexicographic order, 
and remove a circular shift of $T$ corresponding to $T^{\prime[j]}$ from $\CM_{j}$, resulting in $\CM_{j-1}$. 
Here, each circular shift $T^{[k]}$ of the original string $T$ corresponds to  
the $(k+m)$-th circular shift $T^{\prime[k+m]}$ of $T'$ if $k \geq i$; otherwise it corresponds to $T^{\prime[k]}$ of $T'$. 
The details of the three steps are as follows.

\textbf{Step (I): Identification of circular shifts to be removed.}
In this step, we identify the circular shift of $T$ to be removed from $\CM_{j}$. 

\emph{Algorithm:} 
This step is performed only when $j \leq i-1$ or $j \geq i + m$ (i.e., when some circular shift of $T$ is mapped to $T^{\prime[j]}$). 
The row index $x$ of the circular shift to be removed from $\CM_{j}$ is computed using the formula used in Step~(I) for character insertion, but this formula is modified as follows. 

\begin{enumerate}
\item \textbf{Initial iteration} ($j = n + m$): $x = 1$ (always remove the lexicographically smallest circular shift); 
\item \textbf{Boundary iteration} ($j = i - 1$): $x = p$, where $p$ is the row index of the circular shift $T^{[i-1]}$ in $\CM_{j}$, and this row index is computed by auxiliary operation $\compISA_{\ins, 1}(L_{\RLE, j}, \SA_{s, j}, \SA_{e, j})$, which is described in detail later; 
\item \textbf{The other iterations}: $x = \lexCount(L_{\RLE, j+1}, L_{j+1}[x_{+1}])$ $+ \rank(L_{\RLE, j+1}, x_{+1}$, $L_{j+1}[x_{+1}])$, 
where $x_{+1}$ denotes the row index of $\CM_{j+1}$ identified in the previous iteration.
\end{enumerate}

\textbf{Step (II): Deletion from data structures.} 
Remove the identified circular shift from $\CM_j$, and update all three data structures accordingly to maintain consistency. 
This step is identical to Step (II) of the update procedure for character insertion, 
although this step is performed only when $j \leq i-1$ or $j \geq i + m$.

\textbf{Step (III): Insertion into data structures.}
Insert the new circular shift $T'^{[j]}$ into $\CM_j$ at an appropriate row index $y$ and update all data structures accordingly. 
This step is identical to Step (III) of the update procedure for character insertion, 
but the formula for computing $y$ is modified as follows. 

\begin{enumerate}
\item \textbf{Early iterations} ($j \in \{n+1, \ldots, i+m\}$): $y = x$ (insert at the same position where we removed); 
\item \textbf{Special iteration} ($j \in \{ i+m-1, i+m-2, \ldots, i \}$): 
$y = \lexCount(L_{\RLE, j}, L_j[y_{+1}]) + \rank(L_{\RLE, j}, y_{+1}, L_j[y_{+1}]) + \epsilon$. 
Here, $y_{+1}$ is the row index for insertion computed in the previous iteration. 
Let $q$ be the row index of the circular shift $T^{\prime [i+m]}$ in $\CM_{j}$. 
Then, $\epsilon$ is defined as follows: 
$\epsilon = 1$ if (A) $T[i-1] < L_j[y_{+1}]$ or (B) $T[i-1] = L_j[y_{+1}]$ and $q \leq y_{+1}$. 
Otherwise $\epsilon = 0$. 
Similar to the row index $p$, 
this row index $q$ is computed by auxiliary operation $\compISA_{\ins, 2}(L_{\RLE, j}, \SA_{s, j}, \SA_{e, j})$, which is described in detail later; 
\item \textbf{Later iterations} ($j \in \{ i-1, i-2, \ldots, 1 \}$): 
$y = \lexCount(L_{\RLE, j}, L_j[y_{+1}]) + \rank(L_{\RLE, j}$, $y_{+1}, L_j[y_{+1}])$. 
\end{enumerate}

\textbf{Iteration completion.}
Through these three steps, each iteration transforms:
$(\CM_{j}, L_{\RLE, j}$, $\SA_{s, j}, \SA_{e, j}) \rightarrow (\CM_{j-1}, L_{\RLE, j-1}, \SA_{s, j-1}, \SA_{e, j-1})$. 
The algorithm uses only the queries and update operations supported by $\mathscr{D}_{\RLE}(L_{\RLE})$, $\mathscr{D}_{\DI}(\SA_s)$, and $\mathscr{D}_{\DI}(\SA_e)$. 

\textbf{Critical optimization: Iteration skipping.}
Similar to character insertion, 
the first $(n-i)$ iterations ($j \in \{n+m, n+m-1, \ldots, i+m+1 \}$) can be replaced with $\mathcal{O}(1)$ operations, 
and the last $(i-K_{\ins}-1)$ iterations ($j \in \{i-K_{\ins}-1, i-K_{\ins}-2, \ldots, 1 \}$) can be skipped. 
Here, $K_{\ins}$ is the integer $K$ defined using the string insertion algorithm. 
Formally, let $K_{\ins}$ be the smallest integer such that $x = y$ in some iteration $j \leq i - K_{\ins}$. If no such $K_{\ins}$ exists, define $K_{\ins} := i$. 
This is because this optimization is based on Lemma~\ref{lem:state_stability}, 
and this lemma can be extended to string insertion as follows: 
\begin{lemma}\label{lem:state_stability_for_string_insertion}
    The following three statements hold for string insertion: 
    \begin{itemize}
        \item[(i)] $L_{n+m} = L_{n+m-1} = \cdots = L_{i+m}$;
        \item[(ii)] For each $j \in \{ n+m, n+m-1, \ldots, i + m - 1 \}$, 
        $\SA_{j}[t] = \SA[t] + m$ if $\SA[t] \geq j$; otherwise, $\SA_{j}[t] = \SA[t]$, 
        where $t$ is a position of $\SA_{j}$; 
        \item[(iii)] $L_{i-K-1} = L_{i-K-2} = \cdots = L_{0}$ and $\SA_{i-K-1} = \SA_{i-K-2} = \cdots = \SA_{0}$. 
    \end{itemize}
\end{lemma}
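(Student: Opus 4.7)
The plan is to prove Lemma~\ref{lem:state_stability_for_string_insertion} by lifting the proof of Lemma~\ref{lem:state_stability} to the string-insertion setting. The three statements rest on two observations analogous to those used in the character-insertion proof: (a) during certain ranges of iterations we always have $x=y$, so the circular shift removed from $\CM_{j}$ in Step~(II) is re-inserted at the same row in Step~(III); and (b) the removed and inserted circular shifts share the same last character, so the BWT is preserved. Together, (a) and (b) imply that $L_{j-1} = L_{j}$ while $\SA_{j-1}$ differs from $\SA_{j}$ only by a uniform relabeling.

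For parts (i) and (ii) I would proceed by downward induction on $j$ from $n+m$ to $i+m$. In each iteration $j \in \{n+m, n+m-1, \ldots, i+m+1\}$ the circular shift mapped to $T^{\prime[j]}$ is $T^{[j-m]}$, and since $j-1 \geq i+m$ gives $T'[j-1] = T[j-1-m]$, the last characters match: $T^{\prime[j]}[n+m] = T^{[j-m]}[n]$. Combined with the observation that $T^{\prime[j]}$ differs from $T^{[j-m]}$ only by an internal insertion of $P$, the formulas for $x$ and $y$ in Steps~(I) and~(III) then yield $x = y$ throughout this range. This proves $L_{n+m} = L_{n+m-1} = \cdots = L_{i+m}$. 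For statement (ii), note that $T$-positions $\geq i$ correspond to $T'$-positions shifted by $m$, while $T$-positions $< i$ are unchanged in $T'$; the removal-then-insertion procedure performs exactly this relabeling on each $\SA$-entry, which delivers (ii).

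For part (iii) I would reuse the argument of Lemma~\ref{lem:state_stability}(iii) essentially verbatim. By definition of $K_{\ins}$, there exists an iteration $j_{0} \leq i - K_{\ins}$ where $x = y$. I would then show, by induction, that $x = y$ persists for every subsequent iteration $j \in \{i - K_{\ins} - 1, i - K_{\ins} - 2, \ldots, 1\}$; this follows from the same argument as in Section~3.2.2 of~\cite{DBLP:journals/tcs/SalsonLLM09}, which carries over unchanged because the update formulas for iterations $j \leq i-1$ are identical to those in the character-insertion algorithm. Moreover, for $j \leq i-1$ we have $T^{[j]}[n] = T[j-1] = T'[j-1] = T^{\prime[j]}[n+m]$ because $T[1..i-1] = T'[1..i-1]$. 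Hence these iterations leave $L$ and $\SA$ stable, giving (iii).

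The main obstacle is verifying that the equality $x = y$ really does hold across the early iterations: because $P$ has length $m > 1$, the correspondence between circular shifts of $T$ and $T'$ is a uniform shift by $m$ rather than by $1$, and one must check that the lexicographic order on circular shifts of $T'$ restricted to those of the form $T^{\prime[j]}$ with $j \geq i+m$ agrees with the lexicographic order on the $T^{[j-m]}$. This follows because $P$ is inserted in the same relative position within both $T^{\prime[j]}$ and $T^{\prime[j']}$ for $j, j' \geq i+m$, so the $P$-block cannot break ties that were resolved earlier by the shared prefix. Once this relabeling invariant is in place, the remaining bookkeeping is a mechanical rewriting of the character-insertion proof.
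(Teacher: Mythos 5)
Your proposal is correct and follows essentially the same route as the paper, which proves this lemma simply by adapting the proof of Lemma~\ref{lem:state_stability}: you verify $x=y$ together with matching last characters ($T^{\prime[j]}[n+m]=T^{[j-m]}[n]$ for $j\geq i+m+1$, and $T^{\prime[j]}[n+m]=T^{[j]}[n]$ for $j\leq i-1$ via the argument of Salson et al.), exactly the two observations the paper's character-insertion proof rests on, with the index shift $1\mapsto m$. No substantive difference in approach or gap beyond the level of detail the paper itself leaves implicit.
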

\begin{proof}
    Lemma~\ref{lem:state_stability_for_string_insertion} can be proved by appropriately 
    modifying the proof of Lemma~\ref{lem:state_stability}. 
\end{proof}

Therefore, the total iterations reduce from $(n+m)$ to $(1 + m + K_{\ins})$, specifically $j \in \{ i+m, i+m-1, \ldots, i-K_{\ins} \}$. 

\subsection{Supporting \texorpdfstring{$\compISA_{\ins, 1}$}{computeISA1} and 
\texorpdfstring{$\compISA_{\ins, 2}$}{computeISA2}}\label{subsec:isa_12_for_string_insertion}
This section describes the two operations $\compISA_{\ins, 1}$ and $\compISA_{\ins, 2}$, 
which are used in Step~(I) and Step~(III), respectively. 

\textbf{Supporting $\compISA_{\ins, 1}$.}
The operation $\compISA_{\ins, 1}(L_{\RLE, i-1}, \SA_{s, i-1}, \SA_{e, i-1})$ computes 
the row index $p$ of the circular shift $T^{[i-1]}$ in $\CM_{i-1}$. 
Let $p_{j}$ be the row index of the circular shift $T^{[i-1]}$ in $\CM_{j}$ for each $j \in \{ n+m, n+m-1, \ldots, i-1 \}$. 
Then, the update procedure indicates that 
the following relationship among holds $p_{n+m}$, $p_{n+m-1}$, $\ldots$, $p_{i-1}$. 
\begin{enumerate}
    \item $p_{n+m} = \ISA[i-1]$.
    \item $p_{n+m} = p_{n+m-1} = \cdots = p_{i+m-1}$.
    \item $p_{j-1} = p_{j} + \epsilon$ for each $j \in \{ i+m-1, i+m-2, \ldots, i \}$. 
    Here, $\epsilon = 1$ if a new row is inserted into $\CM_{j}$ in Step~(III) before the $p_{j}$-th row; 
    otherwise, $\epsilon = 0$. 
\end{enumerate}
$p_{i-1} = p$ holds, 
and $\ISA[i-1]$ can be precomputed by the operation $\compISA_{1}(L_{\RLE}, \SA_{s}, i)$ introduced in Section~\ref{subsec:support_ISA}. 
Therefore, $p$ can be computed using the relationship described above before updating $\CM_{i-1}$. 
The time complexity of $\compISA_{\ins, 1}$ is $\mathcal{O}((m + i - \SA_s[\lambda]) \log r)$ time, 
where $\lambda$ is the integer introduced in Section~\ref{subsec:char_insertion_algorithm}. 

\textbf{Supporting $\compISA_{\ins, 2}$.}
The operation $\compISA_{\ins, 2}(L_{\RLE, j}, \SA_{s, j}, \SA_{e, j})$ computes 
the row index $q$ of the circular shift $T^{\prime [i+m]}$ in $\CM_{j}$ 
for each $j \in \{ i+m-1, i+m-2, \ldots, i \}$. 
This row index $q$ can be computed using a similar approach to that of the operation $\compISA_{\ins, 1}$. 
Therefore, $m$ operations 
$\compISA_{\ins, 2}(L_{\RLE, i+m-1}, \SA_{s, i+m-1}, \SA_{e, i+m-1})$,
$\compISA_{\ins, 2}(L_{\RLE, i+m-2}, \SA_{s, i+m-2}, \SA_{e, i+m-2})$, $\ldots$, 
$\compISA_{\ins, 2}(L_{\RLE, i}, \SA_{s, i}, \SA_{e, i})$ can be computed in $\mathcal{O}((m + i - \SA_s[\lambda]) \log r)$ time. 

\subsection{Modifying \texorpdfstring{$\compSA_{X, 1}$}{computeSAX}, \texorpdfstring{$\compSA_{X, 2}$}{computeSAX}, \texorpdfstring{$\compSA_{Y, 1}$}{computeSAY}, and \texorpdfstring{$\compSA_{Y, 2}$}{computeSAY}}
Similar to character insertion, 
the string insertion algorithm uses the four auxiliary operations $\compSA_{X, 1}$, $\compSA_{X, 2}$, $\compSA_{Y, 1}$, and $\compSA_{Y, 2}$, which were introduced in Section~\ref{subsec:support_SAXY}, in Step~(II) and Step~(III). 
For correctly updating the dynamic r-index, 
we need to modify the algorithms of these four operations. 
This is because these operations compute suffix array values using the dynamic LF formula (i.e., Lemma~\ref{lem:dynamic_LF_formula}) and Lemma~\ref{lem:state_stability}, and 
these lemmas are extended to string insertion. 

The following lemma is the dynamic LF formula for string insertion. 
\begin{lemma}\label{lem:dynamic_LF_formula_for_string_insertion}
Let $j \in \{n+m, n+m-1, \ldots, 1\}$ and $t$ be a position in $\SA_j$.  
Define $\kappa(t) = 1$ if either  
(i) $T[i-1] < L_j[t]$, or  
(ii) $T[i-1] = L_j[t]$ and $q \leq t$;  
otherwise, $\kappa(t) = 0$. 
Here, $q$ is the row index of the circular shift $T^{\prime [i+m]}$ in $\CM_{j}$, which is used in Step~(III). 
Then, the following equation holds: 
\begin{equation*}
  \LF_{j}(t) =   \begin{cases}
    \lexCount(L_{\RLE, j}, L_{j}[t]) + \rank(L_{\RLE, j}, t, L_{j}[t]) + \kappa(t) & \text{if $i \leq j \leq i+m-1$} \\
    \lexCount(L_{\RLE, j}, L_{j}[t]) + \rank(L_{\RLE, j}, t, L_{j}[t]) & \text{otherwise.}
  \end{cases}
\end{equation*}
\end{lemma}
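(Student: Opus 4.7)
The plan is to prove Lemma~\ref{lem:dynamic_LF_formula_for_string_insertion} by induction on $j$, decreasing from $j = n+m$ down to $j = 1$, adapting the four-case structure used for Lemma~\ref{lem:dynamic_LF_formula}. The essential generalization is that the single ``boundary iteration'' $j = i$ in the character-insertion proof is now stretched over the $m$ consecutive ``middle iterations'' $j \in \{i, i+1, \ldots, i+m-1\}$, throughout which the $\kappa(t)$ correction must remain active. The role previously played by $\ISA[i]$ is taken by the row index $q$ of the shift $T'^{[i+m]}$ in $\CM_j$, which is precisely the handle supplied by $\compISA_{\ins, 2}$ and is defined exactly on the middle range since $T'^{[i+m]}$ enters $\CM_j$ only after iteration $i+m$.

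For the \emph{early iterations} $j \in \{i+m, \ldots, n+m\}$, I use Lemma~\ref{lem:state_stability_for_string_insertion}(i)--(ii) to conclude that $L_j = L$ and that each $\SA_j[t]$ equals $\SA[t]$ or $\SA[t] + m$ depending on whether $\SA[t] < j$. The same cancellation observed in the early-iteration case of Lemma~\ref{lem:dynamic_LF_formula} carries over, with $m$ playing the role of $1$, and delivers $\LF_j(t) = \LF(t) = W(t)$ by the standard LF formula, so $\kappa(t) = 0$ as required.

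The main technical obstacle is the range of \emph{middle iterations} $j \in \{i, \ldots, i+m-1\}$. In this range, Steps~(I)--(II) are inactive, so $L_j$ arises from $L_{j+1}$ by a single-character insertion at position $y$ and $\SA_j$ from $\SA_{j+1}$ by inserting the value $j$. I handle simultaneously the transition iteration $j = i+m-1$ (where the inductive hypothesis carries no $\kappa$) and the interior iterations $j < i+m-1$ (where it does), splitting on whether $t = y_{+1}$. For $t = y_{+1}$, I verify that the $\epsilon$ appearing in Step~(III)'s formula for $y$ coincides with $\kappa(y_{+1})$; this reduces to identifying $q_{j+1}$ as the analogue of $\ISA[i]$ used in the proof of Lemma~\ref{lem:dynamic_LF_formula} at the boundary iteration, and to observing that the inserted character equals $T[i-1]$ exactly at $j = i$, while for the interior indices the $\kappa$ condition is propagated through the insertion of characters from $P$ by the inductive hypothesis. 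For $t \neq y_{+1}$, I generalize Lemma~\ref{lem:dynamic_LF_for_boundary_iteration_case_B} to describe how a single-character insertion at row $y$ perturbs the rank and lex-count queries on $L_{\RLE, j}$, and combine this perturbation with the inductive formula.

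The \emph{transition iteration} $j = i - 1$ and the \emph{later iterations} $j \leq i - 2$ then mirror the original proof almost verbatim. At $j = i-1$, I introduce the same kind of auxiliary string $L_{\tmp}$ and function $W_{\tmp}$ used in the proof of Lemma~\ref{lem:dynamic_LF_formula}, prove the natural analogues of Lemma~\ref{lem:optional_lemma} and Lemma~\ref{lem:lexrank_0_relationship_C} to bridge between the $\kappa$-carrying formula at $j = i$ and the clean formula at $j = i-1$; for $j \leq i - 2$ the update semantics (delete at $x$, insert at $y$) match those of character insertion, so Equations~(\ref{eq:W_eq2})--(\ref{eq:u_eq2}) apply unchanged. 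The delicate bookkeeping of how $q_j$ evolves across the middle iterations, together with verifying that $T[i-1]$ rather than any character of $P$ drives $\kappa$, is what I expect to be the hardest part; no fundamentally new combinatorial identity should be needed beyond careful tracking of the single-character perturbations at each middle step.
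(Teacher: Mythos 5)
Your proposal is correct and follows essentially the paper's own route: the paper proves this lemma only by asserting that the proof of Lemma~\ref{lem:dynamic_LF_formula} can be ``appropriately modified,'' and your plan --- induction on $j$ with the early/middle/transition/later regimes, the row index $q$ of $T'^{[i+m]}$ taking over the role of $\ISA[i]$, the $\kappa$-correction kept active exactly on $j \in \{i, \ldots, i+m-1\}$, and the analogues of Lemmas~\ref{lem:dynamic_LF_for_boundary_iteration_case_B}, \ref{lem:optional_lemma}, and \ref{lem:lexrank_0_relationship_C} --- is precisely that modification spelled out. Only a cosmetic indexing slip: at a middle iteration $j+1$ the pair $(L_{j}, \SA_{j})$ is obtained from $(L_{j+1}, \SA_{j+1})$ by inserting the character $T'[j]$ and the value $j+1$ at position $y_{+1}$ (equivalently, iteration $j$ inserts $T'[j-1]$ and the value $j$ at position $y$), which does not affect the argument.
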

\begin{proof}
    Lemma~\ref{lem:dynamic_LF_formula_for_string_insertion} can be proved by appropriately 
    modifying the proof of Lemma~\ref{lem:dynamic_LF_formula}. 
\end{proof}

We use Lemma~\ref{lem:dynamic_LF_formula_for_string_insertion} and Lemma~\ref{lem:state_stability_for_string_insertion} instead of 
Lemma~\ref{lem:dynamic_LF_formula} and Lemma~\ref{lem:state_stability}, respectively. 
Then, the algorithms for computing suffix array values are modified as follows. 

\textbf{Modifying $\compSA_{X, 1}$.}
The operation $\compSA_{X, 1}(L_{\RLE, j}, \SA_{s, j}, \SA_{e, j}, x_{-1})$ computes $\SA_{j-1}[x_{-1}+1]$  
from $L_{\RLE, j}$, $\SA_{s, j}$, $\SA_{e, j}$, and the row index $x_{-1}$ of a circular shift in $\CM_{j-1}$,  
identified in Step~(I) of the update algorithm. 
Here, $j \in \{i+m+1, i, i-1, \ldots, i-K_{\ins}+1 \}$ holds 
since most of the $(n+1)$ iterations are skipped. 

Let $y$ be the insertion position in $\CM_j$ determined in Step~(III). 
Then, the value $\SA_{j-1}[x_{-1}+1]$ is computed based on the following cases:  
(i) $j = i+m+1$;  
(ii) $j = i$;  
(iii) $j < i$. 

\textbf{Case (i): $j = i+m+1$.}
$\SA_{j-1}[x_{-1}+1]$ is computed as follows: 
\begin{align*}
    \SA_{j-1}[x_{-1}+1] &= 
  \begin{cases}
    \SA[\ISA[i+1]+1] + m & \text{if $\SA[\ISA[i+1]+1] \geq i+m$} \\
    \SA[\ISA[i+1]+1] & \text{otherwise.}
  \end{cases}
\end{align*}
Here, this equation follows from Lemma~\ref{lem:state_stability_for_string_insertion}, 
and $\SA[\ISA[i+1]+1]$ is precomputed by $\phi^{-1}(i+1)$. 

\textbf{Case (ii): $j = i$.}
We leverage the $(m+1)$ positions $p_{i+m-1}, p_{i+m-2}, \ldots, p_{i-1}$ used to perform the operation $\compISA_{\ins, 1}$. 
We consider $(m+1)$ suffix array values $\SA_{i+m-1}[p_{i+m-1}+1]$, $\SA_{i+m-2}[p_{i+m-2}+1]$, $\ldots$, $\SA_{i-1}[p_{i-1}+1]$. 
Then, we can prove the following relationship among these values using the relationship among $p_{i+m-1}, p_{i+m-2}, \ldots, p_{i-1}$, which is explained in Section~\ref{subsec:isa_12_for_string_insertion}. 
\begin{enumerate}
    \item $\SA_{i+m-1}[p_{i+m-1}+1] = \phi^{-1}(i-1) + m$ if $\phi^{-1}(i-1) \geq i$; 
    otherwise $\SA_{i+m-1}[p_{i+m-1}+1] = \phi^{-1}(i-1)$.
    \item For each $j' \in \{ i+m-1, i+m-2, \ldots, i \}$, 
    $\SA_{j'-1}[p_{j'-1}+1] = j'$ if a new row is inserted into $\CM_{j'}$ as the $(p_{j'}+1)$-th row in Step~(III); 
    otherwise, $\SA_{j'-1}[p_{j'-1}+1] = \SA_{j'}[p_{j'}+1]$. 
    \item $\SA_{i-1}[p_{i-1}+1] = \SA_{j-1}[x_{-1}+1]$ 
    because $j = i$ and $x_{-1} = p_{i-1}$. 
\end{enumerate}
This relationship indicates that $\SA_{j-1}[x_{-1}+1]$ can be computed using $\phi^{-1}(i-1)$ 
and the the $(m+1)$ positions $p_{i+m-1}, p_{i+m-2}, \ldots, p_{i-1}$. 
Therefore, $\SA_{j-1}[x_{-1}+1]$ can be computed in $\mathcal{O}(\log r)$ time in this case. 

\textbf{Case (iii): $j < i$.}
We follow the same approach as the fourth case in the original algorithm of $\compSA_{X,1}$, 
i.e., we compute $\SA_{j-1}[x_{-1}+1]$ in $\mathcal{O}(\log n)$ time using $\SA_{j}[h]$ and Lemma~\ref{lem:sah_lemma}, 
where $h$ is the position introduced in Section~\ref{subsec:support_SAXY}. 
Lemma~\ref{lem:sah_lemma} still holds for string insertion because 
this lemma is proven using the dynamic LF formula, and this formula is appropriately modified by Lemma~\ref{lem:dynamic_LF_formula_for_string_insertion}. 

Finally, 
the time complexity of $\compSA_{X, 1}$ can be bounded by $\mathcal{O}(\log n)$. 

\textbf{Modifying other operations.}
The algorithms of $\compSA_{X, 2}$, $\compSA_{Y, 1}$, and $\compSA_{Y, 2}$ 
can be modified using a similar approach to $\compSA_{X, 1}$. 
Similar to $\compSA_{X, 1}$, 
the time complexity of these modified algorithms can be bounded by $\mathcal{O}(\log n)$. 

\subsection{Overall Time Complexity}
Similar to character insertion, 
$L_{\RLE}$, $\SA_{s}$, and $\SA_{e}$ are updated by executing $\mathcal{O}(m + K_{\ins})$ iterations. 
Each iteration $j \in \{ i+m, i+m-1, \ldots, i-K_{\ins} \}$ can be executed in $\mathcal{O}(1)$ queries and update operations supported by 
$\mathscr{D}_{\RLE}(L_{\RLE, j})$, $\mathscr{D}_{\DI}(\SA_{s, j})$, and $\mathscr{D}_{\DI}(\SA_{e, j})$. 
Thus, the iteration $j$ can be executed in $\mathcal{O}(\log r')$, 
where $r'$ is the number of runs in $L_{\RLE, j}$, and $r' \leq n+m$ holds. 
In addition, we use two operations $\compISA_{\ins, 1}$ and $\compISA_{\ins, 2}$, 
which take $\mathcal{O}((m + i - \SA_s[\lambda]) \log r)$ time. 
Therefore, this update algorithm requires $\mathcal{O}((m + K_{\ins} + i - \SA_{s}[\lambda])\log (n+m))$ time in total. 

\subsection{Worst-case Time Complexity}
When inserting a string $P$ of length $m$ into $T$ at position $i$, 
the insertion takes $\mathcal{O}((m + K_{\ins} + i - \SA_s[\lambda]) \log (n+m))$ time, 
where $(1 + m + K_{\ins})$ iterations are needed for the main update algorithm, 
and $(m + i - \SA_s[\lambda])$ steps are required for auxiliary operations. 
Similar to character insertion, 
the time complexity for insertion can be bounded using $L_{\max}$. 
This is because the following lemma holds instead of Lemma~\ref{lem:bound_K_LCP_R}. 

\begin{lemma}\label{lem:bound_K_string_insertion}
    There exists a constant $\alpha$ satisfying two conditions: 
    \begin{enumerate}[(i)]
        \item $i - \SA_{s}[\lambda] \leq \alpha (1 + \max \{ \LCP^{R}[p], \LCP^{R}[p+1] \})$;
        \item $K_{\ins} - 1 \leq \alpha (m + \max \{ \LCP^{R}[p], \LCP^{R}[p+1] \})$.    
   \end{enumerate}
   Here, $p$ is the position of $(n - i + 1)$ in the suffix array $\SA^{R}$ of $T^{R}$. 
\end{lemma}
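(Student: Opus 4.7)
The plan is to prove the two parts by adapting the proof of Lemma~\ref{lem:bound_K_LCP_R} to the string insertion setting. Part (i) is essentially identical to Lemma~\ref{lem:bound_K_LCP_R}(i): the quantities $i$, $\SA_{s}$, and $\lambda$ are defined on the pre-insertion string $T$ and the insertion position $i$, none of which depend on whether one inserts a single character or a string $P$. Accordingly, Lemmas~\ref{lem:lambda_prop} and~\ref{lem:lcp_prop} apply unchanged, and the same reversal argument used there yields $i - \SA_{s}[\lambda] \leq \alpha(1 + \max\{\LCP^{R}[p], \LCP^{R}[p+1]\})$ after re-indexing $p$ to match the position convention stated in the lemma.

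For part (ii), I follow the same outline as the proof of Lemma~\ref{lem:bound_K_LCP_R}(ii). The trivial cases $i \leq 2$ and $K_{\ins} = 1$ are disposed of immediately. Otherwise, I consider the iteration $j = i - K_{\ins} + 1 \leq i - 1$ and let $x$, $y$ be the row indexes of $\CM_{j}$ produced by Step~(I) and Step~(III). By the definition of $K_{\ins}$ we have $x \neq y$, so some intermediate row $z$ of $\CM_{j}$ carries a circular shift $X$ lexicographically between $T^{[j]}$ and $T^{\prime[j]}$. Since $T$ and $T'$ agree on positions $1..i-1$, the string $S = T[j..i-1]$ of length $K_{\ins} - 1$ is a common prefix of $T^{[j]}$, $X$, and $T^{\prime[j]}$.

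The analysis then branches on whether $X$ is a circular shift of $T$ or of $T'$. If $X = T^{[k]}$ for some $k$, reversing $S$ yields a common prefix in $T^{R}$ of two distinct suffixes, one of which starts at the position encoded by $p$, so Lemma~\ref{lem:lcp_prop} immediately gives $K_{\ins} - 1 \leq \max\{\LCP^{R}[p], \LCP^{R}[p+1]\}$, and no $m$ term is needed. If $X = T^{\prime[k]}$, then $S$ is a common suffix of two distinct prefixes of $T'$. Defining $u$ and $u'$ as the smallest indices for which $T[u..i-1]$ and $T'[u'..i-1]$ are common suffixes of distinct prefixes of $T$ and $T'$ respectively, it suffices to establish the string-insertion analog of Lemma~\ref{lem:bound_K_char_insertion:sub2}, namely $|T'[u'..i-1]| \leq 2|T[u..i-1]| + 2m$; combined with $K_{\ins} - 1 \leq |T'[u'..i-1]|$ and the identification $|T[u..i-1]| = \max\{\LCP^{R}[p], \LCP^{R}[p+1]\}$, this delivers the claim.

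The main obstacle is the generalized form of Lemma~\ref{lem:bound_K_char_insertion:sub2}. I split on how the second occurrence of $T'[u'..i-1]$ in $T'$ interacts with the inserted block $P = T'[i..i+m-1]$: (a) it is disjoint from $P$, in which case it survives into $T$ and gives $|T'[u'..i-1]| \leq |T[u..i-1]|$; (b) it is contained in $P$, giving $|T'[u'..i-1]| \leq m$; (c) it overlaps $P$ on a proper prefix or proper suffix, which is handled by a periodicity-plus-trimming argument mirroring cases (iii)--(iv) of Lemma~\ref{lem:bound_K_char_insertion:sub2}. The delicate step is case (c): in the character-insertion proof, the overlap was one character wide and the repetition argument was essentially immediate, whereas here the overlap can span up to $m$ positions, so one must carefully account for the portions of the common suffix lying strictly to the left and right of $P$---which survive into $T$ and are absorbed into $|T[u..i-1]|$---while the at-most-$2m$ characters interacting with $P$ contribute the additive $2m$ rather than a multiplicative blow-up. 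Assembling the cases and plugging the bound back into the prefix-in-$T^{R}$ analysis finishes the proof.
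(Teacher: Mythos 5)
Your proposal is correct and follows exactly the approach the paper intends: the paper's own proof of Lemma~\ref{lem:bound_K_string_insertion} is a one-line deferral to the argument of Lemma~\ref{lem:bound_K_LCP_R}, and your adaptation (part (i) unchanged via Lemmas~\ref{lem:lambda_prop} and~\ref{lem:lcp_prop}, part (ii) via the intermediate circular shift $X$ and a string-insertion analogue of Lemma~\ref{lem:bound_K_char_insertion:sub2} with an additive $\mathcal{O}(m)$ term) is precisely that argument, spelled out in more detail than the paper itself; the claimed bound $|T'[u'..i-1]| \leq 2|T[u..i-1]| + 2m$ is indeed provable by splitting on how the second occurrence interacts with the inserted block $P$. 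The only caveat is the index of $p$ (the paper's statement says $n-i+1$ while the character-insertion lemma and the average-case analysis use $n-i+2$), which appears to be a typo in the statement rather than a flaw in your argument.
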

\begin{proof}
Lemma~\ref{lem:bound_K_string_insertion} can be proved using the same approach 
as that used in the proof of Lemma~\ref{lem:bound_K_LCP_R}.
\end{proof}

Lemma~\ref{lem:bound_K_string_insertion} indicates that 
the time complexity for insertion is $\mathcal{O}(\max \{m, \LCP^{R}[p], \LCP^{R}[p+1] \} \log (n+m))$. 
Similar to character insertion, 
there exists a position $p'$ in $\LCP$ such that 
the time complexity for insertion is $\mathcal{O}((m + \LCP[p']) \log (n+m))$ time, 
where $\LCP[p'] \leq L_{\max}$. 
Therefore, the time complexity for insertion can be bounded by $\mathcal{O}((m + L_{\max}) \log (n+m))$.

\subsection{Average-case Time Complexity}
We present an upper bound on the average-case time complexity for insertion using the average $L_{\avg}$ of the values in the LCP array of $T$. 
Similar to character insertion, 
we generalize the two values $K_{\ins}$ and $\lambda$ to 
functions $K_{\ins}(i, P)$ and $\lambda_{\ins}(i, P)$, where $P \in \Sigma$ is inserted into $T$ at position $i \in \{ 1, 2, \ldots, n \}$. 

When a given string $P$ is inserted into $T$, 
the average-case time complexity for insertion is 
$\mathcal{O}(\log (n+m) \sum_{i = 1}^{n} (m + K_{\ins}(i, P) + i - \SA_{s}[\lambda_{\ins}(i, P)]) / n)$. 
Here, 
$\sum_{i = 1}^{n} (m + K_{\ins}(i, P) + i - \SA_{s}[\lambda_{\ins}(i, P)]) \leq \alpha \sum_{i=1}^{n} (m + \max \{ \LCP^{R}[p_{i-1}], \LCP^{R}[p_{i-1}+1] \})$ follows from Lemma~\ref{lem:bound_K_string_insertion}, 
where $\alpha$ is a constant, and $p_{i-1}$ is the position of $(n - i + 2)$ in the suffix array of $T^{R}$. 
Similar to Theorem~\ref{theo:query_time_summary}(ii), 
$\alpha \sum_{i=1}^{n} (m + \max \{ \LCP^{R}[p_{i-1}], \LCP^{R}[p_{i-1}+1] \})$ 
is at most $2 \alpha \sum_{i=1}^{n} (m + \LCP[i])$. 
Therefore, 
we obtain the average-case time complexity for insertion by the following equation:
\begin{equation*}
    \begin{split}
        \mathcal{O}(\log (n+m) \sum_{i = 1}^{n} (m + K_{\ins}(i, P) + i - \SA_{s}[\lambda_{\ins}(i, P)]) / n) &=  \mathcal{O}(\frac{nm + \sum_{i=1}^{n} \LCP[i]}{n} \log (n+m)) \\
        &= \mathcal{O}((m + L_{\avg}) \log (n+m)).
    \end{split}
\end{equation*}

Finally, we obtain Theorem~\ref{theo:string_insertion_theorem}.

\section{String Deletion Algorithm}\label{sec:deletion}
\textbf{Problem setup.} We describe how to update the dynamic r-index when 
a substring $T[i..i+m-1]$ of length $m$ is deleted from $T$ at a position $i \in \{ 1, 2, \ldots, n - m \}$, 
producing $T'=T[1..(i-1)]T[(i+m)..n]$. 
Similar to the string insertion, 
this string deletion can be achieved by extending the character insertion algorithm described in Section~\ref{subsec:char_insertion_algorithm}, with only minor modifications. 

\textbf{Correctness guarantee.} 
Similar to the character insertion, 
we maintain critical invariants: (1) $L_{\RLE}$ correctly represents the run-length encoding of the current BWT, (2) $\SA_s$ and $\SA_e$ contain suffix array values at run start and end positions respectively, and (3) all three structures remain synchronized. 
Correctness follows from the fact that 
Salson et al. algorithm can be extended to string deletion \cite{DBLP:journals/tcs/SalsonLLM09,DBLP:journals/jda/SalsonLLM10}. 

The following theorem summarizes the results of this section. 
\begin{theorem}\label{theo:string_deletion_theorem}
    The dynamic r-index can be updated in $\mathcal{O}((m + L_{\max}) \log n)$ time for a deletion of 
    a substring $T[i..i+m-1]$ from $T$. 
    The average time complexity is $\mathcal{O}((m + L_{\avg}) \log n)$.
\end{theorem}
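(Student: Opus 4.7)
\textbf{Proof plan for Theorem~\ref{theo:string_deletion_theorem}.}
My plan is to extend the character insertion algorithm (Section~\ref{subsec:char_insertion_algorithm}) and the string insertion analysis (Section~\ref{sec:insertion}) to handle deletion, exploiting the symmetry between the two operations under the Salson et al.\ framework~\cite{DBLP:journals/tcs/SalsonLLM09,DBLP:journals/jda/SalsonLLM10}. First I would set up the transformation of $\CM$ of $T$ into $\CM'$ of $T'$ through iterations $j = n, n-1, \ldots, 1$, where at iteration $j$ we identify and remove from $\CM_j$ the circular shift of $T$ corresponding to $T'^{[j]}$ (which is $T^{[j]}$ if $j < i$ and $T^{[j+m]}$ if $j \geq i$) and insert $T'^{[j]}$ at the correct lexicographic position. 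Crucially, for $j \in \{i, i+1, \ldots, i+m-1\}$ no insertion is needed because these $m$ circular shifts of $T$ have no corresponding shift in $T'$; they must simply be deleted. This asymmetry distinguishes deletion from insertion, but the three-step structure (identification, deletion from data structures, optional insertion) remains valid.

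Next I would adapt the formulas of Step~(I) and Step~(III) to deletion. The initial iteration starts at $j = n$ by using the row index of $T^{[n]}$ in $\CM$. For the boundary iteration $j = i+m$, the row index $p$ of $T^{[i-1]}$ needs to be computed via an auxiliary operation $\compISA_{\del,1}$ analogous to $\compISA_{\ins,1}$, using $\ISA[i-1]$ precomputed by $\compISA_1$ and tracking the shift of $p$ as the $m$ rows corresponding to $T^{[i]}, \ldots, T^{[i+m-1]}$ are removed across iterations $j \in \{i+m-1, \ldots, i\}$. A companion operation $\compISA_{\del,2}$ handles the position of $T^{[i+m]}$ through the deletion phase. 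The four operations $\compSA_{X,1}, \compSA_{X,2}, \compSA_{Y,1}, \compSA_{Y,2}$ need analogous modification based on a deletion version of the dynamic LF formula (Lemma~\ref{lem:dynamic_LF_formula}): the correction $\kappa(t)$ now accounts for the $m$ characters removed from the BWT during iterations $j \in \{i, \ldots, i+m-1\}$ rather than the $m$ characters inserted, but the overall structure of Lemmas~\ref{lem:sah_lemma}, \ref{lem:sah_prime_lemma}, \ref{lem:sag1_lemma}, \ref{lem:sag2_lemma} carries over, so each auxiliary operation still runs in $\mathcal{O}(\log n)$ time.

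I would then apply the iteration-skipping optimization. By a deletion-specific analogue of Lemma~\ref{lem:state_stability_for_string_insertion}, the first $n - i - m$ iterations $j \in \{n, \ldots, i+m+1\}$ leave $L_{\RLE}$ unchanged and can be replaced with two $\decrementSA$ calls on $\SA_s$ and $\SA_e$ (to shift positions $\geq i+m$ down by $m$); the last $i - K_{\del} - 1$ iterations once $x = y$ first occurs are fully skipped. Thus the active workload is $\mathcal{O}(m + K_{\del})$ iterations, each running in $\mathcal{O}(\log n)$ time, plus $\mathcal{O}((m + i - \SA_s[\lambda]) \log r)$ for precomputing $\ISA$ values. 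The total cost is $\mathcal{O}((m + K_{\del} + i - \SA_s[\lambda]) \log n)$.

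The main obstacle, and the real technical content of the proof, is establishing the LCP-bounded analogue of Lemma~\ref{lem:bound_K_string_insertion} for deletion, namely that both $K_{\del} - 1$ and $i - \SA_s[\lambda]$ are bounded by $\alpha(m + \max\{\LCP^R[p], \LCP^R[p+1]\})$ for a constant $\alpha > 0$ and some position $p$ in $\SA^R$. The bound on $i - \SA_s[\lambda]$ carries over unchanged from Lemma~\ref{lem:bound_K_LCP_R}(i), since it only depends on $T$ and the definition of $\lambda$. For the bound on $K_{\del}$, I would argue as in the string insertion proof: at iteration $j = i - K_{\del} + 1$, the circular shift $T^{[j]}$ removed in Step~(II) and the shift $T'^{[j]}$ inserted in Step~(III) lie at distinct row indices, and any intermediate row $X$ gives a common prefix of length $K_{\del} - 1$ with one of them. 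When $X$ is a shift of $T$ this length bounds an LCP of $T^R$ directly; when $X$ is a shift of $T'$ I would invoke a lemma analogous to Lemma~\ref{lem:bound_K_char_insertion:sub2} relating LCPs of $T^{\prime R}$ to those of $T^R$ up to a constant factor plus $m$. Once this lemma is in place, the worst-case bound $\mathcal{O}((m + L_{\max}) \log n)$ and the average-case bound $\mathcal{O}((m + L_{\avg}) \log n)$ follow by exactly the same summation argument as Lemma~\ref{lem:average_bound} and the string insertion analysis, using the Ohlebusch et al.\ identity that values in $\LCP^R$ are a permutation of values in $\LCP$.
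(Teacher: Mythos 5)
Your plan follows essentially the same route as the paper: extend the character-insertion machinery with modified Step~(I)/(III) formulas, auxiliary operations $\compISA_{\del,1}$, $\compISA_{\del,2}$, a deletion version of the dynamic LF formula, iteration skipping down to $\mathcal{O}(m+K_{\del})$ active iterations at $\mathcal{O}(\log n)$ each, and LCP-bounds on $K_{\del}$ and $i-\SA_s[\lambda]$ proved as in Lemmas~\ref{lem:bound_K_LCP_R} and~\ref{lem:bound_K_char_insertion:sub2}, finished off with the Ohlebusch et al.\ permutation identity. However, there is one concrete missing idea. During deletion, Step~(III) inserts the circular shift $T'^{[k]}$ ($k=j-m$) while the corresponding shift $T^{[k]}$ of $T$ is still present in $\CM_j$ (it is only removed at the later iteration $k$), and during the middle phase the sampled arrays still contain values in $[i,i+m-1]$ coming from suffixes of $T$ that start inside the deleted region. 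Consequently your proposed $\mathcal{O}(1)$ replacement of the early iterations -- ``two $\decrementSA$ calls shifting positions $\geq i+m$ down by $m$'' -- can map a value $v\in[i+m,i+2m-1]$ onto an existing value $v-m\in[i,i+m-1]$, producing duplicates in $\SA_s$/$\SA_e$. This breaks the invariants the $\mathscr{D}_{\DI}$ structures rely on ($\insertSA$ requires the inserted value to be absent, $\decrementSA$ requires that $\Pi_a$ be unchanged, and $\orderSA$/$\countSA$ become ambiguous), and more generally leaves the algorithm unable to distinguish a row of $\CM_j$ that is a shift of $T$ from one that is a shift of $T'$ with the same index. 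The paper resolves exactly this by relabelling: all values representing shifts of $T$ are offset by $\Delta:=n+1$ before the update (so $k\geq\Delta$ encodes $T^{[k-(n+1)]}$ and $k<\Delta$ encodes $T'^{[k]}$), and Lemma~\ref{lem:state_stability_for_string_deletion} is stated for these shifted values; this is also why the last $i-K_{\del}-1$ iterations are not ``fully skipped'' as you claim but replaced by $\mathcal{O}(1)$ operations that strip the offset.

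Two smaller points. First, the deletion-phase $\LF_j$ is no longer injective (the row of $T^{[i]}$ has no image), so the paper needs an extra candidate position $\beta$ and Lemma~\ref{lem:sah_lemma_for_string_deletion} when computing $\SA_j[h]$; your ``the structure of Lemma~\ref{lem:sah_lemma} carries over'' hides this case. Second, in the average-case bound the sum is divided by $n-m$ (the number of admissible positions $i$), not $n$, so the summation argument of Lemma~\ref{lem:average_bound} does not apply verbatim; the paper splits into $m<n/2$ and $m\geq n/2$ (using $\LCP[i]\leq n\leq 2m$ in the latter case) to recover $\mathcal{O}((m+L_{\avg})\log n)$. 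These are fixable, but the $\Delta$-relabelling is a step your proof genuinely needs and does not supply.
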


\textbf{Algorithmic framework.}
Similar to the string insertion algorithm, 
we simulate the transformation from conceptual matrix $\CM$ of $T$ to $\CM'$ of $T'$ through $n$ iterations, from $j = n$ down to $j = 1$. We maintain the following sequences.  
(A) Conceptual matrix: $\CM_{n} = \CM$, $\CM_{n-1}$, $\ldots$, $\CM_0 = \CM'$ (not stored explicitly); 
(B) RLBWT: $L_{\RLE, n} = L_{\RLE}$, $L_{\RLE, n-1}$, $\ldots$, $L_{\RLE, 0} = L'_{\RLE}$; 
(C) Sampled SAs: $\SA_{s, n} = \SA_s$, $\ldots$, $\SA_{s, 0} = \SA'_s$ and $\SA_{e, n-1} = \SA_e$, $\ldots$, $\SA_{e, 0} = \SA'_e$. 

Each iteration $j$ transforms the state from step $j$ to step $j-1$ using a three-step process that maintains consistency across all data structures. 
At iteration $j$, remove the $j$-th circular shift $T^{[j]}$ of $T$ from $\CM_{j}$, 
and insert the circular shift of $T'$ corresponding to $T^{[j]}$ into $\CM_{j}$ in lexicographic order, 
resulting in $\CM_{j-1}$. 
Here, each circular shift $T^{\prime[k]}$ of $T'$ corresponds to  
the $(k+m)$-th circular shift $T^{[k+m]}$ of $T$ if $k \geq i$; otherwise it corresponds to $T^{[k]}$ of $T$. 
The details of the three steps are as follows. 

\textbf{Step (I): Identification of circular shifts to be removed.}
In this step, we identify the circular shift of $T$ to be removed from $\CM_{j}$. 

\emph{Algorithm:} 
The row index $x$ of the circular shift $T^{[j]}$ in $\CM_{j}$ is computed using the formula used in Step~(I) for character insertion, but this formula is modified as follows. 

\begin{enumerate}
\item \textbf{Initial iteration} ($j = n$): $x = 1$ (always remove the lexicographically smallest circular shift); 
\item \textbf{Special iteration} ($j \in \{ i+m-2, i+m-3, \ldots, i-1 \}$): 
$x = \lexCount(L_{\RLE, j+1}, L_{j+1}[x_{+1}]) + \rank(L_{\RLE, j+1}, x_{+1}, L_{j+1}[x_{+1}]) - \epsilon$. 
Here, $x_{+1}$ denotes the row index of $\CM_{j+1}$ identified in the previous iteration, 
and let $p$ be the row index of the $i$-th circular shift $T^{[i]}$ of $T$ in $\CM_{j+1}$, 
Then, $\epsilon = 1$ if either (A) $L_{j+1}[p] < L_{j+1}[x_{+1}]$ or 
(B) $L_{j+1}[p] = L_{j+1}[x_{+1}]$ and $p \leq x_{+1}$. 
Otherwise, $\epsilon = 0$. 
The row index $p$ is computed by auxiliary operation $\compISA_{\del, 1}(L_{\RLE, j+1}, \SA_{s, j+1}, \SA_{e, j+1})$, which is described in detail later;
\item \textbf{The other iterations}: $x = \lexCount(L_{\RLE, j+1}, L_{j+1}[x_{+1}]) + \rank(L_{\RLE, j+1}, x_{+1}, L_{j+1}[x_{+1}])$
\end{enumerate}

\textbf{Step (II): Deletion from data structures.} 
Remove the identified circular shift $T^{[j]}$ from $\CM_j$, and update all three data structures accordingly to maintain consistency. 
This step is identical to Step (II) of the update procedure for character insertion, 
although this step is not skipped. 

\textbf{Step (III): Insertion into data structures.}
Identify the circular shift of $T'$ corresponding to $T^{[j]}$, 
insert the identified circular shift into $\CM_j$ at an appropriate row index $y$, 
and update all data structures accordingly. 
This step is identical to Step (III) of the update procedure for character insertion, 
but this step is performed only when $j \leq i-1$ or $j \geq i+m$ (i.e., when some circular shift of $T'$ corresponds to $T^{[j]}$), 
and the formula for computing $y$ is modified as follows. 

\begin{enumerate}
\item \textbf{Early iterations} ($j \in \{n, n-1, \ldots, i+m\}$): $y = x$ (insert at the same position where we removed); 
\item \textbf{Boundary iteration} ($j = i - 1$): $y = \lexCount(L_{\RLE, j}, L_{j}[q]) + \rank(L_{\RLE, j}, q, L_{j}[q])$, 
where $q$ is the row index of the $i$-th circular shift $T^{\prime[i]}$ of $T'$ in $\CM_{j}$, 
and this row index is computed by auxiliary operation $\compISA_{\del, 2}(L_{\RLE, j}, \SA_{s, j}, \SA_{e, j})$, 
which is described in detail later;
\item \textbf{Later iterations} ($j \in \{ i-2, i-3, \ldots, 1 \}$): 
$y = \lexCount(L_{\RLE, j}, L_{j}[y_{+1}]) + \rank(L_{\RLE, j}$, $y_{+1}, L_{j}[y_{+1}])$, 
where $y_{+1}$ is the row index in $\CM_{j+1}$ identified in Step~(III). 
\end{enumerate}

\textbf{Iteration completion.}
Through these three steps, each iteration transforms:
$(\CM_{j}, L_{\RLE, j}$, $\SA_{s, j}, \SA_{e, j}) \rightarrow (\CM_{j-1}, L_{\RLE, j-1}, \SA_{s, j-1}, \SA_{e, j-1})$. 
The algorithm uses only the queries and update operations supported by $\mathscr{D}_{\RLE}(L_{\RLE})$, $\mathscr{D}_{\DI}(\SA_s)$, and $\mathscr{D}_{\DI}(\SA_e)$. 

\textbf{Value duplication.} 
Let $\SA_{a}$ be $\SA_{s}$ or $\SA_{e}$. 
Then, $\SA_{a}$ may contain the same value twice by performing Step~(III). 
This is because the third step inserts a circular shift $T^{\prime[k]}$ of $T'$ into $\CM$ 
before removing the $k$-th circular shift $T^{[k]}$ of $T$ for each iteration $j \geq i+m$. 
To avoid this duplication of values, 
we increment all the values in $\SA_{a}$ by $\Delta := n+1$ (i.e., $\incrementSA(\SA_{s}, 0, \Delta)$) 
before performing the update procedure. 
Then, we can distinguish between the two circular shifts $T^{[k]}$ and $T^{\prime[k]}$ in $\SA_{a}$. 
Specifically, the value $k$ in $\SA_{a}$ represents $T^{[k-(n+1)]}$ if $k \geq \Delta$;
otherwise, it represents $T^{\prime[k]}$. 
Similarly, the suffix array $\SA_{j}$ corresponding to each $\CM_{j}$ is modified as follows: 
for each $t \in \{ 1, 2, \ldots, |\SA_{j}| \}$, 
$\SA_{j}[t] := k + \Delta$ if the $t$-th row of $\CM_{j}$ represents a circular shift $T^{[k]}$ of $T$; 
otherwise, the row represents a circular shift $T^{\prime[k]}$ of $T'$ and $\SA_{j}[t] := k$.
This modification does not pose a critical problem for the update procedure, 
since all circular shifts of $T$ are removed from $\CM$.

\textbf{Critical optimization: Iteration skipping.}
Similar to character insertion, 
the first $(n-i-m)$ iterations ($j \in \{n, n-1, \ldots, i+m+1 \}$) can be replaced with $\mathcal{O}(1)$ operations. 
Similarly, the last $(i-K_{\del}-1)$ iterations ($j \in \{i-K_{\del}-1, i-K_{\del}-2, \ldots, 1 \}$) can be replaced with $\mathcal{O}(1)$ operations. 
Here, $K_{\del}$ is the integer $K$ defined using the string deletion algorithm. 
Formally, let $K_{\del}$ be the smallest integer such that $x = y$ in some iteration $j \leq i - K_{\del}$. If no such $K_{\del}$ exists, define $K_{\del} := i$. 
This is because this optimization is based on Lemma~\ref{lem:state_stability}, 
and this lemma can be modified as follows: 
\begin{lemma}\label{lem:state_stability_for_string_deletion}
    The following four statements hold for string deletion: 
    \begin{itemize}
        \item[(i)] $L_{n} = L_{n-1} = \cdots = L_{i+m}$;
        \item[(ii)] For each $j \in \{ n, n-1, \ldots, i + m - 1 \}$, 
        $\SA_{j}[t] = \SA_{n}[t] - \Delta - m$ if $\SA_{n}[t] \geq \Delta + j$; otherwise, $\SA_{j}[t] = \SA_{n}[t]$, 
        where $t$ is a position of $\SA_{j}$; 
        \item[(iii)] $L_{i-K_{\del}-1} = L_{i-K_{\del}-2} = \cdots = L_{0}$;
        \item[(iv)] For each $j \in \{ i-K_{\del}-1, n-2, \ldots, 0 \}$, 
        $\SA_{j}[t] = \SA_{i-K_{\del}-1}[t] - \Delta$ if $\SA_{i-K_{\del}-1}[t] \geq \Delta + j$; 
        otherwise, $\SA_{j}[t] = \SA_{i-K_{\del}-1}[t]$, where $t$ is a position of $\SA_{j}$. 
    \end{itemize}
\end{lemma}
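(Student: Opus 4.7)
The plan is to prove Lemma~\ref{lem:state_stability_for_string_deletion} by directly adapting the argument used for Lemma~\ref{lem:state_stability}, replacing insertion-specific claims with their deletion counterparts. I will treat the four statements in two groups: (i) and (ii), which concern the stable range of iterations above the deletion window, and (iii) and (iv), which concern the stable range after all genuine edits have been absorbed.

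For (i) and (ii), I will first show that for every $j \in \{n, n-1, \ldots, i+m+1\}$ the row index $x$ identified in Step~(I) coincides with the row index $y$ used in Step~(III). The intuition is that at such $j$, Step~(II) removes the circular shift $T^{[j]}$ from $\CM_j$ and Step~(III) reinserts the corresponding shift $T^{\prime[j-m]}$ of $T'$. Since $j - m - 1 \geq i$, we have $T'[j-m-1] = T[j-1]$, so the last characters of $T^{[j]}$ and $T^{\prime[j-m]}$ agree, and the formulas for $x$ and $y$ given in Steps~(I) and~(III) then force $x = y$. This immediately yields (i): the BWTs $L_{n}, L_{n-1}, \ldots, L_{i+m}$ are identical because each iteration only relabels a row without altering the sorted order or the column of last characters. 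Statement (ii) follows by bookkeeping on suffix-array labels: each iteration $j \in \{n, n-1, \ldots, i+m\}$ converts the label $j + \Delta$ (a $T$-shift) into the label $j - m$ (the corresponding $T'$-shift), while all other labels are left untouched; unrolling the induction from $\SA_n$ downwards gives exactly the case split $\SA_n[t] \geq \Delta + j$ versus $\SA_n[t] < \Delta + j$.

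For (iii) and (iv), I will use the minimality in the definition of $K_{\del}$. Analogously to the insertion case (see Section~3.2.2 of \cite{DBLP:journals/tcs/SalsonLLM09} cited in the proof of Lemma~\ref{lem:state_stability}), once $x = y$ occurs at some iteration $j \leq i - K_{\del}$, the same equality persists for every subsequent iteration: the character removed from the BWT equals the character inserted (namely $T[j-1] = T'[j-1]$, valid because $j - 1 < i$), so the BWT column stabilizes and Step~(III) only replaces a $T$-shift label by the corresponding $T'$-shift label at the same row. This gives (iii) directly. For (iv), I track the relabelling across the remaining iterations $j \in \{i - K_{\del} - 1, \ldots, 0\}$: each one converts a single label $j + \Delta$ into $j$, so a row whose label in $\SA_{i-K_{\del}-1}$ was $k + \Delta$ ends up labelled $k$ in $\SA_j$ precisely when $k \geq j$, which is the stated condition.

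The main obstacle is the careful handling of the offset $\Delta = n+1$ introduced under ``Value duplication'' to distinguish $T$-shifts from $T'$-shifts in $\SA_s$ and $\SA_e$. Unlike the insertion proof, where labels of removed circular shifts simply vanish, here labels migrate from the $+\Delta$ regime to the plain regime in two separate phases (the early phase of (ii) with an additional shift by $m$, and the late phase of (iv) with no $m$-shift). I will be careful to verify that the piecewise formulas in (ii) and (iv) are consistent at the boundary $j = i + m - 1 \to j = i - K_{\del} - 1$, using the fact that during the non-skipped iterations $j \in \{i+m, \ldots, i - K_{\del}\}$ every label that starts in $\SA_n$ with value $\geq \Delta + i + m$ has already been converted to a $T'$-label before iteration $i - K_{\del} - 1$ is reached, so the two regimes do not interfere. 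With that verification in place, the four statements follow from the same inductive bookkeeping used in Lemma~\ref{lem:state_stability}.
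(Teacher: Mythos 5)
Your proposal takes essentially the same route as the paper, whose proof of this lemma is simply to adapt the proof of Lemma~\ref{lem:state_stability}: you establish $x = y$ together with equality of the removed and inserted last characters for the early iterations ($j \geq i+m+1$, using $T'[j-m-1] = T[j-1]$) and for all iterations after the first one with $x = y$ below $i - K_{\del}$ (citing the same persistence argument from Salson et al.), and then do the label bookkeeping with the offset $\Delta$, which is exactly the intended modification. Your write-up is in fact more detailed than the paper's one-sentence proof; the only point to double-check is the exact cutoff in (ii) and (iv), since unrolling the relabelling at the state before iteration $j$ naturally yields the condition $\SA_{n}[t] \geq \Delta + j + 1$ (respectively $\SA_{i-K_{\del}-1}[t] \geq \Delta + j + 1$) rather than the stated $\geq \Delta + j$, so the boundary row $\Delta + j$ needs explicit verification rather than the assertion that the split matches "exactly."
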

\begin{proof}
    Lemma~\ref{lem:state_stability_for_string_deletion} can be proved by appropriately 
    modifying the proof of Lemma~\ref{lem:state_stability}. 
\end{proof}

Therefore, the total iterations reduce from $n$ to $(1 + m + K_{\del})$, specifically $j \in \{ i+m, i+m-1, \ldots, i-K_{\del} \}$. 

\subsection{Supporting \texorpdfstring{$\compISA_{\del, 1}$}{computeISA1} and 
\texorpdfstring{$\compISA_{\del, 2}$}{computeISA2}}
This section describes the two operations $\compISA_{\del, 1}$ and $\compISA_{\del, 2}$, 
which are used in Step~(I) and Step~(III), respectively. 
The operation $\compISA_{\del, 1}(L_{\RLE, j+1}, \SA_{s, j+1}, \SA_{e, j+1})$ computes 
the row index $p$ of the $i$-th circular shift $T^{[i]}$ of $T$ in $\CM_{j+1}$, 
where $j \in \{ i+m-2, i+m-3, \ldots, i-1 \}$. 
The operation $\compISA_{\del, 2}(L_{\RLE, j}, \SA_{s, j}, \SA_{e, j})$ computes 
the row index $q$ of the $i$-th circular shift $T^{\prime[i]}$ of $T'$ in $\CM_{j}$, 
where $j = i-1$. 
These row indexes $p$ and $q$ can be computed using the same method as 
the two operations $\compISA_{\ins, 1}$ and $\compISA_{\ins, 2}$ for string insertion. 
Therefore, the time complexity of $\compISA_{\del, 1}$ and $\compISA_{\del, 2}$ 
can be bounded by $\mathcal{O}((m + i - \SA_s[\lambda]) \log r)$ time, 
where $\lambda$ is the integer introduced in Section~\ref{subsec:char_insertion_algorithm}. 

\subsection{Modifying \texorpdfstring{$\compSA_{X, 1}$}{computeSAX}, \texorpdfstring{$\compSA_{X, 2}$}{computeSAX}, \texorpdfstring{$\compSA_{Y, 1}$}{computeSAY}, and \texorpdfstring{$\compSA_{Y, 2}$}{computeSAY}}
Similar to string insertion, 
the four auxiliary operations $\compSA_{X, 1}$, $\compSA_{X, 2}$, $\compSA_{Y, 1}$, and $\compSA_{Y, 2}$ are modified 
because (A) the algorithms of these operations depend on the dynamic LF function and Lemma~\ref{lem:state_stability}, 
and (B) these function and lemma are modified to string deletion. 

For each $j = \{ n+m, n+m-1, \ldots, 1 \}$, 
The dynamic LF function $\LF_{j}(t)$ is modified as follows: 

\begin{enumerate}
\item \textbf{Standard case} ($\SA_{j}[t] \neq \Delta + 1$):
If $\SA_{j-1}$ contains $(\SA_{j}[t] - 1)$ at a position $t'$, then $\LF_{j}(t) := t'$; 
otherwise, $\LF_{j}(t) := -1$. 
\item \textbf{Special case} ($\SA_{j}[t] = \Delta + 1$): 
In this case, $\SA_{j-1}$ always contains $|T'|$ (i.e., $(n - m)$), 
and $\LF_{j}(t)$ returns the position $t'$ of $|T'|$ in $\SA_{j-1}$. 
Here, $\SA_{j}[t]$ represents the first circular shift $T^{[1]}$ of $T$.
\end{enumerate}

Similar to string insertion, 
the dynamic LF formula can be modified as follows: 

\begin{lemma}\label{lem:dynamic_LF_formula_for_string_deletion}
Let $j \in \{n, n-1, \ldots, 1\}$ and $t$ be a position in $\SA_j$.  
Define $\kappa(t) = 1$ if either  
(i) $T[i-1] < L_j[t]$, or  
(ii) $T[i-1] = L_j[t]$ and $p \leq t$;  
otherwise, $\kappa(t) = 0$. 
Here, $p$ is the row index of the circular shift $T^{[i]}$ in $\CM_{j}$. 
Then, the following equation holds: 
\begin{equation*}
  \LF_{j}(t) =   \begin{cases}
    \lexCount(L_{\RLE, j}, L_{j}[t]) + \rank(L_{\RLE, j}, t, L_{j}[t]) - \kappa(t) & \text{if $i \leq j \leq i+m-1$ and $t \neq p$} \\
    -1 & \text{if $i \leq j \leq i+m-1$ and $t = p$} \\
    \lexCount(L_{\RLE, j}, L_{j}[t]) + \rank(L_{\RLE, j}, t, L_{j}[t]) & \text{otherwise.}
  \end{cases}
\end{equation*}
\end{lemma}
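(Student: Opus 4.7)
The plan is to prove Lemma~\ref{lem:dynamic_LF_formula_for_string_deletion} by reverse induction on $j$, running from $j = n$ down to $j = 1$, mirroring the structure of the proofs of Lemma~\ref{lem:dynamic_LF_formula} and Lemma~\ref{lem:dynamic_LF_formula_for_string_insertion}. The essential substitutions are that Lemma~\ref{lem:state_stability_for_string_deletion} replaces Lemma~\ref{lem:state_stability}, and that the sign of the $\kappa$ correction flips because the transformation between consecutive states now deletes a character from $L$ rather than inserting one.

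For the base case and the early iterations $j \geq i + m$, Lemma~\ref{lem:state_stability_for_string_deletion}(i,ii) gives $L_j = L$ and that $\SA_j$ agrees with $\SA$ up to a uniform shift that preserves predecessor relationships. Hence $\LF_j$ coincides with classical LF on $L$, and the usual LF formula immediately yields the ``otherwise'' branch. For the later iterations $j \leq i - 1$, Step~(III) is active and the transformation $L_{j+1} \to L_j$ moves the $x$-th entry of $L_{j+1}$ to position $y$ in $L_j$; I would apply the inductive hypothesis $\LF_{j+1}(\alpha) = W_{+1}(\alpha)$ at the position $\alpha$ of $\SA_j[t]$ in $\SA_{j+1}$ and combine the analogues of Equation~\ref{eq:W_eq2} and Equation~\ref{eq:u_eq2} to derive $\LF_j(t) = W(t)$, exactly as in the later-iterations portion of the proof of Lemma~\ref{lem:dynamic_LF_formula}.

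The main obstacle is the special range $i \leq j \leq i + m - 1$, where Step~(III) is skipped and $L_j \to L_{j-1}$ is a pure deletion (no new row is inserted into $\CM_{j-1}$). For $t \neq p$, I would account for how $\rank(L_{\RLE, j}, t, L_j[t])$ and $\lexCount(L_{\RLE, j}, L_j[t])$ reflect the absence of the BWT character at position $p$ (which equals $L_j[p] = T[i-1]$): a case split on whether $T[i-1]$ is less than, equal to, or greater than $L_j[t]$, together with whether $p \leq t$, produces a combined correction of $-\kappa(t)$ against the value $W(t)$ would take in the absence of that missing row, matching the first branch. For $t = p$, the row $T^{[i]}$ is precisely the one slated for removal; as $j$ descends through the special range, its sa-value $i + \Delta$ eventually loses its successor link through the deletion of $T^{[i-1]}$ at iteration $i-1$, so the convention $\LF_j(p) = -1$ is the only choice consistent with the dynamic framework and is the way this branch is actually consumed by the update algorithm. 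Verifying that the case-splits for $t \neq p$ telescope correctly across the $m$ special iterations, and formally reconciling the $\LF_j(p) = -1$ convention with the general standard-case definition of $\LF_j$, is the delicate part that will require the most careful bookkeeping.
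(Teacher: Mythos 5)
Your overall plan---induct on $j$, reuse the structure of the proof of Lemma~\ref{lem:dynamic_LF_formula} with Lemma~\ref{lem:state_stability_for_string_deletion} in place of Lemma~\ref{lem:state_stability}, and flip the sign of the correction in the special range---is the same route the paper indicates (its proof is exactly such a modification). However, your treatment of the special range $i \leq j \leq i+m-1$ has a genuine gap. Under the value-duplication convention, the row $p$ of $T^{[i]}$ carries the value $i+\Delta$, and its predecessor value $(i-1)+\Delta$ (the row $T^{[i-1]}$ of $T$) is still present in $\CM_{j-1}$ for every $j \geq i$, because $T^{[i-1]}$ is removed only at iteration $i-1$. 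So by the standard-case definition of the modified $\LF_{j}$, $\LF_{j}(p)$ is an actual position in $\SA_{j-1}$, and your justification that the sa-value $i+\Delta$ ``eventually loses its successor link through the deletion of $T^{[i-1]}$ at iteration $i-1$'' does not apply to the iterations $j \geq i$ that this branch covers; ``the only choice consistent with the dynamic framework'' is an appeal, not a proof. The position whose image is genuinely missing is the row of $T^{\prime[i]}$ of $T'$ (the index the paper elsewhere denotes $q$): its value is $i$, and the value $i-1$ appears in the suffix array only when $T^{\prime[i-1]}$ is inserted at iteration $i-1$.

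The same issue affects your first branch. Comparing the multiset of characters of $L_{j}$ with the multiset of first characters of the rows of $\CM_{j-1}$ shows there is exactly one unmatched occurrence of $T[i-1]$, and it sits at the row of $T^{\prime[i]}$ (which also ends with $T[i-1]$), not at the row $p$ of $T^{[i]}$, whose LF-image $T^{[i-1]}$ is still available in $\CM_{j-1}$. Since $\kappa$ compares positions ($p \leq t$ versus the position of $T^{\prime[i]}$ relative to $t$), this is not a cosmetic difference: carrying out the bookkeeping you defer yields $\LF_{j}(t) = \lexCount(L_{\RLE,j}, L_{j}[t]) + \rank(L_{\RLE,j}, t, L_{j}[t]) - \kappa'(t)$ with $\kappa'$ keyed to the row of $T^{\prime[i]}$, and $\LF_{j} = -1$ exactly at that row---in line with how the insertion analogue, Lemma~\ref{lem:dynamic_LF_formula_for_string_insertion}, keys its correction on the row $q$ of $T^{\prime[i+m]}$. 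You must either prove this version and reconcile it with the statement (reading the lemma's $p$ as the row index of $T^{\prime[i]}$), or the proof of the stated branches will fail at precisely the case analysis you flagged as ``delicate.''
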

\begin{proof}
    Lemma~\ref{lem:dynamic_LF_formula_for_string_deletion} can be proved by appropriately 
    modifying the proof of Lemma~\ref{lem:dynamic_LF_formula}. 
\end{proof}

We use Lemma~\ref{lem:dynamic_LF_formula_for_string_deletion} and Lemma~\ref{lem:state_stability_for_string_deletion} instead of 
Lemma~\ref{lem:dynamic_LF_formula} and Lemma~\ref{lem:state_stability}, respectively. 
Then, the algorithms for computing suffix array values are modified as follows. 

\subsubsection{Modifying \texorpdfstring{$\compSA_{X, 1}$}{computeSAX}}
The operation $\compSA_{X, 1}(L_{\RLE, j}, \SA_{s, j}, \SA_{e, j}, x_{-1})$ computes $\SA_{j-1}[x_{-1}+1]$  
from $L_{\RLE, j}$, $\SA_{s, j}$, $\SA_{e, j}$, and the row index $x_{-1}$ of a circular shift in $\CM_{j-1}$,  
identified in Step~(I) of the update algorithm. 
Here, $j \in \{i+m+1, i+m, \ldots, i-K_{\del}+1 \}$ holds 
since most of the $(n+1)$ iterations are skipped. 

Let $y$ be the insertion position in $\CM_j$ determined in Step~(III). 
Then, the value $\SA_{j-1}[x_{-1}+1]$ is computed based on the following cases:  
(i) $j = i+m+1$;  
(ii) $j \in \{ i+m, i+m-1, \ldots, i \}$;  
(iii) $j < i$. 

\textbf{Case (i): $j = i+m+1$.} 
In this case, $\SA_{j-1}[x_{-1}+1]$ is computed as follows: 
\begin{align*}
    \SA_{j-1}[x_{-1}+1] &= 
  \begin{cases}
    \SA[\ISA[i+1]+1] - m & \text{if $\SA[\ISA[i+1]+1] \geq i+m$} \\
    \SA[\ISA[i+1]+1] + \Delta & \text{otherwise.}
  \end{cases}
\end{align*}
Here, this equation follows from Lemma~\ref{lem:state_stability_for_string_deletion}, 
and $\SA[\ISA[i+1]+1]$ is precomputed by $\phi^{-1}(i+1)$. 

\textbf{Case (ii): $j \in \{ i+m, i+m-1, \ldots, i \}$}. 
In this case, 
we compute $\SA_{j-1}[x_{-1} + 1]$ using the dynamic LF function. 
The value $\SA_{j-1}[x_{-1}+1]$ is computed via $\LF_j$ by finding $h$ such that $\LF_j(h) = x_{-1}+1$ if $x_{-1} \neq |\SA_{j-1}|$, and $\LF_j(h) = 1$ otherwise. 
Such $h$ always exists since $\LF_j$ is surjective.
We then compute $\SA_{j-1}[x_{-1}+1]$ using the computed $h$ and $\LF_{j}$ as follows:
$\SA_{j-1}[x_{-1}+1] = \SA_{j}[h]-1$ if $\SA_{j}[h] \neq \Delta + 1$; otherwise, $\SA_{j-1}[x_{-1}+1] = |T'|$. 

To compute $\SA_{j}[h]$, 
we introduce two positions $\alpha$ and $\beta$ in $\SA_{j}$. 
The first position $\alpha$ is defined as follows: 
\begin{equation}
  \alpha = 
  \begin{cases}
    x+1 & \text{if } x \ne t_v + \ell_v - 1, \\
    \min \mathcal{U} & \text{if } x = t_v + \ell_v - 1 \text{ and } \mathcal{U} \ne \emptyset, \\
    \min \mathcal{U}' & \text{otherwise.}
  \end{cases}
\end{equation}
Here, $x$ is the row index in $\CM_j$ identified in Step~(I); 
$(c_v, \ell_v)$ is the run containing the $x$-th character in $L_j$,  
where $L_{\RLE, j} = (c_1, \ell_1), (c_2, \ell_2), \ldots, (c_r, \ell_r)$; 
$\mathcal{U}$ and $\mathcal{U}'$ are the sets introduced in Lemma~\ref{lem:sah_lemma}. 

Similarly, 
the second position $\beta$ is defined as follows: 
\begin{equation}
  \beta = 
  \begin{cases}
    p+1 & \text{if } x \ne t_s + \ell_s - 1, \\
    \min \mathcal{U}_{p} & \text{if } x = t_s + \ell_s - 1 \text{ and } \mathcal{U}_{p} \ne \emptyset, \\
    \min \mathcal{U}_{p}' & \text{otherwise.}
  \end{cases}
\end{equation}
Here, $p$ is the row index of the $i$-th circular shift $T^{[i]}$ of $T$ in $\CM_{j}$; 
$(c_s, \ell_s)$ is the run containing the $p$-th character in $L_j$; 
$\mathcal{U} = \{d \mid v + 1 \leq d \leq r, c_d = c_s \}$;  
$\mathcal{U}' = \{d \mid 1 \leq d \leq r, c_d = c'_{\mathsf{succ}}\}$; 
$c'_{\mathsf{succ}}$ is the smallest character greater than $c_s$ in $\{c_d \mid 1 \leq d \leq r\}$;  
if none exists, set $c'_{\mathsf{succ}} = \$$.  

The following lemma gives the relationship among the three positions $h$, $\alpha$, and $\beta$. 
\begin{lemma}\label{lem:sah_lemma_for_string_deletion}
For any $j \in \{n+1, n, \ldots, 1\}$, 
if $j \in \{ i+m, i+m-1, \ldots, i \}$ and $\alpha = p$, 
then $h = \beta$. 
Otherwise, $h = \alpha$. 
\end{lemma}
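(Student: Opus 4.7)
The plan is to prove Lemma~\ref{lem:sah_lemma_for_string_deletion} by extending the proof of Lemma~\ref{lem:sah_lemma} to the string-deletion setting, the only substantive change being the presence of the $-\kappa(t)$ correction in the modified dynamic LF formula of Lemma~\ref{lem:dynamic_LF_formula_for_string_deletion} for iterations $j \in \{i, i+1, \ldots, i+m-1\}$. I would split the argument along exactly the dichotomy in the statement and handle the three resulting subcases separately.

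First, for $j \notin \{i, i+1, \ldots, i+m-1\}$ the modified dynamic LF formula coincides with the classical LF formula. Thus the lex-count and rank identities used in the proof of Lemma~\ref{lem:sah_lemma} to derive Equation~\ref{eq:support_SAXY:1} apply verbatim and yield $\LF_j(\alpha) = x_{-1}+1$ (or $\LF_j(\alpha) = 1$ in the wrap-around case), establishing $h = \alpha$. Second, for $j \in \{i, \ldots, i+m-1\}$ with $\alpha \neq p$, I would note that the formula for $x_{-1}$ used in Step~(I) of the string-deletion algorithm itself contains the same $-\kappa(x)$ correction, and then check, case by case on whether $\alpha$ equals $x+1$, $\min\mathcal{U}$, or $\min\mathcal{U}'$, that the correction at $x$ and the correction at $\alpha$ either agree or differ in a way that is already absorbed by the run-structure identities of Equation~\ref{eq:support_SAXY:1}. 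This verification again gives $\LF_j(\alpha) = x_{-1}+1$, so $h = \alpha$.

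The final and most delicate subcase is $j \in \{i, \ldots, i+m-1\}$ with $\alpha = p$. Here Lemma~\ref{lem:dynamic_LF_formula_for_string_deletion} gives $\LF_j(p) = -1$, so $p$ itself cannot serve as $h$. I would argue that $\LF_j$ is a bijection from $\{1, \ldots, |\SA_j|\} \setminus \{p\}$ onto $\{1, \ldots, |\SA_{j-1}|\}$ during these iterations (Step~(III) is skipped, so exactly one row disappears), which guarantees a unique alternative candidate for $h$. Substituting $p$ and its enclosing run $(c_s, \ell_s)$ in place of $x$ and $(c_v, \ell_v)$ in the Equation~\ref{eq:support_SAXY:1} analysis then identifies the unique position immediately following $p$ in the LF ordering: this is exactly $\beta$ as defined in the statement. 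A direct rank/lex-count calculation confirms $\LF_j(\beta) = x_{-1}+1$ and hence $h = \beta$.

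The main obstacle will be the bookkeeping of the $\kappa$-corrections in the second subcase: I need to verify that the correction applied to produce $x_{-1}$ from $x$ is exactly compensated by the correction applied at $\alpha$, uniformly over the three possibilities for how $\alpha$ relates to the run containing $x$. Once that bookkeeping is pinned down, the $\alpha = p$ branch is essentially a translation of the same case split with $p$ playing the role of $x$, and Lemma~\ref{lem:sah_lemma_for_string_deletion} follows.
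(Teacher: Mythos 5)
Your proposal takes essentially the same route as the paper: the paper's entire argument for Lemma~\ref{lem:sah_lemma_for_string_deletion} is the one-line remark that it "can be proved using a similar technique to that of Lemma~\ref{lem:sah_lemma}", and your plan is precisely that adaptation --- the unmodified LF-formula argument outside the affected window, the $\kappa$-correction bookkeeping inside it, and the fallback to $\beta$ when $\alpha = p$ because $\LF_{j}(p) = -1$ there. Your sketch is in fact more detailed than what the paper itself provides, so it is consistent with (and a reasonable expansion of) the intended proof.
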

\begin{proof}
    Lemma~\ref{lem:sah_lemma_for_string_deletion} can be proved using a similar technique to that of Lemma~\ref{lem:sah_lemma}. 
\end{proof}

Lemma~\ref{lem:sah_lemma_for_string_deletion} shows that 
$\SA_{j}[h]$ is $\SA_{j}[x+1]$, $\SA_{j}[p+1]$, or a value $\SA_{s,j}[x']$ in $\SA_{s,j}$. 
$\alpha$, $\beta$, $\SA_{j}[x+1]$, and $\SA_{s,j}[x']$ can be computed in $\mathcal{O}(\log n)$ time by modifying 
the original algorithm of $\compSA_{X, 1}$ presented in Section~\ref{subsec:support_SAXY}. 
$\SA_{j}[p+1]$ can be computed in $\mathcal{O}(\log n)$ time using the same method as 
the two operations $\compISA_{\del, 1}$ and $\compISA_{\del, 2}$. 
Therefore, we can compute $\SA_{j-1}[x_{-1} + 1]$ in $\mathcal{O}(\log n)$ time. 

\textbf{Case (iii): $j < i$}. 
Similar to case (ii), 
we compute $\SA_{j-1}[x_{-1} + 1]$ in $\mathcal{O}(\log n)$ time using Lemma~\ref{lem:sah_lemma_for_string_deletion}. 
Finally, the time complexity of $\compSA_{X, 1}$ is $\mathcal{O}(\log n)$. 

\subsubsection{Modifying Other Operations}
The algorithms of $\compSA_{X, 2}$, $\compSA_{Y, 1}$, and $\compSA_{Y, 2}$ 
can be modified using a similar approach to $\compSA_{X, 1}$. 
Similar to $\compSA_{X, 1}$, 
the time complexity of these modified algorithms can be bounded by $\mathcal{O}(\log n)$.

\subsection{Overall Time Complexity}
Similar to character insertion, 
$L_{\RLE}$, $\SA_{s}$, and $\SA_{e}$ are updated by executing $\mathcal{O}(m + K_{\del})$ iterations. 
Each iteration $j \in \{ i+m, i+m-1, \ldots, i-K_{\del} \}$ can be executed in $\mathcal{O}(1)$ queries and update operations supported by 
$\mathscr{D}_{\RLE}(L_{\RLE, j})$, $\mathscr{D}_{\DI}(\SA_{s, j})$, and $\mathscr{D}_{\DI}(\SA_{e, j})$. 
Thus, the iteration $j$ can be executed in $\mathcal{O}(\log r')$, 
where $r'$ is the number of runs in $L_{\RLE, j}$, and $r' \leq n$ holds. 
In addition, we use two operations $\compISA_{\del, 1}$ and $\compISA_{\del, 2}$, 
which take $\mathcal{O}((m + i - \SA_s[\lambda]) \log r)$ time. 
Therefore, this update algorithm requires $\mathcal{O}((m + K_{\del} + i - \SA_{s}[\lambda])\log n)$ time in total. 

\subsection{Worst-case Time Complexity}
When deleting a substring $T[i..i + m - 1]$ from $T$, 
the deletion takes $\mathcal{O}((m + K_{\del} + i - \SA_s[\lambda]) \log n)$ time, 
where $(1 + m + K_{\del})$ iterations are needed for the main update algorithm, 
and $(m + i - \SA_s[\lambda])$ steps are required for auxiliary operations. 
Similar to character insertion, 
the time complexity for deletion can be bounded using $L_{\max}$. 
This is because the following lemma holds instead of Lemma~\ref{lem:bound_K_LCP_R}. 

\begin{lemma}\label{lem:bound_K_string_deletion}
    There exists a constant $\alpha$ satisfying two conditions: 
    \begin{enumerate}[(i)]
        \item $i - \SA_{s}[\lambda] \leq \alpha (m + \max \{ \LCP^{R}[p], \LCP^{R}[p+1] \})$;
        \item $K_{\del} - 1 \leq \alpha (m + \max \{ \LCP^{R}[p], \LCP^{R}[p+1] \})$.    
   \end{enumerate}
   Here, $p$ is the position of $(n - i + 1)$ in the suffix array $\SA^{R}$ of $T^{R}$. 
\end{lemma}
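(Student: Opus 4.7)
The plan is to mirror the argument of Lemma~\ref{lem:bound_K_LCP_R} and Lemma~\ref{lem:bound_K_string_insertion}, adapting each ingredient to the deletion setting in which $T' = T[1..i-1]\,T[i+m..n]$.

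Part~(i) comes essentially for free: the sampled suffix array $\SA_{s}$, the position $\lambda$, and the string $T$ are exactly the pre-update objects and make no reference to the deletion itself, so Lemma~\ref{lem:lambda_prop} still produces a suffix $T[u..n]$ sharing the prefix $T[\SA_{s}[\lambda]..i-1]$ with $T[\SA_{s}[\lambda]..n]$, and translating to $T^{R}$ via Lemma~\ref{lem:lcp_prop} yields $i - \SA_{s}[\lambda] \leq \alpha_{1}(1 + \max\{\LCP^{R}[p], \LCP^{R}[p+1]\})$. Since $m \geq 1$, this is strictly stronger than the claim.

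For part~(ii), I would fix $j = i - K_{\del} + 1$, the smallest iteration at which $x \neq y$. Both $T^{[j]}$ and $T'^{[j]}$ share the prefix $T[j..i-1]$ of length $K_{\del} - 1$, and letting $S^{R}$ denote its reverse, $S^{R}$ is a prefix of an appropriate suffix of $T^{R}$ whose position is determined by $p$. Since $x \neq y$ in $\CM_{j}$, some intermediate row contains a circular shift $X$ that must also have $T[j..i-1]$ as a prefix. Every row of $\CM_{j}$ is either a circular shift $T^{[k]}$ of $T$ still awaiting removal or a circular shift $T'^{[k']}$ of $T'$ already inserted, giving two cases. When $X = T^{[k]}$ with $k \neq j$, the two occurrences of $T[j..i-1]$ at positions $j$ and $k$ in $T$ translate to two distinct suffixes of $T^{R}$ sharing the prefix $S^{R}$, so Lemma~\ref{lem:lcp_prop} immediately yields $K_{\del} - 1 \leq \max\{\LCP^{R}[p], \LCP^{R}[p+1]\}$, exactly as in the proof of Lemma~\ref{lem:bound_K_LCP_R}(ii).

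The principal obstacle is the second case $X = T'^{[k']}$, in which the repeat lives in $T'$ rather than $T$ and must be pulled back. I would exploit the identities $T'[1..i-1] = T[1..i-1]$ and $T'[s] = T[s+m]$ for $s \geq i$, and split on the position of the window $[k', k' + K_{\del} - 2]$ relative to the deletion boundary $i$. If the window sits entirely to the left of $i$ (so $k' \leq j - 1$) or entirely at or past $i$ (so $k' \geq i$), then after no shift or a uniform $+m$ shift of indices it becomes a full-length occurrence of $T[j..i-1]$ in $T$ at a position distinct from $j$, and Lemma~\ref{lem:lcp_prop} again gives $K_{\del} - 1 \leq \max\{\LCP^{R}[p], \LCP^{R}[p+1]\}$. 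If instead the window straddles the boundary ($j + 1 \leq k' \leq i - 1$), then its prefix and suffix correspond to two occurrences in $T$, of lengths $i - k'$ at positions $j$ and $k'$ and of length $k' + K_{\del} - 1 - i$ at positions $j + (i - k')$ and $i + m$; a short index check using $k' \geq j + 1$ confirms that each pair is distinct, and the two lengths sum to $K_{\del} - 1$ so the longer is at least $(K_{\del} - 1)/2$. In every case Lemma~\ref{lem:lcp_prop} on $T^{R}$ yields $K_{\del} - 1 \leq 2 \max\{\LCP^{R}[p], \LCP^{R}[p+1]\}$, and the claimed bound follows with $\alpha = 2$, in the same spirit as Lemma~\ref{lem:bound_K_char_insertion:sub2} served the insertion proof.
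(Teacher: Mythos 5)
Your proposal is correct and carries out exactly what the paper does, since the paper's own proof of Lemma~\ref{lem:bound_K_string_deletion} is a one-line deferral to the argument of Lemma~\ref{lem:bound_K_LCP_R}; part~(i) is indeed unchanged because $\lambda$, $\SA_{s}$ and $T$ are pre-update objects. Where you genuinely diverge is the case $X = T'^{[k']}$ in part~(ii): the insertion proof goes through an analogue of Lemma~\ref{lem:bound_K_char_insertion:sub2} (a four-case periodicity argument giving $|T'[u'..i-1]| \leq 2|T[u..i-1]|+2$), whereas you pull the occurrence of $T[j..i-1]$ in $T'$ back to $T$ by splitting it at the deletion boundary into at most two pieces; this is the natural mechanism for deletion (un-deleting a substring of $T'$ yields at most two substrings of $T$), it is simpler, and it gives $\alpha = 2$ directly. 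The split is sound: both pieces end at position $i-1$ of $T$, which is what lets you anchor both at the same suffix of $T^{R}$, and your distinctness checks ($k' \geq j+1$, and $k' \neq j-m$ for the second pair) are the right ones. Three small points to tidy. First, the subcase ``window entirely to the left of $i$'' is vacuous, since at iteration $j$ only shifts $T'^{[k']}$ with $k' > j$ have been inserted (harmless, but say so). Second, your argument bounds everything via the LCP values adjacent to the suffix $T^{R}[n-i+2..n]$ (the reverse of $T[1..i-1]$), i.e.\ the position of $(n-i+2)$ in $\SA^{R}$, while the statement as printed uses the position of $(n-i+1)$; these two maxima are not interchangeable in general, so you should state which suffix you anchor at — this off-by-one is an inconsistency already present in the paper (its character-insertion lemma and the average-case analysis for deletion both use $(n-i+2)$), not a flaw in your reasoning. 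Third, include the degenerate cases ($i \leq 2$ or $K_{\del} = 1$) that the original proof dispatches separately before assuming $x \neq y$ at iteration $j = i - K_{\del} + 1$.
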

\begin{proof}
Lemma~\ref{lem:bound_K_string_deletion} can be proved using the same approach 
as that used in the proof of Lemma~\ref{lem:bound_K_LCP_R}.
\end{proof}

Lemma~\ref{lem:bound_K_string_deletion} indicates that 
the time complexity for deletion is $\mathcal{O}(\max \{m, \LCP^{R}[p], \LCP^{R}[p+1] \} \log n)$. 
Similar to character deletion, 
there exists a position $p'$ in $\LCP$ such that 
the time complexity for deletion is $\mathcal{O}((m + \LCP[p']) \log n)$ time, 
where $\LCP[p'] \leq L_{\max}$. 
Therefore, the time complexity for deletion can be bounded by $\mathcal{O}((m + L_{\max}) \log n)$.

\subsection{Average-case Time Complexity}
We present an upper bound on the average-case time complexity for deletion using the average $L_{\avg}$ of the values in the LCP array of $T$. 
Similar to character insertion, 
we generalize the two values $K_{\del}$ and $\lambda$ to 
functions $K_{\del}(i, m)$ and $\lambda_{\del}(i, m)$, where a substring $T[i..i + m - 1]$ is deleted from $T$, 
$m \in \{ 1, 2, \ldots, n-1 \}$, and $i \in \{ 1, 2, \ldots, n-m \}$.

When a given substring of length $m$ is deleted from $T$, 
the average-case time complexity for deletion is 
$\mathcal{O}(\log n \sum_{i = 1}^{n-m} (m + K_{\del}(i, m) + i - \SA_{s}[\lambda_{\del}(i, m)]) / (n - m))$. 
Here, 
$\sum_{i = 1}^{n-m} (m + K_{\del}(i, m) + i - \SA_{s}[\lambda_{\del}(i, m)]) \leq \alpha \sum_{i=1}^{n-m} (m + \max \{ \LCP^{R}[p_{i-1}], \LCP^{R}[p_{i-1}+1] \})$ follows from Lemma~\ref{lem:bound_K_string_insertion}, 
where $\alpha$ is a constant, and $p_{i-1}$ is the position of $(n - i + 2)$ in the suffix array of $T^{R}$. 
Similar to Theorem~\ref{theo:query_time_summary}(ii), 
$\alpha \sum_{i=1}^{n-m} (m + \max \{ \LCP^{R}[p_{i-1}], \LCP^{R}[p_{i-1}+1] \})$ 
is at most $2 \alpha \sum_{i=1}^{n} (m + \LCP[i])$. 

If $m < n/2$, 
then we obtain the average-case time complexity for deletion by the following equation:
\begin{equation*}
    \begin{split}
        \mathcal{O}(\frac{\log n \sum_{i = 1}^{n-m} (m + K_{\del}(i, m) + i - \SA_{s}[\lambda_{\del}(i, m)])}{n-m} ) &=  \mathcal{O}(\frac{nm + \sum_{i=1}^{n} \LCP[i]}{n-m} \log n) \\
        &=  \mathcal{O}(\frac{nm + \sum_{i=1}^{n} \LCP[i]}{n} \log n) \\
        &= \mathcal{O}((m + L_{\avg}) \log n).
    \end{split}
\end{equation*}

Otherwise (i.e., $m \geq n/2$), $\LCP[i] \leq 2m$ for all $i \in \{ 1, 2, \ldots, n \}$. 
Therefore, we obtain the average-case time complexity for deletion by the following equation:
\begin{equation*}
    \begin{split}
        \mathcal{O}(\frac{\log n \sum_{i = 1}^{n-m} (m + K_{\del}(i, m) + i - \SA_{s}[\lambda_{\del}(i, m)])}{n-m} ) &=  \mathcal{O}(\frac{nm + \sum_{i=1}^{n} \LCP[i]}{n - m} \log n) \\
        &= \mathcal{O}(\frac{m^{2}}{m} \log n)  \\
        &= \mathcal{O}(m \log n).
    \end{split}
\end{equation*}

Finally, we obtain Theorem~\ref{theo:string_deletion_theorem}.

\section{Experiments}\label{sec:experiments}
\newcommand{\Times}[1]{#1$\times$}
\newcommand{\TimesTBL}[1]{(#1)}
\newcommand{\Dataset}[1]{\textsf{#1}}

\begin{table*}[ht]
\footnotesize
\centering
\caption{
Statistics of datasets.
$\sigma$ is the alphabet size of each string $T$ in the datasets; 
$n$ is the length of $T$; 
$r$ is the number of runs in the BWT of $T$; 
$L_{\avg}$ is the average of the values in the LCP array;
$L_{\max}$ is the maximum value in the LCP array of $T$; 
$\occ$ is the average number of occurrences of the 10000 patterns of length 100 
in $T$, measured for locate queries (see also Table~\ref{tab:locate}).
}
\label{tab:dataset_complete}
\begin{tabular}{l||r|r|r|r|r|r|r}

Data & $\sigma$ & $n$ [$10^3$] & $r$ [$10^3$] & $L_{\avg}$ & $L_{\max}$ & $n/r$ & ave. $\occ$  \\
\hline\hline
\Dataset{cere}              & 5   & 461,287    & 11,575 & 7080 & 303,204 & 40 & 25  \\
\Dataset{coreutils}         & 236 & 205,282    & 4,684 & 149,625  & 2,960,706 & 44 & 166   \\
\Dataset{einstein.de.txt}   & 117 & 92,758     & 101 & 35,248    & 258,006  & 915 & 600   \\
\Dataset{einstein.en.txt}   & 139 & 467,627    & 290 & 59,075    & 935,920  & 1,611 & 2,567   \\
\Dataset{Escherichia\_Coli} & 15  & 112,690    & 15,044 & 11,322 & 698,433  & 7 & 7  \\
\Dataset{influenza}         & 15  & 154,809    & 3,023 & 774  & 23,483  & 51 & 296  \\
\Dataset{kernel}            & 160 & 257,962    & 2,791 & 173,308  & 2,755,550  & 92 & 51     \\
\Dataset{para}              & 5   & 429,266    & 15,637 & 3,275 & 104,177  & 27 & 20  \\
\Dataset{world\_leaders}    & 89  & 46,968     & 573 & 8,837    & 695,051  & 82 & 1269  \\
\hline
\Dataset{enwiki}            & 207 & 37,226,668 & 71,091 & 45,673 & 1,788,009  & 534 & 1,769   \\
\Dataset{chr19}        & 5   & 59,125,169 & 45,143 & 124,328 & 3,272,696  & 1,310 & 935   \\
\end{tabular}
\end{table*}

\subparagraph{Setup.}
In this section, we evaluate the effectiveness of the dynamic r-index ($\DRI$) 
when performing locate queries and character insertions, using a dataset of eleven highly repetitive strings: 
(i) nine strings from the Pizza\&Chili repetitive corpus~\cite{dataset:pc-repetitive-corpus}; 
(ii) a 37~GB string (\Dataset{enwiki}) consisting of English Wikipedia articles with complete edit history~\cite{dataset:enwiki-all-pages}; and 
(iii) a 59~GB string (\Dataset{chr19}) created by concatenating chromosome 19 from 1,000 human genomes in the 1000 Genomes Project~\cite{1000Genomes}. 
The relevant statistics for each string in the dataset are presented in Table~\ref{tab:dataset_complete}.

We compared the dynamic r-index with both a dynamic and a static self-index: the dynamic FM-index ($\DFMI$) and the r-index ($\RI$), respectively.  
We implemented the dynamic r-index and dynamic FM-index in C++, and used the publicly available implementation of r-index from \url{https://github.com/nicolaprezza/r-index}. 
For performance reasons, the dynamic data structures used in the dynamic r-index are implemented using B-trees.
We conducted all experiments on one core of an 48-core Intel Xeon Gold 6126 CPU at 2.6 GHz in a machine with 2 TB of RAM running the 64-bit version of CentOS 7.9.



\subparagraph{Results.}
\begin{table*}[tb]
\footnotesize
\centering
\caption{
Performance summary showing the average time and standard deviation for character insertion. $\DRI$, $\DFMI$, and $\RI$ represent dynamic r-index, dynamic FM-index, and r-index, respectively.
}
\label{tab:insertion}

\begin{tabular}{r||rrr}
                     & \multicolumn{3}{c}{Insertion Time (sec)} \\ 
Data                     & $\DRI$               & $\DFMI$   & $\RI$     \\
\hline\hline
\Dataset{cere}       & 0.06 $\pm$ 0.12      & 0.07 $\pm$ 0.14   & 345.78   \\
\Dataset{coreutils}  & 1.96 $\pm$ 4.27    & 2.51 $\pm$ 5.77   & 128.67   \\
\Dataset{einstein.de.txt} 
                     & 0.22 $\pm$ 0.17      & 0.56 $\pm$ 0.46   & 35.12    \\
\Dataset{einstein.en.txt} 
                     & 0.50 $\pm$ 0.38      & 1.30 $\pm$ 1.02   & 183.33   \\
\Dataset{Escherichia\_Coli} 
                     & 0.12 $\pm$ 0.47      & 0.12 $\pm$ 0.52   & 112.01   \\
\Dataset{influenza}  & 0.01 $\pm$ 0.01      & 0.01 $\pm$ 0.01   & 93.72    \\
\Dataset{kernel}     & 2.05 $\pm$ 3.44      & 3.14 $\pm$ 5.50   & 151.88   \\
\Dataset{para}       & 0.03 $\pm$ 0.03      & 0.03 $\pm$ 0.05   & 352.39   \\
\Dataset{world\_leaders} 
                     & 0.06 $\pm$ 0.27      & 0.11 $\pm$ 0.53   & 25.18    \\
\hline
\Dataset{enwiki}     & 0.89 $\pm$ 0.85      & 1.98 $\pm$ 2.00   & 40,157.89 \\
\Dataset{chr19} & 0.79 $\pm$ 2.14      & 1.37 $\pm$ 2.91   & 50,980.83 \\
\end{tabular}

\end{table*}


\begin{table*}[tb]
\footnotesize
\centering
\caption{
Performance summary on construction time in seconds. $\DRI$, $\DFMI$, and $\RI$ represent 
 dynamic r-index, dynamic FM-index, and r-index, respectively.
}
\label{tab:construction}

\begin{tabular}{l||rrr}
\hline
                     & \multicolumn{3}{c}{Construction Time (sec)} \\ 
Data               & $\DRI$      & $\DFMI$   & $\RI$     \\
\hline\hline
\Dataset{cere}       & 313    & 581   & 280   \\
\Dataset{coreutils}  & 114    & 266   & 101   \\
\Dataset{einstein.de.txt} 
                     & 35     & 108   & 29    \\
\Dataset{einstein.en.txt} 
                     & 179    & 725   & 156   \\
\Dataset{Escherichia\_Coli} 
                     & 122    & 115   & 79    \\
\Dataset{influenza}  & 75     & 171   & 64    \\
\Dataset{kernel}     & 133    & 350   & 132   \\
\Dataset{para}       & 341    & 502   & 288   \\
\Dataset{world\_leaders} 
                     & 22     & 47    & 19    \\
\hline
\Dataset{enwiki}     & 36,343 & 75,458 & 37,717 \\
\Dataset{chr19} & 49,291 & 102,485 & 48,540 \\
\hline \hline
Average              & 7,906  & 16,437 & 7,946 \\
\end{tabular}

\end{table*}

\begin{table*}[tb]
\footnotesize
\caption{
The table shows the average time and standard deviation for the backward search (left) and the occurrence computation (right) of each pattern after the backward search. $\DRI$, $\DFMI$, and $\RI$ represent dynamic r-index, dynamic FM-index, and r-index, respectively.}
\label{tab:locate}

\begin{tabular}{l||rrr|rrr}
\hline
                     & \multicolumn{3}{c|}{Backward Search (\textmu s)} & \multicolumn{3}{c}{Occurrence Computation (\textmu s)} \\ 
Data               & $\DRI$          & $\DFMI$          & $\RI$         & $\DRI$    & $\DFMI$   & $\RI$      \\
\hline\hline
\Dataset{cere}       & 434 $\pm$ 27  & 554 $\pm$ 78   & 131 $\pm$ 39    & 96 $\pm$ 66      & 187 $\pm$ 110       & 12 $\pm$ 9    \\
\Dataset{coreutils}  & 748 $\pm$ 52 & 1,345 $\pm$ 147  & 178 $\pm$ 58    & 245 $\pm$ 1,007    & 1,598 $\pm$ 7,241  & 35 $\pm$ 134    \\
\Dataset{einstein.de.txt} 
                     & 345 $\pm$ 18  & 1,113 $\pm$ 93  & 121 $\pm$ 35    & 564 $\pm$ 505  & 4,475 $\pm$ 3,836 & 76 $\pm$ 72    \\
\Dataset{einstein.en.txt} 
                     & 488 $\pm$ 31  & 1,816 $\pm$ 205  & 125 $\pm$ 33    & 2,521 $\pm$ 2,073  & 25,851 $\pm$ 20,800 & 285 $\pm$ 237     \\
\Dataset{Escherichia\_Coli} 
                     & 649 $\pm$ 166  & 693 $\pm$ 150   & 134 $\pm$ 40    & 50 $\pm$ 62        & 81 $\pm$ 71       & 4 $\pm$ 5     \\
\Dataset{influenza}  & 491 $\pm$ 104  & 686 $\pm$ 63  & 117 $\pm$ 37    & 504 $\pm$ 902  & 1,533 $\pm$ 2,808   & 52 $\pm$ 96     \\
\Dataset{kernel}     & 685 $\pm$ 41 & 1,487 $\pm$ 260  & 165 $\pm$ 51    & 94 $\pm$ 220      & 640 $\pm$ 1,603   & 14 $\pm$ 34     \\
\Dataset{para}       & 464 $\pm$ 35  & 538 $\pm$ 57  & 134 $\pm$ 37    & 83 $\pm$ 51       & 149 $\pm$ 78       & 9 $\pm$ 6     \\
\Dataset{world\_leaders} 
                     & 471 $\pm$ 49  & 964 $\pm$ 88  & 115 $\pm$ 14    & 1,635 $\pm$ 5,160 & 9,199 $\pm$ 29,617 & 151 $\pm$ 474     \\
\hline
\Dataset{enwiki}     & 1,386 $\pm$ 79 & 3,351 $\pm$ 256 & 308 $\pm$ 104   & 3,919 $\pm$ 6,190 & 26,688 $\pm$ 52,957 & 444 $\pm$ 515    \\
\Dataset{chr19} & 733 $\pm$ 49  & 2,321 $\pm$ 637  & 186 $\pm$ 33    & 3,876 $\pm$ 1,562  & 8,305 $\pm$ 3,992  & 361 $\pm$ 153     \\
\hline \hline
Average              & 627    & 1,352 & 156    & 1,235   & 7,155   & 131   \\
\end{tabular}

\end{table*}


\begin{table*}[tb]
\footnotesize
\centering
\caption{
Performance summary on working space in megabytes. $\DRI$, $\DFMI$, and $\RI$ represent dynamic r-index, dynamic FM-index, and r-index, respectively.
}
\label{tab:index_size}
\begin{tabular}{l||rrr}
\hline
                     & \multicolumn{3}{c}{Working Space (MiB)} \\ 
String               & $\DRI$    & $\DFMI$   & $\RI$      \\
\hline\hline
\Dataset{cere}       & 227 & 888 & 109    \\
\Dataset{coreutils}  & 105 & 556 & 48     \\
\Dataset{einstein.de.txt} 
                     & 4 & 187 & 2      \\
\Dataset{einstein.en.txt} 
                     & 10 & 1,251 & 4      \\
\Dataset{Escherichia\_Coli} 
                     & 408 & 235 & 127    \\
\Dataset{influenza}  & 68 & 323 & 29     \\
\Dataset{kernel}     & 64 & 698 & 29     \\
\Dataset{para}       & 437 & 827 & 145    \\
\Dataset{world\_leaders} 
                     & 15 & 97 & 6      \\
\hline
\Dataset{enwiki}     & 2,254 & 99,205 & 886   \\
\Dataset{chr19} & 1,420 & 113,113 & 540   \\
\hline \hline
Average              & 456 & 19,762 & 175    \\
\end{tabular}

\end{table*}

We focus on results only for the large data \Dataset{enwiki} and \Dataset{chr19}. 
Table~\ref{tab:insertion} presents the average time and standard deviation per character insertion.  
The results are based on 1,000 insertion operations, where each operation inserts a randomly chosen character at a randomly selected position within the input string.

Since r-index does not support character insertions, it must reconstruct the entire index for each insertion.
However, this process takes a significant amount of time for long string data.
In particular, it takes 11 and 14 hours for \Dataset{enwiki} and \Dataset{chr19}, respectively, highlighting the practical necessity of supporting update operations when handling large-scale string data. 

Although the insertion algorithm of dynamic r-index is based on that of dynamic FM-index, modified to use RLBWT instead of BWT, the average character insertion time of dynamic r-index is at most two orders of magnitude longer than that of dynamic FM-index across all string datasets.
For the two long string data, \Dataset{enwiki} and \Dataset{chr19}, dynamic r-index takes 0.89 and 0.79 seconds, respectively, which confirms its practical applicability.

The construction times for dynamic r-index, dynamic FM-index, and r-index are presented in Table~\ref{tab:construction}.

Table~\ref{tab:locate} shows the average time and standard deviation for locate queries, which consist of backward search and occurrence computation.
The r-index was the fastest in both tasks, showing the highest performance among all methods, though it lacks update support.
While dynamic r-index  was at most 11 times slower than the r-index for occurrence computation, 
the dynamic r-index was at least twice and at most 7 times faster than dynamic FM-index.
This is because dynamic r-index performs lightweight occurrence computations using sampled suffix arrays, whereas dynamic FM-index relies on $\mathcal{O}(\log^{2+\epsilon} n)$-time computations of the LF function (Section~4.6 in~\cite{DBLP:journals/jda/SalsonLLM10}).

Table~\ref{tab:index_size} shows the working space used during locate queries.  
r-index was the most space-efficient across all string datasets, as it does not support update operations and can employ space-efficient data structures.  
Dynamic r-index (based on RLBWT) was more space-efficient than dynamic FM-index (based on BWT) for most string data.
In particular, the difference was substantial for large datasets:  
the working space of dynamic r-index was 80 and 43 times smaller than that of dynamic FM-index on \Dataset{enwiki} and \Dataset{chr19}, respectively.

\section{Conclusion}
We have presented dynamic r-index, a dynamic extension of the r-index that supports update operations of string insertion and deletion. 
The average-case time complexity of these update operations is proportional to the average LCP value. 
Although some highly repetitive strings, such as Fibonacci strings, have a large average LCP value,
experiments using various highly repetitive strings demonstrated practicality of the dynamic r-index.

\bibliography{ref}
\clearpage

\appendix
\section{Complete Figure~\ref{fig:update_CM}}\label{sec:appendix}
\begin{figure}[htbp]
\begin{center}
	\includegraphics[width=1.0\textwidth]{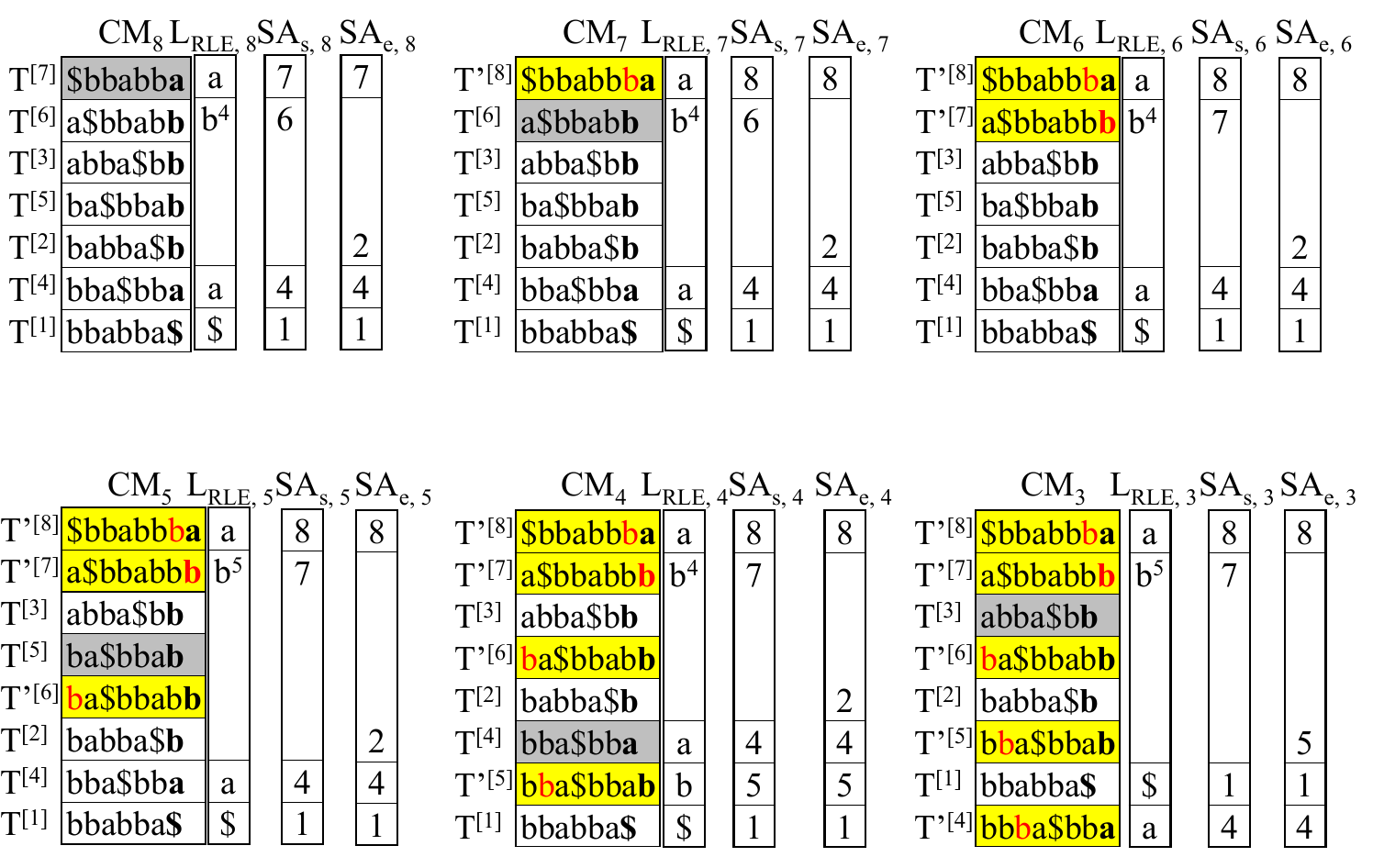}
	\includegraphics[width=1.0\textwidth]{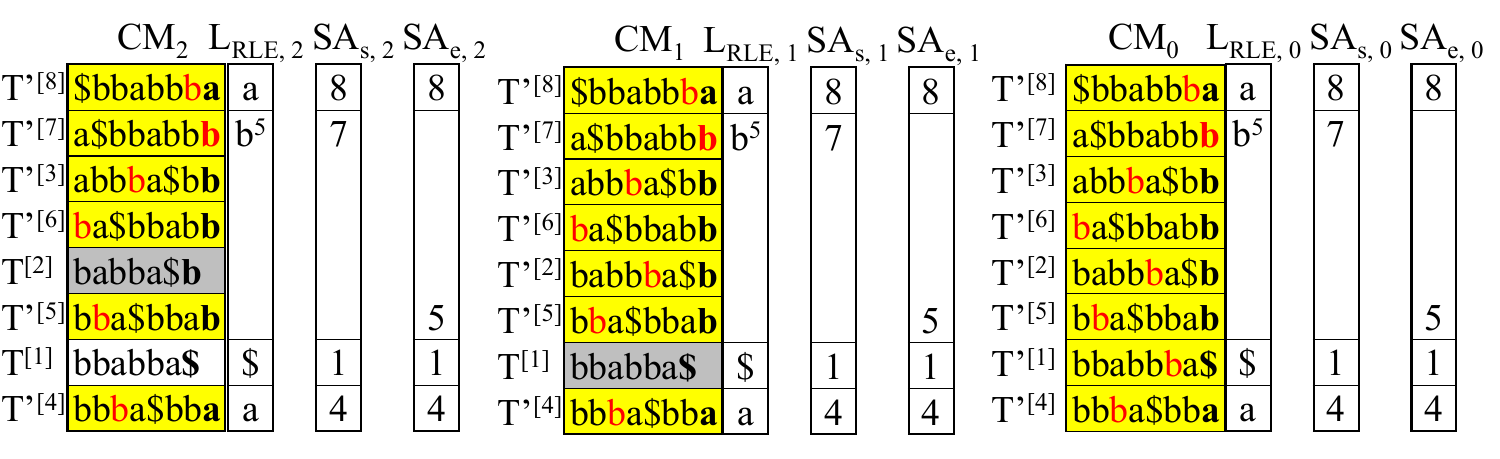}
\caption{
Illustration of the update process from $\CM$, $L_{\RLE}$, $\SA_{s}$, and $\SA_{e}$ 
to $\CM'$, $L'_{\RLE}$, $\SA'_{s}$, and $\SA'_{e}$, respectively. 
Here, $T = \texttt{bbabba\$}$ and $T' = \texttt{bbabbba\$}$, 
where $T'$ is obtained by inserting the character $b$ into $T$ at position 6. In each $\CM_j$, the circular shift to be removed is highlighted in gray, while the newly inserted circular shift is highlighted in yellow. The BWT corresponding to $\CM_{j}$ is shown with bold characters.}
\label{fig:full_update_CM}
\end{center}
\vspace{-0.8cm}
\end{figure}

\end{document}